\definecolor{since}{rgb}{0.5,0.5,0.5}
\definecolor{newred}{HTML}{ff382e}
\definecolor{newgreen}{HTML}{549641}
\definecolor{newblue}{HTML}{4c4cfc}
\definecolor{neworange}{HTML}{c98702}
\renewcommand*\env@matrix[1][*\c@MaxMatrixCols c]{%
  \hskip -\arraycolsep
  \let\@ifnextchar\new@ifnextchar
  \array{#1}}
\newtheorem{theorem}{Theorem}[section] 
\newtheorem*{theorem*}{Theorem}
\newtheorem{proposition}[theorem]{Proposition}
\newtheorem*{proposition*}{Proposition}
\newtheorem{lemma}[theorem]{Lemma}
\newtheorem*{lemma*}{Lemma}
\newtheorem{corollary}[theorem]{Corollary}
\newtheorem*{corollary*}{Corollary}
\newtheorem*{conjecture*}{Conjecture}
\theoremstyle{definition}
\newtheorem{fact}[theorem]{Fact}
\newtheorem*{fact*}{Fact}
\newtheorem{claim}[theorem]{Claim}
\newtheorem{definition}[theorem]{Definition}
\newtheorem*{definition*}{Definition}
\newtheorem{question}{Question}
\newtheorem{circuit}[theorem]{Logical Circuit}
\DeclareRobustCommand{\EndDef}{%
	\leavevmode\unskip\penalty9999 \hbox{}\nobreak\hfill
	\quad\hbox{$\lhd$}%
        \vspace{1em}
}
\theoremstyle{remark}
\newtheorem{remark}[theorem]{Remark}
\newtheorem*{remark*}{Remark}
\newclass{\QPCP}{QPCP}
\newclass{\QCPCP}{QCPCP}
\newclass{\QCMAcomp}{QCMA-complete}
\newclass{\sharpP}{\#P}
\DeclareMathOperator\im{im}
\DeclareMathOperator\clifford{Cl}
\newcommand{\Cl}[1]{\clifford^{({#1})}}
\DeclareMathOperator\evaluate{eval}
\DeclareMathOperator\CSS{CSS}
\DeclareMathOperator\N{N}
\DeclareMathOperator\unitary{U}
\DeclareMathOperator\dtc{DTC}
\DeclareMathOperator\had{H}
\DeclareMathOperator\eye{\mathbb{I}}
\DeclareMathOperator\phasegate{S}
\DeclareMathOperator{\supp}{supp}
\newcommand{\n}{\ensuremath{^{\otimes n}}}
\newcommand{\bits}[1]{\ensuremath{\{0,1\}^{{#1}}}}
\newcommand{\br}[1]{\ensuremath{\left\{{#1}\right\}}}
\newcommand{\bigmid}{\;\big\vert\;}
\newcommand{\Bigmid}{\;\,\Big\vert\,\;}
\newcommand{\biggmid}{\;\;\bigg\vert\;\;}
\newcommand{\standard}[1]{\ensuremath{\left\langle{#1}\right\rangle}}
\newcommand{\indicator}[1]{\mathbbold{1}_{#1}}
\newcommand{\tildeindicator}[1]{\widetilde{\mathbbold{1}}_{#1}}
\newcommand{\barindicator}[1]{{\mathbbold{1}}_{#1}}
\newcommand{\stabs}[1]{\mathcal{S}^{({#1})}}
\newcommand{\errors}[1]{\mathcal{N}^{({#1})}}
\newcommand{\logs}[1]{\mathcal{E}^{({#1})}}
\newcommand{\DTC}[1]{\dtc^{({#1})}}
\newcommand{\Z}[1]{\ensuremath{Z({#1})}}
\newcommand{\tildeZ}[1]{\ensuremath{\widetilde{Z}({#1})}}
\newcommand{\X}[1]{\ensuremath{X({#1})}}
\newcommand{\tildeX}[1]{\ensuremath{\widetilde{X}({#1})}}
\newcommand{\tildephase}{\ensuremath{\widetilde{\phasegate}}}
\newcommand{\subcubeeq}{\ensuremath{\sqsubseteq}}
\newcommand{\powerset}[1]{\ensuremath{\mathscr{P}({#1})}}
\newcommand{\CC}{\ensuremath{\mathbb{C}}}
\newcommand{\NN}{\ensuremath{\mathbb{N}}}
\newcommand{\FF}{\ensuremath{\mathbb{F}}}
\newcommand{\RR}{\ensuremath{\mathbb{R}}}
\newcommand{\ZZ}{\ensuremath{\mathbb{Z}}}
\newcommand{\mcB}{\ensuremath{\mathcal{B}}}
\newcommand{\mcC}{\ensuremath{\mathcal{C}}}
\newcommand{\mcD}{\ensuremath{\mathcal{D}}}
\newcommand{\mcE}{\ensuremath{\mathcal{E}}}
\newcommand{\mcF}{\ensuremath{\mathcal{F}}}
\newcommand{\mcG}{\ensuremath{\mathcal{G}}}
\newcommand{\mcI}{\ensuremath{\mathcal{I}}}
\newcommand{\mcJ}{\ensuremath{\mathcal{J}}}
\newcommand{\mcK}{\ensuremath{\mathcal{K}}}
\newcommand{\mcL}{\ensuremath{\mathcal{L}}}
\newcommand{\mcN}{\ensuremath{\mathcal{N}}}
\newcommand{\mcP}{\ensuremath{\mathcal{P}}}
\newcommand{\mcQ}{\ensuremath{\mathcal{Q}}}
\newcommand{\mcS}{\ensuremath{\mathcal{S}}}
\newcommand{\mcX}{\ensuremath{\mathcal{X}}}
\DeclareMathAlphabet{\mathbbold}{U}{bbold}{m}{n}
\newcommand{\since}[1]{{\color{since}\text{({#1})}}\hspace{2em}}
\newcommand\restr[2]{{
  \left.\kern-\nulldelimiterspace 
  #1 
  \right|_{#2} 
  }}
\newcommand{\nnote}[1]{}
\newcommand{\dnote}[1]{}
\newcommand{\snote}[1]{}
\newcommand{\cnote}[1]{}
\newcommand\redout{\bgroup\markoverwith{\textcolor{red}{\rule[0.5ex]{2pt}{0.8pt}}}\ULon}
\title{
Geometric structure and transversal logic of quantum Reed--Muller codes}
\author[1,2]{Alexander Barg
}
\email{abarg@umd.edu}
\author[1,3]{Nolan J. Coble}
\email{nolanjcoble@gmail.com}
\author[1,4]{Dominik Hangleiter}
\email{mail@dhangleiter.eu}
\author[5]{Christopher Kang}
\email{ctkang@uchicago.edu}
\affil[1]{Joint Center for Quantum Information and Computer Science, University of Maryland \& NIST}
\affil[2]{Institute for Systems Research, Department of Electrical \& Computer Engineering, University of Maryland}
\affil[3]{Department of Computer Science, University of Maryland, College Park}
\affil[4]{Simons Institute for the Theory of Computing, University of California at Berkeley}
\affil[5]{Department of Computer Science, University of Chicago}
\date{}
\begin{document}

\maketitle

\begin{abstract}

{\textbf{Abstract:}} 
Designing efficient and noise-tolerant quantum computation protocols generally begins with an understanding of quantum error-correcting codes and their native logical operations. The simplest class of native operations are transversal gates, which are naturally fault-tolerant. In this paper, we aim to characterize the transversal gates of quantum Reed--Muller (RM) codes by exploiting the well-studied properties of their classical counterparts. 

We start our work by establishing a new geometric characterization of quantum RM codes via the Boolean hypercube and its associated subcube complex. More specifically, a set of stabilizer generators for a quantum RM code can be described via transversal $X$ and $Z$ operators acting on subcubes of particular dimensions. This characterization leads us to define \emph{subcube operators} composed of single-qubit $\pi/2^k$ $Z$-rotations that act on subcubes of given dimensions. We first characterize the action of subcube operators on the code space: depending on the dimension of the subcube, these operators either (1) act as a logical identity on the code space, (2) implement non-trivial logic, or (3) rotate a state away from the code space. Second, and more remarkably, we uncover that the logic implemented by these operators corresponds to circuits of multi-controlled-$Z$ gates that have an explicit and simple combinatorial description. Overall, this suite of results yields a comprehensive understanding of a class of natural transversal operators for quantum RM codes.

\end{abstract}

\newpage
{\small
\begin{spacing}{1}
\tableofcontents 
\end{spacing}
}
 \newpage

\part{Overview and examples}
\section{Introduction}\label{sec:intro}

\subsection{Logical operations and Reed--Muller codes}

Designing fault-tolerant quantum logic is central to constructing scalable and reliable quantum computers. While early work has shown that storing quantum information is feasible at reasonably low resource costs \cite{aharonov1997fault,knill1998resilient,kitaev1997quantum},     
developing resource-efficient schemes for performing universal fault-tolerant logic remains a challenge. 
The need for such techniques is exacerbated by recent experimental progress on storing information \cite{bluvstein2024logical, acharya2024quantum, da2024demonstration} and performing rudimentary logic \cite{bluvstein2024logical,ryan-anderson_high-fidelity_2024,reichardt_demonstration_2024}. 
A variety of code-theoretic approaches to fault-tolerance have been proposed \cite{beverland2021cost}, including transversal logic, code switching and deformation \cite{paetznick_universal_2013,anderson_fault-tolerant_2014}, and magic-state distillation and injection \cite{bravyi_universal_2005, Bravyj2012magic, campbell2017unifying, ye_logical_2023}.
 All of these endeavors require an intimate understanding of quantum codes and their logical operators. For example, the geometric or algebraic structure of specific families of codes is often exploited when constructing native transversal logic for the codes \cite{zeng2011tranversality,bravyi2013classification,pastawski2015fault,jochym2018disjointness,yoder2016universal}.

\eczoo[Quantum Reed--Muller (RM) codes]{quantum_reed_muller} have been a popular candidate for implementing universal quantum computation ever 
since their introduction a quarter century ago in the work of Steane \cite{Steane1999}. Quantum RM codes 
use the Calderbank-Shor-Steane (CSS) construction to design a family of qubit codes with the
embedded structure of two classical RM codes 
$RM(q,m) \subseteq RM(r,m)$ with the property that $q\le r\le m$. 
These codes encode $k = \sum_{l=q+1}^r\binom ml$ logical qubits into $n = 2^m$ physical qubits and have distance $d = 2^{\min(q+1,m-r)}$, 
or, using short notation, have parameters $[[n,k,d]] = [[ 2^m, \sum_{l=q+1}^r\binom ml, 2^{\min(q+1,m-r)} ]]$. 
Quantum RM codes have many variations, 
e.g.,\ qudit codes \cite{sarvepalli2005nonbinary} or entanglement-assisted RM codes \cite{nadkarni2024entanglement}.
Interestingly, quantum RM codes and their descendants are able to fault-tolerantly realize non-Clifford logic: for example, 
morphed/punctured codes can realize $T$ gates transversally \cite{Bombin2007,Terhal2015,kubica2015universal, vasmer2022morphing}. 
Much attention has been devoted to meticulously puncturing codes to achieve specific logical operators, for instance, for $T$ state distillation \cite{campbell2012magic,Bravyj2012magic,campbell2017unified,campbell2017unifying,haah2018codes}.

Despite the proliferation of quantum RM codes and their descendants, a precise characterization of the code's logical operators remains elusive. Some numerical solutions exhaustively search for logical operators \cite{rengaswamy2020optimality, Webster2022xpstabiliser} but are intractable beyond even modest code sizes. 
To scalably study larger codes, we require a deeper theoretical intuition of the structure of logical operators. 
We address this problem and study logical gates of quantum RM codes ``from first principles,'' investigating both the structure of the codes and their logical operators in higher levels of the Clifford Hierarchy. To this end, we first we elucidate the geometric structure of quantum RM codes. 
Then, using this geometry, we isolate and characterize a class of transversal physical operators which effect $k$-th level Clifford logicals. Our results provide a deeper intuition underlying the geometry of logic in quantum RM codes, enabling direct enumeration of logic that can be realized transversally for specific code parameters. 

Apart from the problem areas discussed above, we were motivated to study quantum RM codes for a number of other reasons. Geometric structure of classical RM codes, which we use in an essential way, is an established area
of classical coding theory with multiple links to discrete geometry (e.g., \cite{AK98}). Recently, RM codes have gained renewed prominence in classical coding theory due to 
spectacular advances in understanding their performance in noisy channels. It has been proved that they attain Shannon capacity on the binary erasure channel \cite{kudekar2017}, and \cite{Kumar2016qRM} gave a quantum analog of this result.
They are also intimately connected to the celebrated family of polar codes \cite{arikan2009channel,arikan2010survey,mondelli2014polar}, which have not only advanced classical coding theory in several aspects \cite{sasoglu2012polarization} but also have been adopted into industrial communication standards \cite{Hui2018polar}.




\subsection{Our results}
In this paper, we characterize a broad class of logical operators of the quantum RM codes using geometric intuition.  We begin by constructing the family of quantum RM codes using the structure of the $m$-dimensional hypercube, generalizing the well-known quantum \eczoo[hypercube code family]{hypercube_quantum}.
In this picture, physical qubits are associated with the vertices of an $m$-dimensional hypercube, and $X$ and $Z$ stabilizers are defined using subcubes (or rather sub-hypercubes) with specified dimensions. We address the question of whether transversal operators---in particular, diagonal and transversal operators living in the Clifford Hierarchy---acting on subcubes can act as logical operations on quantum RM codes. We answer in the affirmative, proving two
related but distinct categories of results:
\begin{enumerate}
    \item \textbf{Validity:} In \cref{sec: sufficiency conditions}, we give necessary and sufficient conditions for when such ``subcube operators'' are valid logical operations.
    \item \textbf{Logic:} In \cref{sec: signed logic,sec: unsigned logic} we show exactly what logical circuits these subcube operators implement.
\end{enumerate}
Our geometric description of quantum RM codes draws on, and develops, a geometric presentation of classical RM codes \cite{MS77}. 
While the classical version is relatively well known, the corresponding quantum characterization has received little to no attention.

Quantum RM codes are defined by three parameters: $m$, the dimension of the hypercube of physical qubits, $q$, the codimension of the $X$ stabilizers, and $r$, where $r + 1$ is the dimension of the $Z$ stabilizers:
\begin{restatable}[Quantum RM codes]{definition}{quantumrmcode}\label{def: quantum RM code hypercube}
     Let $0\leq q\leq r\leq m$ be non-negative integers. The \emph{quantum Reed--Muller code} of order $(q,r)$ and length $2^m$, denoted by $QRM_m(q,r)$, is defined as the common $+1$ eigenspace of a Pauli stabilizer group $\mcS\coloneqq\langle S_X,S_Z\rangle$, with stabilizer generators given by
    \begin{align}
        S_X&\coloneqq \br{X_A\Bigmid A \text{ is an $(m-q)$-cube}},  
        \\
        S_Z&\coloneqq \br{Z_A\Bigmid A \text{ is an $(r+1)$-cube}},
    \end{align}
\end{restatable}
In the above, when we say ``$\ell$-cube'', we mean an $\ell$-dimensional subcube of the $m$-dimensional hypercube. The operator $X_A$ is the transversal operator which acts as Pauli-$X$ on the qubits contained within a specified subcube $A$, and similarly for $Z_A$. Though not immediately obvious, it turns out that this definition is equivalent to the standard construction of quantum Reed--Muller codes from a pair of classical Reed--Muller codes (see \cref{sec: QRM-codes}). Readers familiar with the 
\eczoo[hypercube codes]{hypercube_quantum} will recognize them as a particular case of this  code family given by $QRM_m(0,1)$. Stabilizer generators for $QRM_4(0,2)$ are shown in \cref{fig: m4 code stabilizers}. A \emph{symplectic basis} (see \cref{sec: CH and errors}) for the logical Pauli space of $QRM_4(0,2)$ in terms of subcube operators is given later in \cref{fig: m4 code example}.

\begin{figure}[t!]
    \centering
    \includegraphics[width=\linewidth]{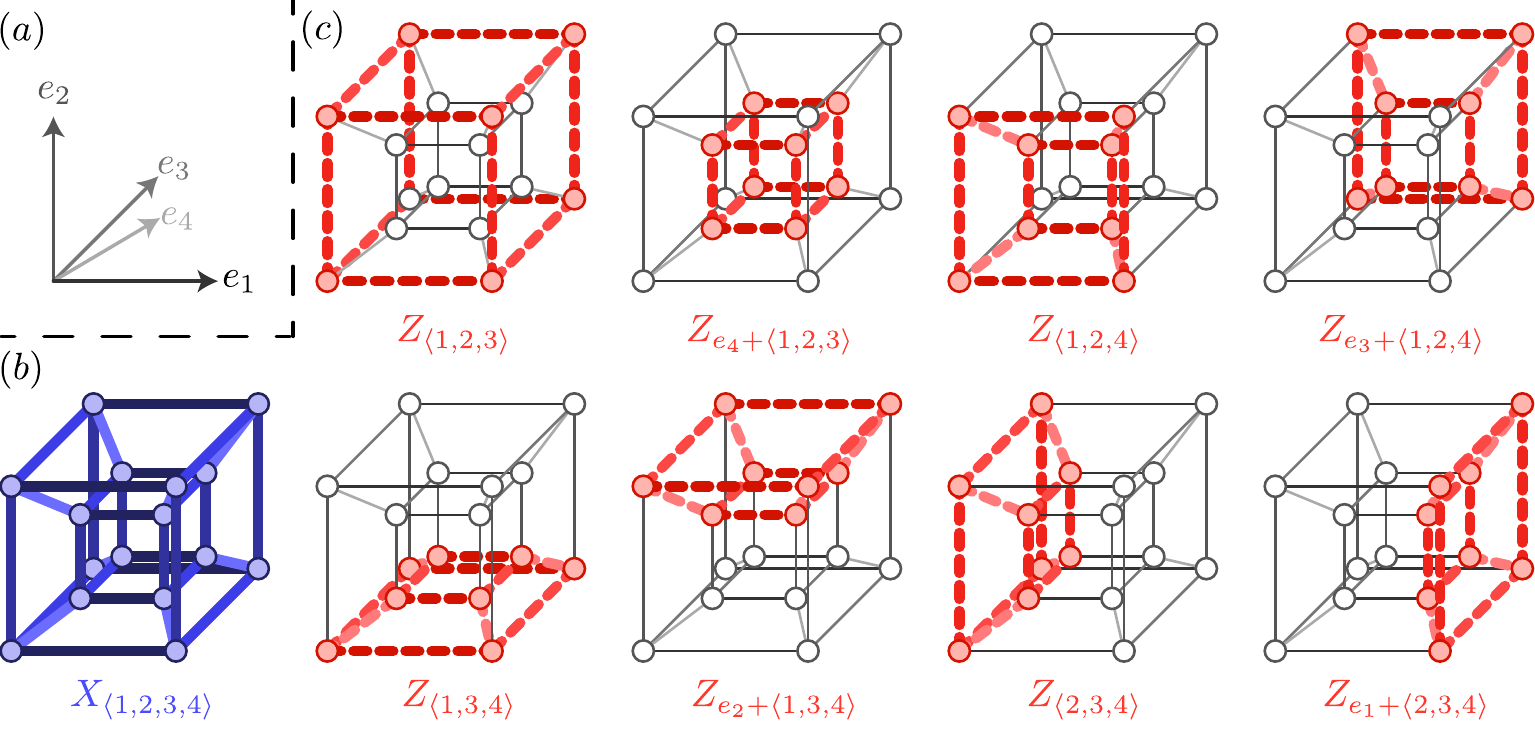}
    \caption{Stabilizer generators for the code $QRM_4(0,2)$. By definition, every $(r+1)=3$ cube defines a $Z$ stabilizer and the unique $(m-q)=4$-cube (i.e., the entire hypercube) defines the only $X$ stabilizer. (a) Orientation of the 4-cube.
    (b) Global $X$ is the only $X$ stabilizer, by definition, represented here by the (blue) solid cube. 
    (c) The (red) dashed cubes indicate the 8 subcubes in the 4-dimensional hypercube which define the $Z$ stabilizer generators of the code.}
    \label{fig: m4 code stabilizers}
\end{figure}

In the same way that transversal operators on subcubes of particular dimensions generate the Pauli stabilizers for $QRM_m(q,r)$, operators on subcubes of other dimensions generate the groups of \emph{logical} $X$ and $Z$ operators for $QRM_m(q,r)$. Building upon this idea we construct transversal logical operators comprised of diagonal $Z$ rotations acting on subcubes. We consider the single-qubit gates $\Z{k} = \ketbra{0} + e^{i \frac{\pi}{2^k}} \ketbra{1}$ where $Z(0)=\eye$, $Z(1) = S$, $Z(2) = T,$ etc. These gates also correspond with increasing levels of the \emph{Clifford Hierarchy}. 

Our first main result clarifies necessary and sufficient conditions for when the transversal application of $\Z{k}$ to a subcube will implement logic on $QRM_m(q,r)$. We prove the following:
\begin{theorem*}[Validity; Informal version of \cref{thm: subcube dimension implies logic}]
    Let $0\leq q\leq r\leq m$ be non-negative integers and consider the quantum Reed--Muller code $QRM_m(q,r)$. Suppose $A$ is a subcube of the $m$-dimensional hypercube.
    \begin{enumerate}
        \item If the dimension of $A$ is $\leq q+kr$, then applying $\Z{k}$ on the qubits in $A$ \underline{does not} preserve the code space.
        \item If the dimension of $A$ is $\geq q+kr+1$ and $\leq (k+1)r$, then applying $\Z{k}$ on the qubits in $A$ will implement a \underline{non-trivial} logical operation on the code space.
        \item If the dimension of $A$ is $\geq (k+1)r+1$, then applying $\Z{k}$ on the qubits in $A$ will implement a \underline{logical identity} on the code space.
    \end{enumerate}
\end{theorem*}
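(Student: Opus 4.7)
The plan is to directly analyse when the diagonal operator $\Z{k}_A$ preserves the code space and when it additionally acts as the logical identity, reducing both to combinatorial divisibility statements about subcube intersections. Writing $C_X = RM(q,m)$ for the $X$-stabilizer code and $C_Z^\perp = RM(r,m)$, every logical basis state of $QRM_m(q,r)$ decomposes as $\ket{\bar c}\propto\sum_{w\in C_X}\ket{c\oplus w}$ for $c$ ranging over coset representatives of $C_X$ in $C_Z^\perp$. Since $\Z{k}_A\ket{x}=e^{i\pi|A\cap x|/2^k}\ket{x}$ is diagonal, preservation of the code is equivalent to $|A\cap u|$ being constant modulo $2^{k+1}$ on each coset $c\oplus C_X$, while logical identity further requires $|A\cap u|\equiv 0\pmod{2^{k+1}}$ for every $u\in C_Z^\perp$.

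Using the XOR inclusion--exclusion identity
\[
|A\cap (u_1\oplus\cdots\oplus u_l)|\;=\;\sum_{\emptyset\neq S\subseteq[l]}(-2)^{|S|-1}\,\Bigl|A\cap\bigcap_{i\in S}u_i\Bigr|,
\]
together with the fact that $C_X$ and $C_Z^\perp$ are generated by the characteristic vectors of $(m-q)$-cubes and $(m-r)$-cubes respectively, I would inductively track powers of two to replace the preservation condition by the family of multiorthogonality identities
\[
\bigl|A\cap B_1\cap\cdots\cap B_p\cap C_1\cap\cdots\cap C_s\bigr|\;\equiv\;0\pmod{2^{k+2-p-s}}
\]
for all $p\geq 1$, $s\geq 0$ with $p+s\leq k+1$, over all $(m-q)$-cubes $B_i$ and $(m-r)$-cubes $C_j$; the identity condition adds the analogous statements with $p=0$ and $1\leq s\leq k+1$.

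The final step is geometric: an intersection of subcubes is either empty or itself a subcube, and $A\cap B_1\cap\cdots\cap C_s$, when nonempty, has size $2^{a-pq-sr+\delta}$ for some overlap $\delta\geq 0$, with the minimum $2^{\max(0,\,a-pq-sr)}$ achievable by choosing disjoint defining coordinate sets with compatible constraints. Hence the divisibility condition holds universally exactly when $a\geq pq+sr+k+2-p-s$; maximising the right-hand side over valid $(p,s)$ and using $r\geq q$ singles out the binding constraint $(p,s)=(1,k)$ with threshold $a\geq q+kr+1$ for preservation, and $(p,s)=(0,k+1)$ with threshold $a\geq (k+1)r+1$ for identity. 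For the necessity directions in Cases~1 and~3 I would construct explicit witnesses whose defining index sets together cover $[m]$, forcing an intersection of size exactly $1$ that violates the relevant identity; Case~2 is then automatic, since preservation together with the failure of identity yields a nontrivial code-preserving unitary.

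The main technical obstacle is the inclusion--exclusion reduction. While the specialisations $p=1,s=0$ and $p=0,s=1$ give the familiar first-order conditions, the clean thresholds $q+kr+1$ and $(k+1)r+1$ only emerge once both $w\in C_X$ and $u\in C_Z^\perp$ have been fully expanded as XORs of generators, producing a double sum whose $2$-adic bookkeeping must be handled carefully to argue the equivalence of the conditions on generators with the original conditions on general sums. Once this algebraic step is executed, the remaining geometric analysis of subcube intersections is combinatorially transparent, since a subcube is determined by a coordinate subset together with a constraint vector and intersections of such data combine straightforwardly.
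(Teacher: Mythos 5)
Your proposal is sound and arrives at the correct thresholds, but it takes a genuinely different route from the paper. You work directly with the diagonal action of $\Z{k}_A$ on computational basis states, reduce code-space preservation to constancy of $|A\cap\supp(u)|$ modulo $2^{k+1}$ on cosets of $RM(q,m)$ in $RM(r,m)$, and then expand via XOR inclusion--exclusion into a family of generalized triorthogonality conditions $|A\cap B_1\cap\cdots\cap B_p\cap C_1\cap\cdots\cap C_s|\equiv 0\pmod{2^{k+2-p-s}}$ on intersections of generating subcubes; the extremal analysis over $(p,s)$ correctly identifies $(1,k)$ and $(0,k+1)$ as the binding constraints, giving the thresholds $q+kr+1$ and $(k+1)r+1$. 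The paper instead proves the theorem by induction on the level $k$ of the Clifford Hierarchy, using the conjugation identity $\Z{k}_A X_B \Z{k}_A^\dagger=\omega_k^{|A\cap B|}\Z{k-1}_{A\cap B}X_B$ (\cref{lem:conjugation-identities}) together with the single geometric fact \cref{lem:overlap-guarantee}, so that ``$\Z{k}_A$ is an undetectable error'' reduces to ``$\Z{k-1}_{A\cap B}$ is a level-$(k-1)$ stabilizer for all $X$-stabilizer cubes $B$.'' What the paper's route buys is that the same conjugation machinery is reused verbatim to identify the implemented logical circuits in \cref{sec: signed logic,sec: unsigned logic}; what your route buys is a self-contained, basis-state-level argument closer in spirit to the divisibility/triorthogonality analyses of \cite{Bravyj2012magic,rengaswamy2020optimality,HLC21}, with no induction on the hierarchy. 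Two places in your sketch need genuine care before it is a proof: (i) the equivalence between the coset condition over all of $C_X$ and $C_2$ and the generator-level conditions is only immediate in the sufficiency direction --- the necessity direction requires the triangular $2$-adic argument you allude to (single generators first, then pairs, etc.), and your Case~1/Case~3 witnesses must be traced back through that expansion so that the lowest-order failing term is not cancelled by others; and (ii) achievability of the minimal intersection dimension $a-pq-sr$ requires exhibiting generator cubes whose defining coordinate complements are disjoint subsets of the type of $A$ \emph{and} whose affine shifts are compatible so the intersection is nonempty --- this is exactly the content of the ``only if'' direction of \cref{lem:overlap-guarantee}, so the same construction works, but it must be written out. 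Neither point is a conceptual obstacle, so I would classify this as a correct alternative proof strategy rather than a gapped one.
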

In other words, the dimension of a subcube determines---based on the values of $k$, $q$, and $r$---whether the $\Z{k}$ operator applied to the subcube implements trivial logic, non-trivial logic, or no logic. Prior works have studied necessary and sufficient conditions for when a \emph{global} application of $\Z{k}$ to every qubit of a quantum RM code will perform logic \cite{HLC21, HLC22}. In the particular case
when the subcube $A$ is the $m$-dimensional hypercube itself, the above theorem reproduces these results.

While the authors of \cite{HLC21, HLC22} determine conditions for when the global $\Z{k}$ operation performs logic on $QRM_m(q,r)$, they do not give descriptions of the implemented logical circuits. An earlier work that addresses this question \cite{rengaswamy2020optimality} does detail the logical performed by global $\Z{k}$ in the particular case of $QRM_m(r-1,r)$ codes, i.e., when $q=r-1$. Extending these results, in our work we give a complete description of the logical circuit implemented by the transversal $\Z{k}$ operation applied to \emph{any} subcube, and for an arbitrary $QRM_m(q,r)$. We will go into the details of the implemented logical circuits in the next section; for now we give some intuition for their structure. 

As the $\Z{k}$ operators are diagonal operators in the Clifford Hierarchy, they should likewise implement diagonal logical operators in the Clifford Hierarchy. Diagonal operators in the Clifford Hierarchy are fully classified as circuits composed of multi-qubit controlled versions of $\Z{k}$ operators \cite{CGK17}. Our \cref{thm: subcube dimension implies logic} is enough, already, to infer the particular type of circuit implemented by transversal $\Z{k}$ applied to a subcube. First, note that the square of $\Z{k}$ is precisely $\Z{k}^2=\Z{k-1}$. Now, suppose that $\Z{k}$ operators are applied to a subcube whose dimension is at least $q+kr+1$. Then \cref{thm: subcube dimension implies logic} implies this operation preserves $QRM_m(q,r)$. In fact, it implies something much stronger: since $q+kr+1\geq (k-1+1)r+1$, it must be that $\Z{k-1}$---the \emph{square} of $\Z{k}$---applied to the same subcube acts as \emph{logical identity} on $QRM_m(q,r)$. In other words, \cref{thm: subcube dimension implies logic} implies that such subcube operators are necessarily logically Hermitian.  

Now, the classification from \cite{CGK17} implies that the \emph{only} diagonal and Hermitian operators in the Clifford Hierarchy are circuits composed of multi-qubit controlled-$Z$ operators. Recall that the \emph{$\ell$-qubit controlled-$Z$ gate} is a diagonal gate acting on $\ell$ qubits that 
 applies a $-1$ phase to the all ones computational basis state, and leaves all other computational basis states fixed. Our second main result is to determine precisely what logical multi-controlled-$Z$ circuit is implemented by a subcube operator:
\begin{theorem*}[Logic; Informal version of \cref{thm: Zk and tildeZk equivalence conditions}]
    Consider the quantum Reed--Muller code $QRM_m(q,r)$, and suppose $A$ is a subcube of the $m$-dimensional hypercube. If the dimension of~$A$ satisfies condition 2 of the Validity Theorem, then the operator $\Z{k}_A$ implements a logical circuit composed of $\leq k$-qubit explicitly computable controlled-$Z$ gates.  
\end{theorem*}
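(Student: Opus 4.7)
The plan is to combine the structural consequences of the Validity theorem with the Cui--Gottesman--Krishna classification of diagonal Clifford-hierarchy gates to first pin down the \emph{form} of the logical operator, and then read off the combinatorics of the multi-controlled-$Z$ circuit from an algebraic-normal-form expansion on the logical qubits. I would first observe that $\dim(A)\ge q+kr+1$ together with $q\ge 0$ implies $\dim(A)\ge((k{-}1)+1)r+1$, so $(\Z{k}_A)^2=\Z{k-1}_A$ acts as a logical identity on $QRM_m(q,r)$ by the Validity theorem. Since $\Z{k}_A$ is diagonal in the physical computational basis it commutes with every physical $Z$ and hence with every logical $\bar Z$, which forces the induced logical operator $L$ to be diagonal in the logical computational basis and Hermitian. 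By the CGK17 classification, any diagonal Hermitian element of the Clifford hierarchy is a product of multi-controlled-$Z$ gates, so the remaining task is to exhibit the gates explicitly.

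To make this explicit I would fix a symplectic basis of logical operators coming from subcube operators (as in \cref{fig: m4 code example}), which furnishes coset representatives $v_1,\ldots,v_\kappa\in\mathbb{F}_2^{2^m}$ (one per logical qubit, with $\kappa$ the number of logical qubits) so that $\ket{\bar c}_L=|C_X|^{-1/2}\sum_{c\in C_X}\ket{c+r(\bar c)}$ with $r(\bar c)=\bigoplus_j\bar c_j v_j$. The Validity hypothesis guarantees that the physical phase $\exp(i\pi\,w_A(c+r(\bar c))/2^k)$ is independent of $c\in C_X$, so the logical eigenvalue on $\ket{\bar c}_L$ is simply $\lambda(\bar c)=\exp(i\pi\,w_A(r(\bar c))/2^k)$, reducing the problem to a polynomial identity for the integer-valued function $\bar c\mapsto w_A(r(\bar c))$.

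Applying the integer identity $x_1\oplus\cdots\oplus x_t=\sum_{\emptyset\ne S\subseteq[t]}(-2)^{|S|-1}\prod_{j\in S}x_j$ bitwise then yields
\begin{equation*}
w_A(r(\bar c)) \;=\; \sum_{\emptyset\ne S}(-2)^{|S|-1}\,n_{A,S}\,\prod_{j\in S}\bar c_j, \qquad n_{A,S}\coloneqq\Bigl|\,A\cap\bigcap_{j\in S}\supp(v_j)\,\Bigr|,
\end{equation*}
with the sum ranging over non-empty subsets $S$ of the logical qubits. Each such $S$ contributes a phase $\exp\!\bigl(i\pi(-1)^{|S|-1}2^{|S|-1-k}n_{A,S}\bigr)$ when $\prod_{j\in S}\bar c_j=1$. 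Subsets with $|S|\ge k+2$ automatically contribute $1$, so the sum is finite; for $|S|\le k+1$ each term is a multi-controlled $\Z{k-|S|+1}$-rotation on the logical qubits indexed by $S$, with angle determined by the intersection count $n_{A,S}$.

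The main obstacle will be to collapse this coarse decomposition into a genuine multi-controlled-$Z$ circuit of arity at most $k$, because a priori the expansion contains sub-$\pi$ $\Z{\ell}$-rotations (for $|S|\le k$) as well as $(k{+}1)$-qubit $C^kZ$ contributions. For this I would feed back the Hermiticity $\lambda(\bar c)^2\equiv 1$ in tandem with the finer divisibility conditions implied by the Validity hypothesis---namely $w_A(c)\equiv 0\pmod{2^{k+1}}$ for every $c\in C_X$, together with iterated analogues for $w_A(c\cdot r(\bar c))$---to force the counts $n_{A,S}$ to satisfy congruences that cancel all sub-$\pi$ rotations and absorb the $(k{+}1)$-qubit contributions into lower-arity $CZ$ gates. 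What remains is a concrete assignment of a parity $\alpha_S\in\mathbb{F}_2$ to each subset $S$ of size at most $k$, computed as an explicit intersection count of $A$ with the supports of the subcube representatives $v_j$, giving $L=\prod_{|S|\le k}(C^{|S|-1}Z)^{\alpha_S}$ on the indicated logical qubits as claimed.
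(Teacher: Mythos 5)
Your route is genuinely different from the paper's: you compute the diagonal logical eigenvalue $\lambda(\bar c)=\exp(i\pi\,w_A(r(\bar c))/2^k)$ on coset representatives and expand the phase polynomial (the strategy of Rengaswamy et al.), whereas the paper first decomposes $\Z{k}_A$ into \emph{signed} standard subcube operators $\tildeZ{k'}_{\standard{K'}}$ via indicator-function identities (\cref{thm: Zk logical decomposition}, \cref{lem: strong transversal indicator}) and then determines the logic of each $\tildeZ{k}_{\standard{K}}$ by induction on $k$ through conjugation of the logical Pauli basis (\cref{clm: standard subcube operator conjugation rule}, \cref{thm: logical multi-controlled-Z circuit}). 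Your scaffolding is sound: for a diagonal operator, preserving the code space is equivalent to constancy of the phase on each coset of $C_X$, the XOR-to-integer expansion is correct, and the resulting formula $w_A(r(\bar c))=\sum_S(-2)^{|S|-1}n_{A,S}\prod_{j\in S}\bar c_j$ is exactly the right object. However, the decisive step is missing: you assert that Hermiticity plus unstated "divisibility conditions" will force the congruences $n_{A,S}\equiv 0\pmod{2^{k-|S|+1}}$ that kill the sub-$\pi$ rotations, but this is precisely where all the geometric content lives and it is not a soft consequence of $\lambda(\bar c)^2=1$ (that only constrains the \emph{sum}, not each term). The computation that closes the gap is concrete: for $A$ of type $K$ and $\mcS=\{J_1,\dots,J_t\}$, one has $\bigcap_j\supp(\overline{X}_{J_j})=e_{\cup_j J_j}+\standard{S\setminus\cup_j J_j}$, so $A\cap\bigcap_j\supp(\overline{X}_{J_j})$ is empty or a subcube of type $K\setminus\cup_j J_j$, whence $n_{A,\mcS}\in\{0,2^{|K|-|K\cap\cup_j J_j|}\}$; the 2-adic valuation of the $S$-term is then $t-1-k+|K|-|K\cap\cup_j J_j|$, and the bounds $|K|\ge q+kr+1$, $|J_i|\le r$ show this is never negative and vanishes exactly on the minimal covers when $q\ge 1$. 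That is where the paper's $\mcF(K)$ combinatorics comes from, and without it your proof identifies the \emph{form} of the logical operator but not the explicit circuit.

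There is also an error in your endpoint. The surviving terms are the $|S|=k+1$ contributions, each a $(k+1)$-qubit gate $\overline{C^{(k)}Z}$ supported on a minimal cover of $K$; for $q\ge 1$ \emph{nothing} of lower arity survives, and lower-arity gates appear only in the boundary cases $q=0$ (\cref{lem: q=0 r>1 unsigned logic}, \cref{lem: hypercube codes unsigned logic}). Your plan to "absorb the $(k+1)$-qubit contributions into lower-arity $CZ$ gates" and your final formula $\prod_{|S|\le k}(C^{|S|-1}Z)^{\alpha_S}$ get this backwards: for the $[[8,3,2]]$ code with $k=2$ the answer is a single $\overline{C^{(2)}Z}$ on three logical qubits, which a product of gates over subsets of size at most $k=2$ cannot produce. (The informal statement's "$\le k$-qubit controlled-$Z$" means at most $k$ \emph{controls}, i.e., arity at most $k+1$.) If you fix the arity bookkeeping and carry out the $n_{A,S}$ computation above, your approach does yield the theorem—and handles arbitrary (non-standard) subcubes $A$ in one pass, which the paper instead reaches through the separate decomposition step \cref{fact: transversal Zk subcube decomposition}.
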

In the next section we give a more detailed description of our main results, together with some of the main definitions of our work. Some worked-out examples are presented in \cref{sec: Examples}, including the $[[ 2^m,\binom mr]]$ code $QRM_m(r-1,r)$ studied earlier in \cite{rengaswamy2020optimality} which encompass the hypercube code family.

\section{Overview of the paper}\label{sec: overview} 

\subsection{Preliminaries}\label{sec: prelims}
Throughout this paper we use the following notation. $\NN=\ZZ_{>0}$ denotes the positive integers; non-negative integers will always be denoted by $\ZZ_{\geq 0}$. Given $m\in\NN$, $[m]\coloneqq\br{1,2,\dots,m}$. By convention, $\sum_\emptyset = 0$ and $\prod_\emptyset=1$. $\mcP_\ell$ refers to the $\ell$-qubit Pauli group. $\powerset{\cdot}$ refers to the power set of the input.

\eczoo[Stabilizer codes]{stabilizer} form one of the most promising routes toward fault-tolerant quantum computation \cite{gottesman1997stabilizer,Gottesman2024}. Let $\mcP_1=\br{\pm\eye, \pm i\eye, \pm X, \pm i X,\pm Y, \pm i Y, \pm Z, \pm iZ}$ denote the single-qubit Pauli group, and let $\mcP_n\coloneqq \mcP_1^{\otimes n}$ be the $n$-qubit Pauli group. Consider a commutative subgroup $\mcS\leq\mcP_n$ such that $-\eye\notin\mcS$. Such a subgroup is called a \emph{(Pauli) stabilizer group}. A stabilizer group necessarily has order $2^{n-k}$ for some integer $k\in\br{0,\dots,n}$, and in particular has a (non-unique) set of $\log(\abs{\mcS})=n-k$ independent generators. Given a stabilizer group $\mcS$, its joint $+1$ eigenspace defines a $2^k$-dimensional subspace $\mcC_\mcS$ of $(\CC_2)\n$ known as the \emph{stabilizer code} associated to $\mcS$:
\begin{equation}
	\mcC_\mcS\coloneqq \br{\ket\psi\in(\CC_2)\n \Bigmid S\ket\psi=\ket\psi,\;\forall S\in\mcS}.
\end{equation}
Of particular interest are the so-called \emph{CSS codes}, which have stabilizer groups generated by operators consisting of tensor products of either $X$ and $\eye$ ($X$ operators) or $Z$ and $\eye$ ($Z$ operators). Given a bit string $v=(v_1,\dots,v_n)\in\ZZ_2^n$, we define the $n$-qubit Pauli operators $X(v)\coloneqq\bigotimes_{i=1}^n X^{v_i}$ and $Z(v)\coloneqq\bigotimes_{i=1}^n Z^{v_i}$. A CSS code is then given by a stabilizer group with a generating set of the form
\begin{equation}
	\br{X(x),Z(z)\Bigmid x\in B_X, z\in B_Z},
\end{equation}
where $B_X$, $B_Z\subseteq\ZZ_2^n$ are sets of length-$n$ bit strings. 

For the remainder of this paper, we suppose that $\mcC$ is a CSS code with stabilizer group $\mcS\coloneqq\br{X(x),Z(z)\mid x\in B_X, z\in B_Z}$. 
The commutativity condition of $\mcS$ can be equivalently phrased as the requirement that every $x\in B_X$ have even overlap with every $z\in B_Z$, i.e., $|\{i:x_i=z_i=1\}|\in 2\ZZ$. This is part of a more fundamental connection between CSS codes and classical error-correcting codes; we will discuss this later in \cref{sec:Pauli group and quantum codes}.

As their name suggests, Pauli stabilizers of $\mcC$ are operators that leave every state $\ket\psi\in\mcC$ invariant, i.e., they implement the \emph{logical identity} $\overline{\eye}$ on $\mcC$. We can consider this phenomenon more generally: the \emph{unitary stabilizer group} of $\mcC$ is the set of unitaries, $U\in\unitary(2^n)$ that implement the logical identity on $\mcC$,
\begin{equation}
    \mcS^*\coloneqq \br{U\in\unitary(2^n)\Bigmid U\equiv\overline{\eye}},
\end{equation}
where by $U\equiv\overline{\eye}$ we mean that $U\ket\psi=\overline{\eye}\ket\psi=\ket\psi$ for every $\ket\psi\in\mcC$. Just as important are the \emph{undetectable unitary errors}, $\mcN^*$, defined as
\begin{equation}
    \mcN^*\coloneqq \br{U\in\unitary(2^n)\Bigmid U\mcC=\mcC}.
\end{equation}
As opposed to stabilizers, which fix $\mcC$ pointwise, undetectable errors simply have the property that $U\ket\psi\in\mcC$ for every code state. Since $\mcC$ is invariant under $\mcS^*$, we have that $\mcS^*\subseteq\mcN^*$; the set $\mcE^*\coloneqq\mcN^*\setminus\mcS^*$ is the set of \emph{logical operators} of $\mcC$. Although every $U\in\mcE^*$ preserves $\mcC$, it does not stabilize it since by definition there is some code state $\ket\psi$ such that $U\ket\psi\neq\ket\psi$.

A key goal of quantum error-correction is not only to design quantum codes that have favorable encoding rate and error-correcting capability (large distance), but that also come with a well-understood 
set of operators $U\in\mcE^*$, which can be used to implement controlled logic on encoded states. Another desirable property of
this operator set is the ease of their physical implementation. 
In many cases, this is achieved by \emph{transversal} operators, $n$-qubit operators that are tensor products of single-qubit gates, $U=\bigotimes_{i\in[n]} U_i$. For a given transversal operator $U$, the \emph{support} of $U$, $\supp(U)\subseteq[n]$, is the set of qubits where $U$ is not the identity, and the \emph{weight} of $U$ is the number of qubits in its support.


\subsection{Geometry of hypercubes and quantum RM codes}\label{sec: geometry}
We begin by introducing the simple geometry underlying quantum Reed--Muller codes. This geometric structure provides immediate intuition as to the structure and relations between logical operators. We employ the language of hypercubes and their subcubes, though it can be equivalently phrased in terms of the affine geometry $AG(m,2)$. 

To define the hypercube, consider the group $\ZZ_2^m$ for $m \in \mathbb{N}$, that is, length-$m$ bit strings under bit-wise addition modulo 2. $\ZZ_2^m$ can be generated by the set of binary vectors of Hamming weight one (the standard basis), denoted below by $S=\br{e_i}_{i=1}^m$, where $e_i=(e_{j1},\dots,e_{jm})$ with $e_{ji}=\delta_{ji}$ for all $j\in[m]$. We will frequently abuse notation by referring to $S$ and $[m]$ interchangeably. For example, when we write ``let $i\in S$'', this should be interpreted to mean ``let $e_i\in S$ for $i\in[m]$''.

The hypercube graph can be defined as the \emph{Cayley graph} of $(\ZZ_2^m,S)$ in the following way. We define a graph, $G_m=(V,E)$, whose vertex set is $V=\ZZ_2^m$, and where than is an edge $e\in E$ between two vertices $x,y\in\ZZ_2^m$ whenever $y=x+e_i$ for some $i\in [m]$. This definition implies that $G_m$ is, in fact, an edge-colored graph, where the color of an edge $(x,x+e_i)$ is defined to be $i\in S$. While the Cayley graph definition is useful for a geometric picture, it does not fully capture the incidence relations of \emph{subcubes} (or rather, sub-hypercubes) within the $m$-dimensional hypercube; a priori it only captures vertices and edges. Instead, we capture incidences between subcubes by considering the complex of standard cosets in $\ZZ_2^m$, which correspond to subgroups generated by the standard basis $S$.



\begin{definition}[subcubes of the hypercube]
A \emph{standard subcube} of the $m$-dimensional hypercube is a subgroup of the form $\standard{J}$, where $J\subseteq S$ is a subset of generators. That is, the bit strings that are contained in $\standard{J}$, are precisely those whose support lies entirely within the set $J$, viewed as a subset of $[m]$.

A \emph{subcube} is any coset of a standard subcube, i,e., subsets of $\ZZ_2^m$ of the form $A\coloneqq x+\standard{J}$ for some $x\in\ZZ_2^m$. The set $J$ is called the \emph{type} of $A$. We write $A\subcubeeq\ZZ_2^m$ to indicate that the subset $A$ is a subcube.
Note that the bits appearing outside of $J$ form an invariant of a subcube $A$ of type $J$, for any such
subcube. In other words, given two bit strings $x,y\in x+\standard{J}$, $x_i=y_i$ for every $i\in S\setminus J$.

The \emph{dimension} of a subcube $A=x+\standard{J}$ is defined to be $\dim A\coloneqq \abs{J}$. This corresponds precisely to the expected notion of dimension.
\end{definition}
Viewing subcubes of the hypercube as standard cosets within $(\ZZ_2^m,S)$ allows us to algebraically describe the intersection of two subcubes: given two subcubes, $x+\standard{J}$ and $y+\standard{K}$, their intersection is either empty, or else it is a subcube of the form $z+\standard{J\cap K}$. In other words, the intersection of subcubes of type $J$ and $K$ is either empty, or a subcube of type $J\cap K$.

The subcube structure of $\ZZ_2^m$ can be used to define a CSS code in the following way. Let elements of $\ZZ_2^m$ index a set of $2^m$ qubits. Given a subcube $A\subcubeeq\ZZ_2^m$ and a single-qubit unitary $U\in\unitary(2)$, we define a $2^m$-qubit operator $U_A$ that acts as $U$ on qubits in $A$ and as $\eye$ elsewhere,
\begin{equation*}
    (U_A)_x = \begin{cases}
        U, &\text{if } x\in A,\\
        \eye, &\text{otherwise}.
    \end{cases}
\end{equation*}

We now define the $X$ and $Z$ generators of a CSS code using the subcubes within $\ZZ_2^m$. Let $0\leq q\leq r\leq m$ be non-negative integers, and define the following sets:
\begin{equation}
    \begin{aligned}
        S_X&\coloneqq \br{X_A\Bigmid A \text{ is an $(m-q)$-cube}}, 
        \\
        S_Z&\coloneqq \br{Z_A\Bigmid A \text{ is an $(r+1)$-cube}}. 
    \end{aligned}
\end{equation}
In other words, subcubes with dimension $r+1$ give a set of $Z$ operators, and subcubes with \emph{codimension} $q$ give a set of $X$ operators. 

In fact, the geometric picture immediately demonstrates the commutativity of $X$ and $Z$ stabilizers. Suppose we have an $Z_A\in S_Z$ and an $X_B\in S_X$. These operators overlap on the qubits in the set $A\cap B$. If $A\cap B$ is empty then clearly the operators commute. Otherwise, they overlap on a subcube with dimension $\abs{J\cap K}$, where $J$ and $K$ are the types of $A$ and $B$, respectively. Now by construction, $\abs{J}+\abs{K}=m+(r-q)+1\geq m+1$ as $r\geq q$. Since $J$ and $K$ are both subsets of $S$ which has $m$ elements, clearly $\abs{J\cap K}\geq 1$. Since the number of vertices in a subcube with dimension greater than $0$ is always even, $\abs{A\cap B}$ must be even, and so $Z_A$ and $X_B$ commute. As the phase of every operator in $S_X$ and $S_Z$ is $+1$, we have proven the following:

\begin{fact}
$\langle S_X, S_Z\rangle$ defines a valid stabilizer group.
\end{fact}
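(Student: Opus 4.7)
The plan is to verify the two defining conditions of a Pauli stabilizer group from \cref{sec: prelims}: first, that $\mcS \coloneqq \langle S_X, S_Z \rangle$ is abelian, and second, that $-\eye \notin \mcS$. Since $\mcS$ is by definition a subgroup of $\mcP_n$ for $n = 2^m$, these two conditions suffice.

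For commutativity, it is enough to check commutativity of the generators, since an abelian generating set yields an abelian group. Two generators in $S_X$ are tensor products of $\eye$ and $X$ alone, so they commute; likewise for any two generators in $S_Z$. The only substantive case is $X_B \in S_X$ versus $Z_A \in S_Z$, where $A$ has type $J \subseteq S$ with $|J| = r+1$ and $B$ has type $K \subseteq S$ with $|K| = m-q$. Either $A \cap B = \emptyset$, and the operators trivially commute on disjoint support, or $A \cap B$ is a subcube of type $J \cap K$. Inclusion–exclusion inside the $m$-element set $S$ gives $|J \cap K| \geq |J| + |K| - m = r - q + 1 \geq 1$, so $|A \cap B| = 2^{|J \cap K|}$ is even; hence $X_B$ and $Z_A$ anticommute on an even number of qubits and therefore commute globally. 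This is essentially the computation already outlined in the excerpt and requires no further ingredients.

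For the condition $-\eye \notin \mcS$, I would exploit the commutativity just established. Every element of $\mcS$ is a product of generators, and because every $X$-generator commutes with every $Z$-generator, one may move all $X$-factors to the left of all $Z$-factors without picking up any phase. Thus every element has the form $X(x)Z(z)$ for some $x \in \mathrm{span}_{\mathbb{F}_2}\{a : X(a) \in S_X\}$ and $z \in \mathrm{span}_{\mathbb{F}_2}\{b : Z(b) \in S_Z\}$, with no overall phase. Viewed literally as $\bigotimes_{i=1}^n X^{x_i} Z^{z_i}$, this operator is a scalar multiple of $\eye$ only when every single-qubit factor $X^{x_i} Z^{z_i}$ is a scalar multiple of $\eye$, which forces $x_i = z_i = 0$ for all $i$; in that case the product is $+\eye$. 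Hence $-\eye$ can never arise.

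The main (and only) subtle point is the $-\eye$ step: one might worry that rearranging generators introduces $\pm 1$ or $\pm i$ phases that could add up to $-\eye$ even with $x = z = 0$. The commutativity step rules this out, since it implies $X$- and $Z$-generators can be freely interleaved without phase accumulation, so the expression $X(x)Z(z)$ is genuinely phase-free. Combining the two steps confirms that $\mcS$ is an abelian subgroup of $\mcP_n$ not containing $-\eye$, which is the definition of a Pauli stabilizer group.
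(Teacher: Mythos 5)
Your proof is correct and follows essentially the same route as the paper: the commutativity argument via the subcube intersection $\abs{J\cap K}\geq \abs{J}+\abs{K}-m = r-q+1\geq 1$ and the resulting even overlap is identical to the paper's. Your treatment of $-\eye\notin\mcS$ merely spells out in detail what the paper compresses into the remark that every generator carries phase $+1$, so there is nothing further to add.
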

In \cref{def: QRM}, $S_X$ and $S_Z$ were chosen as the generators for $QRM_m(q,r)$; here we have verified that they do form a valid CSS code.
We note that the subfamily of quantum Reed--Muller codes studied in \cite{rengaswamy2020optimality} in our notation corresponds to the codes $QRM_m(r-1,r)$.

A basis for the space of logical Pauli operators for $QRM_m(q,r)$ can also be stated using the language of subcubes. Consider the following sets:
\begin{equation}
    \begin{aligned}
        L_Z&\coloneqq \Big\{Z_{\standard{J}}&\Bigmid J\subseteq S,\; q+1\leq\abs{J}\leq r\Big\},\\
        L_X'&\coloneqq \Big\{X_{\standard{S\setminus J}}&\Bigmid J\subseteq S,\; q+1\leq\abs{J}\leq r\Big\}.
    \end{aligned}
\end{equation}
We note that, whereas the stabilizers for $QRM_m(q,r)$ were defined using arbitrary subcubes of a particular dimension, the operators in these sets are implemented on \emph{standard} subcubes with a varying dimension.

We will see in \cref{sec: QRM-codes} that $L_Z$ and $L_X'$ generate the set of \emph{logical Pauli operators} for $QRM_m(q,r)$. In fact, $L_Z$ and $L_X'$ form a minimal generating set for this space, so the dimension of $QRM_m(q,r)$ is precisely equal to $\abs{L_Z}=\abs{L_X'}=\sum_{i=q+1}^r \binom{m}{i}$ and the distance of the code is $2^{\min \{q+1,m-r\}}$.

\subsection{Transversal logic on \texorpdfstring{$QRM_m(q,r)$}{QRMm(q,r)}}

While $L_Z$ and $L_X'$ give a basis for logical operators comprised of physical Pauli gates, they do not capture the behavior of non-Pauli operators applied transversally to the code space. We will show that in some sense, these $L_Z, L_X'$ sets can be generalized to 
higher-level Clifford operators. In particular, we examine the question of whether or not transversal operators on subcubes using non-Pauli gates can also implement logical operations on $QRM_m(q,r)$.

To examine the realizable logical operators, we define logical qubits in terms of their $X$ and $Z$ logical operators. For example, 
unencoded qubits are determined by the $n$-qubit Pauli group, which is generated by the weight-1 operators $\br{X_i,Z_i}$ together with phases, where $X_i$ and $Z_i$ act as $X$ and $Z$, respectively, on only the $i$-th qubit. Importantly, the set $\br{X_i,Z_i}_{i\in[n]}$ has the property that it is \emph{symplectic}, i.e., $X_i$ and $Z_j$ anti-commute if and only if $i= j$. Likewise, to determine the logic performed on a quantum code space one must first detail a \emph{symplectic basis} of logical Pauli errors. In short, a symplectic basis guarantees that the logical qubits of a code can be controlled independently of each other. We will go into the details of symplectic bases and, in particular, the symplectic basis for $QRM_m(q,r)$ further in \cref{sec: CH and errors} and \cref{sec: QRM-codes}. For now, we will simply define the logical Pauli operators.

In the previous section, we introduced the sets $L_Z$ and $L_X'$ and stated that they generate the group of logical Pauli operators on $QRM_m(q,r)$. Unfortunately, the set $\br{L_Z, L_X'}$ \emph{does not} satisfy the symplectic condition needed to define logical qubits. Instead, we will slightly modify the set $L_X'$ by shifting these operators to act on non-standard subcubes instead of standard ones. Note that the sets $L_Z$ and $L_X'$ were indexed by subsets of generators $J\subseteq S$, obeying the condition $q+1\leq\abs{J} \leq r$. We will use such subsets to index the logical qubits of $QRM_m(q,r)$ in the following way: 

\begin{restatable}[Index set for the logical qubits]{definition}{QRMdef}\label{def: QRM logical labels}
Consider the quantum code $QRM_m(q,r)$. The collection of subsets  $\mcQ\coloneqq\br{J\subseteq S\mid q+1\leq \abs{J}\leq r}$ is called the \emph{index set for logical qubits of $QRM_m(q,r)$}. For a subset $J\subseteq S$, we use the shorthand $e_J\coloneqq \sum_{i\in J} e_i\in\ZZ_2^m$ to denote the incidence bit string of length $m$ corresponding to $J$. For $J\in\mcQ$, the $J$\emph{-th qubit\footnote{We emphasize a possible point of confusion: the index set for logical qubits $\mcQ$ is, itself, a collection of subsets $J\subseteq S$.} of $QRM_m(q,r)$} is defined via the logical Pauli operators $\overline{Z}_J\coloneqq Z_{\standard{J}}$ and $\overline{X}_J\coloneqq X_{e_J+\standard{S\setminus J}}$. 
\end{restatable}
In other words, the logical Pauli operators for $QRM_m(q,r)$ are generated by the sets
\begin{equation}
    \begin{aligned}
        L_Z&\coloneqq \big\{Z_{\standard{J}}\Bigmid J\in\mcQ \big\},\\
        L_X&\coloneqq \big\{X_{e_J+\standard{S\setminus J}}\Bigmid J\in \mcQ\big\}.
    \end{aligned}
\end{equation}
We prove in \cref{lem: standard basis for QRM} that $\br{L_Z,L_X}$ is a symplectic basis for the space of logical Pauli operators.
See \cref{fig: m4 code example} in the end of Part I for a visualization of the logical Pauli basis in the case of the quantum RM code $QRM_4(0,2)$.

Our goal now is to construct transversal operators for $QRM_m(q,r)$ that lie outside of the Pauli group, which have a geometric structure similar to $L_Z$ and $L_X$. We will consider the following single-qubit $Z$ and $X$ rotation gates:
\begin{align}
    \Z{k}&\coloneqq \ketbra{0}+e^{i\frac{\pi}{2^{k}}}\ketbra{1},
    & \X{k}&\coloneqq \ketbra{+}+e^{i\frac{\pi}{2^{k}}}\ketbra{-},\\
    & = \begin{bmatrix}
        1 & 0 \\
        0 & e^{i\frac{\pi}{2^{k}}}
    \end{bmatrix},
    &&=\frac{1}{2}\begin{bmatrix}
        1+e^{i\frac{\pi}{2^{k}}} & 1-e^{i\frac{\pi}{2^{k}}} \\
        1-e^{i\frac{\pi}{2^{k}}} & 1+e^{i\frac{\pi}{2^{k}}}
    \end{bmatrix}.
\end{align}
The $\Z{k}$ operators are defined so that they reproduce the natural $k$-th level Clifford Hierarchy single-qubit $Z$ basis gates: $Z(-1)=\eye$, the identity, $Z(0)=Z$, the Pauli $Z$ operator, $Z(1)=\phasegate=\sqrt{Z}$, the phase gate, $Z(2)=T=\sqrt{\phasegate}$, the $T$ gate, etc.  (see \cref{sec: CH and errors}). Operators
$\Z{k}$ and $\X{k}$ are related to each other via the Hadamard matrix, $\had \Z{k}\had = \X{k}$. Note that for $\ell\in\br{0,\dots,k+1}$, $\Z{k}^{2^\ell}=\Z{k-\ell}$ and $\X{k}^{2^\ell}=\X{k-\ell}$, implying that $\Z{k}^{2^{k+1}}=\X{k}^{2^{k+1}}=\eye$, i.e., 
they have order $2^{k+1}$.

As before, suppose that $2^m$ physical qubits are indexed by the elements of $\ZZ_2^m$, and for a subcube $A\subcubeeq\ZZ_2^m$, define the transversal operator $\Z{k}_A$ via
\begin{align}
    \left(\Z{k}_A\right)_x&\coloneqq \begin{cases}
        \Z{k}, &\text{ if } x\in A, \\
        \eye, &\text{ otherwise}.
        \end{cases}
\end{align}

The {\em main results} of the present work are (1) necessary and sufficient conditions on $k$, $A$ that determine when $\Z{k}_A$ is an undetectable error for $QRM_m(q,r)$ and (2) explicit descriptions of the logic implemented by $\Z{k}_A$\footnote{We do not explicitly consider the case of $\X{k}_A$ operators in this work, as applying a global Hadamard transform to the code swaps the role of the $X$ and $Z$ operators, $\had^{\otimes 2^m}QRM_m(q,r)\had^{\otimes 2^m}=QRM_m(m-r-1,m-q-1)$. This implies that every result we prove for the $\Z{k}_A$ operators will be true for the $\X{k}_A$ operators by replacing $r\mapsto m-q-1$ and $q\mapsto m-r-1$.}. We therefore can implement higher level Clifford operators via higher level physical gates, and provide a geometric characterization of the realized logic.

In \cref{sec: sufficiency conditions} we prove the following:
\begin{theorem*}[\cref{thm: subcube dimension implies logic} for $\Z{k}_A$]
    Let $0\leq q\leq r\leq m$ be non-negative integers and consider the quantum Reed--Muller code $QRM_m(q,r)$. Suppose $A$ is a subcube of the $m$-dimensional hypercube. In order for $\Z{k}_A$ to preserve the code space it must be true that $m\geq q +kr+1$. Further,
    \begin{enumerate}
        \item $\Z{k}_A\in\mcS^*$ if and only if $\dim A\geq (k+1)r+1$.
        \item $\Z{k}_A\in\mcE^*$ if and only if $q+kr+1\leq \dim A\leq (k+1)r$.
    \end{enumerate}
\end{theorem*}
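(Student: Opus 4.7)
The plan is to translate $\Z{k}_A$ acting on code states into divisibility statements about Hamming weights of classical Reed--Muller codewords, and then invoke McEliece's divisibility theorem together with its tightness. Since $\Z{k}_A$ is diagonal in the computational basis, $\Z{k}_A\ket{v}=e^{i\pi|A\cap v|/2^k}\ket{v}$, where $|A\cap v|$ is the Hamming weight of $v$ restricted to the qubits indexed by $A$. The CSS structure gives code-space basis states $\ket{\overline{v}}\propto\sum_{g\in RM(q,m)}\ket{v+g}$ indexed by cosets $v\in RM(r,m)/RM(q,m)$, so $\Z{k}_A$ preserves the code iff $|A\cap(v+g)|\equiv|A\cap v|\pmod{2^{k+1}}$ for all $v\in RM(r,m)$ and $g\in RM(q,m)$, and lies in $\mcS^*$ iff $|A\cap v|\equiv 0\pmod{2^{k+1}}$ for all $v\in RM(r,m)$. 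The overall necessary condition $m\ge q+kr+1$ is just $\dim A\le m$ applied to the threshold that emerges for preservation.

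For claim~1 (stabilization), the restriction $v|_A$ of any $v\in RM(r,m)$ is a polynomial in $\dim A$ variables of degree at most $r$, so it lies in $RM(r,\dim A)$. By McEliece's divisibility theorem, the weights of $RM(r,\ell)$ are divisible by $2^{\lceil\ell/r\rceil-1}$, so $2^{k+1}$-divisibility holds uniformly exactly when $\lceil\dim A/r\rceil\ge k+2$, i.e., $\dim A\ge(k+1)r+1$. Conversely, McEliece's bound is tight: for any $\ell\le(k+1)r$ there exists a codeword of $RM(r,\ell)$ whose weight is an odd multiple of $2^{\lceil\ell/r\rceil-1}\le 2^k$, and extending the corresponding polynomial from the $\ell$-subcube $A$ to the full $m$ variables produces a $v\in RM(r,m)$ violating the stabilizer condition.

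For claim~2 (preservation), expanding $|A\cap(v+g)|=|A\cap v|+|A\cap g|-2|A\cap v\cap g|$ reduces the constancy condition to two subconditions: (L1) $|A\cap g|\equiv 0\pmod{2^{k+1}}$ for all $g\in RM(q,m)$ (obtained by setting $v=0$), and (L2) $|A\cap v\cap g|\equiv 0\pmod{2^{k}}$ for all $v\in RM(r,m)$ and $g\in RM(q,m)$. Condition (L1) is controlled by McEliece on $RM(q,\dim A)$, yielding the weaker bound $\dim A\ge(k+1)q+1$, which is automatic from the (L2) threshold since $q\le r$. For (L2), decompose $g=\sum_i g_i$ into monomials of degree at most $q$ and expand
\begin{equation*}
|A\cap v\cap g|=\sum_{\emptyset\neq T}(-2)^{|T|-1}\Bigl|A\cap v\cap\textstyle\bigcap_{i\in T}g_i\Bigr|.
\end{equation*}
Each inner intersection is the weight of $v$ restricted to a subcube of dimension at least $\dim A-q|T|$ (since intersections of $|T|$ subcubes of codimension $\le q$ have codimension $\le q|T|$), hence is divisible by $2^{\lceil(\dim A-q|T|)/r\rceil-1}$. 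Combined with the $(-2)^{|T|-1}$ prefactor, each term is divisible by $2^{|T|-1+\lceil(\dim A-q|T|)/r\rceil-1}$, which is at least $2^k$ exactly when $\dim A\ge q+kr+1$; the binding case is $|T|=1$. For necessity at $\dim A=q+kr$, I would choose $g=\prod_{i\in T_0}x_i$ with $T_0\subseteq\type(A)$ and $|T_0|=q$, making $A\cap g^{-1}(1)$ a $kr$-subcube of $A$; McEliece tightness on $RM(r,kr)$ then produces a $v$ that violates (L2).

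The main obstacle I anticipate is the case analysis for (L2): the interplay between the $(-2)^{|T|-1}$ factor and the McEliece divisibility on the restricted subcube splits into regimes depending on whether $\dim A-q|T|$ stays above $r$ (where McEliece provides the bulk of the divisibility) or drops below $r$ (where the $(-2)^{|T|-1}$ prefactor carries it), and one must verify that both regimes yield the same threshold $\dim A\ge q+kr+1$. A secondary subtlety is that the necessity construction produces a bad codeword of $RM(r,kr)$ that must be lifted to an element of $RM(r,m)$, which is straightforward (polynomials in fewer variables embed naturally as polynomials in more) but should be recorded explicitly. Once both divisibility thresholds are established, the $\mcS^*$/$\mcE^*$/no-preservation trichotomy follows by combining the stabilizer and preservation bounds.
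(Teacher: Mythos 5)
Your argument is sound and takes a genuinely different route from the paper. The paper proves \cref{clm: trivial logic unsigned} and \cref{clm: non-trivial logic unsigned} by induction on the level $k$ of the Clifford Hierarchy: the conjugation identity $\Z{k}_A X_B \Z{k}_A^\dagger = \omega_k^{\abs{A\cap B}}\Z{k-1}_{A\cap B}X_B$ (\cref{fact: unsigned conjugation identity}) reduces the level-$k$ question to a level-$(k-1)$ question on the intersection subcube $A\cap B$, and the purely geometric \cref{lem:overlap-guarantee} controls $\dim(A\cap B)$; no weight-divisibility input is needed, and the same machinery (phase-free conjugation of logical $X$'s) is then reused to extract the actual logical circuits in \cref{sec: signed logic,sec: unsigned logic}. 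You instead diagonalize $\Z{k}_A$ on the CSS coset basis and reduce both membership questions to divisibility of $\abs{A\cap v}$ and $\abs{A\cap v\cap g}$, which you settle by restricting codewords to subcubes and invoking Ax--McEliece divisibility for $RM(r,\ell)$ together with its tightness. This is a legitimate and arguably more direct route to the validity statement alone (and it connects naturally to the triorthogonality-style conditions of \cite{rengaswamy2020optimality}), at the cost of importing McEliece's theorem and its tightness as black boxes and of not producing the conjugation data needed later for the logic theorems. Your reductions check out: the coset-phase criterion for $\mcN^*$ and $\mcS^*$ is correct for a diagonal operator, the split into (L1) and (L2) via $\abs{A\cap(v+g)}=\abs{A\cap v}+\abs{A\cap g}-2\abs{A\cap v\cap g}$ is exact (take $v=0$ to isolate (L1)), and the $(-2)^{\abs{T}-1}$ inclusion--exclusion over monomials of $g$ does yield the threshold $\dim A\geq q+kr+1$ with $\abs{T}=1$ binding, including the regime where $\dim A - q\abs{T}$ drops below $r$ (there the prefactor $2^{\abs{T}-1}$ already exceeds $2^k$ because $q\le r$).

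Two loose ends should be recorded explicitly. First, your necessity argument for claim~2 is stated only at the boundary $\dim A=q+kr$; for the full ``only if'' you need every $\dim A\le q+kr$, which the same construction handles by taking $\abs{T_0}=\min(q,\dim A)$ and checking that the tightness exponent $\lceil(\dim A-q)/r\rceil-1$ stays $\le k-1$ throughout that range (and treating $\dim A<q$ and the degenerate $q=0$ case of (L1) separately). You should also note that a violation of (L2) for some $(v,g)$ does force a failure of preservation: either (L1) already fails for that $g$ (the $v=0$ instance), or (L1) holds and then $\abs{A\cap g}-2\abs{A\cap v\cap g}\not\equiv 0\pmod{2^{k+1}}$. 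Second, the tightness of McEliece for $RM(r,\ell)$ needs a concrete witness (e.g.\ a sum of $\lceil\ell/r\rceil$ degree-$r$ monomials on disjoint variable blocks) and the lift of that witness from the subcube to $RM(r,m)$ should be written down, as you anticipate. Neither issue threatens the argument.
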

In other words, whenever the dimension of $A$ is large enough ($\geq q+kr+1$) the operator $\Z{k}_A$ necessarily preserves the code space of $QRM_m(q,r)$. Furthermore, when the dimension of $A$ is even larger ($\geq (k+1)r+1$), $\Z{k}_A$ necessarily leaves each state of $QRM_m(q,r)$ invariant. The proof of \cref{thm: subcube dimension implies logic} is rather straightforward: by utilizing the recursive structure of the Clifford Hierarchy along with the geometric properties of the hypercube, we inductively prove the result for increasing values of $k$.

To study the logic realized by $Z(k)_{A}$, we need to generalize our notion of the index set for logical qubits. In particular, higher-level Clifford operators can be entangling, e.g. $CZ,\; CCZ,$ etc. Thus, our index set must now consider subcubes that span multiple logical operators.  

We will begin by considering the subsets of generators $ K\subseteq S$ that satisfy $q+kr+1\leq \abs{K}\leq (k+1)r$. When $k = 0$, this is precisely the index set of single qubit logical operators. We now define index sets for larger $k$, which will give a basis for the logical operators that we consider in this paper.

\begin{restatable}[Index set for the $k$-th level logicals; generalizes \cref{def: QRM logical labels}]{definition}{kgatesets}\label{def: k gate sets}
    For $k\in\ZZ_{\geq 0}$,  the \emph{index set for the $k$-th level logical operators of} $QRM_m(q,r)$, denoted by $\mcQ_k\subseteq \powerset{S}$, is given by the following collection of subsets of generators (which implicitly depends on the choices of $q$ and $r$:
    \begin{equation}
        \mcQ_k\coloneqq \br{K\subseteq S\Bigmid q+kr+1\leq \abs{K}\leq (k+1)r}.
    \end{equation}
\end{restatable}
That is, $K\in\mcQ_k$ implies that $\Z{k}_{\standard{K}}$ acts on a subcube with dimension large enough to preserve stabilizers but not so large as to realize trivial logic.

Consider the subsets of integers $\{q+kr+1,\dots,(k+1)r\}\subset \NN$ for $k\in\ZZ_{\geq 0}$, which are used to define the $\mcQ_k$ collections. These subsets are pairwise disjoint, each contains $r-q$ numbers, and they are separated from each other by $q$ numbers. They partition $\NN$ only in the case that $q=0$, otherwise there are integers that do not fall in $\{q+kr+1,\dots,(k+1)r\}$ for any $k\in\ZZ_{\geq 0}$. These facts have a number of simple implications on the collections $\mcQ_k$:
\begin{enumerate}
    \item The $\mcQ_k$ are all disjoint: If $k\neq \ell$ then $\mcQ_k\cap\mcQ_\ell=\emptyset$. Thus, if $K\in\mcQ_k$ then value of $k$ is unique for $K$.\\[-1.5em]
    \item If $k<\ell$, $K\in\mcQ_k$, and $L\in\mcQ_\ell$, then $\abs{K}+q<\abs{L}$.\\[-1.5em]
    \item When $q\geq 1$, there are $K\subseteq S$ for which $K\notin\mcQ_k$ for any $k\in\ZZ_{\geq 0}$.
\end{enumerate}
\cref{fig: q=0 r=2 admissible dimensions,fig: q=1 r=2 admissible dimensions} give visualizations for dimensions and subcube operators which support trivial, non-trivial, or no logic for two different quantum RM codes.

\begin{figure}[bh!]
    \centering
    \includegraphics[width=\linewidth]{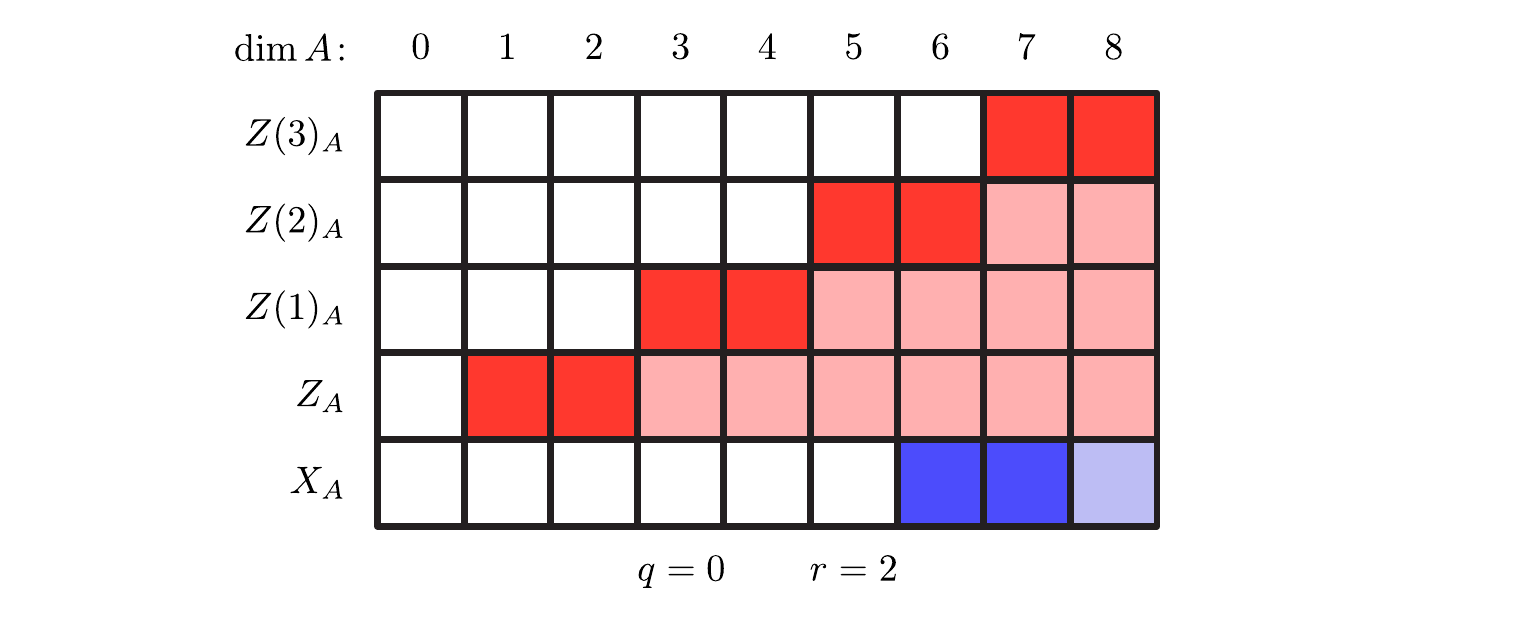}
    \caption{For $QRM_m(q,r)$ the dimension of a subcube $A\subcubeeq\ZZ_2^m$
     determines whether $\Z{k}_A$: (1) acts trivially on the code, (2) acts as a logical operator on the code, or (3) does not preserve the code space.\\ 
In the table above we consider the case of $QRM_8(0,2)$. 
 The columns are indexed by the dimension of the subcubes and the rows correspond to various 
    operators including $X_A$, $Z_A$ and a higher-level diagonal Cliffords. A light-colored box indicates that the given operator on a subcube of the given dimension acts trivially on the code. A dark box indicates a dimension where the subcube operator performs logic on the code. A white box indicates that the operator does not preserve the code space.\\
    We note that, as $q=0$, every dimension greater than zero admits a logical operator in some level of the Clifford Hierarchy, \emph{cf.} \cref{fig: q=1 r=2 admissible dimensions}.}
    \label{fig: q=0 r=2 admissible dimensions}
\end{figure}
\begin{figure}[ht!]
    \centering
    \includegraphics[width=\linewidth]{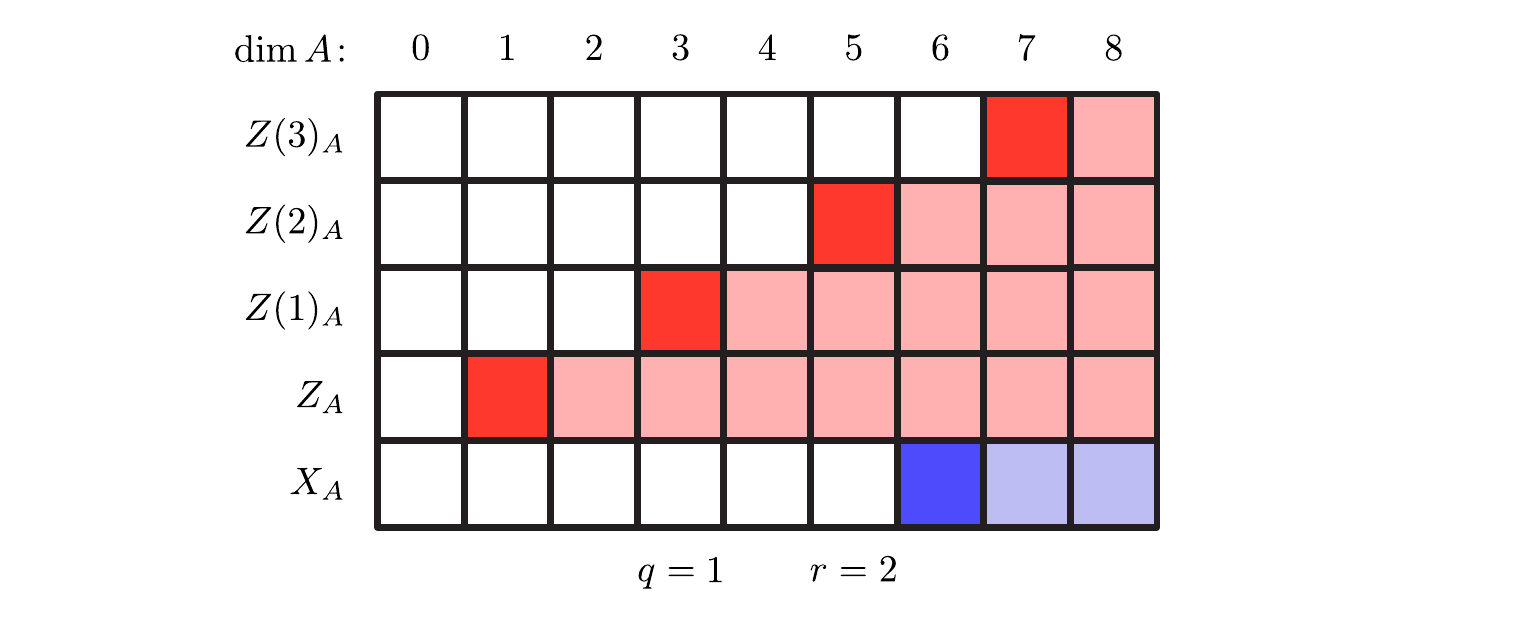}
    \caption{Admissible dimensions for subcube operators on $QRM_8(1,2)$. Note that, as $q\geq 1$, there are dimensions that do not support logical subcube operators in any dimension of the Clifford Hierarchy, e.g., if $A$ is a square, \emph{cf.} \cref{fig: q=0 r=2 admissible dimensions}.}
    \label{fig: q=1 r=2 admissible dimensions}
\end{figure}

For each $k\in\ZZ_{\geq 0}$ consider the following set of diagonal and transversal operators acting on the $m$-dimensional hypercube:
\begin{equation}
    \br{\Z{k}_{\standard{K}} \Bigmid K\in\mcQ_k}.
\end{equation}
Operators in this set are all \emph{standard subcube operators}, i.e., they act on standard subcubes of the hypercube.
For $k=0$, this set is precisely the basis of $Z$ logical Pauli operators for $QRM_m(q,r)$. Further, by \cref{thm: subcube dimension implies logic} each operator in this set is a non-trivial logical operator for $QRM_m(q,r)$.

The majority of the our paper is spent on determining precisely \emph{what} logic is performed by the $\Z{k}_{A}$ operators for arbitrary $k\geq 0$. Instead of considering arbitrary subcubes $A\subcubeeq\ZZ_2^m$, it turns out that we only need to consider the standard subcube operators; we will prove in \cref{sec: basis for k-th level} that every $\Z{k}_{A}$ operator can be decomposed as a product of operators $\Z{k'}_{\standard{K}}$ for various $k'\leq k$ and $K\subseteq S$. For the remainder of this section we will restrict ourselves to the case of standard subcube operators.

In the same way that $\Z{0}_{\standard{J}}$ operators (and their products) perform circuits of logical $Z$ operators, an operator $\Z{k}_{\standard{K}}\in\mcE^*$ acts as a circuit of logical \emph{multi-controlled-$Z$} operations, where the number of logical qubits each gate in the circuit can interact with is at most $k+1$. Throughout the text we include bars over the following operators to emphasize that a \emph{physical} implementation of $\Z{k}$ operators will ultimately realize a \emph{logical} implementation of a multi-controlled-$Z$ circuit.

\begin{definition}[$C^{(\ell)} Z$ gates]
    For $\ell\in\NN$, the logical \emph{multi-controlled-$Z$} gate is defined recursively as the $(\ell+1)$-qubit unitary operator $\overline{C^{(\ell)}Z}\coloneqq \overline{\ketbra{0}}\otimes \overline{\eye}+\overline{\ketbra{1}}\otimes \overline{C^{(\ell-1)}Z}$, where $\overline{C^{(0)}Z}\coloneqq \overline{Z}$. 
\end{definition}
The operator $\overline{C^{(\ell)}Z}$ is symmetric in the $\ell$ qubits; in particular, $\overline{C^{(\ell)}Z}$ is a diagonal gate that introduces a $-1$ phase to the all-ones computational basis state, $\overline{\ket{1^\ell}}$, and acts as identity on all other computational basis states. 

The next group of definitions introduce an important set of concepts for our results on transversal logic.

\begin{definition}[Set-controlled $Z$ gates]
    Given a collection of logical qubits $\mcJ\subseteq\mcQ$, $\abs{\mcJ}=\ell$, define the logical \emph{$\mcJ$-controlled-$Z$}, $\overline{C^\mcJ Z}$, as the $\abs{\mcQ}$-qubit gate that acts as $\overline{C^{(\ell)}Z}$ on the qubits in $\mcJ$ and identity elsewhere, $\overline{C^\mcJ Z}\coloneqq \overline{C^{(\ell)}Z}\vert_\mcJ\otimes \overline{\eye}\vert_{\mcQ\setminus \mcJ}$. By convention, $\overline{C^\emptyset Z}\coloneqq \overline{\eye}$ is the logical identity.
\end{definition}

\begin{definition}[Controlled $Z$ gates over collections of sets of logical qubits]
    Given a collection of sets of logical qubits $\mcF\subseteq \powerset{\mcQ}$, define the logical \emph{$\mcF$-controlled-$Z$} operator as the circuit consisting of $\overline{C^\mcJ Z}$ operators for each $\mcJ\in\mcF$, $\overline{C^\mcF Z}\coloneqq\prod_{\mcJ\in\mcF}\overline{C^\mcJ Z}$.
\end{definition}


As mentioned in \cref{sec:intro}, \cref{thm: subcube dimension implies logic} implies that the logical circuit implemented by $\Z{k}_{\standard{K}}$, $K\in\mcQ_k$, will be a circuit composed of multi-controlled-$Z$ operators. Given a $K\in\mcQ_k$, we will now define such a collection that will, in many cases, correctly determine the corresponding logical circuit for $\Z{k}_{\standard{K}}$. Recall that a set of logical qubits $\mcJ$ is represented by a 
collection of subsets of generators $J\subseteq S$. 

\begin{restatable}[Minimal covers for logical index sets] {definition}{minimalcovers}\label{def: minimal covers}
    Suppose that $K\in\mcQ_k$.
    A set of logical qubits $\mcJ\subseteq\mcQ$ is said to form a \emph{$\mcQ$-minimal cover for $K$}, or simply a \emph{minimal cover for $K$}, if (1) $\mcJ$ is a cover of $K$, i.e., $\bigcup_{J\in\mcJ}J=K$, and (2) the number of qubits in $\mcJ$ is $\abs{\mcJ}=k+1$. That is, $\mathcal{J}$ is a $\mathcal{Q}$-minimal cover for $K$ if all of its unique generators are exactly the generators of $K$.  
    
    Since $\abs{J}\leq r$ for each $J\in\mcQ$ and $\abs{K}\geq q+kr+1$ by \cref{def: k gate sets}, $k+1$ is the smallest possible number of sets from $\mcQ$ that cover $K$, hence the ``minimal'' designation. 
    
    Let $\mcF(K)\subseteq\powerset{\mcQ}$ denote the collection of all minimal covers for $K$,
    \begin{equation*}
        \mcF(K) \coloneqq \br{\mcJ\subseteq \mcQ \Bigmid \mcJ \text{ is a minimal cover for $K$}}.
    \end{equation*}
\end{restatable}

\begin{remark}
    The ``minimality'' condition in the above definition is necessary to ensure that the multi-controlled-$Z$ gates of a circuit corresponding to $\mcF(K)$ will act on precisely $k+1$ qubits, which is, itself, necessary to ensure the implemented logical circuit lies in the $k$-th level of the Clifford Hierarchy (see \cref{sec: CH and errors} and \cref{sec:CZ circuits}). The motivation for the ``cover'' property is more subtle: $\mcJ$ is a cover of $K$ if and only if $\Z{k}_{\standard{K}}$ jointly overlaps with all of the logical $\overline{X}_J$ operators, $J\in\mcJ$, on an \emph{odd} number of qubits (e.g., see \cref{fig: intersecting cubes}). In the special case that $k=0$ this reflects the anti-commutativity of the $\overline{Z}_J=\Z{0}_{\standard{J}}$ and $\overline{X}_J$ operators. Informally, the cover property of $\mcJ$ is indicative of a fundamental overlap requirement for a transversal $\Z{k}$ operator to implement a logical multi-controlled-$Z$ gate on a set of qubits. Perhaps an odd overlap on the joint support of $X$ logicals is part of a more general phenomenon related to the classification of the circuits implemented by transversal $\Z{k}$ operators; we leave this question, which may be related to the triorthogonality property \cite{Bravyj2012magic}, for future work.
\end{remark}

\begin{figure}[t!]
    \centering
    \includegraphics{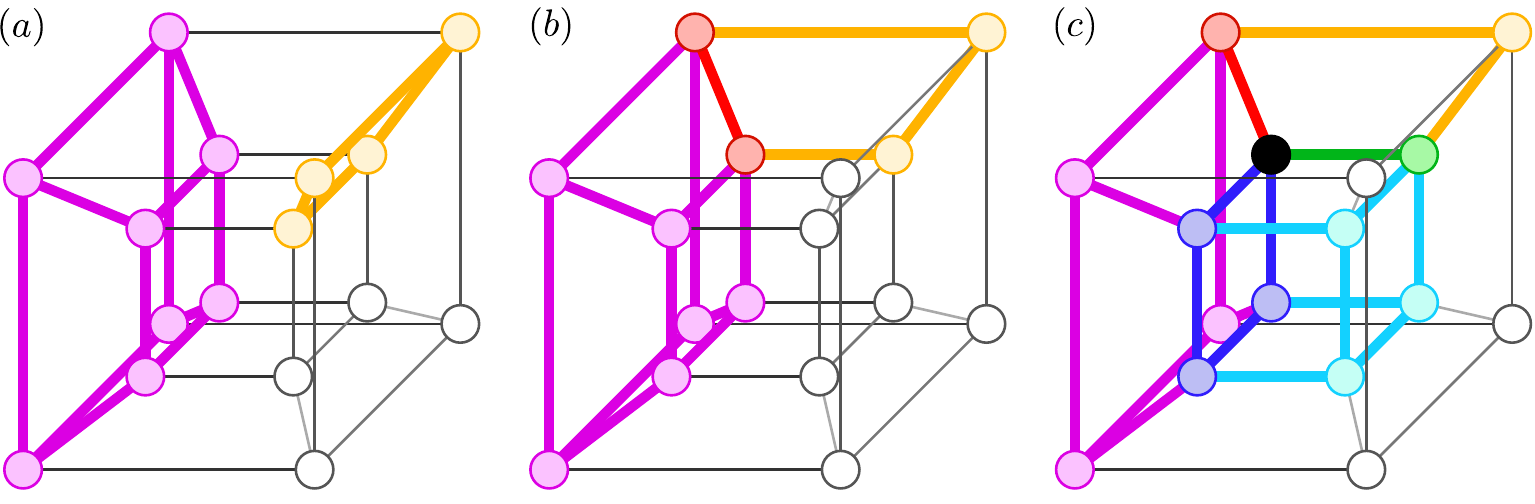}
    \caption{Consider the code $QRM_4(0,2)$ and the standard subcube operator $\phasegate_{\standard{2,3,4}}$. To motivate the utility of the ``cover'' property for a collection of logical qubits $\mcJ\subseteq\mcQ$, we consider the intersection of the standard subcube $\standard{2,3,4}$ (represented as a magenta cube in each subfigure) with various collections. \\
    (a) In this case $\mcJ=\br{\br{1,2}}$ is not a cover for $\br{2,3,4}$. The logical $X$ operator it corresponds to acts on the \emph{non-standard} subcube $1100+\standard{3,4}$ (represented by an orange square). Clearly this subcube does not intersect $\standard{2,3,4}$, so subcube operators acting on them commute.\\
    (b) In this case $\mcJ=\br{\br{2,3}}$ is not a cover for $\br{2,3,4}$. The logical $X$ operator it corresponds to acts on the \emph{non-standard} subcube $0110+\standard{1,4}$ (represented by an orange square). By construction this subcube intersects $\standard{2,3,4}$ on an even number of qubits (represented as red dots).\\
    (c) In this case $\mcJ=\br{\br{2,3},\br{4}}$ \emph{is} a cover for $\br{2,3,4}$. It corresponds to two logical $X$ operators acting on \emph{non-standard} subcubes: $0110+\standard{1,4}$ (orange square) and $0001+\standard{1,2,3}$ (cyan cube). The joint intersection of these subcubes with the standard subcube $\br{2,3,4}$ is a \emph{single} qubit (black vertex), $0111$.}
    \label{fig: intersecting cubes}
\end{figure}

When $k=0$, any $\mcQ$-minimal cover for $J\in\mcQ$ necessarily contains a single element from $\mcQ$, which by definition must be $J$ itself. So, the set of all minimal covers for $J\in\mcJ$ is simply $\mcF(J)=\br{\br{J}}$. Any theorem describing the logical circuit for $\Z{k}_{\standard{K}}$, $K\in\mcQ_k$ must necessarily reduce to $\Z{0}_{\standard{J}}=\overline{Z}_J$ in the case that $J\in\mcQ$. We note that, at least in this simple case of $k=0$, it is trivial that $\Z{0}_{\standard{J}}=\overline{C^{\mcF(J)}Z}$ for $J\in\mcQ$; this fact will hold for more general $k\geq 0$, at least in the case that $q\geq 1$.

We are now ready to state our main theorem, proven in \cref{sec: unsigned logic}, on the transversal application of $\Z{k}$ to a subcube.
We suppose that $q<r$, as otherwise $QRM_m(q,r)$ encodes no logical qubits.
\begin{theorem*}[Description of $Z(k)_{\langle K \rangle}$ logic; informal version of \cref{thm: Zk and tildeZk equivalence conditions}]
    Consider the code $QRM_m(q,r)$ and suppose $q\geq 1$. For every $K\in\mcQ_k$, the operator  $\Z{k}_{\standard{K}}$ implements the logical multi-controlled-$Z$ circuit corresponding to the collection of minimal covers of $K$:
    \begin{equation}
        \Z{k}_{\standard{K}} \equiv \overline{C^{\mcF(K)}Z}.
    \end{equation}
\end{theorem*}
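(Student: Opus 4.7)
The plan is to compute the action of $\Z{k}_{\standard{K}}$ on each logical computational basis state $\overline{\ket{a}}$ and show it matches that of $\overline{C^{\mcF(K)}Z}$. Both operators are diagonal in the logical basis, so it suffices to compare eigenvalues. By the Validity Theorem, $\Z{k}_{\standard{K}}$ preserves the code space, and since $\overline{\ket{a}}$ is the uniform superposition over a single coset $\phi(a) + G_X$ of the group $G_X \subseteq \FF_2^{2^m}$ generated by the supports of the $X$-stabilizers, preservation forces the phase $e^{i\pi w(y)/2^k}$, with $w(y) := |y \cap \standard{K}|$, to be constant on that coset. I therefore evaluate at the canonical representative $\phi(a) := \bigoplus_{J \in \mcQ: a_J = 1} \indicator{Y_J}$, the XOR of the supports $Y_J = e_J + \standard{S \setminus J}$ of the logicals $\overline{X}_J$ for which $a_J = 1$. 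Since $x \in Y_J$ iff $x_i = 1$ for all $i \in J$, restricting $\phi(a)$ to $\standard{K}$ (which sets $x_i = 0$ for $i \notin K$) kills every monomial indexed by $J \not\subseteq K$ and yields the $\FF_2$-polynomial $f_a := \sum_{J \in \mcQ,\, J \subseteq K,\, a_J = 1} \prod_{i \in J} x_i$ on $\FF_2^\kappa$ with $\kappa := |K|$. The problem reduces to showing $|f_a| \equiv (-1)^k 2^k N_a \pmod{2^{k+1}}$, where $N_a := |\{\mcJ \in \mcF(K) : a_J = 1 \;\forall J \in \mcJ\}|$; the congruence implies $e^{i\pi|f_a|/2^k} = (-1)^{N_a}$, which is exactly the eigenvalue of $\overline{C^{\mcF(K)}Z}$ on $\overline{\ket{a}}$.

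To compute $|f_a|$ I will use the character-sum identity $\sum_{x \in \FF_2^\kappa}(-1)^{f_a(x)} = 2^\kappa - 2|f_a|$ and factor $(-1)^{f_a(x)} = \prod_{J}(1 - 2 m_J(x))$ with $m_J := \prod_{i \in J} x_i$. Expanding the product over subsets $T$ of the active $J$'s, using the $\FF_2$-idempotence $\prod_{J \in T} m_J = m_{\bigcup_{J \in T} J}$ together with $\sum_{x \in \FF_2^\kappa} m_L(x) = 2^{\kappa - |L|}$, a routine calculation gives
\begin{equation*}
    |f_a| = \sum_{T \neq \emptyset} (-1)^{|T|+1}\, 2^{E_T}, \qquad E_T := \kappa - 1 + |T| - \bigl|\bigcup_{J \in T} J\bigr|.
\end{equation*}
The exponent $E_T$ equals $k$ precisely when $T$ is a minimal cover of $K$, i.e., $|T| = k + 1$ and $\bigcup_{J \in T} J = K$, and summing the contributions over such minimal covers among the active $J$'s gives $(-1)^k\, 2^k\, N_a$.

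The main obstacle is to show that every other $T$ satisfies $E_T \geq k + 1$, so that the remaining contributions vanish modulo $2^{k+1}$. For $|T| \geq k + 2$, the trivial bound $\bigl|\bigcup J\bigr| \leq \kappa$ yields $E_T \geq |T| - 1 \geq k + 1$. For $1 \leq |T| \leq k$, combining $|J| \leq r$ for $J \in \mcQ$ with $\kappa \geq q + kr + 1$ (the defining inequality of $\mcQ_k$) forces $\bigl|\bigcup J\bigr| \leq |T|\, r$, and the exponent rearranges as $E_T \geq k + q + (k - |T|)(r - 1)$, which is at least $k + 1$ whenever $q \geq 1$. This is precisely where the hypothesis of the theorem is essential: when $q = 0$ the inequality becomes sharp at $|T| = k$ and sub-minimal covers can contribute at the same order of $2$ as the minimal ones, necessitating the separate ``signed'' treatment developed in \cref{sec: signed logic}. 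Combining all cases modulo $2^{k+1}$ yields the desired congruence and hence the theorem.
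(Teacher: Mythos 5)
Your proof is correct, but it takes a genuinely different route from the paper's. The paper proves \cref{thm: Zk and tildeZk equivalence conditions} by first decomposing the unsigned operator into signed standard-subcube operators via the integer-valued indicator identity of \cref{lem: strong transversal indicator}, showing that for $q\geq 1$ every factor $\tildeZ{k-j}_{\standard{J}}$ with $j\geq 1$ is a Clifford stabilizer by \cref{thm: subcube dimension implies logic}, and then invoking \cref{thm: logical multi-controlled-Z circuit}, whose proof is an induction on $k$ tracking how $\tildeZ{k}_{\standard{K}}$ conjugates the logical $\overline{X}_J$ operators (via minimal covers, partial minimal covers, and dense sets). You instead work directly in the logical computational basis: you reduce to a single coset representative (legitimately, using the Validity Theorem to force constancy of the phase on each $RM(q,m)$-coset), compute the weight of the phase polynomial $f_a$ by a character sum, and isolate the order-$2^k$ term as the count $N_a$ of active minimal covers; the hypothesis $q\geq 1$ enters transparently as the $2$-adic bound $E_T\geq k+q+(k-|T|)(r-1)\geq k+1$ that kills the sub-minimal covers. (One small gap in your case analysis: $|T|=k+1$ with $\bigcup_{J\in T}J\subsetneq K$ is not listed, but then $E_T\geq k+1$ immediately since $|\bigcup_{J\in T}J|\leq\kappa-1$.) Your argument is more self-contained for the unsigned $q\geq1$ case, avoids the signed machinery entirely, and yields the exact phase rather than equality up to a global phase; the paper's conjugation-based route, by contrast, simultaneously delivers the signed-operator theorem and the $q=0$ cases (\cref{lem: q=0 r>1 unsigned logic}, \cref{lem: hypercube codes unsigned logic}) from the same decomposition, and your observation that at $q=0$ the bound becomes sharp at $|T|=k$ correctly identifies why those cases acquire the extra lower-level gates.
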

When $q=0$, the transversal $\Z{k}$ operators acting on standard subcubes can correspond to logical multi-controlled-$Z$ circuits that are defined by collections more general than a set of minimal covers. As an example, for the hypercube code family we will prove the following:
\newpage
\begin{lemma*}[\cref{lem: hypercube codes unsigned logic}]
    Consider the $[[2^m,m,2]]$ codes $QRM_m(0,1)$. For every $K\subseteq S$,
    \begin{equation}
        \Z{\abs{K}-1}_{\standard{K}} \equiv \overline{C^{\powerset{K}}Z},
    \end{equation}
    where $\powerset{K}$ is the power set of $K$.
\end{lemma*}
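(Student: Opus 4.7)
The plan is to prove the equivalence by direct action on logical computational basis states, which are tractable in the hypercube code because $QRM_m(0,1)$ has a single $X$-stabilizer (the global $X^{\otimes 2^m}$) and a rich $Z$-stabilizer group. First, I will explicitly describe the logical basis. Writing $\overline{\ket{b}} = \tfrac{1}{\sqrt{2}}(\ket{\phi_b^0} + \ket{\phi_b^1})$ for $b\in\ZZ_2^m$, the requirements that each $\ket{\phi_b^c}$ have even weight on every $2$-cube and have parity $b_i$ on the standard $2$-cube $\standard{\{i\}}$ force $\phi_b^c$ to be an affine function of $x\in\ZZ_2^m$. Solving the parity constraints gives the closed form
\begin{equation*}
\phi_b^c(x) = c + \sum_{i=1}^m b_i x_i \pmod 2,\qquad c\in\{0,1\},
\end{equation*}
so that $\overline{\ket{b}}$ is an equal superposition of the two affine functions of $x$ with linear part $b\cdot x$.

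Next, I will compute the phase that $\Z{k}_{\standard{K}}$ applies to $\ket{\phi_b^c}$. By definition this phase is $e^{i\pi\alpha(b,c)/2^k}$ with
$\alpha(b,c) = |\{x\in\standard{K}: \phi_b^c(x)=1\}|$.
Because $x\in\standard{K}$ means $x_i=0$ for $i\notin K$, $\alpha(b,c)$ only depends on the restriction $b|_K$. If $b|_K=0$, then $\phi_b^c$ is the constant $c$ on $\standard{K}$, so $\alpha\in\{0,2^{k+1}\}$ and in both cases $e^{i\pi\alpha/2^k}=1$. If $b|_K\neq 0$, then $y\mapsto \sum_{i\in K} b_i y_i$ is a nontrivial linear form on $\ZZ_2^K$ and hence balanced, giving $\alpha=2^k$ independently of $c$ and phase $e^{i\pi}=-1$. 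Thus $\Z{k}_{\standard{K}}\,\overline{\ket{b}} = (-1)^{\mathbbold{1}[b|_K\neq 0]}\,\overline{\ket{b}}$.

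I then compute the action of the claimed RHS $\overline{C^{\powerset{K}}Z} = \prod_{\mcJ\subseteq K}\overline{C^{\mcJ}Z}$ on $\overline{\ket{b}}$. Each factor contributes the phase $(-1)^{\prod_{j\in\mcJ}b_j}$ (with the empty product convention $\overline{C^\emptyset Z}=\overline{\eye}$ contributing $+1$), so the total phase is $(-1)^{N(b)}$ where
\begin{equation*}
N(b) \;=\; \sum_{\emptyset\neq\mcJ\subseteq K}\;\prod_{j\in\mcJ}b_j \;=\; |\{\mcJ\neq\emptyset : \mcJ\subseteq\{j\in K : b_j=1\}\}|.
\end{equation*}
If $T\coloneqq\{j\in K:b_j=1\}$ is empty then $N(b)=0$; otherwise $N(b)=2^{|T|}-1\equiv 1\pmod 2$. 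Consequently the RHS applies phase $(-1)^{\mathbbold{1}[b|_K\neq 0]}$ to $\overline{\ket{b}}$, matching the LHS on every logical computational basis state. Since both operators are diagonal in this basis, the lemma follows.

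The only technical point requiring care is the derivation of the affine form of $\phi_b^c$ from the $Z$-stabilizer constraints; everything else is bookkeeping. I expect this step to be the main (minor) obstacle, but it reduces to the classical fact that $RM(1,m)$ consists exactly of the affine Boolean functions on $\ZZ_2^m$, which is consistent with the CSS description of $QRM_m(0,1)$ established earlier in the paper.
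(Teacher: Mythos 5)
Your proof is correct, but it takes a genuinely different route from the paper. The paper obtains this lemma as a corollary of its general machinery: it first decomposes the unsigned operator into signed operators via the indicator-function identity, $\Z{\abs{K}-1}_{\standard{K}} = \prod_{\emptyset\neq J\subseteq K} \tildeZ{\abs{J}-1}_{\standard{J}}$, observes that each factor is a nontrivial logical by the Validity Theorem (since $\abs{J}=(\abs{J}-1)+1$ puts $J\in\mcQ_{\abs{J}-1}$ for the hypercube code), and then invokes \cref{thm: logical multi-controlled-Z circuit} to read off $\overline{C^{\{\overline{i}\,:\,i\in J\}}Z}$ from each factor. You instead verify the identity by brute force on an explicit logical basis: the code states of $QRM_m(0,1)$ are $\frac{1}{\sqrt2}(\ket{\phi_b^0}+\ket{\phi_b^1})$ with $\phi_b^c$ affine (which is exactly the statement $C_2=RM(1,m)$ and $C_1^\perp=RM(0,m)$ from the CSS description), and you compare phases: the balancedness of a nonzero linear form on $\standard{K}$ gives $(-1)^{[\,b|_K\neq 0\,]}$ on the left, and the count $2^{\abs{T}}-1$ of nonempty subsets of $T=\{j\in K: b_j=1\}$ gives the same on the right. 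All the computations check out (including the $b|_K=0$ case, where $\alpha\in\{0,2^{\abs{K}}\}$ yields phase $1$, and the empty-set convention $\overline{C^{\emptyset}Z}=\overline{\eye}$), and since both operators are diagonal on this basis the equality of restrictions to the code space follows. What the two approaches buy: yours is elementary and self-contained for the hypercube family because the codewords are so explicit, whereas the paper's argument is a one-line specialization of results that cover all $QRM_m(q,r)$, where explicit codewords would be far less tractable; your method would not extend beyond cases where the code basis is this simple. One cosmetic slip: $\standard{\{i\}}$ is a $1$-cube (an edge with two vertices), not a $2$-cube.
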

In this case a $\Z{k}_{\standard{K}}$ operator still acts as a multi-controlled-$Z$ circuit, but the logical circuit it implements no longer has the property that it consists solely of $k$-qubit multi-controlled-$Z$ gates as in \cref{thm: Zk and tildeZk equivalence conditions}. In \cref{sec: unsigned logic} we will fully detail the logical circuits implemented by $\Z{k}_{\standard{K}}$ operators when $q=0$ and $r>q$ is arbitrary.

To characterize logical circuits implemented by $\Z{k}_{\standard{K}}$ operators,
we proceed in a somewhat indirect way, by instead considering so-called ``signed'' versions of these operators. We detail this through the next section.

\subsection{The signed operator case}
Readers familiar with the hypercube code family may find the choice to consider transversal $\Z{k}$ operators 
with no adjoints unusual. Consider, for example, the well-known $[[8,3,2]]$ \eczoo[code]{stab_8_3_2}, which in our notation is the code $QRM_3(0,1)$. The physical qubits of this code live on the vertices of a 3-dimensional cube, $Z$ stabilizers are given by faces of the cube, and there is a single $X$ stabilizer that acts on every qubit. It is known that applying the operator $\widetilde{T}$, defined as a $T$ gate on vertices with an even-weight index and $T^\dagger$ on vertices with an odd-weight index, will apply a logical $\overline{CCZ}$ to the 3 encoded qubits \cite{Campbellcolor}.

Now, in our notation the physical qubits are indexed by $\ZZ_2^3$, which is generated by $S\coloneqq\br{e_1,e_2,e_3}$, and logical qubits are indexed by single-element sets of generators, $\overline{1}\coloneqq\br{e_1}$, $ \overline{2}\coloneqq\br{e_2}$, and $ \overline{3}\coloneqq\br{e_3}$. The operator $\widetilde{T}$ is applied to the standard subcube $\standard{S}=\ZZ_2^3$--- the entire cube--- and it implements logical $\overline{C^{\br{\overline{1},\overline{2},\overline{3}}}Z}$. We observe the following:
\begin{itemize}
    \item $S\in\mcQ_2$: $0+2\cdot 1+1\leq \abs{K}=3\leq (2+1)\cdot 1$.
    \item $\mcF(S)=\br{\br{\overline{1},\overline{2},\overline{3}}}$: As $S\in\mcQ_2$, a minimal cover for $S$ must contain the index sets of 3 qubits, and the only choice of three logical qubits is all of $S$.
\end{itemize}
Thus, the 8-qubit operator $\widetilde{T}$ implements a logical $\overline{C^{\mcF(S)}}$.

We see in this example that, at least for $QRM_3(0,1)$, the ``signed'' operator, $\widetilde{T}$, correctly implements the logical circuit defined by the set of minimal covers for $S$. In an analogous way as $\widetilde{T}$, one can define signed phase operators, $\tildephase_{\standard{e_i,e_j}}$, acting on squares of the cube, and these operators are known to implement logical controlled-$Z$ between the qubits defined by the edges of the square. This same idea generalizes to the entire $[[2^m,m,2]]$ hypercube code family, and to signed versions of global $\Z{k}$ operators acting on subcubes. In summary, the operators that act as $\Z{k}$ on even-index qubits of the hypercube and $\Z{k}^\dagger$ on the odd-index qubits appear to form a 
fundamental operator set for $QRM_m(0,1)$. It turns out that this is true more generally for \emph{all} $QRM_m(q,r)$ codes.

Elaborating on this line of thought, consider the natural 2-coloring on the vertices of the $m$-dimensional hypercube, $\ZZ_2^m$, given by the parity of the Hamming weight of a vertex. Using this coloring, the \emph{signed $\Z{k}$ operator} acting on a subcube $A\subcubeeq\ZZ_2^m$, $\tildeZ{k}_A$, is defined by
\begin{equation}
    \left(\tildeZ{k}_A\right)_x\coloneqq \begin{cases}
        \Z{k}, &\text{ if $x\in A$ and $\abs{x}$ is even} \\
        \Z{k}^\dagger, &\text{ if $x\in A$ and $\abs{x}$ is odd} \\
        \eye, &\text{ otherwise}.
    \end{cases}
\end{equation}
That is, $\tildeZ{k}_A$ acts on $\Z{k}$ on the qubits in $A$ that have even Hamming weight and as the inverse of $\Z{k}$ on the odd Hamming weight qubits.

In \cref{sec: unsigned dimension conditions} we prove the following in precisely the same way as the case of $\Z{k}_A$:
\begin{theorem*}[\cref{thm: subcube dimension implies logic} for $\tildeZ{k}_A$]
    Let $0\leq q\leq r\leq m$ be non-negative integers and consider the quantum Reed--Muller code $QRM_m(q,r)$. Suppose $A$ is a subcube of the $m$-dimensional hypercube. In order for $\tildeZ{k}_A$ to preserve the code space it must be true that $m\geq q +kr+1$. Further,
    \begin{enumerate}
        \item $\tildeZ{k}_A\in\mcS^*$ if and only if $\dim A\geq (k+1)r+1$.
        \item $\tildeZ{k}_A\in\mcE^*$ if and only if $q+kr+1\leq \dim A\leq (k+1)r$.
    \end{enumerate}
\end{theorem*}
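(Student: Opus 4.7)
The plan is to prove this theorem by induction on $k$, in exact parallel to the proof strategy for the unsigned operator $\Z{k}_A$. The key structural facts are the Clifford-hierarchy squaring identity $(\tildeZ{k}_A)^2 = \tildeZ{k-1}_A$ (which holds because $(\Z{k})^2 = \Z{k-1}$ and $(\Z{k}^\dagger)^2 = \Z{k-1}^\dagger$) together with the single-qubit conjugation rule $X\Z{k}X = e^{i\pi/2^k}\Z{k}^\dagger$. The base case $k=0$ is immediate: since $Z$ is Hermitian, $\tildeZ{0}_A = Z_A$, and the dimension conditions reduce to the Pauli characterization, which follows directly from \cref{def: quantum RM code hypercube} and the basis $L_Z$ of the logical Pauli $Z$ group.

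The engine of the inductive step is the conjugation identity
\[
X_B\,\tildeZ{k}_A\,X_B \;=\; e^{i\pi(n_e-n_o)/2^k}\;\tildeZ{k}_A\cdot\bigl(\tildeZ{k-1}_{A\cap B}\bigr)^{-1},
\]
valid for any subcube $B$, obtained by localizing the action to $A\cap B$, applying the single-qubit rule, and using $\bigl(\tildeZ{k}_{A\cap B}\bigr)^\dagger\bigl(\tildeZ{k}_{A\cap B}\bigr)^{-1} = \bigl(\tildeZ{k-1}_{A\cap B}\bigr)^{-1}$. Here $n_e$ and $n_o$ count the even- and odd-Hamming-weight vertices of $A\cap B$, so the prefactor is trivial whenever $\dim(A\cap B)\geq 1$ (every nonempty subcube of positive dimension is parity-balanced). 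To extract necessary conditions, one specializes $B$. Taking $B$ to be an $(m-q)$-cube, the product $(X_B\tildeZ{k}_A X_B)\tildeZ{k}_A^{-1}$ must act as identity on the code (since conjugation by an $X$-stabilizer fixes the logical action), which forces $\tildeZ{k-1}_{A\cap B}\in\mcS^*$; as the smallest attainable value of $\dim(A\cap B)$ over such $B$ is $\dim A - q$, the IH yields the preservation threshold $\dim A \geq q+kr+1$. To upgrade to the stabilizer threshold, conjugate instead by the logical Paulis $\overline{X}_J$ supported on $e_J+\standard{S\setminus J}$ for $J\in\mcQ$: if $\tildeZ{k}_A\in\mcS^*$, the same calculation forces $\tildeZ{k-1}_{A\cap\supp(\overline{X}_J)}\in\mcS^*$, and the smallest intersection dimension here is $\dim A - r$ (achieved for $J\subseteq J_A$ of size $r$), so the IH delivers $\dim A \geq (k+1)r+1$.

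Sufficiency in each regime follows by running the converses: once $\dim A$ exceeds the relevant threshold, every lower-level intersection operator $\tildeZ{k-1}_{A\cap(\cdot)}$ is a stabilizer by the IH, so $\tildeZ{k}_A$ commutes, modulo stabilizers, with every $X$-stabilizer (giving $\tildeZ{k}_A\in\mcN^*$) and, in the stabilizer regime, also with every logical $\overline{X}_J$. The latter forces the induced logical action to commute with the full logical Pauli group and hence to be a global phase, which is then checked to equal $1$ by direct evaluation on $\overline{\ket{0}}$ using the phase $\chi(v) = |A\cap E\cap\supp(v)| - |A\cap O\cap\supp(v)|$ that $\tildeZ{k}_A$ imprints on computational basis states. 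The main obstacle will be careful phase bookkeeping: tracking the prefactor $e^{i\pi(n_e-n_o)/2^k}$ cleanly through the induction and isolating the degenerate case $\dim(A\cap B)=0$ where the cancellation $n_e=n_o$ fails. A direct dimension count shows such single-vertex intersections cannot arise in the admissible regimes for $B$, so the obstruction does not actually interfere with the induction, and the argument carries through uniformly in $k$.
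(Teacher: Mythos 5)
Your proposal is correct and follows essentially the same route as the paper: induction on $k$, the conjugation identity $\tildeZ{k}_A X_B \tildeZ{k}_A^\dagger = \alpha\,\tildeZ{k-1}_{A\cap B}X_B$ with the phase trivial whenever $\dim(A\cap B)\geq 1$, and the subcube-overlap bound giving minimal intersection dimensions $\dim A - q$ (against $X$-stabilizers) and $\dim A - r$ (against logical $X$'s), which the paper packages as \cref{lem:overlap-guarantee}, \cref{lem: signed operator phase free condition}, and the two claims \cref{clm: trivial logic signed} and \cref{clm: non-trivial logic signed}. The only differences are cosmetic (the paper conjugates an arbitrary $(m-r)$-cube based at the same point as $A$ rather than a symplectic-basis logical, and handles the residual global phase via an unproven equivalence Fact rather than your explicit evaluation on $\overline{\ket{0}}$).
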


Our main result, concerning the logic implement by signed subcube operators, is the following:

\begin{theorem*}[\cref{thm: logical multi-controlled-Z circuit}]
    For every $K\in\mcQ_k$, $\tildeZ{k}_{\standard{K}}$ implements the logical multi-controlled-$Z$ circuit corresponding to the collection of minimal covers of $K$:
    \begin{equation}
        \tildeZ{k}_{\standard{K}} \equiv \overline{C^{\mcF(K)}Z}.
    \end{equation}
\end{theorem*}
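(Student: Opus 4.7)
My plan is to show that both $\tildeZ{k}_{\standard K}$ and $\overline{C^{\mcF(K)}Z}$ are diagonal in the logical computational basis and assign the same phase to every logical basis state $\overline{\ket y}$, $y\in\{0,1\}^{\mcQ}$. By the signed version of \cref{thm: subcube dimension implies logic} already established, $\tildeZ{k}_{\standard K}\in\mcE^*$; and since $|K|\geq q+kr+1\geq kr+1=((k-1)+1)r+1$, its square $(\tildeZ{k}_{\standard K})^2=\tildeZ{k-1}_{\standard K}$ lies in $\mcS^*$. Hence $\tildeZ{k}_{\standard K}$ is a logically Hermitian diagonal operator, and by the Cui--Gottesman--Krishna classification \cite{CGK17} it is automatically a logical circuit of multi-controlled-$Z$ gates. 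The only question is \emph{which} circuit, and since both candidates are diagonal in the logical basis it suffices to compare their eigenvalues on each $\overline{\ket y}$. The target phase is immediate: $\overline{C^{\mcF(K)}Z}$ applies $(-1)^{N(y)}$ to $\overline{\ket y}$, where $N(y)\coloneqq|\{\mcJ\in\mcF(K):\mcJ\subseteq\{J:y_J=1\}\}|$.

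For the physical side, $\tildeZ{k}_{\standard K}$ is diagonal in the physical basis with phase $\exp(i\pi\,2^{-k}\,s_K(v))$ on $\ket v$, where $s_K(v)\coloneqq\sum_{x\in\standard K}(-1)^{|x|}v_x$. I would represent $\overline{\ket y}$ as the equal superposition over the $X$-stabilizer orbit of the computational-basis state $\ket{v_y}$, with $v_y\in\ZZ_2^n$ obtained by applying $\prod_{J:y_J=1}\overline X_J$ to $\ket{0^n}$; since we already know $\tildeZ{k}_{\standard K}$ preserves $\overline{\ket y}$ up to a scalar, it suffices to evaluate $s_K(v_y)\bmod 2^{k+1}$ on this one representative. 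The bit string $v_y$ is the mod-$2$ sum of the indicators of the subcubes $A_J\coloneqq e_J+\standard{S\setminus J}$ over those $J$ with $y_J=1$, so lifting the mod-$2$ sum to the integers via
\[
\bigoplus_{i=1}^{\ell}\mathbf 1_{A_i}\;=\;\sum_{\emptyset\neq\mcJ\subseteq[\ell]}(-2)^{|\mcJ|-1}\prod_{i\in\mcJ}\mathbf 1_{A_i}
\]
turns the sum into
\[
s_K(v_y)\;=\;\sum_{\emptyset\neq\mcJ\subseteq\{J:y_J=1\}}(-2)^{|\mcJ|-1}\sum_{x\in C_\mcJ}(-1)^{|x|},\qquad C_\mcJ\coloneqq\standard K\cap\bigcap_{J\in\mcJ}A_J.
\]

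The combinatorial crux is then two observations about these intersections. First, $C_\mcJ$ is a (possibly empty) subcube of type $K\setminus\bigcup_{J\in\mcJ}J$, and the signed Hamming-parity sum $\sum_{x\in C_\mcJ}(-1)^{|x|}$ vanishes as soon as $C_\mcJ$ has positive dimension (a free coordinate contributes $1+(-1)=0$); equivalently, only families $\mcJ$ that \emph{cover} $K$ contribute, in which case the unique surviving point is $e_K$ (each $A_J$ forces coordinates in $J$ to $1$, and the cover property pins down every coordinate of $K$), so the inner sum equals $(-1)^{|K|}$. Second, the factor $(-2)^{|\mcJ|-1}$ kills every cover with $|\mcJ|>k+1$ modulo $2^{k+1}$, and sub-minimal covers do not exist because $|K|\geq q+kr+1>kr$ and each $|J|\leq r$ force at least $k+1$ sets; hence only minimal covers survive, each contributing the same value $(-2)^k(-1)^{|K|}$. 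Combining, $s_K(v_y)\equiv(-2)^k(-1)^{|K|}\,N(y)\pmod{2^{k+1}}$, which exponentiates to $(-1)^{N(y)}$, matching the target. The main obstacle is really bookkeeping: justifying that one may pass from the mod-$2$ identity on the indicators to the integer identity and then to a well-defined phase modulo $2^{k+1}$, and verifying the claim that every minimal-cover singleton equals $e_K$; no case split on $q$ is needed here because the signed pattern cancels exactly those sub-minimal contributions that force the $q\geq 1$ hypothesis in the analogous unsigned theorem.
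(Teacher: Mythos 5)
Your proposal is correct, but it proves the theorem by a genuinely different route than the paper. The paper argues by induction on $k$ at the level of operator algebra: it establishes a conjugation rule $\tildeZ{k}_{\standard{K}}\,\overline{X}_J\,\tildeZ{k}_{\standard{K}}^\dagger \equiv \overline{X}_J\prod_{K'\in\mcD_J(K)}\tildeZ{k-1}_{\standard{K'}}$ via the notion of ``dense subsets'' and the decomposition theorem \cref{thm: Zk logical decomposition}, proves a composition lemma identifying $\prod_{K'\in\mcD_J(K)}\overline{C^{\mcF(K')}Z}$ with $\overline{C^{\mcF(K)_{\sim J}}Z}$, and then matches conjugation of all logical Paulis inductively. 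You instead compute eigenvalues directly: both operators act diagonally on the logical computational basis (for the physical side this follows because a diagonal physical operator commutes with every $\overline{Z}_J$ and, by \cref{thm: subcube dimension implies logic}, preserves the code space --- this is the clean justification; the appeal to \cite{CGK17} is not really what licenses evaluating on a single orbit representative, so tighten that step), and the phase on $\overline{\ket{y}}$ reduces to $s_K(v_y)\bmod 2^{k+1}$ via the inclusion--exclusion lift of XOR to the integers. The combinatorial core --- that $\standard{K}\cap\bigcap_{J\in\mcJ}A_J$ is empty unless every $J\subseteq K$, that the signed parity sum kills any positive-dimensional intersection so only covers of $K$ survive with the single point $e_K$, that $\abs{K}\ge q+kr+1$ forces $\abs{\mcJ}\ge k+1$, and that $(-2)^{\abs{\mcJ}-1}$ annihilates covers of size $>k+1$ modulo $2^{k+1}$ --- is sound and correctly isolates exactly the minimal covers with weight $(-2)^k(-1)^{\abs{K}}$, yielding $(-1)^{N(y)}$. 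What each approach buys: the paper's route produces reusable intermediate results (the conjugation rule and composition lemma feed the phase-gate warm-up, the arbitrary-subcube theorem, and the unsigned case), whereas yours is shorter, avoids the induction and the dense-subset machinery entirely, is closer in spirit to the phase-polynomial method of \cite{rengaswamy2020optimality}, and makes transparent \emph{why} minimal covers are the objects that appear; it also correctly explains why no case split on $q$ is needed for the signed operators.
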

\cref{thm: logical multi-controlled-Z circuit} is proven by showing that the physical operator, $\tildeZ{k}_{\standard{K}}$, conjugates the physical implementations of the logical Pauli operators of $QRM_m(q,r)$ in precisely the same way as the logical $\overline{C^{\mcF(K)}Z}$ circuit. This proof is given in \cref{sec: signed logic}.

We give the logical circuits for \emph{unsigned} operators through the use of an operator decomposition theorem. In particular, we first show how to decompose a standard {unsigned} operator, $\Z{k}_{\standard{K}}$ into a product of standard \emph{signed} operators, $\tildeZ{k'}_{\standard{K'}}$.
The fact that unsigned operators no longer implement logical circuits corresponding to collection of minimal covers when $q=0$ is ultimately a byproduct of the structure of the logical index sets $\mcQ_k$, specifically, that the sets $\br{kr+1,\dots,(k+1)r}$ partition $\NN$. Our main result on the logic of unsigned operators is more formally stated as the following:

\begin{theorem*}[Description of $Z(k)_{\langle K \rangle}$ logic; \cref{thm: Zk and tildeZk equivalence conditions}]
    Let $K\in\mcQ_k$. If $q\geq 1$ then the unsigned operator $\Z{k}_{\standard{K}}$ implements the same logical circuit on $QRM_m(q,r)$ as its corresponding \emph{signed} version, $\tildeZ{k}_{\standard{K}}$
\end{theorem*}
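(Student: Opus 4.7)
The plan is to produce an explicit expression for the ``correction'' operator that relates $\Z{k}_{\standard{K}}$ to $\tildeZ{k}_{\standard{K}}$, and then show that when $q\geq 1$ this correction is a logical identity on $QRM_m(q,r)$.

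Comparing the two diagonal operators pointwise on the qubits of $\standard{K}$, they agree on every even-weight vertex, while on each odd-weight vertex the unsigned operator applies $\Z{k}$ and the signed operator applies $\Z{k}^\dagger$. Since $\Z{k}\cdot (\Z{k}^\dagger)^{-1}=\Z{k}^2=\Z{k-1}$, this yields the operator identity
\begin{equation*}
    \Z{k}_{\standard{K}} \;=\; \tildeZ{k}_{\standard{K}}\cdot \Z{k-1}_{\standard{K}_{\mathrm{odd}}},
\end{equation*}
where $\standard{K}_{\mathrm{odd}}$ denotes the set of odd Hamming-weight vertices in $\standard{K}$. The theorem therefore reduces to showing $\Z{k-1}_{\standard{K}_{\mathrm{odd}}}\in\mcS^*$.

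Although $\standard{K}_{\mathrm{odd}}$ is not itself a subcube, its indicator expands cleanly in subcube indicators. Starting from $(-1)^{|v|}=\prod_{i\in K}(1-2v_i)$ for $v\in\standard{K}$ and expanding the product, I would derive the Möbius-type identity
\begin{equation*}
    \indicator{\standard{K}_{\mathrm{odd}}}(v)\;=\;\sum_{\emptyset\neq L\subseteq K}(-1)^{|L|+1}\,2^{|L|-1}\,\indicator{e_L+\standard{K\setminus L}}(v).
\end{equation*}
Using the doubling rule $\Z{k-1}^{2^{|L|-1}}=\Z{k-|L|}$ (with $\Z{-1}\coloneqq\eye$ and higher powers trivializing by $2\pi$-periodicity), this lifts to the operator identity
\begin{equation*}
    \Z{k-1}_{\standard{K}_{\mathrm{odd}}}\;=\;\prod_{\emptyset\neq L\subseteq K}\Z{k-|L|}_{e_L+\standard{K\setminus L}}^{\,(-1)^{|L|+1}},
\end{equation*}
which is unambiguous because all factors are diagonal and commute.

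It then suffices to show that every factor lies in $\mcS^*$. The subcube $e_L+\standard{K\setminus L}$ has dimension $|K|-|L|$, and by the validity theorem applied to $\Z{k-|L|}$ on an arbitrary subcube, membership in $\mcS^*$ is guaranteed as soon as $|K|-|L|\geq (k-|L|+1)r+1$ (the case $k-|L|<0$ being automatic, as the operator reduces to $\eye$). Using $|K|\geq q+kr+1$ from $K\in\mcQ_k$, this inequality simplifies to $|L|(r-1)\geq r-q$, which for every $|L|\geq 1$ is satisfied precisely when $q\geq 1$. Hence every factor belongs to $\mcS^*$, and so does the product, yielding $\Z{k}_{\standard{K}}\equiv \tildeZ{k}_{\standard{K}}$ on the code space.

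The main obstacle I anticipate is the careful verification of the Möbius identity together with the edge cases $|L|>k$, where individual factors collapse to identity via $2\pi$-periodicity rather than by a direct dimension argument; these checks are elementary once the framework is in place. The conceptual heart of the argument is that the Möbius expansion produces only subcubes of strictly smaller dimension while the validity-theorem threshold for $\Z{k-|L|}$ drops in perfect lockstep, so that the single hypothesis $q\geq 1$ is exactly what pushes every correction term into $\mcS^*$.
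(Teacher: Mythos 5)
Your proposal is correct, and it reaches the conclusion by a genuinely different decomposition than the paper. The paper's proof expands the \emph{unsigned} standard operator entirely into \emph{signed} standard-subcube operators via \cref{lem: strong transversal indicator}, obtaining $\Z{k}_{\standard{K}} = \prod_{j=0}^{k}\prod_{\abs{J}=\abs{K}-j} \tildeZ{k-j}_{\standard{J}}^{\pm1}$, and then shows every $j\geq 1$ factor lies in $\stabs{k-j}$ when $q\geq 1$. You instead peel off $\tildeZ{k}_{\standard{K}}$ exactly, identify the pointwise correction $\Z{k-1}_{\standard{K}_{\mathrm{odd}}}$, and expand \emph{that} (via your Möbius identity, which checks out: it follows from $(-1)^{\abs{v}}=\prod_{i\in K}(1-2v_i)$) into \emph{unsigned} operators on the cosets $e_L+\standard{K\setminus L}$. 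The closing dimension count is arithmetically identical in both arguments ($j\leftrightarrow\abs{L}$, and the threshold $\abs{K}-\abs{L}\geq(k-\abs{L}+1)r+1$ reduces to $\abs{L}(r-1)\geq r-q$, which for $\abs{L}\geq1$ is exactly $q\geq1$), and your handling of the $\abs{L}>k$ factors via the order $2^k$ of $\Z{k-1}$ is sound. The trade-off: your route is more self-contained for the $q\geq1$ statement, since it makes the signed--unsigned discrepancy explicit as a single correction operator; the paper's route has the advantage that the same decomposition \eqref{eq: transversal Zk decomposition into tilde (no phases)} is reused verbatim when $q=0$ (\cref{lem: q=0 r>1 unsigned logic,lem: hypercube codes unsigned logic}), where the surviving lower-level factors are standard-subcube signed operators whose logic is already classified by \cref{thm: logical multi-controlled-Z circuit} — your coset factors would require a further reduction through \cref{thm: Zk logical decomposition} to get the same mileage there.
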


In particular, when $q\geq 1$, it turns out that every signed operator in the decomposition of $\Z{k}_{\standard{K}}$ acts trivially on $QRM_m(q,r)$ by \cref{thm: subcube dimension implies logic}, except for a single $\tildeZ{k}_{\standard{K}}$ term. Hence, when $q\geq 1$
we conclude that $\Z{k}_{\standard{K}}$ and $\tildeZ{k}_{\standard{K}}$ perform the same logical operation on $QRM_m(q,r)$, yielding \cref{thm: Zk and tildeZk equivalence conditions}. In \cref{sec: unsigned logic} we prove \cref{thm: Zk and tildeZk equivalence conditions} and \cref{lem: hypercube codes unsigned logic}, along with proving which logical circuits are implemented by unsigned subcube operators for $QRM_m(0,r)$ when $r\geq 2$.


\subsection{Discussion and future directions}
Here we summarize some remarks and avenues for possible extensions prompted by our results.


\paragraph{Subcube operators in the $X$ basis.}

As mentioned previously, our results for the $\Z{k}$ subcube operators translate directly to the case of $\X{k}$ operators. For instance, we have the following analogous version of \cref{thm: subcube dimension implies logic} in the $X$ basis:
\begin{restatable}{theorem}{Xbasis}\label{clm: subcube dimension implies logic (X basis)}
    Suppose $A\subcubeeq\ZZ_2^m$ is a subcube of $\ZZ_2^m$.
    \begin{enumerate}
        \item $\X{k}_{A}\in\mcS^*$ if and only if $\dim A\geq (k+1)(m-q-1)+1$.
        \item $\X{k}_{A}\in\mcE^*$ if and only if $m-r+k(m-q-1)\leq \dim A\leq (k+1)(m-q-1)$.
    \end{enumerate}
\end{restatable}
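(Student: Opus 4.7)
The plan is to deduce this statement as a direct corollary of \cref{thm: subcube dimension implies logic} via the Hadamard duality already noted in the footnote of the paper. The core idea is that conjugation by the global Hadamard $\had^{\otimes 2^m}$ exchanges the $X$- and $Z$-bases, so every claim about $Z(k)$-type subcube operators for one quantum RM code translates to a claim about $X(k)$-type subcube operators for the dual quantum RM code, with appropriately swapped parameters.

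First I would verify the stabilizer-level duality $\had^{\otimes 2^m}\, QRM_m(q,r)\, \had^{\otimes 2^m} = QRM_m(m-r-1,\,m-q-1)$. This is almost immediate from \cref{def: quantum RM code hypercube}: conjugation by $\had^{\otimes 2^m}$ sends each generator $X_A$ (for $A$ an $(m-q)$-cube) to $Z_A$, and each generator $Z_A$ (for $A$ an $(r+1)$-cube) to $X_A$. Reading off the new parameters $q'$ and $r'$ from the dual stabilizer generating sets, one finds $(r'+1) = m-q$ and $(m-q') = r+1$, i.e., $q' = m-r-1$ and $r' = m-q-1$. Hence $\had^{\otimes 2^m}$ furnishes a unitary equivalence of stabilizer codes $QRM_m(q,r) \cong QRM_m(q',r')$ that carries $\mcS^*$ to $\mcS^*$ and $\mcE^*$ to $\mcE^*$.

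Next I would use the single-qubit identity $\had\, \Z{k}\, \had = \X{k}$ transversally: for every subcube $A\subcubeeq \ZZ_2^m$,
\begin{equation}
    \had^{\otimes 2^m}\, \X{k}_A\, \had^{\otimes 2^m} \;=\; \Z{k}_A.
\end{equation}
Therefore $\X{k}_A$ lies in the unitary stabilizer group (respectively the set of undetectable logical errors) of $QRM_m(q,r)$ if and only if $\Z{k}_A$ lies in the corresponding set for $QRM_m(q',r')$. Applying \cref{thm: subcube dimension implies logic} to $QRM_m(q',r')$ then yields: $\Z{k}_A\in \mcS^*$ iff $\dim A \geq (k+1)r'+1 = (k+1)(m-q-1)+1$, and $\Z{k}_A\in \mcE^*$ iff $q'+kr'+1\leq \dim A\leq (k+1)r'$, which after the substitution $q'=m-r-1$, $r'=m-q-1$ becomes $m-r+k(m-q-1)\leq \dim A\leq (k+1)(m-q-1)$. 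Transporting back through Hadamard conjugation gives the two claimed conditions.

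The only potentially subtle step is the stabilizer-level duality in the first paragraph, but this is essentially bookkeeping on the hypercube picture of \cref{def: quantum RM code hypercube}: the indexing of generators by subcubes is symmetric in $X$ and $Z$, so swapping the roles of the two generating families is exactly what Hadamard conjugation achieves. Once this is in place, the remainder of the argument is purely mechanical parameter substitution, so no genuine new difficulty arises beyond what is already handled in the proof of \cref{thm: subcube dimension implies logic}.
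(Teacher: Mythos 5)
Your proposal is correct and is exactly the argument the paper intends: the footnote in Section 2.3 establishes $\had^{\otimes 2^m} QRM_m(q,r)\had^{\otimes 2^m}=QRM_m(m-r-1,m-q-1)$ together with $\had\Z{k}\had=\X{k}$, and the theorem follows from \cref{thm: subcube dimension implies logic} by the substitution $q\mapsto m-r-1$, $r\mapsto m-q-1$, which is precisely your parameter bookkeeping. The paper gives no further proof beyond this duality, so your writeup matches (and slightly elaborates) the paper's own reasoning.
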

As detailed in \cref{app: X basis operators}, the bounds for non-trivial logic given in \cref{thm: subcube dimension implies logic} and \cref{clm: subcube dimension implies logic (X basis)} are often incompatible with each other. In particular:
\begin{enumerate}
    \item For any $A,B\subcubeeq \ZZ_2^m$, if a quantum Reed--Muller codes supports transversal $\Z{k}_A\in\mcN^*$ for some $k\geq 2$, then it cannot support transversal $\X{k}_B\in\mcN^*$ for \emph{any} $k\geq 0$, and vice versa.
    \item Only when $m=q+r+1$, can the codes $QRM_{q+r+1}(q,r)$ simultaneously support \emph{global} transversal $\Z{1}$ and $\X{1}$ operators. They \emph{cannot} support $\Z{k}_A\in\mcN^*$ or $\X{k}_A\in\mcN^*$ for any value of $k\geq 2$. Interestingly, however, a global Hadamard operator implements non-trivial logic in $QRM_{q+r+1}(q,r)$.
\end{enumerate}

\paragraph{Diagonal and transversal operators in the Clifford Hierarchy.}
The physical operators we consider here all share a common structure: they are (1) diagonal, (2) transversal, and (3) lie in the Clifford Hierarchy. Denoting the group of all unitary operators satisfying these three conditions by $\dtc\leq \unitary(2)^{\otimes 2^m}$, one implication of the Validity Theorem is the following:
\begin{equation}\label{eq: validity result}
    \left\langle e^{i\theta} \tildeZ{k}_{\standard{K}} \Bigmid k\in\ZZ_{\geq 0},\; K\in\mcQ_k,\; \theta\in[0,2\pi)\right\rangle \subseteq \mcN^*\cap \dtc.
\end{equation}
That is, the group generated by the standard subcube operators for $k\geq0$ and $K\in\mcQ_k$ is a group of undetectable errors for $QRM_m(q,r)$, all with the property that they lie in the group $\dtc$. It is natural to wonder whether or not the converse is true: are there $\dtc$ operators that preserve the code space of $QRM_m(q,r)$, but that \emph{cannot} be produced via products of the basis subcube operators indexed by the $\mcQ_k$ collections?

We have proven through the Validity Theorem that the converse \emph{does} hold for subcube operators. In particular, given a subcube $A\subcubeeq\ZZ_2^m$, an operator $\tildeZ{k}_A\in\mcN^*$ must necessarily have a decomposition into standard subcube operators indexed by the $\mcQ_k$. Additionally, the CSS construction provides a converse statement when $k=0$: if $Z_M\in\mcN^*$ is a $Z$ operator acting on an arbitrary \emph{subset} $M\subseteq \ZZ_2^m$ then it necessarily can be decomposed as a product of $Z_{\standard{J}}$ operators for $J\in\mcQ$. This is, in fact, the statement that the group of undetecjktable Pauli $Z$ errors for $QRM_m(q,r)$ is isomorphic to the \emph{classical} Reed--Muller code of order $m-q-1$, $RM(m-q-1,m)$. 

In \cref{app: dtc} we formulate a possible converse to \cref{eq: validity result} in the language of linear codes over \emph{rings} instead of fields. For the ring $R_k\coloneqq \ZZ_{2^{k+1}}$, we construct a family of \emph{generalized Reed--Muller codes}\footnote{Generalizations of Reed--Muller codes to ring alphabets
have been previously studied within the framework of finite ring extensions and Galois rings \cite{bhaintwal2010generalized}. We believe that the codes we define here are different from the code families considered in the literature.}  as submodules of $R_k^{2^m}$ by drawing inspiration from the geometric construction of RM codes used throughout our paper. We detail a possible characterization of $\mcN^*\cap\dtc$ in terms of these generalized RM codes over $R_k$, though we leave the study of this characterization for future work.

Beyond $\dtc$ operators, one can also consider the space of diagonal operators in the Clifford Hierarchy, fully classified in \cite{CGK17}. This prompts us to pose the following question: Can the geometric structure of quantum RM codes be used to give necessary and sufficient conditions for when \emph{constant-depth circuits} from the diagonal Clifford Hierarchy perform logic? We have not attempted to answer it in this paper.

\paragraph{Puncturing quantum RM codes.}
Many distillation protocols employ punctured or shortened quantum RM codes. We hope to lift our geometric results to regimes in which codes have been deformed, providing new intuition as to their logical operators.  

Our results demonstrate that quantum RM codes support logical circuits of multi-controlled-$Z$ gates by applying transversal subcube operators. While unitary synthesis via phase polynomials has been explored \cite{campbell2017unified}, compilation to magic $T$ states is prevalent and rigorously studied \cite{beverland2022assessing, kliuchnikov2023shorter, haah2018codes}. 
$T$ (and related $\overline{\Z{k}}$ logicals) can be achieved by painstakingly puncturing coordinates from quantum RM codes. Take, for instance, the $[[2^m-1,1,3]]$ family of \eczoo[simplex codes]{diagonal_clifford}. To construct these codes one first considers the quantum RM code $QRM_m(1,1)$, which encodes no logical qubits, but is nonetheless a CSS code whose $X$ stabilizers are given by the $(m-1)$-cubes in the $m$-dimensional hypercube and whose $Z$ stabilizers are given by the $2$-cubes (squares). Consider however, what happens when we remove a vertex from the hypercube but still define $X$ and $Z$ operators using $(m-1)$-cubes and $2$-cubes (there are now less of each as any subcube that contained the expunged vertex itself was removed). This process has the effect of \emph{shortening} and \emph{puncturing} the chosen RM codes
\begin{equation}\label{eq: punctured QRM code}
    RM(1,m)^{\circ}\subset RM(1,m)^*.
\end{equation}
In the language of coding theory, the \emph{shortened} code $RM(1,m)^\circ$ is obtained by removing the first bit (or any bit) from all codewords of $RM(1,m)$ that have a 0 in that position. The \emph{punctured} code $RM(1,m)^*$ is obtained from $RM(1,m)$ by simply removing the first bit from every codeword.
The shortened code is contained within the punctured code, the quotient space $RM(1,m)^*/RM(1,m)^\circ$ has dimension 1, and the dual picture \emph{also} has the structure of a punctured/shortened RM code:
\begin{equation}
     (RM(1,m)^*)^\perp = RM(m-2,m)^\circ \subset RM(m-2,m)^* = (RM(1,m)^{\circ})^\perp.
\end{equation}
The family $\CSS( RM(m-2,m)^*,  RM(1,m)^* )$ is the $[[2^m-1,1,3]]$ simplex code family, and it is known that the {global} transversal $\Z{k}$ operator \cite{koutsioumpas2022smallestcodetransversalt} implements a \emph{logical} $\overline{\Z{k}}^\dagger$ operator.

To distill solely $T$ gates, prior work has focused on puncturing triorthogonal codes, codes with special symmetries that are closely related to quantum RM codes \cite{campbell2012magic, Bravyj2012magic}. In the case of triorthogonal codes, the parity check generators can be described via characteristic polynomials corresponding with specific RM codes. In this way, all triorthogonal codes and puncturings with $n + k \leq 38$ have been numerically studied \cite{nezami2022classification}.  

In summary, puncturing is not well understood from a theoretical nor practical standpoint. The theoretical intuition for how deformed operators perform is nascent, with early work studying specific deformations \cite{vasmer2022morphing} or exhaustively enumerating valid code instances \cite{rengaswamy2020optimality}. 
Practically, exhaustive enumeration strategies are intractable for larger $n$; this prevents the design of distilleries with optimal rate given specified $n, d$ parameters. We hope that an extension of our formalisms could elucidate the effects of puncturing, enabling the dynamic design of distilleries.





\paragraph{Reducing physical qubit overhead.}
While we have shown that quantum RM codes support non-trivial logical circuits through the physical implementation of subcube operators, a priori the parameters of quantum RM codes are largely impractical for, say, magic-state distillation. For instance, the code length grows exponentially in the dimension $m$ of the hypercube. The maximal level of the Clifford Hierarchy attainable with $QRM_m(q,r)$, $k_{\max}$, must satisfy $q+k_{\max}r+1\leq m$, so the number of physical qubits needed will likewise grow exponentially in $k_{\max}$. A crucial next step is to reduce the physical qubit overhead of codes that support transversal logic in higher levels of the Clifford Hierarchy.

As mentioned previously, puncturing quantum RM codes is one way to do this: each time the code is punctured a single physical qubit is removed. Our geometric construction of quantum RM codes hints at another way to reduce the number of physical qubits. Drawing inspiration from several recent works on asymptotically-good qLPDC codes \cite{panteleev2022asymptotically,leverrier2022quantum} and constructions of some quantum locally-testable codes \cite{leverrier2022towards}, one could consider the \emph{quotient} of the hypercube by the action of a group. As an example, consider the {\em folded cube graph} obtained by identifying every vertex $x\in\ZZ_2^m$ with its opposite vertex, $\bar{x}=x+1^m$ and likewise identifying a subcube $A\subcubeeq\ZZ_2^m$ with its opposite, $\bar A=1^m+A$. This action preserves the commutativity of $X$ and $Z$ operators defined using $(m-q)$-cubes and $(r+1)$-cubes, respectively, and therefore produces a quantum code with $2^m/2$ qubits instead of $2^m$ qubits. A natural question is whether or not these codes support transversal logic in the same way as quantum RM codes. More general group actions can also preserve commutativity of appropriate choices of $X$ and $Z$ subcube operators, while reducing physical qubit overhead by increasing multiplicative factors. 
Can the transversal logic implemented on such codes also be understood using techniques from our work?

\paragraph{The dual view.} 
We have described our results in terms of the $m$-dimensional hypercube and its complex of subcubes, but there is an equivalent description of quantum RM codes in terms of the $m$-dimensional \emph{hyperoctahedral complex}, which is dual to the hypercube construction. In particular, vertices of the hypercube correspond to the facets of the hyperoctahedron. More generally, the hyperoctahedron can be viewed as a \emph{simplicial complex} whose $\ell$-dimensional simplices are in one-to-one correspondence with $(m-\ell-1)$-cubes, as detailed in \cref{app: hyperoctahedral}.

The hypercube codes $QRM_m(0,1)$, when viewed in the hyperoctahedral picture, correspond precisely to a family of ball codes discussed in \cite{vasmer2022morphing}. 
The authors use a geometric color-code perspective on these codes to derive certain properties including the logic of certain signed transversal rotation operators. 
In particular, using a procedure they call `morphing', they are able to construct code families with increasing distance from ball codes; see also \cite{hangleiter_fault-tolerant_2024}.
It is an interesting open question whether or not such gluing procedures exist for general quantum RM codes, for which our geometric perspective may serve as a helpful starting point.

\section{Examples}\label{sec: Examples}
We now explore some of the structure in the logical circuits we have defined for quantum RM codes.
While the language of minimal covers is essentially the simplest general description of the logic implemented by subcube operators, the logic for the $[[2^m,m,2]]$ hypercube codes and the more general $[[2^m,\binom mr, 2^{\min(m-r,r)}]]$ $QRM_m(r-1,r)$ codes can be phrased in a simpler way; we detail this in \cref{example: hypercube,example: r-1 r}, respectively. In \cref{example: general} we will briefly look at the case of general quantum RM codes, including a table of specific codes and their properties, a simple algorithm to compute the collection of minimal covers, and example logical circuits that can be implemented.

To recall some notation, $\ZZ_2^m$ is the Abelian group generated by the set $S\coloneqq\br{e_i}_{i\in[m]}$, where $e_i$ is the length-$m$ bit string with a single $1$ in the $i$-th position. By abuse of notation, we will equate $S$ simply with the integers from 1 up to $m$, $S=[m]$. In other words, for every $i\in[m]$, the statement ``$i\in S$'', should be taken to mean $e_i\in S$. As the logical qubits of a quantum RM code are indexed by subsets of $S$, we find the notation ``$J=\br{1,3,5}$'' more intuitive that the proper notation $J=\br{e_1,e_3,e_5}$. In this way, the standard cube $\standard{1,3,5}$ is the set of length-$m$ bit strings that are supported on the subset $\{1,3,5\}$ of the coordinates.

\subsection{\texorpdfstring{$QRM_m(0,1)$}{QRMm(0,1)}}\label{example: hypercube}
As a simple example, we first consider the hypercube code family, $QRM_m(0,1)$. In this case, the logical qubit set $\mcQ$ is defined as
\begin{equation}
    \mcQ\coloneqq \br{\br{i}\Bigmid i\in S},
\end{equation}
i.e., all single-element subsets of $S$. To simplify notation, we will denote these sets as $\overline{i}\coloneqq\br{i}$, so that the $i$-th logical qubit of $QRM_m(0,1)$ is given by the index $\overline{i}$.

The $k$-th level logical index sets are defined by
\begin{equation}
    \mcQ_k\coloneqq \br{ K\subseteq S \Bigmid \abs{K}=k+1},
\end{equation}
i.e., all $(k+1)$-element subsets of $S$.
Thus, by definition \emph{every} subset of $S$ is in $\mcQ_k$ for some $k\in\ZZ_{\geq 0}$, and so by \cref{thm: subcube dimension implies logic} \emph{every} signed and unsigned standard subcube $\standard{K}$, $K\subseteq S$, gives rise to a logical operator in the $(\abs{K}-1)$-th level of the Clifford Hierarchy. See \cref{fig: hypercube admissible} for a visual representation of this fact.

\begin{figure}[ht]
    \centering
    \includegraphics{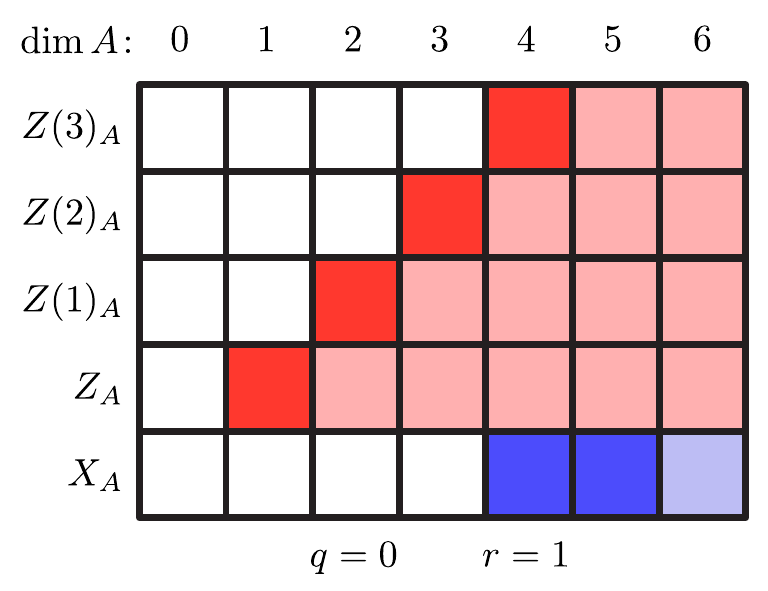}
    \caption{Consider the hypercube code family $QRM_m(0,1)$. In the above figure, the shade of a given box indicates how the given operator for the given dimension will act on the code space:\\
    (1) A dark box indicates logic,\\
    (2) A light box indicates a logical identity, and\\
    (3) A white box indicates the code space is not preserved.}
    \label{fig: hypercube admissible}
\end{figure}

Recall that by \cref{thm: logical multi-controlled-Z circuit}, for $K\in\mcQ_k$ the operator $\tildeZ{k}_{\standard{K}}$ will implement the logical multi-controlled-$Z$ circuit corresponding to the collection of minimal covers of $K$. Given $K\in\mcQ_k$, we proceed to compute this set $\mcF(K)$.

By definition, a subset of qubits $\mcJ\subseteq\mcQ$ is a minimal cover for $K$ if: (1) $\bigcup_{\overline{i}\in\mcJ} \overline{i} = K$, and (2) $\abs{\mcJ}=k+1$. As each logical qubit $\overline{i}$ is a single-element subset, the only collection of logical qubits whose union is all of $K$ is precisely the set $\mcK\coloneqq\br{\overline{i}\bigmid i\in K}$. Thus, we see there is only a single minimal cover for $K$ and that $\mcF(K)=\br{\mcK}$.

Using \cref{thm: logical multi-controlled-Z circuit} we can succinctly detail the logical circuit given by $\tildeZ{\abs{K}-1}_{\standard{K}}$:
\begin{circuit}
    Let $K\subseteq S$ be any subset of $k+1$ generators. Denote the $(k+1)$-element collection of logical qubits $\mcK\coloneqq\br{\overline{i}\bigmid i\in K}$. For a hypercube code, the signed $\Z{k}$ operator applied to the physical qubits in $\standard{K}$ implements a single $(k+1)$-qubit multi-controlled-$Z$ operator on the logical qubits in $\mcK$:
    \begin{equation}
        \tildeZ{k}_{\standard{K}} \equiv \overline{C^\mcK Z}.
    \end{equation}
\end{circuit}
\begin{remark}
    One may be tempted to use the sets $K$ and $\mcK$ interchangeably as they have the same size and each element of $\mcK$ is defined using an element of $K$. We reiterate that logical qubits are necessarily \emph{subsets} of generators, themselves. There is essentially no difference in the case of hypercube codes as the logical qubits are single-element subsets, but for other choices of $q$ and $r$ this distinction is important.
\end{remark}

\cref{lem: hypercube codes unsigned logic} gives us the logical circuit of any unsigned operator on a standard subcube.
\begin{circuit}
    Let $K\subseteq S$ be any subset of $k+1$ generators and let $\mcK\coloneqq\br{\overline{i}\mid i\in K}$. For a hypercube code, the transversal $\Z{k}$ operator applied to the physical qubits in $\standard{K}$ implements a multi-controlled-$Z$ gate to every possible subset of qubits in $\mcK$:
    \begin{equation}
        \Z{k}_{\standard{K}}\equiv \prod_{\mcJ\subseteq\mcK} \overline{C^\mcJ Z}.
    \end{equation}
\end{circuit}

Using \cref{thm: subcube dimension implies logic} (or alternatively, \cref{cor: tilde operators are logically Hermitian}), subcube operators in lower levels of the Clifford Hierarchy than those above are necessarily trivial:
\begin{fact}
     Let $K\subseteq S$ be any subset of $k+1$ generators. The signed and unsigned $\Z{j}$ operators applied to the physical qubits in $\standard{K}$ are both stabilizers of the hypercube code for every $j<k$:
     \begin{equation}
         \Z{j}_{\standard{K}}\equiv \tildeZ{j}_{\standard{K}}\equiv\overline{\eye} \hspace{0.5em}\text{ for all } j<k.
     \end{equation}
\end{fact}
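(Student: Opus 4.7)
The plan is to derive this fact as an immediate corollary of the Validity Theorems (the $\Z{k}_A$ and $\tildeZ{k}_A$ versions of \cref{thm: subcube dimension implies logic}), specialized to the hypercube code parameters $q=0$, $r=1$. These theorems give necessary and sufficient conditions for a (signed or unsigned) standard subcube operator to lie in the unitary stabilizer group $\mcS^*$.

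First I would substitute the hypercube parameters into the stabilizer condition of the Validity Theorem. For arbitrary $QRM_m(q,r)$, the theorem states that $\Z{j}_{A}\in\mcS^*$ if and only if $\dim A\geq (j+1)r+1$, and likewise for $\tildeZ{j}_A$. Setting $r=1$, this simplifies to the clean condition $\dim A\geq j+2$.

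Next I would apply this condition to the operator in question. We have $A=\standard{K}$ with $\dim\standard{K}=\abs{K}=k+1$, so the stabilizer condition becomes $k+1\geq j+2$, i.e., $j\leq k-1$, which is precisely the hypothesis $j<k$. Hence for every such $j$, both $\Z{j}_{\standard{K}}$ and $\tildeZ{j}_{\standard{K}}$ lie in $\mcS^*$, meaning they act as the logical identity $\overline{\eye}$ on the code space.

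There is no real obstacle here; the work was already done in proving the Validity Theorems. The only sanity check worth noting is the boundary case $k=1$, $j=0$: this reduces to the claim that a Pauli $Z$ operator on any $2$-cube (edge) of the hypercube is a $Z$ stabilizer of $QRM_m(0,1)$, which is exactly the definition of the $Z$ stabilizer generators (given by $(r+1)=2$-cubes). This confirms the threshold $j<k$ is tight and that the signed/unsigned distinction is immaterial once the dimension is sufficiently large.
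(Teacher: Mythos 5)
Your proposal is correct and matches the paper's own justification: the paper derives this fact directly from \cref{thm: subcube dimension implies logic} (with \cref{cor: tilde operators are logically Hermitian} noted as an alternative), exactly as you do by specializing the stabilizer condition $\dim A\geq (j+1)r+1$ to $q=0$, $r=1$ and $\dim\standard{K}=k+1$. The arithmetic $k+1\geq j+2 \Leftrightarrow j<k$ and the boundary check are both right.
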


Lastly, \cref{thm: subcube dimension implies logic} combined with \cref{thm: Zk logical decomposition} and \cref{fact: transversal Zk subcube decomposition} imply the following result for \emph{arbitrary} subcube operators:
\begin{circuit}
    Signed and unsigned subcube operators of the same type necessarily implement the same logical circuits. That is, for any subset $K\subseteq S$ of $(k+1)$-generators, any $x\in\ZZ_2^m$, and $\mcK$ as defined above, the following hold for a hypercube code,
    \begin{align}
        \tildeZ{k}_{x+\standard{K}}&\equiv \tildeZ{k}_{\standard{K}} \equiv\overline{C^\mcK Z},\\
        \Z{k}_{x+\standard{K}}&\equiv \Z{k}_{\standard{K}} \equiv\prod_{\mcJ\subseteq\mcK} \overline{C^\mcJ Z}.
    \end{align}
\end{circuit}





\subsection{\texorpdfstring{$QRM_m(r-1,r)$}{QRMm(r-1,r)}}\label{example: r-1 r}
Perhaps the most natural generalization of the hypercube code is the family of quantum RM codes given by $QRM_m(r-1,r)$. Some subcube operators for these codes have been considered in past works \cite{rengaswamy2020optimality}. We are able to provide a complete classification of the logical circuits implemented by signed and unsigned operators on these codes, whereas the authors \cite{rengaswamy2020optimality,hu2022divisible,Hu2022designingquantum} only gave descriptions of \emph{global} transversal operators on $QRM_m(r-1,r)$.

The logical qubits of $QRM_m(r-1,r)$ are indexed by the set
\begin{equation}
    \mcQ\coloneqq \br{J\subset S\Bigmid \abs{J}=r},
\end{equation}
i.e., all $r$-sized subsets of generators. The first convenient fact about $QRM_m(r-1,r)$ codes is that the $X$ logical operators can be given by \emph{standard} subcube operators, rather that subcube operators that have been shifted away from the $0^m$ vertex. 
\begin{fact}
    For $QRM_m(r-1,r)$, $X_{\standard{S\setminus J}}\equiv X_{x+\standard{S\setminus J}}$ \emph{for every} $x\in\ZZ_2^m$.
\end{fact}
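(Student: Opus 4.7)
The plan is to show that for $QRM_m(r-1,r)$ any two parallel translates of $\standard{S\setminus J}$ differ by a product of $X$ stabilizers, which gives the claim since stabilizers act trivially on the code space. The key numerical observation is that with $q=r-1$ the $X$ stabilizer generators act on $(m-q)$-cubes $=(m-r+1)$-cubes, which is exactly one dimension larger than the $(m-r)$-cube $\standard{S\setminus J}$ that supports the logical $X$.

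First, I would reduce to shifts supported on $J$. Given $x\in\ZZ_2^m$, decompose $x=x_J+x_{S\setminus J}$ where $x_J$ (respectively $x_{S\setminus J}$) is the restriction of $x$ to coordinates in $J$ (respectively $S\setminus J$). Since $x_{S\setminus J}\in\standard{S\setminus J}$, the coset $x+\standard{S\setminus J}$ equals $x_J+\standard{S\setminus J}$, so we may assume $x$ is supported on $J$.

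Next, write $x_J=e_{i_1}+\dots+e_{i_s}$ with distinct $i_1,\dots,i_s\in J$, and define the intermediate cosets
\begin{equation*}
A_t \coloneqq (e_{i_1}+\dots+e_{i_t})+\standard{S\setminus J}, \qquad t=0,1,\dots,s,
\end{equation*}
so that $A_0=\standard{S\setminus J}$ and $A_s=x+\standard{S\setminus J}$. For each $t\geq 1$, $A_{t-1}$ and $A_t$ are disjoint parallel cosets of $\standard{S\setminus J}$ (they differ by the single generator $e_{i_t}\in J$), and their union is precisely the coset
\begin{equation*}
(e_{i_1}+\dots+e_{i_{t-1}})+\standard{(S\setminus J)\cup\{i_t\}},
\end{equation*}
which is a subcube of dimension $m-r+1=m-q$. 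By definition of $QRM_m(r-1,r)$, this subcube supports an $X$ stabilizer generator, so $X_{A_{t-1}}X_{A_t}=X_{A_{t-1}\cup A_t}\in\mcS$. Telescoping over $t=1,\dots,s$ yields
\begin{equation*}
X_{\standard{S\setminus J}}\cdot X_{x+\standard{S\setminus J}} \;=\; \prod_{t=1}^s X_{A_{t-1}}X_{A_t} \;\in\; \mcS,
\end{equation*}
which proves the claimed equivalence. There is no serious obstacle here; the content of the argument is the arithmetic identity $m-r+1=m-q$ unique to the case $q=r-1$, which ensures that the ``bridging'' cubes at each step of the telescoping are honest stabilizers of the code. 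The same strategy fails for general $q<r-1$, which is why the fact is specific to this subfamily.
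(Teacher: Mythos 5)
Your proof is correct. Note that the paper states this Fact without an explicit proof; the route its machinery suggests is to invoke the indicator decomposition (\cref{lem: decomposition of subcube indicators}), which writes $X_{x+\standard{S\setminus J}}$ as $\prod_{I\subseteq I_A} X_{\standard{I\cup(S\setminus J)}}$ with $I_A=\supp(x_J)$, and then observe that every factor with $I\neq\emptyset$ acts on a standard subcube of dimension at least $m-r+1=m-q$ and is therefore an $X$ stabilizer, leaving only the $I=\emptyset$ term $X_{\standard{S\setminus J}}$. Your telescoping chain of parallel cosets, each consecutive pair bridged by an $(m-q)$-cube stabilizer generator, is a genuinely more hands-on version of the same idea: it avoids the full $2^{\abs{I_A}}$-term decomposition and instead exhibits the difference $X_{\standard{S\setminus J}}X_{x+\standard{S\setminus J}}$ directly as a product of $s=\abs{I_A}$ stabilizer \emph{generators}, which is arguably more transparent (it is close in spirit to the ``alternate proof'' of the decomposition lemma in \cref{app: indicator proofs}, which also peels off one generator of $I_A$ at a time). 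Both arguments pivot on the same arithmetic $m-r+1=m-q$, and your closing remark correctly identifies why the statement is special to $q=r-1$: for $r-q\geq 2$ a logical index $J$ with $\abs{J}>q+1$ yields bridging cubes of dimension $m-\abs{J}+1<m-q$, which are logical operators rather than stabilizers, and the translates are then inequivalent (this is exactly why the paper shifts the $X$ logicals to $e_J+\standard{S\setminus J}$ in the general case).
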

Thus, when $QRM_m(r-1,r)$ the sets
\begin{equation}
    \begin{aligned}
        L_Z&\coloneqq \Big\{Z_{\standard{J}}\Bigmid \abs{J}=r\Big\},\\
        L_X&\coloneqq \Big\{X_{\standard{J}}\Bigmid \abs{J}=m-r\Big\},
    \end{aligned}
\end{equation}
form a symplectic basis for the space of logical Pauli operators of $QRM_m(r-1,r)$.

Similarly to how the logical qubit index sets are defined by subsets of $S$ with a particular size, the $k$-th level logical index sets are also highly restricted:
\begin{equation}
    \mcQ_k\coloneqq \br{ K\subseteq S \Bigmid \abs{K}=(k+1)r}.
\end{equation}
Thus, for increasing values of $r$ there are more and more dimensions that do not support logical subcube operators.

Our \cref{thm: subcube dimension implies logic} for $QRM_m(r-1,r)$ can now be stated as:
\begin{fact}
    Consider the quantum RM code $QRM_m(r-1,r)$ and let $A\subcubeeq \ZZ_2^m$ be any subcube. The operators $\Z{k}_A$ and $\tildeZ{k}_A$ are logical operators if and only if $\dim A=(k+1)r$.
\end{fact}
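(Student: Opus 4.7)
The plan is to deduce this fact as an immediate specialization of the general Validity Theorems (\cref{thm: subcube dimension implies logic} and its signed-operator analog) to the case $q=r-1$. Both of those theorems state that, for arbitrary $0\le q\le r\le m$, the operator $\Z{k}_A$ (resp.\ $\tildeZ{k}_A$) lies in $\mcE^*$ if and only if
\begin{equation*}
    q+kr+1 \;\le\; \dim A \;\le\; (k+1)r.
\end{equation*}

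The key observation is that when we set $q=r-1$, the lower bound collapses onto the upper bound: $q+kr+1 = (r-1)+kr+1 = (k+1)r$. Hence the two-sided inequality reduces to the single equality $\dim A = (k+1)r$, which is exactly the claim. The same algebraic collapse occurs for the signed operators via the signed version of the Validity Theorem, so both cases follow simultaneously.

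Thus the proof is essentially a one-line substitution: invoke the Validity Theorem, plug in $q=r-1$, and observe that the admissible range for non-trivial logic degenerates to a single dimension. There is no real obstacle here; the only thing to be careful about is to also rule out the ``stabilizer'' regime for this dimension, which is automatic because the stabilizer condition $\dim A\ge (k+1)r+1$ is strictly incompatible with $\dim A = (k+1)r$, and the ``does not preserve the code'' regime $\dim A \le q+kr = (k+1)r-1$ is likewise strictly incompatible. Hence $\dim A = (k+1)r$ is both necessary and sufficient for $\Z{k}_A$ and $\tildeZ{k}_A$ to implement non-trivial logic on $QRM_m(r-1,r)$.
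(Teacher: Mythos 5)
Your proposal is correct and matches the paper's reasoning exactly: the paper presents this fact as an immediate specialization of \cref{thm: subcube dimension implies logic} to $q=r-1$, where the window $q+kr+1\le\dim A\le(k+1)r$ collapses to the single value $(k+1)r$ since $(r-1)+kr+1=(k+1)r$. Nothing further is needed.
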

See \cref{fig: r-1 r admissible} for a visual representation of this fact.
\begin{figure}[hb!]
    \centering
    \includegraphics{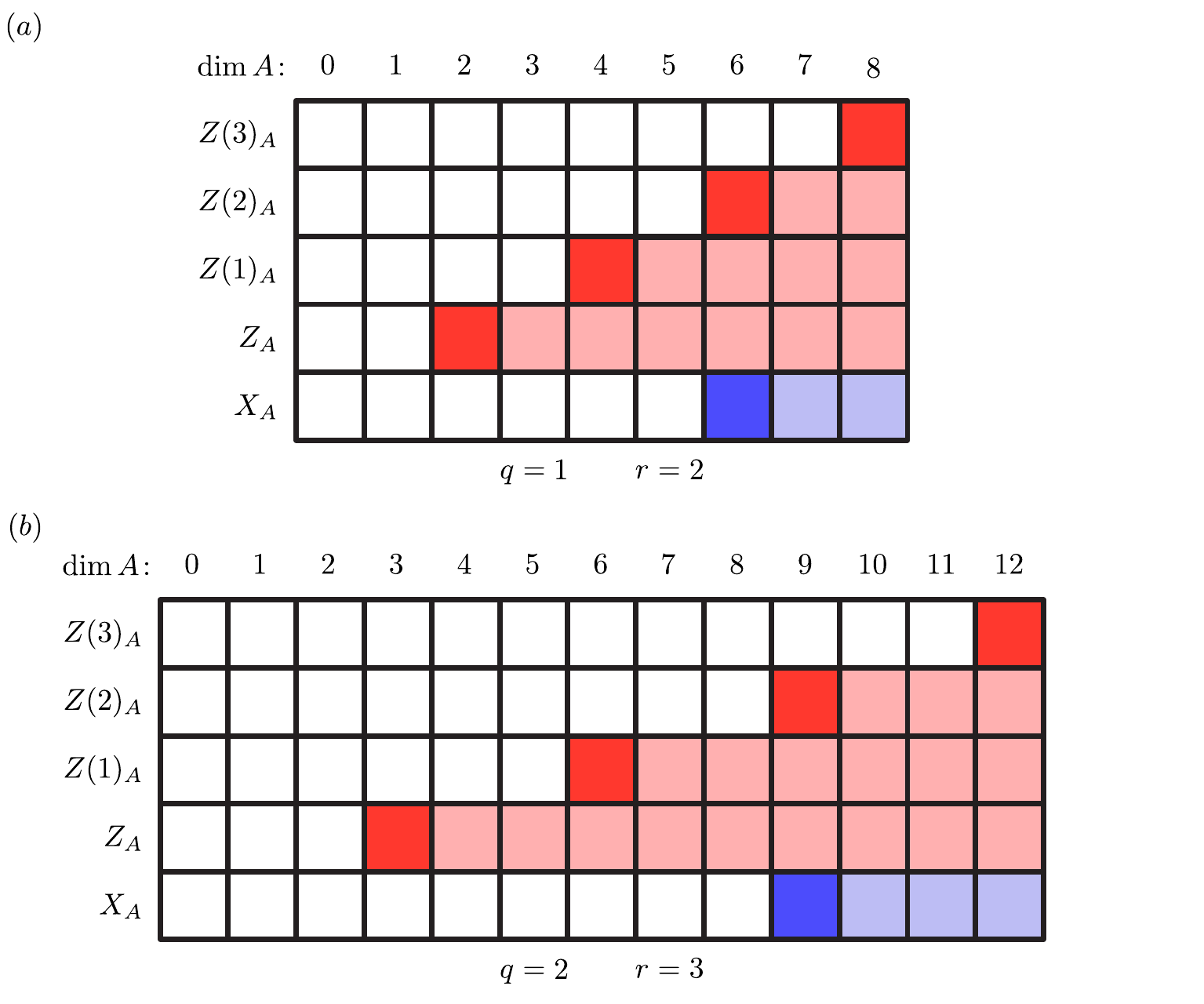}
    \caption{Consider the quantum RM code family $QRM_m(r-1,r)$. In the above figures, the shade of a given box indicates how the given operator for the given dimension will act on the code space:\\
    (1) A dark box indicates logic,\\
    (2) A light box indicates a logical identity, and\\
    (3) A white box indicates the code space is not preserved.}
    \label{fig: r-1 r admissible}
\end{figure}

The authors of \cite{rengaswamy2020optimality} give sufficiency for the $\Z{k}_A$ operator when $A=\ZZ_2^m$ is the \emph{entire} hypercube, and they prove what logical circuit is implemented.\footnote{We denote the Pauli operators as the 0-th level of the Clifford Hierarchy, whereas \cite{rengaswamy2020optimality} uses the 1-st level of the Clifford Hierarchy to represent the Paulis. Thus, Theorem 19 in \cite{rengaswamy2020optimality} states the condition $r\mid m$ implies that $\Z{\frac{m}{r}-1}\in\mcN^*$.} Necessity in the global case is proven in \cite{Hu2022designingquantum}.  We now detail the implemented logical circuits for arbitrary $\Z{k}_{\standard{K}}$ operators when $K\in\mcQ_k$.

As in the case of hypercube codes, we will determine the collection of minimal covers for a set $K\in\mcQ_k$. By definition, a subset of qubits $\mcJ\subseteq\mcQ$ is a minimal cover for $K$ if: (1) $\bigcup_{J\in\mcJ} J = K$, and (2) $\abs{\mcJ}=k+1$. As each logical qubit index $J\in\mcQ$ contains precisely $r$ elements, and we seek a collection of $k+1$ logical qubit indices whose union contains precisely $(k+1)r$ elements, we see that $\mcJ$ is necessarily a (pairwise disjoint) partition of $K$ into subsets of size $r$. Given a set $K$ of $(k+1)r$ elements, a collection of subsets of $K$, $\mcJ\subset\powerset{K}$, is said to be an \emph{$r$-partition} of $K$, denoted by $\mcJ\vdash_r K$, if (1) every $J\in\mcJ$ has size $\abs{J}=r$, and (2) $\mcJ$ is a cover of $K$. Note that for $\abs{K}=(k+1)r$ these two conditions are enough to guarantee that the sets in $\mcJ\vdash_r K$ are disjoint.
Ultimately, the minimal covers for $K\in\mcQ_k$ are given by
\begin{equation}
    \mcF(K) = \br{\mcJ \Bigmid \mcJ\vdash_r K},
\end{equation}
which is equal the previous definition for hypercube codes when $r=1$.

As we already considered the case of $r=1$ in the previous section. For the remainder of this section we suppose that $r\geq 2$. This assumption affords us the following via \cref{thm: Zk and tildeZk equivalence conditions}:
\vspace{-1em}
\begin{fact}
    Consider $QRM_m(r-1,r)$ where $r\geq 2$. For every $K\in\mcQ_k$, the signed and unsigned $\Z{k}$ operators applied to $\standard{K}$ perform the same logical circuit, $\Z{k}_{\standard{K}}\equiv \tildeZ{k}_{\standard{K}}$.
\end{fact}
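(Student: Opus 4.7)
The plan is to observe that the claim is an immediate specialization of the general theorem on the equivalence of signed and unsigned subcube operators (\cref{thm: Zk and tildeZk equivalence conditions}), which guarantees $\Z{k}_{\standard{K}}\equiv \tildeZ{k}_{\standard{K}}$ for every $K\in\mcQ_k$ whenever $q\geq 1$. For the subfamily $QRM_m(r-1,r)$ one has $q=r-1$, and the hypothesis $r\geq 2$ is exactly $q\geq 1$, so the general theorem applies verbatim. In other words, no new content is needed: the Fact is a labelling-level corollary.

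To make the reduction transparent, I would first recall the decomposition result (\cref{thm: Zk logical decomposition}), which writes $\Z{k}_{\standard{K}}$ as a product over certain signed subcube operators $\tildeZ{k'}_{\standard{K'}}$ indexed by $k'\leq k$ and subsets $K'\subseteq K$, with a distinguished top-level factor $\tildeZ{k}_{\standard{K}}$. Then I would invoke the signed version of the Validity Theorem to conclude that, under $q\geq 1$, every factor with $k'<k$ has $\dim\standard{K'}\geq (k'+1)r+1$, so each such factor is a stabilizer and drops out of the logical action, leaving $\tildeZ{k}_{\standard{K}}$ as the only surviving contribution.

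No step is genuinely delicate; the only ``work'' is a dimension-count verifying that the lower-level factors appearing in the decomposition land in the stabilizer regime of the signed Validity Theorem. For the parameters $q=r-1$ with $r\geq 2$ this is especially clean because the admissible logical index sets occur \emph{only} at the exact cardinality $(k+1)r$, so any proper subset $K'\subsetneq K$ with $\abs{K'}$ of the form $(k'+1)r$ for some $k'<k$ automatically satisfies $\abs{K'}\geq (k'+1)r+1$ once the gap $q=r-1\geq 1$ is opened up. Thus the main (and only) obstacle is bookkeeping, and it is resolved immediately by the structure of $\mcQ_k$ for this subfamily.
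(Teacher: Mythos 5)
Your proposal is correct and matches the paper exactly: the Fact is stated there as an immediate consequence of \cref{thm: Zk and tildeZk equivalence conditions}(1), since $q=r-1\geq 1$ precisely when $r\geq 2$, and your dimension-counting elaboration mirrors the paper's proof of that theorem. One small citation slip: the decomposition of $\Z{k}_{\standard{K}}$ into signed standard subcube operators comes from \cref{lem: strong transversal indicator} (i.e., \cref{eq: transversal Zk decomposition into tilde (no phases)}), not from \cref{thm: Zk logical decomposition}, which instead decomposes a \emph{signed} operator on an arbitrary subcube into standard signed ones.
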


Given we have already determined $\mcF(K)$, we now have the following:
\begin{circuit}
    For the code $QRM_m(r-1,r)$, $r\geq 2$, and $K\subseteq S$ with $\abs{K}=(k+1)r$, the transversal $\Z{k}$ operator applied to the physical qubits in $\standard{K}$ implements $(k+1)$-qubit multi-controlled-$Z$ gates to every subset of logical qubits whose index sets partition $K$:
    \begin{equation}
        \Z{k}_{\standard{K}} \equiv \prod_{\mcJ\vdash_r K}\overline{C^\mcJ Z}.
    \end{equation}
\end{circuit}
The authors of \cite{rengaswamy2020optimality} prove this for the $\Z{k}_A$ operator when $A=\ZZ_2^m$ is the \emph{entire} hypercube, though it is phrased in the language of phase polynomials.

Lastly, we note that in $QRM_m(r-1,r)$ codes we can use \emph{any} subcube of a particular type to implement the desired logical circuit. That is, \cref{thm: subcube dimension implies logic} combined with \cref{thm: Zk logical decomposition} and \cref{fact: transversal Zk subcube decomposition} imply the following result for \emph{arbitrary} subcube operators:
\begin{circuit}
    Subcube operators of the same type necessarily implement the same logical circuits. That is, for any subset $K\in\mcQ_k$ and any $x\in\ZZ_2^m$, the following holds for $QRM_m(r-1,r)$,
    \begin{equation}
        \Z{k}_{x+\standard{K}}\equiv \Z{k}_{\standard{K}}.
    \end{equation}
\end{circuit}










\newpage
\subsection{\texorpdfstring{$QRM_m(q,r)$}{QRMm(q,r)}, in general}\label{example: general}
Now consider the case of general quantum RM codes. In this case we have a symplectic basis given by the sets
\begin{equation}
    \begin{aligned}
        L_Z&\coloneqq \big\{Z_{\standard{J}}\Bigmid J\in\mcQ \big\},\\
        L_X&\coloneqq \big\{X_{e_J+\standard{S\setminus J}}\Bigmid J\in \mcQ\big\},
    \end{aligned}
\end{equation}
where we recall that $e_J\coloneqq \sum_{j\in J} e_j$ is the indicator bit string for the set $J\subseteq S$. When $q=r-1$ we did not need to shift the $X$ logicals by $e_J$, since every subcube operator of the same type implemented the same logic. In the general case where $r-q\geq 2$ the $e_J$'s must be present to have a symplectic basis. \cref{fig: m4 code example} gives a visual representation for the symplectic basis of the code $QRM_4(0,2)$. We note that when considering a particular dimension of logical operators, $L_i\coloneqq \br{\overline{Z}_J,\overline{X}_J\mid \abs{J}=i}$ for some $i\in\br{q+1,\dots,r}$, the set $L_i$ \emph{does} form a symplectic set. It is only when considering basis logical \emph{across} dimensions, e.g., $L_i\cup L_j$ for $i\neq j$, that the symplectic condition fails.

\begin{figure}[p!]
    \centering
    \includegraphics{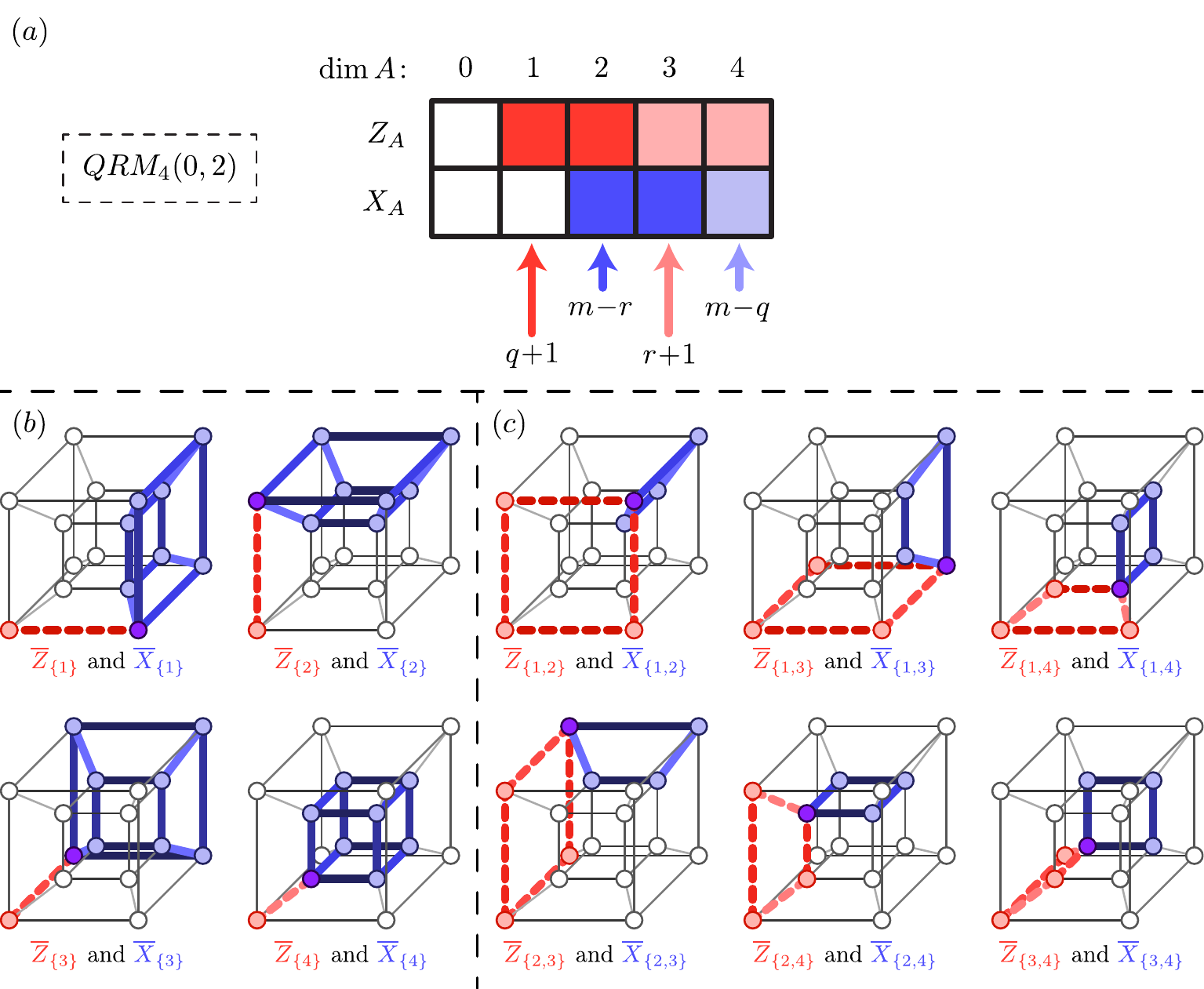}
    \caption{Consider the quantum RM code $QRM_4(0,2)$, whose physical qubits are indexed by the vertices of the 4-dimensional hypercube.  \\
    (a) By construction, a transversal $Z$ operator applied to a subcube with dimension equal to either 1 or 2 (edges/squares) is necessarily a $Z$ logical operator (represented by dark red boxes). Similarly, a transversal $X$ operator applied to a subcube with dimension equal to either 2 or 3 (squares/cubes) is necessarily an $X$ logical operator (represented by dark blue boxes). The light red and blue boxes indicate dimensions where $Z_A$ and $X_A$ acts as stabilizers of the code, respectively. The white boxes indicate dimensions where neither $Z_A$ nor $X_A$ preserve the code space.\\
    (b)--(c) The code has $\binom41+\binom42=10$ logical qubits that are indexed by subsets $J\subseteq[m]$ with $\abs{J}=1$ or $2$, and the distance of the code is 2. Thus, there are two classes of logical operators, those whose index set $J$ has size $\abs{J}=1$, shown in (b), and those whose index set has size $\abs{J}=2$, shown in (c). Each of the 10 4-cubes shown in (b)--(c) represent a symplectic pair of logical Pauli operators. The (red) dashed edges and squares indicate $Z$ subcube operators and the (blue) solid squares and cubes indicate $X$ subcube operators. A symplectic pair of operators overlap on a single qubit, namely, the qubit with index $e_J\coloneqq\sum_{i\in J}e_i$. These qubits are represented by (purple) vertices which lie at the intersection of the corresponding dashed and solid subcubes.\\
    (b) The first class of logical operators are the $Z$ operators that act on subcubes of dimension 1 (dashed red edges), together with the $X$ operators that act on subcubes of \emph{codimension} 1 (solid blue cubes). \\
    (c) The second class of logical operators are the $Z$ operators that act on subcubes of dimension 2 (dashed red squares), together with the $X$ operators that act on subcubes of \emph{codimension} 2 (solid blue squares).}
    \label{fig: m4 code example}
\end{figure}

\paragraph{Code properties.} 
The parameters and derived properties of quantum RM codes are summarized in \cref{tab:code parameters}. The property $k_{\max}$ denotes the highest level of the Clifford Hierarchy that can be transversally implemented on the given code using $\Z{k}$ subcube operators; as every subcube has dimension has at most $m$, \cref{thm: subcube dimension implies logic} implies that $k_{\max}$ is the largest integer such that $q+k_{\max}r+1\leq m$ for a given $QRM_m(q,r)$. We give the properties of all quantum RM codes with $m \leq 10$ and $k_{\max}\geq 2$ (i.e., can support transversal $T$) in \cref{tab:codeimplementations}.

\begin{table}[h]
    \centering
    \begin{tabular}{|c|l|c|}
        \hline\textbf{Variable} & \textbf{Description} & \textbf{Formula} \\
        \hline\hline
        $m$ & $2^m$ physical qubits indexed by the $m$-dimensional hypercube & \\
        $r$ & $Z$ stabilizers given by $(r + 1)$-cubes & \\
        $q$ & $X$ stabilizers given by $(m-q)$-cubes & \\
        $\kappa$ & number of logical qubits & $\sum_{i =q+1}^r \binom{m}{i}$  \\
        $d$ & code distance & $2^{\min\{q + 1, m - r\} }$ \\
        $k_{\max}$ & highest achievable level in the Clifford Hierarchy& $\left\lfloor\frac{m - (q + 1)}{r}\right\rfloor$\\
        \hline
    \end{tabular}
    \caption{Parameters and properties of quantum RM codes.}
    \label{tab:code parameters}
\end{table}

\begin{table}[p]
    \centering
\begin{tabular}{lll|lll|l}
$m$ & $q$ & $r$ & $[[n,$ & $\kappa,$ & $d]]$ & $k_{\max}$ \\ \hline
3          & 0          & 1          & 8          & 3          & 2          & 2           \\
4          & 0          & 1          & 16         & 4          & 2          & 3           \\
5          & 0          & 1          & 32         & 5          & 2          & 4           \\
5          & 0          & 2          & 32         & 15         & 2          & 2           \\
6          & 0          & 1          & 64         & 6          & 2          & 5           \\
6          & 0          & 2          & 64         & 21         & 2          & 2           \\
6          & 1          & 2          & 64         & 15         & 4          & 2           \\
7          & 0          & 1          & 128        & 7          & 2          & 6           \\
7          & 0          & 2          & 128        & 28         & 2          & 3           \\
7          & 1          & 2          & 128        & 21         & 4          & 2           \\
7          & 0          & 3          & 128        & 63         & 2          & 2           \\
8          & 0          & 1          & 256        & 8          & 2          & 7           \\
8          & 0          & 2          & 256        & 36         & 2          & 3           \\
8          & 1          & 2          & 256        & 28         & 4          & 3           \\
8          & 0          & 3          & 256        & 92         & 2          & 2           \\
8          & 1          & 3          & 256        & 84         & 4          & 2           \\
9          & 0          & 1          & 512        & 9          & 2          & 8           \\
9          & 0          & 2          & 512        & 45         & 2          & 4           \\
9          & 1          & 2          & 512        & 36         & 4          & 3           \\
9          & 0          & 3          & 512        & 129        & 2          & 2           \\
9          & 1          & 3          & 512        & 120        & 4          & 2           \\
9          & 2          & 3          & 512        & 84         & 8          & 2           \\
9          & 0          & 4          & 512        & 255        & 2          & 2           \\
10         & 0          & 1          & 1024       & 10         & 2          & 9           \\
10         & 0          & 2          & 1024       & 55         & 2          & 4           \\
10         & 1          & 2          & 1024       & 45         & 4          & 4           \\
10         & 0          & 3          & 1024       & 175        & 2          & 3           \\
10         & 1          & 3          & 1024       & 165        & 4          & 2           \\
10         & 2          & 3          & 1024       & 120        & 8          & 2           \\
10         & 0          & 4          & 1024       & 385        & 2          & 2           \\
10         & 1          & 4          & 1024       & 375        & 4          & 2          
\end{tabular}
    \caption{All quantum RM codes with $m \leq 10$ which admit at least transversal $T$ logic.}
    \label{tab:codeimplementations}
\end{table}

\newpage
\paragraph{Computing minimal covers.}
The following simple algorithm can be used to compute the collection of $\mcQ$-minimal covers given a $k$-th level logical index set $K\in\mcQ_k$ (implemented \href{https://github.com/christopherkang/geometric-qrm}{here}).

\vspace{-0.3em}

\begin{algorithm}[H]
    \caption{Constructs logical circuit implemented by a $K$ type and $k$ level signed standard subcube operator. Acts upon an $m, r, q$ code.}
    \begin{algorithmic}[1]\label{alg:qlss-qsvt-hamiltonian}
        \Require $k$ is the level of the physical operators applied to the subcube, $m, r, q$ are valid parameters, $K \subseteq [m]$, 
        \Procedure{logical-operator-enumeration}{$k, K, (m, r, q)$}
            \State $\mathcal{Q}_{0 | K} = \{ J \subseteq K |  J\in\mathcal{Q}_0 \}$ \Comment{Enumerate index set of logical operators within $K$; our operators will not affect logic outside. $\mathcal{Q}_{0 | K}$ are the sets of logicals to consider }
            \State $ \mcF \gets \{ \}$
            \For{$\mcJ \subseteq \mathcal{Q}_{0 | K}$ with $\abs{\mcJ} = k + 1$} \Comment{Get minimal permutations of logicals}
                \If { $\cup_{J \in \mcJ } J = K$}
                    \State $\mcF \gets \mcF \cup \{ \mcJ \}$
                \EndIf
            \EndFor
            \State\Return $\mcF$ \Comment{This is $\mcF(K)$, precisely where logic happens}
        \EndProcedure
    \end{algorithmic}
\end{algorithm}

\vspace{-1.85em}

\paragraph{Example circuits.}
\cref{fig:25502,fig:25602,fig:26602,fig:26612} demonstrate example logical circuits produced via signed subcube operators, which are non-trivial to describe without the ``minimal cover'' terminology. \cref{fig:25602} and \cref{fig:26602} demonstrate that applying a signed subcube of different dimensions effects completely different logic. \cref{fig:26612} shows yet another implementation, albeit now with a slightly altered $r$ parameter. Finally, observe that \cref{fig:25502} and \cref{fig:25602} effect the same logical circuit, despite the fact that the QRM codes have a different $m$ parameter.


\vspace{-0.35em}

\subsection*{Acknowledgements}
We are grateful to Aleksander Kubica for helpful discussions. 
N.C. acknowledges Luke Schaeffer for many helpful discussions on the Clifford Hierarchy.
D.H. is grateful for many discussions with Michael Gullans and Dolev Bluvstein which are always illuminating and inspiring. 

This work was done in part while we were visiting the Simons Institute for the Theory of Computing, supported by DOE QSA grant FP00010905.
Research of A.B. and N.C. was partially supported by NSF grant CCF2330909. Research of N.C. was also partially supported by NSF grant DMS2231533. D.H. acknowledges financial support by the U.S. DoD through a QuICS Hartree fellowship. C.K. is funded in part by EPiQC, an NSF Expedition in Computing, under award CCF-1730449; in part by STAQ under award NSF Phy-1818914/232580; in part by the US Department of Energy Office of Advanced Scientific Computing Research, Accelerated Research for Quantum Computing Program; and in part by the NSF Quantum Leap Challenge Institute for Hybrid Quantum Architectures and Networks (NSF Award 2016136), in part based upon work supported by the U.S. Department of Energy, Office of Science, National Quantum Information Science Research Centers, and in part by the Army Research Office under Grant Number W911NF-23-1-0077. The views and conclusions contained in this document are those of the authors and should not be interpreted as representing the official policies, either expressed or implied, of the U.S. Government. The U.S. Government is authorized to reproduce and distribute reprints for Government purposes notwithstanding any copyright notation herein.

\clearpage
\newpage

\begin{figure}[t!]
    \centering
    \includegraphics[width=0.9\linewidth]{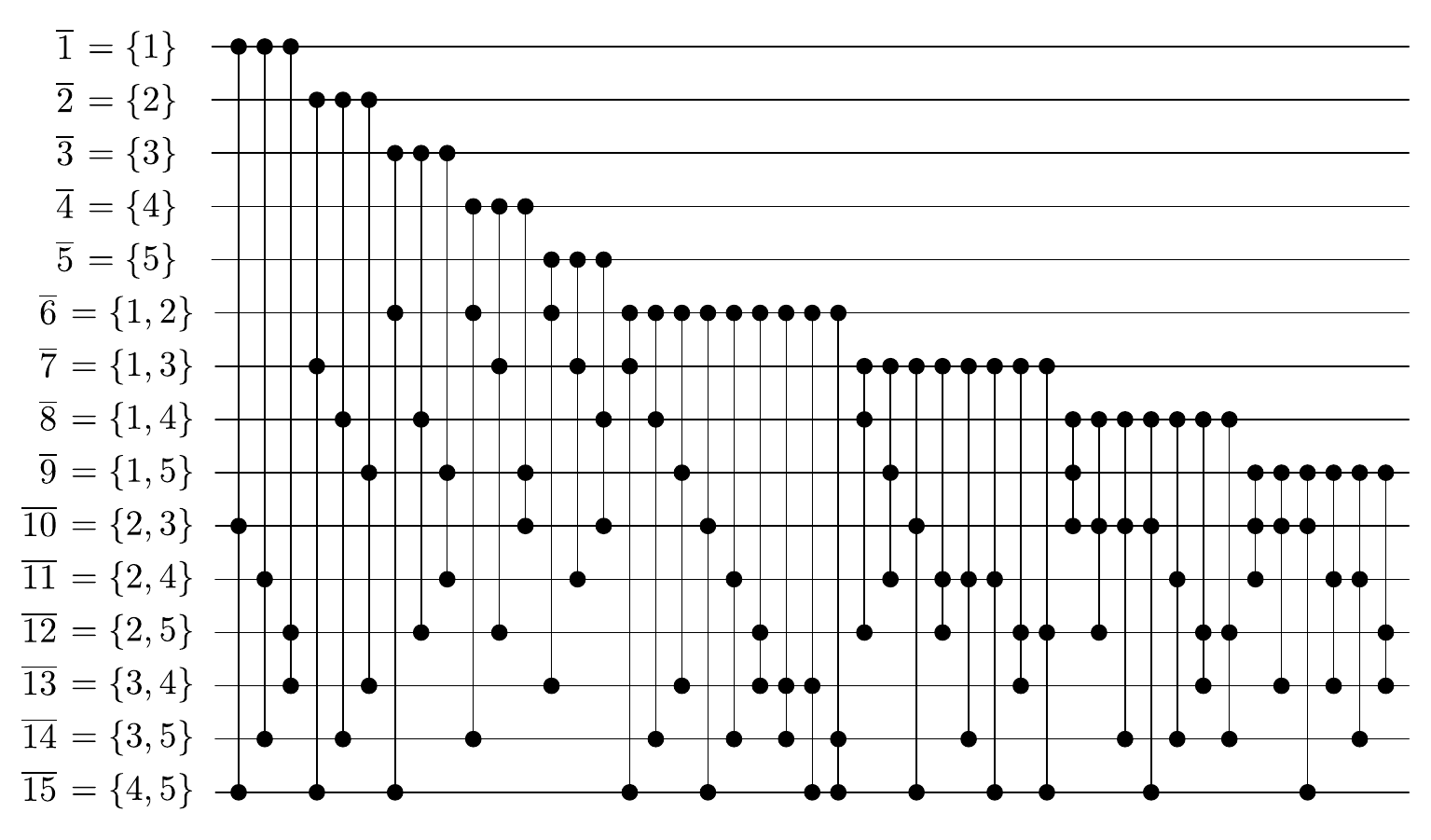}
    \caption{Global signed $T$ applied to $QRM_5(0,2)$.}
    \label{fig:25502}
\end{figure}

\begin{figure}[b!]
    \centering
    \includegraphics[width=0.9\linewidth]{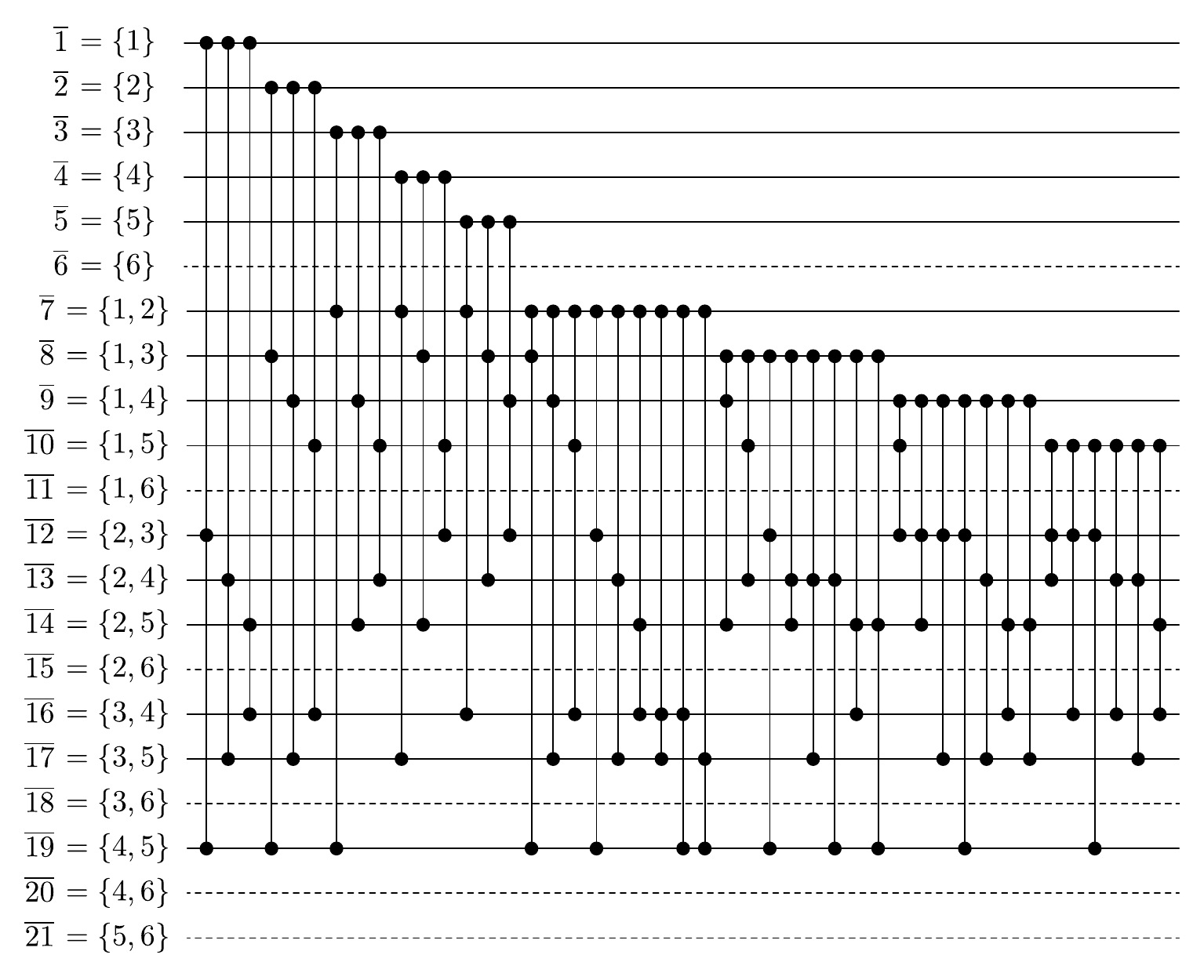}
    \caption{Signed $T$ operator applied to the $\standard{1,2,3,4,5}$ subcube of $QRM_6(0,2)$. Note that 6 qubits are unaffected (dotted lines) and that the circuit is identical to the one given in \cref{fig:25502} on the remaining qubits.}
    \label{fig:25602}
\end{figure}

\clearpage
\newpage

\begin{figure}[t!]
    \centering
    \includegraphics[width=0.9\linewidth]{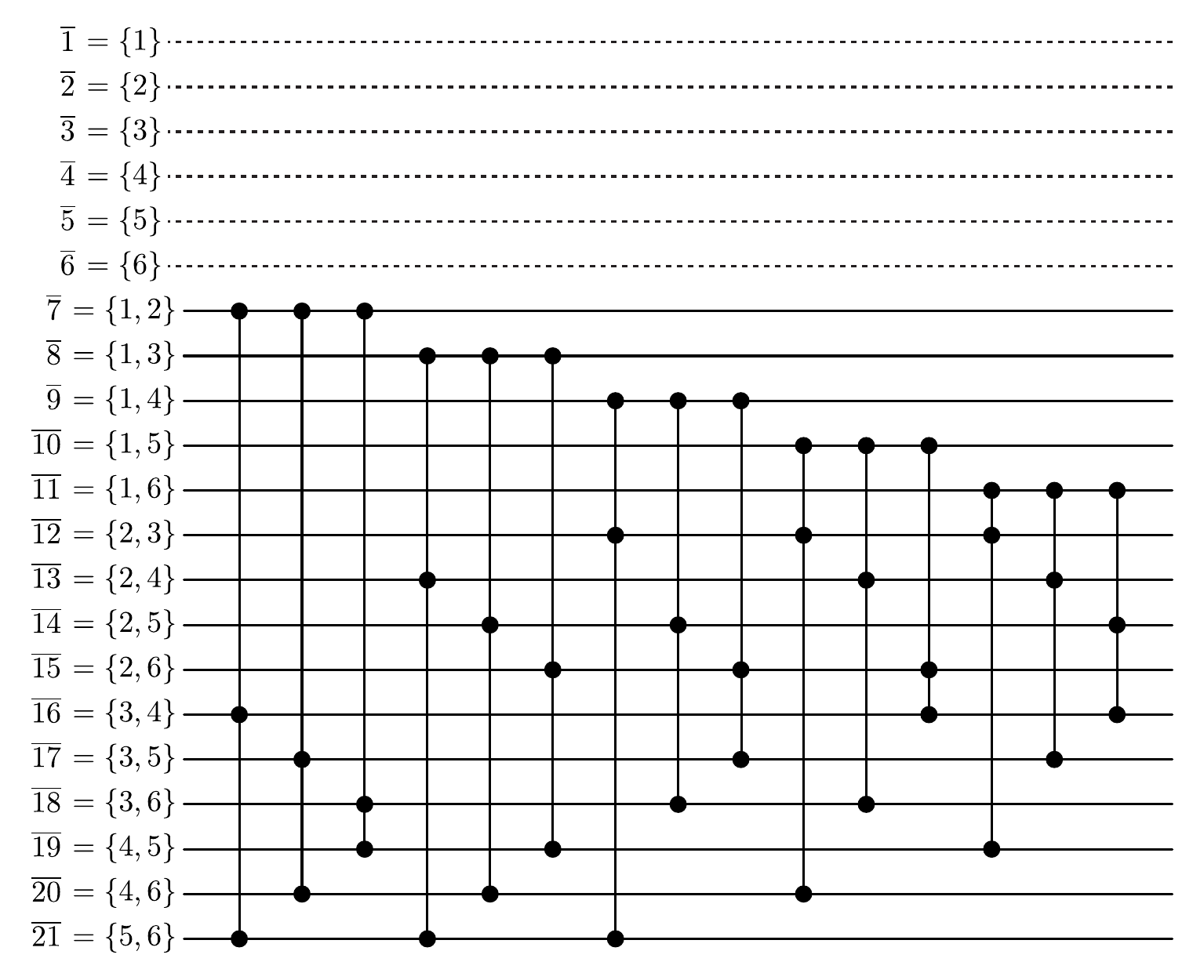}
    \caption{Global signed $T$ applied to $QRM_6(0,2)$. Note that 6 qubits are unaffected (dotted lines) and that the circuit is identical to the one given in \cref{fig:26612} on the remaining qubits.}
    \label{fig:26602}
\end{figure}

\begin{figure}[b!]
    \centering
    \includegraphics[width=0.9\linewidth]{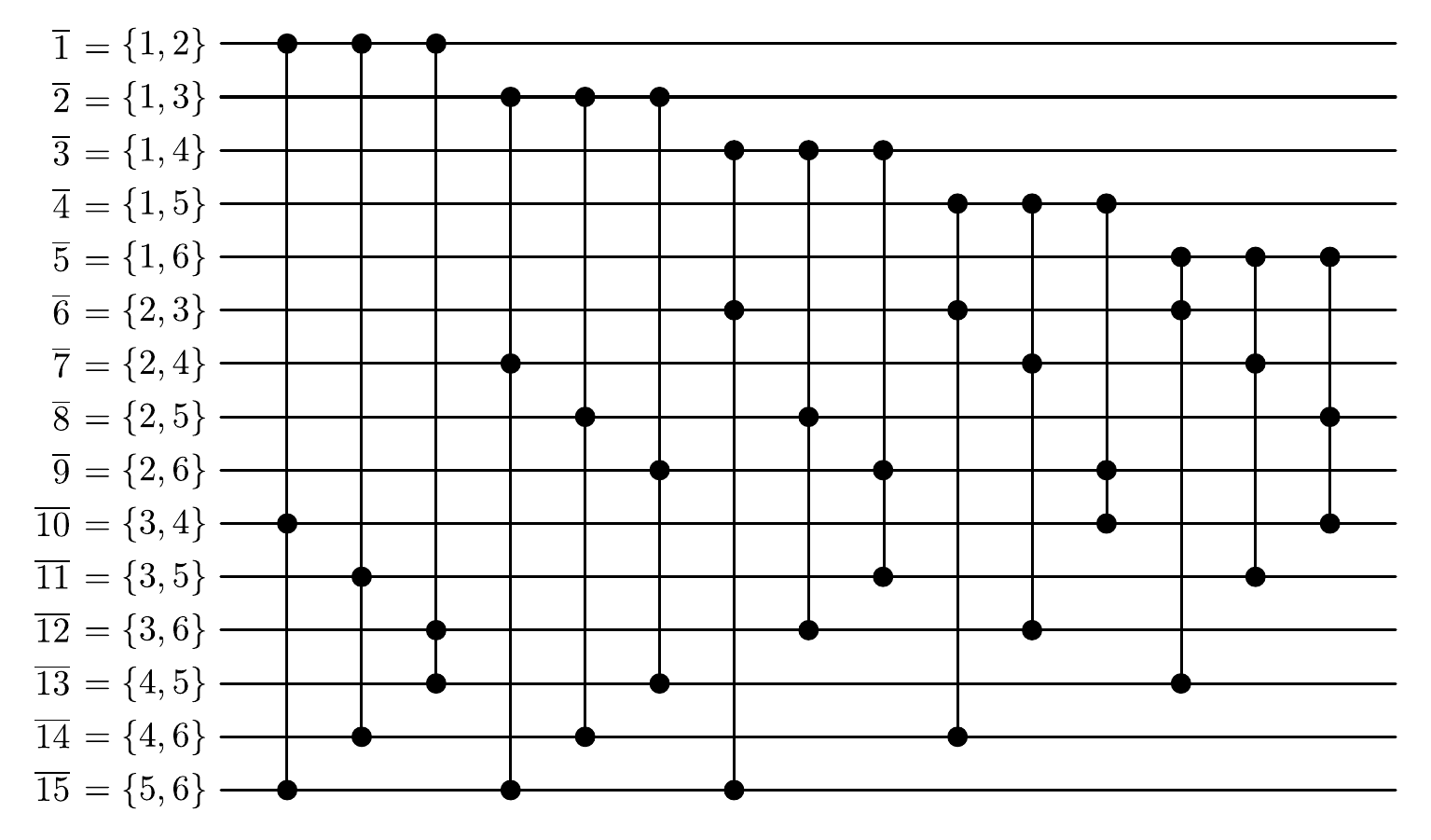}
    \caption{Global signed $T$ applied to $QRM_6(1,2)$.}
    \label{fig:26612}
\end{figure}

\clearpage
\newpage
\part{Formal results}

\section{The Clifford Hierarchy and CSS code preliminaries}

\subsection{The Pauli group and CSS codes}\label{sec:Pauli group and quantum codes}
Let $\mcP_1=\br{\pm\eye, \pm i\eye, \pm X, \pm i X,\pm Y, \pm i Y, \pm Z, \pm iZ}$ denote the single-qubit Pauli group, and $\mcP_n\coloneqq \mcP_1^{\otimes n}$ the $n$-qubit Pauli group.
Given a binary vector, $v\in\FF^n$, we define the $n$-qubit Pauli operators $X(v)\coloneqq\bigotimes_{i=1}^n X^{v_i}$ and $Z(v)\coloneqq\bigotimes_{i=1}^n Z^{v_i}$.

A quantum CSS code is defined by two classical linear codes, $ C_1, C_2$, on the same vector space such that $ C_1^\perp\subseteq C_2$. Define sets of Pauli operators 
    \begin{equation}\label{eq:SZSZ}
    \begin{aligned}
     \mcS_X\coloneqq \{ X(x) \mid x\in C_1^\perp\}\\
     \mcS_Z\coloneqq \{ Z(z) \mid z\in C_2^\perp\},
    \end{aligned}
    \end{equation}
called \emph{$X$ and $Z$ type stabilizers}, respectively (because of this,
another often used notation for the component codes is $C_X$ and $C_Z$). By the assumption $ C_1^\perp\subseteq C_2$ every operator in $\mcS_X$ commutes with every operator in $\mcS_Z$. The CSS code, $\CSS( C_1, C_2)\subseteq (\CC_2)^{\otimes n}$, is defined as the common $+1$ eigenspace of all of the $S_X$ and $S_Z$ operators defined by $ C_1^\perp$ and $ C_2^\perp$. The set $\CSS( C_1, C_2)$ is referred to as the \emph{code space} and its elements are referred to as \emph{code states}.

We further define the Pauli operators $\mcN_X\coloneqq \br{ X(x) \mid x\in C_2}$ and $\mcN_Z\coloneqq \br{ Z(z) \mid z\in C_1}$. These sets are called the \emph{undetectable $X$ and $Z$ errors}, respectively. Note that by construction $\mcS_X\subseteq\mcN_X$ and $\mcS_Z\subseteq\mcN_Z$. Elements of $\mcS_X\cup\mcS_Z$ are called ``trivial'' errors on $\CSS( C_1, C_2)$ since for every $S\in\mcS_X\cup\mcS_Z$ and every $\ket\psi\in\CSS( C_1, C_2)$ we have that $S\ket\psi=\ket\psi$. The sets $\mcE_X\coloneqq \mcN_X\setminus\mcS_X$ and $\mcE_Z\coloneqq \mcN_Z\setminus\mcS_Z$, on the other hand, are called \emph{logical $X$ and $Z$ type errors}, respectively, since for any $L\in\mcE_X\cup\mcE_Z$ and $\ket\psi\in\CSS( C_1, C_2)$, $L\ket\psi$ may not equal $\ket\psi$, but $L\ket\psi\in\CSS( C_1, C_2)$ is still a valid code state.

Suppose $ C_1$ and $ C_2$ are $[n,k_1,d_1]$ and $[n,k_2,d_2]$ codes, respectively. The dimension of $\CSS( C_1, C_2)$ as a subspace of $(\CC_2)^{\otimes n}$ is precisely 
   \begin{equation}\label{eq: dim}
    \dim\CSS( C_1, C_2)= k_1+k_2-n  . 
   \end{equation}
   The $X$ distance of the quantum code is equal to the minimal weight of a logical $X$ error, i.e., $d_X\coloneqq \min_{x\in C_2\setminus C_1^\perp} \abs{x}$, and similarly for the $Z$ distance. The distance of $\CSS( C_1, C_2)$ is defined as $d\coloneqq\min\br{d_X,d_Z}$. We say that $\CSS( C_1, C_2)$ is an $[[ n, k_1+k_2-n, d]]$ quantum CSS code.

\begin{table}[t]
    \centering
    \begin{tabular}{|c c c c c|}
        \hline & \underline{$X$ stabilizers} &  &  \underline{Undetectable $X$ errors} &\\[0.5em]
        & $C_1^\perp$ & $\subseteq$ & $C_2$ & \\
        & $\rotatebox{90}{=}$ & & $\rotatebox{90}{=}$ & \\
        & $\im H_X^T$ & &$\ker H_Z$ & \\[0.5em]\hline
        $S_X$ & $\xrightarrow[]{H_X^T}$ & $C$ & $\xrightarrow[]{H_Z}$ & $S_Z$
        \\ \hline\hline
        & \underline{$Z$ stabilizers} &  &  \underline{Undetectable $Z$ errors} &\\[0.5em]
        & $C_2^\perp$ & $\subseteq$ & $C_1$ & \\
        & $\rotatebox{90}{=}$ & & $\rotatebox{90}{=}$ &\\
        & $\im H_Z^T$ & &$\ker H_X$ & \\[0.5em] \hline
        $S_Z$ & $\xrightarrow[]{H_Z^T}$ & $C$ & $\xrightarrow[]{H_X}$ & $S_X$
        \\ \hline 
    \end{tabular}
    \caption{Inclusions between constituent $X$ and $Z$ stabilizer codes.}
    \label{tab:stabs-and-logs}
\end{table}

Logical $X$ and $Z$ type errors in a CSS code correspond precisely to elements of $ C_2\setminus  C_1^\perp$ and $ C_1\setminus  C_2^\perp$, respectively. The quotient spaces, $ C_2/  C_1^\perp$ and $ C_1/  C_2^\perp$, correspond to equivalence classes of logical errors and these spaces are isomorphic to one another. Let $x+C_2^\perp$ be a coset of $C_2^\perp$ in $C_1$. Any two elements $a,b\in x+C_2^\perp$ correspond to the same logical $X$ error on the code space, i.e., $X(a)\ket\psi= X(b)\ket\psi$ for every $\ket\psi\in\CSS(C_1,C_2)$. Thus, the equivalence classes of logical $X$ errors are uniquely determined by any set of coset representatives for the elements of $C_2/C_1^\perp$. In particular, if a set $M=\br{x_1,\dots,x_r,x_{r+1},\dots,x_{k_2}}$ is a linearly-independent basis for $C_2$, and the subset $T=\br{x_1,\dots,x_r}$ is a linearly-independent basis for $C_1^\perp\subseteq C_2$, then the set $M\setminus T=\br{x_{r+1},\dots,x_{k_2}}$ is a set of coset representatives for the quotient space $C_2/C_1^\perp$. The case for logical $Z$ errors is the analogous.

The inclusions of the various codes and subsets mentioned here as well as the maps between them are collected in \cref{tab:stabs-and-logs}. A homological view of the CSS construction is summarized by the following diagram:
\begin{equation}\label{eq:cd}
\begin{tikzcd}[wire types={n,n}]
\setwiretype{n}
\FF^{r_{\!{_{2}}}} \arrow[r,shift left,"H_Z^\intercal"]
   &\;
\FF^n \arrow[l,shift left,"H_Z"] \arrow[r,shift left,"H_X"] &\;
\FF^{r_{\!{_{1}}}} \arrow[l,shift left,"H_X^\intercal"] ,
\end{tikzcd}
\end{equation}
where $r_i=n_i-k_i,i=1,2$ are the counts of independent stabilizer generators of the $Z$ and $X$ types. In \cref{tab:stabs-and-logs} as well as in \cref{eq:cd}, $H_X\in \FF^{r_1\times n}$ is the parity-check matrix of $C_1$, and $H_Z\in \FF^{r_2\times n}$ is the same for $C_2$. We note again that $\dim \CSS(C_1,C_2)=\dim(\ker H_2^\intercal/\im H_1),$ in accordance with \cref{eq: dim}.

\subsection{The Clifford Hierarchy and Clifford errors}\label{sec: CH and errors}
The $0$-th level of the $n$-qubit Clifford Hierarchy is defined to be the Pauli group, $\Cl{0}\coloneqq \mcP_n$. 
For $k\in\NN$, the $k$-th level of the Clifford Hierarchy is defined recursively as the set
\begin{equation}
    \Cl{k}\coloneqq\br{U\in \unitary(2^n)\Bigmid U \Cl{0} U^\dagger\subseteq \Cl{k-1}}.
\end{equation}
In other words, elements of level $k$ conjugate Pauli operators to elements of level $(k-1)$. As an example, consider the rotations $\Z{k}\coloneqq \ketbra{0}+e^{i\frac{\pi}{2^k}}\ketbra{1}$ and $\X{k}\coloneqq \ketbra{+}+e^{i\frac{\pi}{2^k}}\ketbra{-}$. $\Z{k}$ and $\X{k}$ are in $\Cl{k}$ for the given $k$. The most common gates of this form are $Z=\Z{0}$, $S=\Z{1}$, i.e., the phase gate, and $T=\Z{2}$. We note that aside from $\Cl{0}$, the Pauli group, and $\Cl{1}$, the well-known \emph{Clifford group}, $\Cl{k}$ is not a subgroup of the unitary group.

Now, consider a stabilizer code $\mcC\subseteq(\CC^2)\n$. It's Pauli stabilizer group, $\mcS$, is defined as the set of Pauli operators whose mutual $+1$ eigenspace is precisely $\mcC$. In particular, it can be defined as the set
\begin{equation*}
    \stabs{0}\coloneqq\br{U\in\Cl{0}\Bigmid U\ket\psi=\ket\psi \;\;\forall\ket\psi\in\mcC},
\end{equation*}
i.e., a subset of the 0-th level of the Clifford Hierarchy that leaves the codespace invariant. This definition motivates the following extension to higher levels of the Clifford Hierarchy:
\begin{definition}
    The level-$k$ Clifford stabilizers of $\mcC$ are the operators in the $k$-th level of the Clifford Hierarchy that leave states in $\mcC$ invariant:
    \begin{equation}
        \stabs{k}\coloneqq\br{U\in\Cl{k}\Bigmid U\ket\psi=\ket\psi \;\;\forall\ket\psi\in\mcC}.
    \end{equation}
    Elements of $\stabs{k}$ perform the \emph{logical identity operator}, $\overline{\eye}$, on $\mcC$.
\end{definition}
Likewise, we can define the so-called \emph{undetectable Clifford Hierarchy errors} in the following way.
\begin{definition}
    The level-$k$ undetectable Clifford errors of $\mcC$ are the operators in the $k$-th level of the Clifford Hierarchy that conjugate Pauli stabilizers of $\mcC$ to logical identity operators in the $(k-1)$-st level of the Clifford Hierarchy:
    \begin{equation}
        \errors{k}\coloneqq\br{U\in\Cl{k}\Bigmid U\stabs{0}U^\dagger\subseteq \stabs{k-1}},
    \end{equation}
    where by convention we set $\stabs{-1}\coloneqq\stabs{0}$.
\end{definition}
Note that the set of level-$0$ errors, $\errors{0}$, is the usual space of undetectable \emph{Pauli} errors, and $\errors{k}$ corresponds to the intuitive notion of an \emph{undetectable error} on a code:
\begin{lemma}\label{lem:undetectable-error-equivalences}
    For a stabilizer code $\mcC$, let $\Pi_\mcC\coloneqq\frac{1}{\abs{\stabs{0}}}\sum_{S\in\stabs{0}} S$ denote the code space projector. Suppose $U\in\Cl{k}$ is a $k$-th level Clifford operator. The following are equivalent:
    \begin{enumerate}
        \item $U$ is an undetectable Clifford error: $U\in\errors{k}$,
        \item $U$ preserves the code space: for every $\ket\psi\in\mcC$, $U\ket\psi\in\mcC$, and
        \item $U$ commutes with the codespace projector: $U\Pi_\mcC U^\dagger = \Pi_\mcC$.
    \end{enumerate}
\end{lemma}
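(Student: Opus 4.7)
The plan is to prove the cycle $(1) \Rightarrow (3) \Rightarrow (2) \Rightarrow (1)$. The implication $(3) \Rightarrow (2)$ is essentially immediate: $U \Pi_\mcC U^\dagger$ is by inspection the orthogonal projector onto the subspace $U\mcC$, so condition (3) forces $U\mcC = \mcC$, which in particular gives $U\ket\psi \in \mcC$ for every $\ket\psi \in \mcC$. I would dispose of this in a single line.

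For $(1) \Rightarrow (3)$, I would substitute the defining expression for $\Pi_\mcC$ and conjugate by $U$ term by term:
\begin{equation*}
    U\Pi_\mcC U^\dagger = \frac{1}{\abs{\stabs{0}}}\sum_{S \in \stabs{0}} U S U^\dagger.
\end{equation*}
The hypothesis $U \in \errors{k}$ says that each summand $USU^\dagger$ lies in $\stabs{k-1}$, and therefore fixes every vector of $\mcC$ pointwise. Summing over $\stabs{0}$, the operator $U\Pi_\mcC U^\dagger$ acts as the identity on $\mcC$, which shows $\mcC \subseteq U\mcC$. Unitarity of $U$ then upgrades this inclusion to the equality $U\mcC = \mcC$ by dimension counting, and so the projector $U\Pi_\mcC U^\dagger$ onto $U\mcC$ is nothing but $\Pi_\mcC$.

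For $(2) \Rightarrow (1)$, the plan is first to promote $(2)$ to the stronger statement $U\mcC = \mcC$ (again by finite-dimensionality plus unitarity), which in turn yields that $U^\dagger$ also preserves $\mcC$. Given any $S \in \stabs{0}$, the operator $USU^\dagger$ automatically lies in $\Cl{k-1}$ by the very definition of the Clifford Hierarchy, so the only remaining task is to verify that it acts trivially on $\mcC$. For $\ket\phi \in \mcC$, the vector $U^\dagger\ket\phi$ still lies in $\mcC$, hence is fixed by $S$; applying $U$ then gives $USU^\dagger\ket\phi = UU^\dagger\ket\phi = \ket\phi$. This proves $USU^\dagger \in \stabs{k-1}$ for every $S \in \stabs{0}$, and hence $U\stabs{0}U^\dagger \subseteq \stabs{k-1}$, i.e., $U \in \errors{k}$.

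None of the steps is particularly deep; the only subtle point to watch is the dimension-counting upgrade from $\mcC \subseteq U\mcC$ to $U\mcC = \mcC$ used in the first and third implications, both of which rely on $\mcC$ being finite-dimensional and $U$ being unitary. The key conceptual input is the representation of $\Pi_\mcC$ as a group average over $\stabs{0}$, which is what allows the defining condition for $\errors{k}$ (phrased in terms of Pauli stabilizers) to be transferred cleanly to the projector-level condition (3).
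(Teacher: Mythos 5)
Your proposal is correct and follows essentially the same route as the paper's proof: both rest on writing $\Pi_\mcC$ as a group average over $\stabs{0}$, conjugating term by term for $(1)\Rightarrow(3)$, and using unitarity plus finite-dimensionality to upgrade code-space preservation to $U\mcC=\mcC$ for $(2)\Rightarrow(1)$. The only difference is organizational (you prove a single cycle of implications, while the paper proves $2\Leftrightarrow 3$ separately), and your version is if anything slightly more careful in making the surjectivity step explicit.
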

\begin{proof}
    ($2\Leftrightarrow 3$) This is equivalent to the statement that $U$ and $\Pi_\mcC$ preserve each others' eigenspaces if and only if they commute with each other.
    
    ($1\Rightarrow 3$) Consider $U\Pi_\mcC U^\dagger = \frac{1}{\abs{\stabs{0}}}\sum_{S\in\stabs{0}} USU^\dagger$. By definition of $\errors{k}$, each $USU^\dagger$ acts as logical identity on $\mcC$, and so for any $\ket\psi\in\mcC$ we have that $U\Pi_\mcC U^\dagger\ket\psi=\frac{1}{\abs{\stabs{0}}}\sum_{S\in\stabs{0}} \ket\psi = \ket\psi$. This implies that $U\Pi_\mcC U^\dagger$ and $\Pi_\mcC$ have the same eigenspaces. Since $U$ is unitary it preserves the rank of $\Pi_\mcC$, and so $U\Pi_\mcC U^\dagger$ must be equal to the code space projector of $\mcC$.

    ($2\Rightarrow 1$) Let $S\in\stabs{0}$. As $U\ket\psi$ can be an arbitrary element of the code space and $U\ket\psi = U S\ket\psi = (USU^\dagger) U\ket{\psi}$, $USU^\dagger$ acts as logical identity on the code space.
\end{proof}
We note that the above equivalence does not rely on the Clifford Hierarchy in any way; the property of being a logical operation is equivalent to conjugating Pauli stabilizers of $\mcC$ to logical identity operators of the code space. We restricted attention to Clifford Hierarchy operators simply because the goal of the present work is to give transversal operators that are in the Clifford Hierarchy.

An important class of undetectable Clifford Hierarchy operators are those that act non-trivially on the code space.
\begin{definition}
    The level-$k$ (non-trivial) Clifford logicals of $\mcC$ are the operators in the $k$-th level of the Clifford Hierarchy that conjugate Pauli stabilizers of $\mcC$ to logical identity operators in the $(k-1)$-st level of the Clifford Hierarchy, but which are not, themselves, level-$k$ Clifford stabilizers:
    \begin{equation}
        \logs{k}\coloneqq\errors{k}\setminus\stabs{k}.
    \end{equation}
\end{definition}

\begin{definition}\label{def: logically equivalent}
    Two operators, $U$ and $V$, are said to be \emph{logically equivalent} on a stabilizer code $\mcC$, denoted $U\equiv_\mcC V$ or simply $U\equiv V$, if $USU^\dagger=VSV^\dagger$ for every $S\in\stabs{0}$. This implies that, up to a global phase $e^{i\theta}$, $U\ket\psi = e^{i\theta} V\ket\psi$ for every $\ket\psi\in\mcC$.
\end{definition}
So, Clifford stabilizers are the operators that are logically equivalent to identity. We have the following characterization of Clifford stabilizers.

\begin{fact}\label{lem:Clifford-stabilizer-equivalences}
    Let $\mcC$ be a stabilizer code and suppose $U\in\Cl{k}$ is a $k$-th level Clifford operator. $U\in\stabs{k}$ if and only if $UPU^\dagger \equiv P$ for every $P\in\errors{0}$.
\end{fact}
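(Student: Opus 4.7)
The plan is to prove the biconditional by translating logical equivalence into identical action on the code space $\mcC$, using Lemma \ref{lem:undetectable-error-equivalences} to freely interchange ``undetectable error'' with ``preserves $\mcC$.'' For the forward direction, suppose $U\in\stabs{k}$, so $U\ket{\psi}=\ket{\psi}$, and hence $U^\dagger\ket{\psi}=\ket{\psi}$, for every $\ket{\psi}\in\mcC$. Given any $P\in\errors{0}$, Lemma \ref{lem:undetectable-error-equivalences} guarantees that $P\ket{\psi}\in\mcC$, so
\begin{equation*}
UPU^\dagger\ket{\psi} \;=\; UP\ket{\psi} \;=\; P\ket{\psi},
\end{equation*}
where the first equality uses $U^\dagger\ket{\psi}=\ket{\psi}$ and the second uses that $P\ket{\psi}\in\mcC$ is fixed by $U$. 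Thus $UPU^\dagger$ and $P$ agree on $\mcC$, i.e., $UPU^\dagger\equiv P$.

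For the reverse direction, I first specialize the hypothesis to Pauli stabilizers. Taking $P=S\in\stabs{0}\subseteq\errors{0}$ gives $USU^\dagger\equiv S\equiv\overline{\eye}$. Because $U\in\Cl{k}$ and $S\in\Cl{0}$, the conjugate $USU^\dagger$ lies in $\Cl{k-1}$, and its trivial action on $\mcC$ places it in $\stabs{k-1}$; hence $U\in\errors{k}$, and Lemma \ref{lem:undetectable-error-equivalences} then ensures that $U$ and $U^\dagger$ preserve $\mcC$. Now I feed the full hypothesis back in: for any $\ket{\phi}\in\mcC$, set $\ket{\psi}=U\ket{\phi}\in\mcC$; the equivalence $UPU^\dagger\equiv P$ evaluated at $\ket{\psi}$ unpacks (up to a global phase that I will collect at the end) to $UP\ket{\phi}=PU\ket{\phi}$. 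Consequently, the restriction $U|_\mcC$ commutes with $P|_\mcC$ for every $P\in\errors{0}$.

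To conclude, I will invoke the standard structural fact that the images $\{P|_\mcC : P\in\errors{0}\}$ are precisely the logical Pauli operators on the encoded space and, as such, generate the full matrix algebra on $\mcC$. Schur's lemma then forces $U|_\mcC$ to be a scalar, and unitarity yields $U|_\mcC = e^{i\theta}\overline{\eye}$, i.e., $U\equiv\overline{\eye}$, placing $U$ in $\stabs{k}$ up to the (conventional) identification modulo a global phase. The main obstacle I expect is this final step: while the commutant of the logical Pauli group on the code space is intuitively just scalars, one has to pin down that the cosets $\errors{0}/\stabs{0}$ act irreducibly on the logical Hilbert space, which is the standard characterization of logical Paulis for a stabilizer code.
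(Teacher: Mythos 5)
The paper states this as a \emph{Fact} and supplies no proof, so there is nothing to compare your argument against line by line; judged on its own, your proof is essentially correct and uses the natural route. The forward direction is clean. In the reverse direction, the three ingredients — specializing to $P=S\in\stabs{0}$ to land $U$ in $\errors{k}$ so that $U$ and $U^\dagger$ preserve $\mcC$, converting $UPU^\dagger\equiv P$ into commutation of $U|_\mcC$ with $P|_\mcC$, and then invoking that the logical Paulis span the full matrix algebra on $\mcC$ (via the symplectic-basis fact) so that the commutant is scalar — are all sound, and the final global-phase ambiguity you flag is real: the right-hand condition is invariant under $U\mapsto e^{i\theta}U$ while membership in $\stabs{k}$ is not, so the statement only holds modulo that identification.

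The one place you should not be casual is the parenthetical ``up to a global phase that I will collect at the end.'' If $UPU^\dagger\equiv P$ is read as agreement only up to a $P$-dependent phase, the fact is simply false: any logical Pauli $\overline{Z}_1$ conjugates every $P\in\errors{0}$ to $\pm P$, which agrees with $P$ up to a global phase, yet $\overline{Z}_1\notin\stabs{k}$. Under that reading your step (b) only yields the projective relation $UP\ket\phi=e^{i\theta_P}PU\ket\phi$, and Schur's lemma no longer applies. So you must commit from the outset to the strict reading $UPU^\dagger\ket\psi=P\ket\psi$ for all $\ket\psi\in\mcC$ (which is also what makes your forward direction an exact equality); then the commutation in step (b) is exact and the Schur argument closes. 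It is worth noting explicitly that the paper's own \cref{def: logically equivalent} (equality of conjugation on $\stabs{0}$, hence equality of code-space action only up to phase) would render the fact false for the same reason, so the strict reading is the only one under which the fact is true.
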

Note that since $\stabs{k}\subseteq\errors{k}$ one would expect that \cref{lem:Clifford-stabilizer-equivalences} is stricter than \cref{lem:undetectable-error-equivalences}. Indeed, one recovers \cref{lem:undetectable-error-equivalences} by requiring only $P\in\stabs{0}$ in the second condition, instead of $P\in\errors{0}$.

Clifford stabilizers act as identity on the entire code space; determining the action of a non-trivial Clifford logical will require an understanding of the individual ``logical qubits'' of a stabilizer code, $\mcC$. We first note it is well-known that: (1) $\stabs{0}$ is an Abelian subgroup of $\mcP_n$ not containing $-\eye$, and (2) the center of $\errors{0}$ is precisely $\stabs{0}$.
\begin{fact}
    Let $\mcC$ be an $[[ n, \kappa]]$ stabilizer code with stabilizer group $\stabs{0}$ and undetectable Pauli-error group $\errors{0}$. There exist $2\kappa$ operators in $\errors{0}$, $\br{P_i, Q_i}_{i=1}^{\kappa}$, with the following properties:
    \begin{enumerate}
        \item $P_i$ and $P_j$ commute for every $i,j\in[\kappa]$,
        \item $Q_i$ and $Q_j$ commute for every $i,j\in[\kappa]$,
        \item $P_i$ and $Q_j$ anti-commute if and only if $i=j$, and
        \item $\errors{0}$ is generated by $\stabs{0}$ together with the $P_i$ and $Q_i$ operators: $\errors{0}=\langle\stabs{0}\cup\br{P_i, Q_i}_{i=1}^{\kappa}\rangle$.
    \end{enumerate}
    Any set $\br{P_i, Q_i}_{i=1}^{\kappa}$ that satisfies these properties is known as a \emph{symplectic basis} for $\kappa$ logical qubits, and when the symplectic basis arises from the undetectable errors of a stabilizer code we say that it is a symplectic basis for $\mcC$.
\end{fact}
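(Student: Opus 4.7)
The plan is to prove this standard fact by translating everything into the symplectic geometry of $\FF_2^{2n}$ and then performing symplectic Gram--Schmidt. First I would use the well-known binary representation of Pauli operators: every element of $\mcP_n$ can be written (up to a phase in $\{\pm 1,\pm i\}$) as $X(a)Z(b)$ for a unique pair $(a,b)\in\FF_2^n\times\FF_2^n$. This gives a surjection $\mcP_n\twoheadrightarrow\FF_2^{2n}$ whose kernel is the center $\{\pm\eye,\pm i\eye\}$. Under this map, the commutator of two Paulis corresponds to the standard symplectic form $\omega((a,b),(a',b'))=a\cdot b'+b\cdot a'\pmod 2$, in the sense that two Paulis commute iff their images have symplectic inner product $0$.

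Next I would identify the images of $\stabs{0}$ and $\errors{0}$. Since $\stabs{0}$ is abelian and $-\eye\notin\stabs{0}$, its image $V\subseteq\FF_2^{2n}$ is an isotropic subspace on which the phase lift is trivial, and $|\stabs{0}|=|V|=2^{n-\kappa}$ gives $\dim V=n-\kappa$. By \cref{lem:undetectable-error-equivalences} (or by the definition $\errors{0}=\{U\in\Cl{0}:U\stabs{0}U^\dagger\subseteq\stabs{-1}=\stabs{0}\}$ combined with the fact that conjugation of a Pauli by another Pauli is $\pm 1$ times the original) the image $W$ of $\errors{0}$ is precisely the symplectic complement $V^\omega$. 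Hence $\dim W=2n-\dim V=n+\kappa$, and $V\subseteq W$ because $V$ is isotropic. The quotient $W/V$ has dimension $2\kappa$, and the symplectic form descends to a form on $W/V$ that is non-degenerate because its radical is exactly $V$ (this is the content of the ``center of $\errors{0}$ is $\stabs{0}$'' remark preceding the statement).

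Now I would run symplectic Gram--Schmidt on $W/V$ to produce a hyperbolic basis. Pick any nonzero $\bar p_1\in W/V$; non-degeneracy guarantees some $\bar q_1\in W/V$ with $\omega(\bar p_1,\bar q_1)=1$; replace $W/V$ by the symplectic complement of $\mathrm{span}(\bar p_1,\bar q_1)$ in $W/V$ (again non-degenerate, of dimension $2\kappa-2$) and iterate $\kappa$ times to obtain pairs $(\bar p_i,\bar q_i)_{i=1}^{\kappa}$ with the hyperbolic relations $\omega(\bar p_i,\bar p_j)=\omega(\bar q_i,\bar q_j)=0$ and $\omega(\bar p_i,\bar q_j)=\delta_{ij}$. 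Lifting each $\bar p_i,\bar q_i$ to any Hermitian Pauli preimage $P_i,Q_i\in\errors{0}$ (i.e.\ picking the sign so that $P_i^\dagger=P_i$, which is always possible since the $\pm i$ phases are not needed for preimages of real vectors) immediately gives properties 1--3 from the corresponding symplectic identities.

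For property 4, the cosets $\bar p_i,\bar q_i$ span $W/V$, so every element of $W$ lies in $V+\mathrm{span}\{\bar p_i,\bar q_i\}_i$; lifting back, every element of $\errors{0}$ equals a product of an element of $\stabs{0}$, some $P_i$'s and $Q_i$'s, and possibly a central phase. The assumption $-\eye\notin\stabs{0}$ combined with the Hermiticity of the chosen $P_i,Q_i$ shows that only the real signs $\pm\eye$ can arise as a discrepancy and, absorbed into the Pauli representatives, these generate all of $\errors{0}$. The only subtlety worth flagging is this phase bookkeeping at the lift step: everything interesting happens at the level of $\FF_2^{2n}$, and the symplectic linear algebra is routine, but one must verify that Hermitian representatives can be chosen consistently so that $P_i,Q_i\in\errors{0}\subseteq\mcP_n$ (as opposed to $i\mcP_n$); this is standard and is the only place where the hypothesis $-\eye\notin\stabs{0}$ is truly used.
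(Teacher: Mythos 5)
The paper never proves this Fact; it is quoted as standard background with no argument attached, so there is no in-paper proof to compare against. Your route---passing to the binary symplectic representation $\mcP_n\rightarrow\FF_2^{2n}$, identifying the images of $\stabs{0}$ and $\errors{0}$ as an isotropic subspace $V$ of dimension $n-\kappa$ and its symplectic complement $V^\omega$ of dimension $n+\kappa$, and running symplectic Gram--Schmidt on the non-degenerate quotient $V^\omega/V$ of dimension $2\kappa$---is the standard textbook proof, and its linear-algebraic core (injectivity of the quotient map on $\stabs{0}$ because $-\eye\notin\stabs{0}$, the identification of the image of $\errors{0}$ with $V^\omega$ since a Pauli normalizes $\stabs{0}$ iff it centralizes it, the dimension count, and the non-degeneracy of the induced form) is correct.

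The one place you wave your hands is exactly the place you flag, and your resolution of it is not quite right. The claim that ``only the real signs $\pm\eye$ can arise as a discrepancy'' fails: with the paper's convention $\mcP_1=\br{\pm\eye,\pm i\eye,\pm X,\dots}$, the set $\errors{0}$ contains anti-Hermitian elements such as $i\eye$ (and, for the trivial one-qubit code with $\stabs{0}=\br{\eye}$, the operator $Y$), whereas the group generated by $\stabs{0}$ together with Hermitian lifts $P_i,Q_i$ contains $-\eye$ (since $P_iQ_iP_iQ_i=-\eye$) but need not contain $i\eye$; already $\langle X,Z\rangle$ has order $8$ and misses $\pm i\eye$ and $\pm Y$. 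So property 4, read literally, fails by a factor of $\langle i\eye\rangle$. This is a defect of the statement itself (a standard abuse, shared by the remark that the center of $\errors{0}$ is $\stabs{0}$) as much as of your proof: it disappears once the Fact is read projectively, i.e.\ modulo global phases, or once $i\eye$ is adjoined to the generating set, and under either reading your argument goes through. A smaller nit: Hermitian lifts need not be real multiples of $X(a)Z(b)$---one takes $\pm X(a)Z(b)$ or $\pm iX(a)Z(b)$ according to the parity of $a\cdot b$---and in any case Hermiticity is not required by properties 1--4, since any preimages realize the stated commutation relations.
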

Logical qubits of $\mcC$ can be indexed by pairs of anti-commuting operators in a symplectic basis. In particular, the \emph{$i$-th logical qubit} of $\mcC$ is determined by the pair $B_i\coloneqq\br{P_i,Q_i}$, and a set of logical qubits, $\mcL\subseteq[\kappa]$, is determined by the sub-basis $B_\mcL\coloneqq\bigcup_{i\in \mcL}\br{P_i,Q_i}$. 

Given a subset of logical qubits $\mcL\subseteq[\kappa]$, consider the subgroup $\errors{0}_\mcL\coloneqq\langle \stabs{0},B_\mcL\rangle\leq\errors{0}$ of the full undetectable Pauli-error group. We note that $\errors{0}_\emptyset = \stabs{0}$ and $\errors{0}_{[\kappa]}=\errors{0}$, but for non-trivial $\mcL\subset[\kappa]$ the set $\errors{0}_\mcL$ will depend on a particular choice of symplectic basis.
\begin{definition}
    Let $\mcC$ be an $[[ n, \kappa]]$ stabilizer code with a fixed symplectic basis $\mcC_{[\kappa]}\coloneqq\br{P_i,Q_i}_{i\in [\kappa]}$. Suppose $U\in\errors{k}$  is a level-$k$ Clifford logical for $Q$, and that $\mcL\subseteq[\kappa]$ is a subset of the logical qubits. $U$ is said to \emph{act trivially on the qubits in $\mcL$} if 
    $UPU^\dagger\equiv P$ for every undetectable Clifford error $P\in\errors{0}_\mcL$.
\end{definition}
Note that this definition agrees with the condition in \cref{lem:Clifford-stabilizer-equivalences} for Clifford stabilizers, which by definition act trivially on every logical qubit. It is straightforward to show that the condition $UPU^\dagger\ket\psi=P\ket\psi$ for every $P\in\errors{0}_\mcL$ is stronger than necessary; we need only check the equality holds for the symplectic basis elements that determine the qubits in $\mcL$:
\begin{fact}
    $U\in\errors{k}$ acts trivially on the logical qubits in a set $\mcL\subseteq[\kappa]$ if and only if $U$ \emph{logically commutes} with every basis operator $P\in B_\mcL$, $UP\equiv PU$.
\end{fact}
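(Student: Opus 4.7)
The plan is to rephrase logical commutation as invariance under conjugation, reduce the problem to showing the set of Pauli operators respected by $U$ is a subgroup, and finish by checking the two ``generating'' subsets $\stabs{0}$ and $B_\mcL$. Concretely, $UP\equiv PU$ is equivalent to $UPU^\dagger\equiv P$ (right-multiply by $U^\dagger$, which preserves $\mcC$ and therefore preserves logical equivalence classes), and ``$U$ acts trivially on the qubits in $\mcL$'' is by definition $UPU^\dagger\equiv P$ for every $P\in\errors{0}_\mcL$. So the fact to prove reduces to: $UPU^\dagger\equiv P$ for every $P\in\errors{0}_\mcL$ iff the same holds for every $P\in B_\mcL$.

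The forward direction is immediate since $B_\mcL\subseteq\errors{0}_\mcL$. For the reverse, I would show the set
\[
\mcT(U)\coloneqq\{P\in\errors{0}\;\colon\;UPU^\dagger\equiv P\}
\]
is a subgroup of $\errors{0}$ containing both $\stabs{0}$ and $B_\mcL$, and hence must contain $\langle\stabs{0},B_\mcL\rangle=\errors{0}_\mcL$. Containment of $B_\mcL$ is the hypothesis, and containment of $\stabs{0}$ uses $U\in\errors{k}$: for $S\in\stabs{0}$, $USU^\dagger\in\stabs{k-1}$ acts as the logical identity on $\mcC$, so $USU^\dagger\ket\psi=\ket\psi=S\ket\psi$ for all $\ket\psi\in\mcC$, yielding $USU^\dagger\equiv S$.

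The one nontrivial step is closure of $\mcT(U)$ under products. Given $UPU^\dagger\equiv P$ and $UQU^\dagger\equiv Q$, write $U(PQ)U^\dagger=(UPU^\dagger)(UQU^\dagger)=P'Q'$ with $P'\equiv P$ and $Q'\equiv Q$, and compute, for any $S\in\stabs{0}$,
\[
(P'Q')S(P'Q')^\dagger = P'(Q'SQ'^\dagger)P'^\dagger = P'(QSQ^\dagger)P'^\dagger = P(QSQ^\dagger)P^\dagger = (PQ)S(PQ)^\dagger.
\]
The second equality uses $Q'\equiv Q$ applied to $S$, and the third uses $P'\equiv P$ applied to the stabilizer $QSQ^\dagger$, which lies in $\stabs{0}$ because $\errors{0}$ normalizes $\stabs{0}$ in $\mcP_n$. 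This shows $U(PQ)U^\dagger\equiv PQ$; closure under inverses is immediate since $\errors{0}$ is a finite group. The only subtlety in the entire argument is chaining two equivalences through an inner conjugation, which succeeds exactly because of the normality of $\stabs{0}$ in $\errors{0}$; everything else is bookkeeping.
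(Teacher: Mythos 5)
The paper asserts this Fact without proof (it is flagged as ``straightforward''), so your argument fills a genuine omission rather than diverging from an existing derivation, and the route you take is the natural one. Reducing to the claim that $\mcT(U)=\br{P\in\errors{0}\colon UPU^\dagger\equiv P}$ is a subgroup containing $\stabs{0}$ and $B_\mcL$ is exactly right, since $\errors{0}_\mcL=\langle\stabs{0},B_\mcL\rangle$ by definition; your closure-under-products computation correctly isolates the only substantive point, namely that chaining $P'\equiv P$ through the inner factor requires $QSQ^\dagger\in\stabs{0}$, which holds (indeed $QSQ^\dagger=S$) because undetectable Pauli errors centralize the stabilizer group. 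Closure under inverses by finiteness and the containment $\stabs{0}\subseteq\mcT(U)$ via $U\in\errors{k}$ are likewise fine.

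The one step worth tightening is the opening reformulation $UP\equiv PU\iff UPU^\dagger\equiv P$. Under the paper's literal definition of $\equiv$ (equality of the conjugations $ASA^\dagger=BSB^\dagger$ for all $S\in\stabs{0}$), ``right-multiplication by a code-preserving operator preserves logical equivalence'' is not automatic: $(AW)S(AW)^\dagger=A\bigl(WSW^\dagger\bigr)A^\dagger$, and $WSW^\dagger$ need not lie in $\stabs{0}$ when $W=U^\dagger$ is non-Clifford. Unpacking directly, $UP\equiv PU$ says that $P$ commutes with $USU^\dagger$ for every $S\in\stabs{0}$, whereas $UPU^\dagger\equiv P$ says that $P$ commutes with $U^\dagger SU$; these coincide as statements about the logical action on $\mcC$ but not as operator identities. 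The same remark applies to your verification of $\stabs{0}\subseteq\mcT(U)$, which argues via the action on code states rather than via conjugation of $\stabs{0}$. Both steps are correct under the reading of $\equiv$ as ``equal logical action on $\mcC$ up to phase,'' which is how the paper itself uses the relation immediately before stating this Fact, so this is a definitional wrinkle inherited from the paper rather than a flaw in your argument --- but you should say explicitly which reading you are using, since your product-closure step works in the conjugation formalism while these two steps silently switch to the code-space formalism.
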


\subsection{The single-qubit case}\label{sec:Z-X-basis-logicals}

Consider the following single-qubit $Z$ and $X$ rotation gates:
\begin{align}
    \Z{k}&\coloneqq \ketbra{0}+e^{i\frac{\pi}{2^{k}}}\ketbra{1},
    & \X{k}&\coloneqq \ketbra{+}+e^{i\frac{\pi}{2^{k}}}\ketbra{-},\\
    & = \begin{bmatrix}
        1 & 0 \\
        0 & e^{i\frac{\pi}{2^{k}}}
    \end{bmatrix},
    &&=\frac{1}{2}\begin{bmatrix}
        1+e^{i\frac{\pi}{2^{k}}} & 1-e^{i\frac{\pi}{2^{k}}} \\
        1-e^{i\frac{\pi}{2^{k}}} & 1+e^{i\frac{\pi}{2^{k}}}
    \end{bmatrix}.
\end{align}
The $\Z{k}$ operators are defined so that the reproduce the natural $k$-th level Clifford Hierarchy single-qubit $Z$ basis gates: $Z(-1)=\eye$, the identity, $Z(0)=Z$, the Pauli $Z$ operator, $Z(1)=\phasegate=\sqrt{Z}$, the phase gate, $Z(2)=T=\sqrt{\phasegate}$, the $T$ gate, etc.
$\Z{k}$ and $\X{k}$ are related to each other via the Hadamard matrix, $\had \Z{k}\had = \X{k}$. Note that for $\ell\in\br{0,\dots,k+1}$, $\Z{k}^{2^\ell}=\Z{k-\ell}$ and $\X{k}^{2^\ell}=\X{k-\ell}$, implying that $\Z{k}$ and $\X{k}$ have order $2^{k+1}$.

The following conjugation identities can be verified by direct computation:
\begin{lemma}\label{lem:conjugation-identities} Let $\omega_k = e^{-i \frac \pi {2^k}}$ and suppose $k\in\ZZ_{\geq 0}$.
    \begin{align}
        \Z{k} X \Z{k}^\dagger &= {\omega_k} \Z{k-1} X,\\
        \X{k} Z \X{k}^\dagger &= {\omega_k} \X{k-1} Z,\\
        \Z{k}^\dagger X \Z{k} &= \omega_k^{-1} \Z{k-1}^\dagger X,\\
        \X{k}^\dagger Z \X{k} &= \omega_k^{-1} \X{k-1}^\dagger Z.
    \end{align}
\end{lemma}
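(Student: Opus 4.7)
The plan is to verify each of the four identities by direct matrix computation in the computational basis, then observe how the remaining three reduce to the first via Hadamard conjugation and inversion of the conjugation action. Since the paper explicitly flags this as directly verifiable, I expect no real conceptual obstacle; the main care needed is only in tracking the exponent arithmetic in the phases.

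For the first identity, I would substitute the explicit forms $\Z{k} = \ketbra{0} + e^{i\pi/2^k}\ketbra{1}$ and $X = \ketbra{0}{1} + \ketbra{1}{0}$, multiply out to obtain
\[
\Z{k} X \Z{k}^\dagger \;=\; e^{-i\pi/2^k}\ketbra{0}{1} \;+\; e^{i\pi/2^k}\ketbra{1}{0},
\]
and separately expand the right-hand side $\omega_k \Z{k-1} X = e^{-i\pi/2^k}(\ketbra{0}{1} + e^{i\pi/2^{k-1}}\ketbra{1}{0})$. The two expressions match after applying the elementary identity $-1/2^k + 1/2^{k-1} = 1/2^k$, which is the only nontrivial step. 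The edge case $k=0$ (where $\Z{-1} \coloneqq \eye$) reduces to the familiar $ZXZ = -X$ and I would mention this briefly as a sanity check.

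For the second identity, I would invoke the Hadamard intertwining relations $\had\, \Z{k}\, \had = \X{k}$ and $\had\, X\, \had = Z$ stated just above the lemma. Sandwiching the already-proved first identity between Hadamards on both sides and simplifying gives $\X{k} Z \X{k}^\dagger = \omega_k \X{k-1} Z$ without further computation.

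The third and fourth identities follow by inverting the conjugation action. Starting from $\Z{k} X \Z{k}^\dagger = \omega_k \Z{k-1} X$, conjugating both sides by $\Z{k}^\dagger$ and using the commutativity $[\Z{k}, \Z{k-1}] = 0$ (both are diagonal) yields $X = \omega_k \Z{k-1}\, (\Z{k}^\dagger X \Z{k})$, which upon rearrangement is exactly $\Z{k}^\dagger X \Z{k} = \omega_k^{-1} \Z{k-1}^\dagger X$. The fourth identity then follows from the third by the same Hadamard conjugation trick used to derive the second from the first.
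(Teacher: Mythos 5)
Your proof is correct and matches the paper's approach: the paper simply asserts that these identities "can be verified by direct computation," and your explicit $2\times 2$ expansion of the first identity (with the exponent check $-1/2^k+1/2^{k-1}=1/2^k$), followed by Hadamard conjugation and inversion of the conjugation action for the remaining three, is exactly the intended verification. The derivations of identities two through four from the first are sound, in particular your use of the commutativity of the diagonal operators $\Z{k}$ and $\Z{k-1}$ and of $\Z{k-1}^{-1}=\Z{k-1}^\dagger$.
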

As $X$ and $Z$ are Hermitian and conjugation preserves Hermiticity, we also have
\begin{corollary} Let $\omega_k = e^{-i \frac \pi {2^k}}$ and suppose $k\in\ZZ_{\geq 0}$.
    \begin{align}
        \Z{k} X \Z{k}^\dagger &= \omega_k^{-1} X\Z{k-1}^\dagger,\\
        \X{k} Z \X{k}^\dagger &= \omega_k^{-1} Z\X{k-1}^\dagger,\\
        \Z{k}^\dagger X \Z{k} &= {\omega_k} X\Z{k-1},\\
        \X{k}^\dagger Z \X{k} &= {\omega_k} Z\X{k-1}.
    \end{align}
\end{corollary}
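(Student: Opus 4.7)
The plan is to obtain each identity in the Corollary by simply taking the Hermitian adjoint of the corresponding identity in \cref{lem:conjugation-identities}. The key observation is that for any single-qubit Hermitian operator $H$ and any unitary $U$, the operator $U H U^\dagger$ is itself Hermitian; applied to the left-hand sides of the Lemma (where $H \in \{X,Z\}$), this means each LHS equals its own adjoint, so equating the adjoint of the LHS with the adjoint of the RHS produces a new identity that must also hold.

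Concretely, I would take the first identity of \cref{lem:conjugation-identities},
\begin{equation*}
    \Z{k} X \Z{k}^\dagger = \omega_k \, \Z{k-1} X,
\end{equation*}
and apply $(\cdot)^\dagger$ to both sides. On the left, $\bigl(\Z{k} X \Z{k}^\dagger\bigr)^\dagger = \Z{k}\, X^\dagger\, \Z{k}^\dagger = \Z{k} X \Z{k}^\dagger$ since $X^\dagger = X$. On the right, $(\omega_k\, \Z{k-1} X)^\dagger = \overline{\omega_k}\, X^\dagger\, \Z{k-1}^\dagger = \omega_k^{-1}\, X\, \Z{k-1}^\dagger$, using $|\omega_k| = 1$ so that $\overline{\omega_k} = \omega_k^{-1}$. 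This yields the first identity of the Corollary, $\Z{k} X \Z{k}^\dagger = \omega_k^{-1} X\, \Z{k-1}^\dagger$. The remaining three identities follow from the remaining three identities of \cref{lem:conjugation-identities} by exactly the same manipulation, swapping $X\leftrightarrow Z$ and $\Z{\cdot}\leftrightarrow \X{\cdot}$ as appropriate and noting that $Z^\dagger = Z$ handles the $\X{k} Z \X{k}^\dagger$ cases identically.

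There is no real obstacle: the only subtlety is the bookkeeping of the phase, namely confirming that the complex conjugate $\overline{\omega_k}$ equals $\omega_k^{-1}$, which is immediate from $\omega_k = e^{-i\pi/2^k}$. As a sanity check one could further verify consistency with the Lemma by multiplying $\omega_k\, \Z{k-1} X = \omega_k^{-1} X\, \Z{k-1}^\dagger$ on the right by $\Z{k-1}$ and using $\Z{k-1}^2 = \Z{k-2}$ together with $X \Z{k-1} = e^{i\pi/2^{k-1}} \Z{k-1}^{\dagger} X \cdot (\cdots)$, but such a check is not necessary for the proof itself.
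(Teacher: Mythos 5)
Your proof is correct and is exactly the paper's argument: the paper justifies the corollary in one line by noting that $X$ and $Z$ are Hermitian and conjugation preserves Hermiticity, so each right-hand side of \cref{lem:conjugation-identities} must equal its own adjoint, which is precisely the adjoint-taking manipulation you carry out. The phase bookkeeping $\overline{\omega_k}=\omega_k^{-1}$ is handled correctly, and the trailing ``sanity check'' is, as you note, unnecessary.
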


As $\Z{1}=Z$, $\X{1}=X$, the conjugation identities imply the following via induction:
\begin{corollary}
    $\Z{k}$ and $\X{k}$, their adjoints, and tensors products thereof, are all in the $k$-th level of the Clifford Hierarchy.
\end{corollary}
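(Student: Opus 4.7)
The plan is a straightforward strong induction on $k$, where at each step I invoke the conjugation identities of \cref{lem:conjugation-identities} and read off membership in $\Cl{k}$ directly from the defining recursion $\Cl{k}=\{U:U\Cl{0}U^\dagger\subseteq\Cl{k-1}\}$. The base case $k=0$ is immediate because $\Z{0}=Z$ and $\X{0}=X$ are Paulis, as are all tensor products of Paulis. For the inductive step, to show $\Z{k}\in\Cl{k}$ I need $\Z{k}P\Z{k}^\dagger\in\Cl{k-1}$ for every single-qubit Pauli $P$. Diagonality handles $P=Z$ trivially, and the identity $\Z{k}X\Z{k}^\dagger=\omega_k\Z{k-1}X$ together with the inductive hypothesis $\Z{k-1}\in\Cl{k-1}$ handles $P=X$; the case $P=Y$ then follows from $Y=iXZ$. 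The arguments for $\X{k}$, $\Z{k}^\dagger$, and $\X{k}^\dagger$ are symmetric, either by direct computation with the corresponding conjugation identities or by conjugating with the Clifford operator $\had$ (which sends $\Z{k}$ to $\X{k}$).

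For tensor products I would run a parallel induction: if $U_1$ and $U_2$ act on disjoint qubit registers with $U_1,U_2\in\Cl{k}$, then for any product Pauli $P_1\otimes P_2$ we have $(U_1\otimes U_2)(P_1\otimes P_2)(U_1\otimes U_2)^\dagger=(U_1P_1U_1^\dagger)\otimes(U_2P_2U_2^\dagger)$, and each factor lies in $\Cl{k-1}$ on its register by the single-site inductive hypothesis, so the tensor sits in $\Cl{k-1}$ by the inductive tensor-closure statement itself. Since every $n$-qubit Pauli decomposes as a tensor of single-qubit Paulis, this handles conjugation of an arbitrary Pauli stabilizer generator. Adjoints are folded in similarly: from $(U^\dagger)P(U^\dagger)^\dagger=(UP^\dagger U^\dagger)^\dagger$, closure of $\Cl{k-1}$ under adjoints (itself inductive, seeded by the group structure of $\Cl{0}$) gives closure at level $k$.

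The only genuine subtlety, and the place to be careful, is that $\Cl{k-1}$ is not a group for $k\geq 2$, so I cannot appeal to multiplicative closure to place the product $\omega_k\Z{k-1}X$ inside $\Cl{k-1}$. The required auxiliary lemma is the easy observation that $\Cl{j}$ is closed under left/right multiplication by Paulis and by global phases: for $U\in\Cl{j}$ and Paulis $P,Q$ one has $(UP)Q(UP)^\dagger=U(PQP^\dagger)U^\dagger=\pm UQ'U^\dagger\in\Cl{j-1}$, and global phases cancel in any conjugation. With this small closure lemma, together with the parallel inductions for adjoints and disjoint tensor products, the main induction closes and the corollary follows in a line per case from the conjugation identities.
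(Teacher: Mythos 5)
Your proposal is correct and follows essentially the same route as the paper: induction on $k$ via the conjugation identities of \cref{lem:conjugation-identities}, with the non-group nature of $\Cl{k-1}$ handled by observing that conjugation by $\omega_k\Z{k-1}X$ still sends Paulis into $\Cl{k-2}$ --- your ``Pauli-multiplication closure'' lemma is exactly the inline verification the paper performs, and your explicit treatment of tensor products and adjoints just fills in what the paper dismisses as a direct consequence. One small caveat: the identity $(U^\dagger)P(U^\dagger)^\dagger=(UP^\dagger U^\dagger)^\dagger$ is false as written (the left side is $U^\dagger P U$ while the right side is $UPU^\dagger$), so the general adjoint-closure route does not work as stated; fortunately your alternative of running the same induction directly on the adjoint conjugation identities (the corollary to \cref{lem:conjugation-identities}) is valid and is what should be used.
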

\begin{proof}
    The latter results are a direct consequence of $\Z{k},\X{k}\in\Cl{k}$;
    we will only prove the claim for $\Z{k}$ as the proof of $\X{k}$ is analogous. Induction on $k$. Clearly for $k=0$, $\Z{0}=Z\in\mcP_1$. Suppose now that $\Z{k}\in\Cl{k}$ for $k\geq 0$ and consider $\Z{k+1}$. Since the operator is diagonal we only must prove $\Z{k+1}X\Z{k+1}\in\Cl{k}$, i.e., that $\Z{k+1}X\Z{k+1}=\omega_k \Z{k}X$ (\cref{lem:conjugation-identities}) conjugates both $X$ and $Z$ to elements of $\Cl{k-1}$.
    \begin{align}
        (\omega_k \Z{k}X) Z (\omega_k \Z{k}X)^\dagger &= \Z{k}X Z X \Z{k}^\dagger,\\
        &=  -\Z{k} Z \Z{k}^\dagger,\\
        &= -Z,\\
        \intertext{is in $\in\mcP_1\subset\Cl{k-1}$, and}
        (\omega_k \Z{k}X) X (\omega_k \Z{k}X)^\dagger &= \Z{k} X \Z{k}^\dagger,
    \end{align}
    is in $\Cl{k-1}$ since by the induction hypothesis $\Z{k}\in\Cl{k}$.
\end{proof}
Lastly, we note that conjugating an even number of $X$ (resp. $Z$) gates by even numbers of $\Z{k}$ and its adjoint (resp. $\X{k}$ and its adjoint), results in a gate with no overall phase:
\begin{fact}\label{fact:even-has-no-phase}
    \begin{align}
        \left(\Z{k}^{\otimes \frac{n}{2}} \otimes \Z{k}^{\dagger \otimes \frac{n}{2}}\right) X\n \left(\Z{k}^{\otimes \frac{n}{2}} \otimes \Z{k}^{\dagger \otimes \frac{n}{2}}\right)^\dagger &= \left(\Z{k-1}^{\otimes \frac{n}{2}} \otimes \Z{k-1}^{\dagger \otimes \frac{n}{2}}\right) X\n, \\
        &= X\n\left(\Z{k-1}^{\dagger\otimes \frac{n}{2}} \otimes \Z{k-1}^{\otimes \frac{n}{2}}\right) ,
    \end{align}
    and similarly for the case involving $\X{k}$ and $Z$.
\end{fact}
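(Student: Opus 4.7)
The plan is to reduce the tensor-product identity to the single-qubit conjugation formulas of \cref{lem:conjugation-identities} and observe that the phases cancel exactly because we have equal numbers of $\Z{k}$ and $\Z{k}^\dagger$. Concretely, since $\Z{k}$ is diagonal and $X$ is a single-qubit operator, the conjugation factorizes across qubits: the left-hand side equals the tensor product, over $i\in[n]$, of $U_i X U_i^\dagger$ where $U_i = \Z{k}$ on the first $n/2$ qubits and $U_i = \Z{k}^\dagger$ on the last $n/2$ qubits.

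First I would apply \cref{lem:conjugation-identities} on each factor. On a qubit where $U_i = \Z{k}$ we get $\Z{k} X \Z{k}^\dagger = \omega_k \Z{k-1} X$, contributing a scalar $\omega_k$ and an operator $\Z{k-1}X$. On a qubit where $U_i = \Z{k}^\dagger$ we get $\Z{k}^\dagger X \Z{k} = \omega_k^{-1}\Z{k-1}^\dagger X$, contributing a scalar $\omega_k^{-1}$ and an operator $\Z{k-1}^\dagger X$. Collecting the scalars yields $\omega_k^{n/2}\cdot \omega_k^{-n/2} = 1$, so no overall phase survives. Collecting the operator parts and using that tensor products of diagonal gates commute past the uniform $X^{\otimes n}$ factor by rearranging (pulling all $\Z{k-1},\Z{k-1}^\dagger$ factors to the left of all $X$'s in each tensor slot), one obtains
\begin{equation}
\left(\Z{k-1}^{\otimes n/2}\otimes \Z{k-1}^{\dagger\otimes n/2}\right)X^{\otimes n},
\end{equation}
which is the first displayed equality.

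For the second equality I would instead apply the ``Hermitian-conjugated'' form from the corollary following \cref{lem:conjugation-identities}: on the first $n/2$ qubits, $\Z{k}X\Z{k}^\dagger = \omega_k^{-1} X\Z{k-1}^\dagger$, and on the last $n/2$ qubits, $\Z{k}^\dagger X \Z{k} = \omega_k X \Z{k-1}$. Again the scalar phases $\omega_k^{-n/2}\cdot \omega_k^{n/2}$ cancel, and the operator parts assemble into $X^{\otimes n}\bigl(\Z{k-1}^{\dagger\otimes n/2}\otimes \Z{k-1}^{\otimes n/2}\bigr)$, as claimed. The $\X{k}$/$Z$ version is identical after replacing $\Z{k}\leftrightarrow \X{k}$ and $X\leftrightarrow Z$ throughout, using the second pair of identities in \cref{lem:conjugation-identities}.

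There is essentially no obstacle here; the only thing to be careful about is the bookkeeping of where the $\omega_k$ phases come from (left vs.\ right conjugation, dagger vs.\ non-dagger) so that one applies the correct variant of \cref{lem:conjugation-identities} on the two halves of the tensor product. The cancellation is the whole content of the fact, and it works precisely because the number of $\Z{k}$ factors equals the number of $\Z{k}^\dagger$ factors.
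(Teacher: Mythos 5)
Your argument is correct and is precisely the intended one: the paper states this as an unproved Fact immediately after \cref{lem:conjugation-identities} and its corollary, and the derivation is exactly the slot-by-slot application of those identities with the $\omega_k^{n/2}\cdot\omega_k^{-n/2}=1$ cancellation you describe. The only cosmetic point is that no "pulling past" is needed for the first equality — the factorization $\bigotimes_i(A_iB_i)=(\bigotimes_iA_i)(\bigotimes_iB_i)$ already puts the diagonal factors on the left — but this does not affect correctness.
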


\subsection{Controlled-\texorpdfstring{$Z$}{Z} circuits}\label{sec:CZ circuits}

\begin{definition}
    For $\ell\in\NN$, the \emph{multi-controlled-$Z$} gate is defined recursively as the $\ell$-qubit unitary operator $C^{(\ell)}Z\coloneqq \ketbra{0}\otimes \eye+\ketbra{1}\otimes C^{(\ell-1)}Z$, where $C^{(0)}Z\coloneqq Z$.
\end{definition}
$C^{(\ell)}Z$ is symmetric in the $\ell$ qubits; in particular, $C^{(\ell)}Z$ is a diagonal gate that introduces a $-1$ phase to the all ones computational basis state, $\ket{1^\ell}$, and acts as identity on all other computational basis states. 

\begin{definition}
    Given a subset of $n$ qubits, $I\subset[n]$, define the \emph{$I$-controlled-$Z$}, $C^I Z$, as the $n$ qubit unitary that acts as $C^{(\abs{I})}Z$ on the qubits in $I$ and identity elsewhere, $C^I Z\coloneqq C^{(\abs{I})}Z\vert_I\otimes \eye\vert_{[n]\setminus I}$.
\end{definition}

\begin{definition}
    Given a collection of subsets of $n$ qubits, $\mcF\subseteq \powerset{[n]}$, define the \emph{$\mcF$-controlled-$Z$} operator as the circuit consisting of $C^IZ$ operators for each $I\in\mcF$, $C^\mcF Z\coloneqq\prod_{I\in\mcF}C^I Z$.
\end{definition}
Note that $C^\mcF Z$ is well-defined as each of the $C^IZ$ operators commute with each other. Further, as $(C^I Z)^2=\eye$ for each $I\subseteq[n]$, $(C^\mcF Z)^2=\eye$, as well. In particular, $C^\mcF Z^\dagger = C^\mcF Z$.

\begin{lemma}[Action of $C^{I}Z$ on $\mcP_n$]\label{lem: multi-controlled-Z action}
    \hspace{0em}
    \begin{enumerate}
        \item For every $i\in [n]$, $(C^{I}Z) Z_i (C^IZ) = Z_i$.
        \item For every $i\in [n]\setminus I$, $(C^{I}Z) X_i (C^IZ) = X_i$.
        \item For every $i\in I$, $(C^{I}Z) X_i (C^IZ) = X_i C^{I\setminus\br{i}}Z$.
    \end{enumerate}
\end{lemma}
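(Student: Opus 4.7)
The plan is to verify all three identities by examining the action of $C^I Z$ on a computational basis state $\ket{j_1 \cdots j_n}$, exploiting the fact that $C^I Z$ is diagonal in this basis with eigenvalue $(-1)^{\prod_{k \in I} j_k}$. The three claims correspond to the three ways in which a weight-one Pauli can interact with $C^I Z$: a $Z$ operator is simultaneously diagonal and so commutes, an $X$ operator outside the control set cannot touch the controlling bits, and the only nontrivial case is when the $X$ operator acts on one of the control qubits.

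Claims (1) and (2) should be essentially immediate. For (1), $C^I Z$ and $Z_i$ are both diagonal in the computational basis, hence commute, and since $(C^I Z)^2 = \eye$ the identity follows. For (2), when $i \notin I$ the eigenvalue $(-1)^{\prod_{k \in I} j_k}$ does not depend on $j_i$, so $C^I Z$ factors as $\eye_i \otimes V$ for some operator $V$ supported on $[n]\setminus\{i\}$; this trivially commutes with $X_i$, and again $(C^I Z)^2 = \eye$ closes the argument.

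The heart of the lemma is (3), where I would compute $(C^I Z) X_i (C^I Z) \ket{j_1 \cdots j_n}$ directly. Applying the three operators in sequence produces the state $\ket{j_1 \cdots \bar j_i \cdots j_n}$ multiplied by the combined phase $(-1)^{\prod_{k\in I} j_k + (1-j_i) \prod_{k\in I\setminus\{i\}} j_k}$. The factorization $\prod_{k\in I} j_k = j_i \prod_{k\in I\setminus\{i\}} j_k$ simplifies the exponent to $\prod_{k\in I\setminus\{i\}} j_k$, which is precisely the phase picked up by $C^{I\setminus\{i\}} Z$ acting on $\ket{j_1 \cdots j_n}$. Recognizing the resulting state as $X_i \, C^{I\setminus\{i\}} Z \ket{j_1 \cdots j_n}$ then completes the argument.

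There is no genuine obstacle here, as the whole argument collapses to that single algebraic identity. A purely operator-level alternative for (3) uses the recursive decomposition $C^I Z = \ket{0}\!\bra{0}_i \otimes \eye + \ket{1}\!\bra{1}_i \otimes C^{I\setminus\{i\}} Z$: conjugation by $X_i$ swaps the two projectors, and multiplying the result by $C^I Z$ on the left eliminates the cross terms by orthogonality of $\ket{0}\!\bra{0}_i$ and $\ket{1}\!\bra{1}_i$, yielding $(C^I Z) X_i (C^I Z) X_i = \eye_i \otimes C^{I\setminus\{i\}} Z$. Multiplying by $X_i$ on the right, and noting that $X_i$ commutes with $C^{I\setminus\{i\}} Z$ because they act on disjoint qubits, gives the stated claim.
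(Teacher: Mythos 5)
Your proposal is correct. Claims (1) and (2) are indeed dismissed as trivial in the paper as well, and your justifications (diagonality for (1), independence of the phase from $j_i$ for (2)) are exactly the right reasons. For the substantive claim (3), your primary argument is a computational-basis phase calculation: the combined exponent $j_i\prod_{k\in I\setminus\{i\}}j_k+(1-j_i)\prod_{k\in I\setminus\{i\}}j_k$ does collapse to $\prod_{k\in I\setminus\{i\}}j_k$, which is precisely the phase of $C^{I\setminus\{i\}}Z$, so the identity follows. The paper instead works at the operator level: it writes $C^{(\ell)}Z=\ketbra{0}\otimes\eye+\ketbra{1}\otimes C^{(\ell-1)}Z$ and computes $(C^{(\ell)}Z)X_1(C^{(\ell)}Z)=\ketbra{0}{1}\otimes C^{(\ell-1)}Z+\ketbra{1}{0}\otimes C^{(\ell-1)}Z=X\otimes C^{(\ell-1)}Z$, the cross terms vanishing by orthogonality of the projectors. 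This is exactly your ``operator-level alternative,'' so your proposal in fact contains the paper's proof as a special case. The basis-state route buys nothing extra here beyond making the phase bookkeeping explicit, while the operator route is slightly more compact and matches the recursive definition of $C^{(\ell)}Z$ used throughout the paper; either is a complete and valid proof.
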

\begin{proof}
    1 and 2 are trivial. For 3, consider without loss of generality $(C^{(\ell)}Z)X_1 (C^{(\ell)}Z)$. We compute
    \begin{align*}
        (C^{(\ell)}Z) X_1 (C^{(\ell)}Z) &= \left(\ketbra{0}\otimes \eye+\ketbra{1}\otimes C^{(\ell-1)}Z\right) X_1 \left(\ketbra{0}\otimes \eye+\ketbra{1}\otimes C^{(\ell-1)}Z\right),\\
        &= \ketbra{0}{1}\otimes C^{(\ell-1)}Z + \ketbra{1}{0}\otimes C^{(\ell-1)}Z,\\
        &= X\otimes C^{(\ell-1)}Z,\\
        &= X_1C^{(\ell-1)}Z\vert_{[\ell]\setminus\br{1}}.
    \end{align*}
\end{proof}

Let $\mcF\subseteq\powerset{[n]}$ be a collection of subsets of qubits. Given a qubit $i\in[n]$, the collection $\mcF_{\sim i}\subseteq\powerset{[n]}$ is defined as
\begin{equation}
    \mcF_{\sim i} \coloneqq \br{I\setminus \br{i}\Bigmid I\in\mcF, i\in I}.
\end{equation}
That is, $\mcF_{\sim i}$ consists of the sets in $\mcF$ that contain $i$, but with $i$ removed. Note that for a single subset, $I\subseteq[n]$, $\br{I}_{\sim i}$ is equal to $I\setminus\br{i}$ if $i\in I$ and empty otherwise.

\begin{lemma}[Action of $C^{\mcF}Z$ on $\mcP_n$]\label{lem: controlled-Z circuit action}
    \hspace{0em}
    \begin{enumerate}
        \item For every $i\in [n]$, $(C^{\mcF}Z) Z_i (C^\mcF Z) = Z_i$.
        \item For every $i\in [n]$, $(C^{\mcF}Z) X_i (C^\mcF Z) = X_i C^{\mcF_{\sim i}}Z$.
    \end{enumerate}
\end{lemma}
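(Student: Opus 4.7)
The plan is to reduce the statement to a direct calculation built on \cref{lem: multi-controlled-Z action}, using two elementary structural facts about $C^\mcF Z$: (a) each factor $C^I Z$ is diagonal in the computational basis, so they all pairwise commute; (b) each factor satisfies $(C^I Z)^2 = \eye$, so it is its own inverse.

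For part (1), I would simply note that diagonality in the computational basis implies $C^\mcF Z$ commutes with every $Z_i$, and hence $(C^\mcF Z) Z_i (C^\mcF Z) = Z_i (C^\mcF Z)^2 = Z_i$.

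For part (2), I would split the collection as $\mcF = \mcF_i \sqcup \mcF_{\bar i}$ with $\mcF_i \coloneqq \{I \in \mcF : i \in I\}$ and $\mcF_{\bar i} \coloneqq \{I \in \mcF : i \notin I\}$, writing $C^\mcF Z = U_i U_{\bar i}$ accordingly. By \cref{lem: multi-controlled-Z action}(2), every factor in $U_{\bar i}$ commutes with $X_i$; since $U_{\bar i}$ also commutes with $U_i$ and satisfies $U_{\bar i}^2 = \eye$, the $U_{\bar i}$ factors drop out in the conjugation, leaving $(C^\mcF Z) X_i (C^\mcF Z) = U_i X_i U_i$. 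To evaluate this, I would induct on $|\mcF_i|$, using \cref{lem: multi-controlled-Z action}(3) for the base step: $(C^I Z) X_i (C^I Z) = X_i \, C^{I \setminus \{i\}} Z$. For the inductive step, after peeling off one factor $C^I Z$ from $U_i$, the byproduct $C^{I \setminus \{i\}} Z$ is diagonal and hence commutes with the remaining $C^{I'} Z$ factors, so it can be collected on the right and subsequent conjugations only act on $X_i$. Iterating gives
\begin{equation*}
U_i X_i U_i = X_i \prod_{I \in \mcF_i} C^{I \setminus \{i\}} Z = X_i\, C^{\mcF_{\sim i}} Z,
\end{equation*}
which is exactly the claim.

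The only real subtlety, and thus the place I would write carefully, is tracking that the byproduct operators $C^{I \setminus \{i\}} Z$ generated along the induction truly commute with every remaining factor in $U_i$; this reduces to the diagonality observation (a), but deserves a one-line mention to avoid hand-waving about operator ordering. There is no deeper obstacle, since all of the required combinatorics of the collection $\mcF_{\sim i}$ is encoded by its definition.
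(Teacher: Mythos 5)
Your proposal is correct and follows essentially the same route as the paper: both arguments reduce to \cref{lem: multi-controlled-Z action} and induct by peeling off one factor $C^I Z$ at a time, using diagonality to commute the byproducts $C^{I\setminus\{i\}}Z$ past the remaining factors (the paper inducts on $\abs{\mcF}$ with the case split $i\in I$ handled inside the induction, whereas you pre-separate $\mcF_{\bar i}$ first, which is only a bookkeeping difference). The argument is sound as written.
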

\begin{proof}
    1 is trivial. Let $i\in [n]$. We prove the identity in 2 by induction on $\abs{\mcF}$. Clearly if $\mcF=\emptyset$ then $\mcF_{\sim i}=\emptyset$ and the identity is true. Suppose now the identity holds for all $\mcG$ with $\abs{G}=m\geq 0$, and consider $\mcF\subseteq\powerset{[n]}$ with $\abs{\mcF}=m+1$. Pick an arbitrary $I\in\mcF$.
    \begin{align*}
        (C^\mcF Z) X_i (C^\mcF Z) &= (C^{\mcF\setminus I} Z) (C^I Z) X_i (C^I Z)(C^{\mcF\setminus I}Z), \\
        \since{\cref{lem: multi-controlled-Z action}}&=\begin{cases}
            (C^{\mcF\setminus I} Z) X_i (C^{I\setminus\br{i}} Z)(C^{\mcF\setminus I}Z), &\text{if }i\in I\\
            (C^{\mcF\setminus I} Z) X_i (C^{\mcF\setminus I}Z), &\text{otherwise}
        \end{cases},\\
        \since{Def. of $\br{I}_{\sim i}$ and commuting operators}&=(C^{\mcF\setminus I} Z) X_i (C^{\mcF\setminus I}Z) (C^{\br{I}_{\sim i}} Z),\\
        \since{I.H.}&= X_i (C^{(\mcF\setminus I)_{\sim i}}Z) C^{\br{I}_{\sim i}}Z, \\
        &= X_i (C^{\mcF_{\sim i}}Z).
    \end{align*}
\end{proof}
\begin{corollary}
    For $k_\mcF\coloneqq \max_{I\in\mcF} \abs{I}$, $C^\mcF Z\in\Cl{k_\mcF-1}$.
\end{corollary}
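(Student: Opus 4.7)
The plan is to induct on $k_\mcF \coloneqq \max_{I \in \mcF} \abs{I}$, using the conjugation formulas from \cref{lem: controlled-Z circuit action} as the driving recursion. The base case $k_\mcF = 1$ is immediate: every $I \in \mcF$ is a singleton, so $C^I Z = Z_i$ for some $i$, and hence $C^\mcF Z$ is a product of Pauli-$Z$ operators, which lies in $\mcP_n = \Cl{0} = \Cl{k_\mcF - 1}$.

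For the inductive step, assume the claim for every collection whose max set size is strictly less than $k_\mcF$, and consider $\mcF$ with $\max_{I\in\mcF} \abs{I} = k_\mcF \geq 2$. To verify $C^\mcF Z \in \Cl{k_\mcF - 1}$ it suffices, by the defining property of the Clifford Hierarchy, to show that conjugation of any single-qubit Pauli generator by $C^\mcF Z$ lands in $\Cl{k_\mcF - 2}$. For the $Z$ generators this is trivial since $C^\mcF Z$ is diagonal, so $(C^\mcF Z) Z_i (C^\mcF Z)^\dagger = Z_i \in \mcP_n \subseteq \Cl{k_\mcF - 2}$. For the $X$ generators, \cref{lem: controlled-Z circuit action} gives
\begin{equation*}
    (C^\mcF Z) X_i (C^\mcF Z)^\dagger = X_i\, C^{\mcF_{\sim i}} Z,
\end{equation*}
and by construction every set in $\mcF_{\sim i}$ has size at most $k_\mcF - 1$, so by the inductive hypothesis $C^{\mcF_{\sim i}} Z \in \Cl{k_\mcF - 2}$.

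It remains to observe that multiplication on the left by a Pauli preserves membership in a fixed level of the Clifford Hierarchy; I would state this as a short lemma proven by a parallel induction on the level. Concretely, for $V \in \Cl{\ell}$ and $P \in \mcP_n$, $(PV) Q (PV)^\dagger = P (V Q V^\dagger) P^\dagger$, and since $V Q V^\dagger \in \Cl{\ell - 1}$ and conjugation by a Pauli maps $\Cl{\ell-1}$ to itself (again by induction on $\ell$, bottoming out in the fact that Paulis conjugate Paulis to Paulis), the product lies in $\Cl{\ell-1}$, so $PV \in \Cl{\ell}$. Applying this with $\ell = k_\mcF - 2$ yields $X_i\, C^{\mcF_{\sim i}} Z \in \Cl{k_\mcF - 2}$, completing the induction.

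The only mildly delicate point is the closure-under-Pauli-multiplication lemma, since the Clifford Hierarchy is not closed under arbitrary multiplication; everything else is direct bookkeeping on top of \cref{lem: controlled-Z circuit action}. I would expect the write-up to be quite short once that auxiliary lemma is in place.
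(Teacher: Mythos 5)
Your overall strategy---induct on $k_\mcF$, drive the recursion with \cref{lem: controlled-Z circuit action}, and isolate a separate lemma that each level of the hierarchy is closed under multiplication by Paulis---is the natural one; the paper states this corollary with no proof at all, so your write-up is essentially the argument the paper leaves implicit. The Pauli-closure lemma you flag as the delicate point is correct, and your inductive proof of it (conjugation by a Pauli preserves $\Cl{\ell-1}$, bottoming out in the Pauli group) is fine. The base case is also fine.

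The one step that does not hold as stated is the claim that it ``suffices to show that conjugation of any single-qubit Pauli generator lands in $\Cl{k_\mcF-2}$.'' The defining condition is $U\,\mcP_n\,U^\dagger\subseteq\Cl{k_\mcF-2}$, i.e., the conjugate of \emph{every} Pauli must land in that level; since $\Cl{\ell}$ is not a group for $\ell\geq 2$, a product of elements of $\Cl{k_\mcF-2}$ need not lie in $\Cl{k_\mcF-2}$, so the reduction to generators is not automatic once $k_\mcF\geq 4$. The fix is short but should be explicit: strengthen the inductive statement to say that conjugating an arbitrary Pauli by $C^\mcF Z$ yields a Pauli times some $C^{\mcG}Z$ with $k_\mcG\leq k_\mcF-1$. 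Concretely, for $P= i^c X(a)Z(b)$, diagonality gives $(C^\mcF Z)P(C^\mcF Z)^\dagger = i^c\big((C^\mcF Z)X(a)(C^\mcF Z)\big)Z(b)$; iterating \cref{lem: controlled-Z circuit action} over the qubits in $\supp(a)$ and commuting each residual diagonal factor past the remaining $X_j$'s only produces further $C^{I}Z$ factors with $\abs{I}\leq k_\mcF-1$, and since these commute and square to the identity their product is again of the form $C^{\mcG}Z$ with $\max_{I\in\mcG}\abs{I}\leq k_\mcF-1$. The conjugate is then a Pauli times $C^{\mcG}Z$, which your induction hypothesis together with your Pauli-closure lemma places in $\Cl{k_\mcF-2}$, and the induction closes.
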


\section{Classical and quantum codes from hypercubes}\label{sec:codes and hypercubes}

\subsection{Classical Reed--Muller codes}\label{sec:classical-RM-codes}
We now proceed to define the Reed--Muller code family $RM(r,m)$ using the structure of the $m$-dimensional hypercube. 
The presentation we give here is close to the classic description of RM codes that appears in \cite[Ch.13]{MS77} or \cite{AK98}. The key difference arises because we use the generating set
of the code formed by the (indicator vectors) of the standard subcubes while the cited works rely on a
larger set formed by all the flats in the  geometry $AG(m,2)$. 
We find our presentation is more intuitive when describing transversal logic operations on \emph{quantum} Reed--Muller codes.

We will denote the binary field by $\FF\coloneqq\FF_2$, throughout, to avoid confusing the standard $n$-dimensional binary vector space, $\FF^n$, with the space of bit strings, $\ZZ_2^m$, which we use to index elements of the hypercube. As is often the case with RM codes, $n=2^m$ below.

Recall that a binary linear code of length $n$, dimension $k$, and distance $d$ is defined as a $k$-dimensional subspace, $V\subseteq \FF^n$, whose shortest non-zero vector has Hamming weight $\abs{w}=d$. Consider now the $n\coloneqq 2^m$-dimensional vector space $\br{f\colon \ZZ_2^m\rightarrow\FF_2}$ defined as binary functions on the hypercube, that is, the space of length-$n$ vectors whose coordinates are indexed by the elements of $\ZZ_2^m$. Given a subcube, $A\subcubeeq\ZZ_2^m$, the indicator function of $A$ is defined via
\begin{equation}
    \indicator{A}(x) \coloneqq\begin{cases}
        1, &\text{if } x\in A\\
        0, &\text{otherwise}.
    \end{cases}
\end{equation}

\begin{definition}[$RM(r,m)$]\label{def:classical-RM}
    For $r\in\br{-1,0,\dots, m}$ let $B_r$ denote the set of standard $\ell$-cubes where $m-r\leq \ell\leq m$, i.e., the subsets of $\ZZ_2^m$ given by $B_{r}\coloneqq\br{\standard{J}\mid J\subseteq S, \abs{J}\geq m-r}$. By convention, $B_{-1}\coloneqq\emptyset$. The Reed--Muller code of \emph{order} $r$, denoted $RM(r,m)$, is defined as the linear code generated by the indicator functions of $B_r$:
    \begin{equation*}
        RM(r,m)\coloneqq \br{\sum_{A\in B_r} c_A \indicator{A}\biggmid c_A\in\FF}.
    \end{equation*}
The code $RM(r,m)$ has length $n=2^m,$ dimension $\sum_{i=0}^r\binom mi$, and minimum distance $2^{m-r}$. 
\end{definition}

\begin{remark} \hspace{0em}

\begin{enumerate}[label=(\alph*)]
    \item Recall that the standard definition of RM codes relies on polynomial evaluation: the code $RM(r,m)$ 
is the set of evaluations of all $m$-variate polynomials $f(x_1,\dots,x_m)$ of degree at most $r$ on all possible bit strings $\bits{m}$. These two definitions
are equivalent. Consider the following mapping: send the standard subcube, $\standard{J}$, to the monomial $\standard{J}\mapsto x_{S\setminus J}\coloneqq\prod_{i\notin J}x_i$. Then (up to string reversal) $\indicator{\standard{J}}=\evaluate(x_{S\setminus J})$ and the two definitions of $RM(r,m)$ yield the same code by linearity. 
    \item The inclusion of $r=-1$ in \cref{def:classical-RM} deviates from the standard presentation of RM codes, which assumes that
$0\le r\le m$. This extension is convenient for our arguments below.
\end{enumerate}
\end{remark}

While the generating set $\mcI(B_r)\coloneqq\br{\indicator{A}\mid A\in B_r}$ is formed on linearly independent vectors and thus constitutes a genuine basis of the code, for stabilizer codes we will prefer a more operationally-useful (albeit redundant) generating set. For $r\in\br{-1,0,\dots, m}$ let $M_r$ denote the set of all $(m-r)$-cubes, 
\begin{align}
    M_r\coloneqq \br{x+\standard{J}\mid x\in\ZZ_2^m, J\subseteq S, \abs{J} = m-r}.
\end{align}
That is, $M_r$ contains all subcubes of dimension \emph{exactly} equal to $m-r$, whereas $B_r$ contains all \emph{standard} subcubes with dimension \emph{greater than or equal to} $m-r$. In particular, every subcube in $M_r$ contains precisely $2^{m-r}$ elements of $\ZZ_2^m$, whereas a subcube in $B_r$ contains $2^i$ elements for some $i\in \br{m-r,\dots, m}$. We will show later that $\mcI(M_r)$ is a valid generating set for $RM(r,m)$. First, we begin with some useful results on subcubes.

\begin{lemma}\label{prop:counts} We have  
   $
    |M_r|=2^{r}\binom mr.
    $
\end{lemma}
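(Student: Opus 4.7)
The plan is to count subcubes of dimension exactly $m-r$ by first choosing the type $J$ and then counting distinct cosets of $\langle J\rangle$.

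First I would recall from the earlier discussion that a subcube $A\subcubeeq\ZZ_2^m$ is determined by two pieces of data: its type $J\subseteq S$, which is an invariant of $A$, and a choice of coset representative modulo $\standard{J}$. Since $\dim A=\abs{J}$, every $(m-r)$-cube has type $J$ with $\abs{J}=m-r$. There are $\binom{m}{m-r}=\binom{m}{r}$ such subsets of $S$.

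Next I would fix such a $J$ and count the number of distinct subcubes of type $J$, i.e., the number of distinct cosets $x+\standard{J}$ with $x\in\ZZ_2^m$. By Lagrange's theorem applied to the subgroup $\standard{J}\leq\ZZ_2^m$, this count is
\begin{equation*}
    \frac{\abs{\ZZ_2^m}}{\abs{\standard{J}}}=\frac{2^m}{2^{m-r}}=2^r.
\end{equation*}
The key point, already noted in the excerpt, is that two cosets $x+\standard{J}$ and $y+\standard{J}$ coincide as subcubes exactly when $x-y\in\standard{J}$, so cosets are in bijection with elements of the quotient group $\ZZ_2^m/\standard{J}$, of which there are $2^r$.

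Finally, I would combine these two counts. Subcubes of different types are distinct (since the type is an invariant), so the total count is
\begin{equation*}
    \abs{M_r}=\sum_{\substack{J\subseteq S\\ \abs{J}=m-r}} 2^r = \binom{m}{r}\cdot 2^r,
\end{equation*}
which gives the claim. There is no real obstacle here; the only subtlety is making sure the distinctness of cosets within a fixed type, and the distinctness of subcubes across different types, are each justified by the invariance of the type established in the definition of a subcube.
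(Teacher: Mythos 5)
Your proof is correct and follows essentially the same route as the paper: the paper also counts $(m-r)$-cubes via the bijection with cosets of standard subcubes, getting $\binom{m}{r}$ choices of type and $2^{m}/2^{m-r}=2^{r}$ cosets per type. Your write-up just makes the Lagrange-theorem step and the type-invariance argument explicit.
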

\begin{proof} The count of $(m-r)$-cubes follows since they are in 1--1 correspondence with the cosets of the standard subcubes (see the proof of \cref{lemma: all subcubes} below).
\end{proof}

A standard cube $\standard J$ has a unique element of minimum Hamming weight, namely the origin.
This property obviously extends to all of the cosets of $\standard J$, as shown in the next lemma.
\begin{lemma}\label{lemma: all subcubes}  Let $A\subcubeeq\ZZ_2^m$ be an arbitrary subcube of type $J\subseteq S$. It contains
a unique element, $x$, such that $|x|<|y|$ for all $y\in A\setminus\br{x}$.
\end{lemma}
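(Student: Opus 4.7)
The plan is to exploit the decomposition of a subcube as a coset of a standard subcube, together with the paper's earlier observation that the bits outside the type $J$ are invariant on $A$. Writing $A = z + \standard{J}$ for some $z \in \ZZ_2^m$, I would split the Hamming weight of any element $y \in A$ into two contributions: a constant part coming from the coordinates in $S \setminus J$ and a variable part coming from the coordinates in $J$. Concretely, for every $y \in A$,
\begin{equation*}
    |y| = \sum_{i \in S \setminus J} y_i + \sum_{i \in J} y_i = \sum_{i \in S \setminus J} z_i + \sum_{i \in J} y_i,
\end{equation*}
where the first sum is the same for all $y \in A$ because every element of the coset $z + \standard{J}$ agrees with $z$ outside of $J$.

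Next I would observe that the second sum, $\sum_{i \in J} y_i$, is minimized if and only if $y_i = 0$ for every $i \in J$, and that this condition singles out a unique element of $A$: namely the element $x$ defined by $x_i = z_i$ for $i \in S \setminus J$ and $x_i = 0$ for $i \in J$. For any other $y \in A \setminus \{x\}$ there is at least one $i \in J$ with $y_i = 1$, which gives $\sum_{i \in J} y_i \geq 1$ and hence $|y| \geq |x| + 1 > |x|$. This immediately yields the claimed unique minimizer.

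I do not anticipate any real obstacle: the statement is essentially a restatement of the coset structure of standard subcubes, already foreshadowed in the discussion preceding the lemma. The only care needed is to formally invoke the invariance of the $S \setminus J$ coordinates on $A$, which justifies peeling off that contribution as a constant and reduces the claim to the trivial fact that the zero string is the unique minimum-weight element of $\standard{J}$ itself.
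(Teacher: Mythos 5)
Your proof is correct and follows essentially the same route as the paper's: both identify the canonical coset representative with support disjoint from $J$ (obtained by zeroing the $J$-coordinates of any element) as the unique minimum-weight element. Your weight decomposition $|y| = \sum_{i\in S\setminus J} z_i + \sum_{i\in J} y_i$ just makes explicit the final step that the paper dispatches with ``clearly.''
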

\begin{proof}
Consider the subcubes $A$ of type $J$. We claim that each of them can be constructed as $x+\standard J$, where $x$ depends on $A$ and satisfies $\supp(x)\cap J=\emptyset$. Indeed, there are $2^{m-|J|}$ such vectors $x$, and for $x\ne x'$, the subcubes (cosets) $x+\standard J$ and $x'+\standard J$ are disjoint. Alternatively, any $y$ that has a nonempty overlap with $J$ is contained in the cube $x+\standard J$, where $x$ is obtained from $y$ by replacing the ones located in $\supp(y)\cap J$ with zeros. Moreover, each $x$ as defined above clearly is a unique element of minimum Hamming weight in its coset.
\end{proof}


We will denote the support of the minimal-weight element of $A$ by $I_A\coloneqq\supp(x)$.
In particular, we note that for all $y\in A$, $y_i=1$ for all $i\in I_A$ and $y_i=0$ for all $i\notin I_A\cup J$.
$y\in A$. We also note that $I_A\cap J=\emptyset$.

Using the set $I_A$, we can give a decomposition for the indicator function of an arbitrary subcube in terms of the indicator functions for standard subcubes.

\begin{restatable}{lemma}{indicatordecomposition}\label{lem: decomposition of subcube indicators}
 Let $A\subcubeeq\ZZ_2^m$ be a subcube of type $J$. The indicator $\indicator{A}$ can be decomposed into indicator functions of standard subcubes as 
    \begin{equation}\label{eq: subcube decomposition}
        \indicator{A} = \sum_{I\subseteq I_A} \indicator{\standard{I\cup J}}.
    \end{equation}
\end{restatable}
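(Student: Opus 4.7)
My plan is to verify the identity pointwise on $\ZZ_2^m$, treating both sides as $\FF_2$-valued functions. Recall from the preceding lemma that $A = x + \standard{J}$ where the unique minimum-weight element $x$ has $\supp(x) = I_A$ with $I_A \cap J = \emptyset$. Thus a bit string $y$ lies in $A$ iff $y_i = 1$ for all $i \in I_A$, $y_i = 0$ for all $i \notin I_A \cup J$, and $y_j \in \{0,1\}$ is arbitrary for $j \in J$. On the other hand, $y \in \standard{I \cup J}$ iff $\supp(y) \subseteq I \cup J$. My first step will be to use these two characterizations to count, for each $y$, how many $I \subseteq I_A$ contribute $1$ to the right-hand sum, then check this count equals $\indicator{A}(y) \pmod 2$.

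The case $y \in A$ is immediate: since $I_A \subseteq \supp(y) \subseteq I_A \cup J$ and $I_A \cap J = \emptyset$, the containment $\supp(y) \subseteq I \cup J$ forces $I_A \subseteq I$, so only the single term $I = I_A$ contributes, giving $\sum_I \indicator{\standard{I \cup J}}(y) = 1$. For $y \notin A$, I will split into two subcases. If $\supp(y)$ contains some index outside $I_A \cup J$, then no $I \subseteq I_A$ satisfies $\supp(y) \subseteq I \cup J$ and the right-hand side vanishes. Otherwise, $\supp(y) \subsetneq I_A \cup J$ with $L := I_A \setminus \supp(y) \neq \emptyset$; the condition $\supp(y) \subseteq I \cup J$ becomes $I_A \setminus L \subseteq I \subseteq I_A$ (again using $I_A \cap J = \emptyset$), yielding exactly $2^{|L|}$ valid $I$'s, which is even, so the $\FF_2$-sum is $0$.

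A slicker alternative, should the case analysis above feel unwieldy, is to recognize both sides as polynomials in $y_1, \dots, y_m$ over $\FF_2$. One has $\indicator{A}(y) = \prod_{i \in I_A} y_i \cdot \prod_{i \notin I_A \cup J}(1 - y_i)$ and $\indicator{\standard{I \cup J}}(y) = \prod_{i \notin I \cup J}(1 - y_i)$. Factoring out $\prod_{i \notin I_A \cup J}(1 - y_i)$ from every term on the right and reindexing $I' = I_A \setminus I$ collapses the remaining sum to $\prod_{i \in I_A}(1 + (1 - y_i))$, which over $\FF_2$ equals $\prod_{i \in I_A} y_i$, matching the left-hand side. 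The only real content is the combinatorial identity $\sum_{I' \subseteq I_A} \prod_{i \in I'}(1 - y_i) = \prod_{i \in I_A}(2 - y_i)$, which is just the binomial expansion; the disjointness $I_A \cap J = \emptyset$ is the key structural input that makes the factorization work cleanly. I expect no serious obstacle beyond careful bookkeeping of the support relations.
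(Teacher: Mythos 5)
Your proof is correct. Your first argument is the same pointwise parity count as the paper's primary proof: the paper introduces $\mcI_w\coloneqq\br{I\subseteq I_A\mid w\in\standard{I\cup J}}$, shows $\abs{\mcI_w}=1$ exactly when $w\in A$, and shows $\abs{\mcI_w}$ is otherwise empty or a power of two via a minimal-subcube argument. You reach the same conclusion more directly by exhibiting the contributing sets as the interval $\br{I\mid I_A\setminus L\subseteq I\subseteq I_A}$ of size $2^{\abs{L}}$ with $L=I_A\setminus\supp(y)$, which is a slightly cleaner way to see the power-of-two count; the paper's second (alternate) proof instead inducts on $\abs{I_A}$. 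Your polynomial argument is genuinely different from both: writing $\indicator{A}$ and $\indicator{\standard{I\cup J}}$ as monomial products over $\FF$ and collapsing the sum via $\prod_{i\in I_A}\bigl(1+(1-y_i)\bigr)=\prod_{i\in I_A}y_i$ is in effect a translation of the lemma into the classical polynomial-evaluation picture of $RM(r,m)$ (which the paper mentions only in a remark), and it generalizes less readily to the $\ZZ$-valued signed/unsigned indicator decompositions that the paper later needs, where the factor $2-y_i$ no longer vanishes and the powers of two must be tracked explicitly --- which is exactly what \cref{lem: bar indicator expression} does.
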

\begin{proof}
    Two proofs are given in \cref{app: indicator proofs}: \hyperlink{pf: indicator decomposition}{Proof 1}, \hyperlink{pf: indicator decomposition alternate}{Proof 2}.
\end{proof}

We now show that the indicator functions corresponding to $(m-r)$-cubes do, in fact, generate $RM(r,m)$:

\begin{lemma}\label{lem: minimal subcubes generate code}
    The indicator functions of $M_r$, $\mcI(M_r)$, form a redundant generating set for $RM(r,m)$, i.e., $RM(r,m)=\langle \mcI(M_r)\rangle$. Further, each function in $\mcI (M_r)$ is a minimum-weight codeword of $RM(r,m)$.
\end{lemma}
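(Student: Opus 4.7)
My plan is to prove the equality $\langle\mcI(M_r)\rangle = RM(r,m)$ via two inclusions, and then handle the minimum-weight claim as an immediate corollary of the cardinality of an $(m-r)$-cube.

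For the forward inclusion $\langle\mcI(M_r)\rangle\subseteq RM(r,m)$, I would apply \cref{lem: decomposition of subcube indicators} directly: each $A\in M_r$ has type $J$ with $|J|=m-r$, and the decomposition expresses $\indicator A$ as a sum of indicators $\indicator{\standard{I\cup J}}$ for $I\subseteq I_A$. Since $I\cap J=\emptyset$ (recall that $I_A\subseteq S\setminus J$), each summand corresponds to a standard subcube of dimension $|I|+|J|\geq m-r$, so it lies in $\mcI(B_r)\subseteq RM(r,m)$.

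For the reverse inclusion $RM(r,m)\subseteq\langle\mcI(M_r)\rangle$, it suffices to show every generator $\indicator{\standard K}\in\mcI(B_r)$ is a sum of $(m-r)$-cube indicators. If $|K|=m-r$ then $\standard K\in M_r$ and there is nothing to do. Otherwise $|K|>m-r$, and I would pick any $J\subseteq K$ with $|J|=m-r$; then $K\setminus J$ is a nonempty subset of $S$ disjoint from $J$, and $\standard K$ is the internal direct sum $\standard J\oplus\standard{K\setminus J}$. This yields the coset partition
\[
\standard K=\bigsqcup_{y\in\standard{K\setminus J}}(y+\standard J),
\]
so that $\indicator{\standard K}=\sum_{y\in\standard{K\setminus J}}\indicator{y+\standard J}$, and each summand lies in $\mcI(M_r)$ since $y+\standard J$ is an $(m-r)$-cube.

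Finally, each $A\in M_r$ has exactly $2^{m-r}$ vertices, so $\indicator A$ has Hamming weight $2^{m-r}$, matching the minimum distance of $RM(r,m)$ recorded in \cref{def:classical-RM}; hence every element of $\mcI(M_r)$ is a minimum-weight codeword. I do not foresee any real obstacle in executing this plan: the forward inclusion is essentially a citation of the decomposition lemma, and the reverse inclusion reduces to the standard fact that $\standard K$ decomposes as $\standard J\oplus\standard{K\setminus J}$ whenever $J$ and $K\setminus J$ partition $K$.
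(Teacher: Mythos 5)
Your proof is correct and follows essentially the same route as the paper: the forward inclusion is a direct citation of \cref{lem: decomposition of subcube indicators}, the reverse inclusion uses the coset partition of $\standard{K}$ by a standard $(m-r)$-dimensional subcube $\standard{J}\subseteq\standard{K}$, and the weight claim follows from $\abs{A}=2^{m-r}$ matching the minimum distance. The only omission is the one-line justification that the generating set is \emph{redundant} (i.e., strictly larger than a basis), which the paper obtains from the count $\abs{M_r}=2^r\binom{m}{r}$ in \cref{prop:counts}.
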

\begin{proof}
    ($\subseteq$) Suppose $\standard{J}$ is a standard subcube with $\abs{J}\geq m-r$, so that $\indicator{\standard{J}}$ is a generator for $RM(r,m)$. Let $J'\subseteq J$ be any subset of $J$ of size $\abs{J'}=m-r$. Then each  coset of $\standard{J'}$ in $\standard{J}$, $A\in\standard{J}/\standard{J'}$, is an $(m-r)$-cube in $\ZZ_2^m$, and the collection of these subcubes $\standard{J}/\standard{J'}$ forms a disjoint cover of $\standard{J}$. Thus, $\indicator{\standard{J}}=\sum_{A\in\standard{J}/\standard{J'}}\indicator{A}$. As each generator of $RM(r,m)$ can be decomposed into indicators of $(m-r)$-cubes, the desired result holds. 

    ($\supseteq$) Suppose $A$ is an arbitrary subcube with $\dim A=m-r$. By \cref{lem: decomposition of subcube indicators} we have
    \begin{equation}
        \indicator{A}=\sum_{I\subseteq I_A}\indicator{\standard{I\cup J}}.
    \end{equation}
    As each $I\cup J$ appearing in the summation has $\abs{I\cup J}\geq \abs{J}=m-r$, $\indicator{A}\in RM(r,m)$, by definition.
    
    The claim that $\mcI(M_r)$ is a redundant generating set follows by \cref{prop:counts}. Because the distance of $RM(r,m)$ is $2^{m-r}$, the statement that elements of $\indicator{M_r}$ have minimal-weight in $RM(r,m)$ is true by definition of $M_r$.
\end{proof}
\newcommand{\dstirling}[2]{\genfrac{[}{]}{0pt}{1}{#1}{#2}}
We note that Theorem 13.12 in \cite{MS77} says that the set of \emph{all} minimum-weight codewords of $RM(r,m)$--- which correspond to incidence vectors of the set $F_r$ of $(m-r)$-dimensional flats in $AG(m,2)$--- generates the code $RM(r,m)$. The content of \cref{lem: minimal subcubes generate code} is therefore that a smaller class of minimum-weight codewords--- those corresponding to $(m-r)$-cubes--- suffices to generate $RM(r,m)$. We mention in passing that a different redundant set of minimum-weight codewords--- those corresponding to $(m-r)$-dimensional subspaces--- also suffices to generate $RM(r,m)$ \cite{assmus1996berman,AK98}.

Finally, we recall some well-known facts about Reed--Muller codes, which make them particular useful from the perspective of stabilizer codes. Denote by $RM(r,m)^\bot$ the dual code 
of $RM(r,m)$, i.e., the set $\{y\in\ZZ_2^n\mid (y,x)=0\; \forall x\in RM(r,m)\}$, where $(\cdot,\cdot)$ is the dot product over $\ZZ_2^n$. In terms of the hypercube, $RM(r,m)^\bot$ is the set of subsets $D\subset \ZZ_2^m$ 
such that $\abs{D\cap B}\equiv 0\,(\text{mod\,}2)$ for all subsets $B$ that support codewords of $RM(r,m)$.
Using the language of the generating sets, we state and prove a standard description of $RM(r,m)^\bot$ \cite[Thm. 13.4]{MS77}.
\begin{lemma}\label{fact:dual-of-RM} The dual of $RM(r,m)$ is given by $RM(m-r-1,m)=RM(r,m)^\perp$.
\end{lemma}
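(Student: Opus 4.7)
The plan is to prove the two inclusions by first verifying orthogonality on generators and then closing the gap by a dimension count.

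First I would establish the containment $RM(m-r-1,m) \subseteq RM(r,m)^\perp$. By \cref{def:classical-RM}, $RM(r,m)$ is generated by $\br{\indicator{\standard{J}} \mid J\subseteq S,\; \abs{J}\geq m-r}$ and $RM(m-r-1,m)$ is generated by $\br{\indicator{\standard{K}} \mid K\subseteq S,\; \abs{K}\geq m-(m-r-1)=r+1}$. By bilinearity of the inner product, it suffices to check orthogonality on these two generating sets. For any such $J$ and $K$, the geometric description of subcube intersections from \cref{sec: geometry} gives $\standard{J}\cap\standard{K}=\standard{J\cap K}$, which contains $2^{\abs{J\cap K}}$ points. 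Since $\abs{J}+\abs{K}\geq (m-r)+(r+1)=m+1$ and both $J,K\subseteq S$ with $\abs{S}=m$, we have $\abs{J\cap K}\geq 1$. Therefore $\langle \indicator{\standard{J}},\indicator{\standard{K}}\rangle = \abs{\standard{J\cap K}} = 2^{\abs{J\cap K}}\equiv 0\pmod 2$, establishing the desired containment.

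Next I would finish via a dimension count. From \cref{def:classical-RM},
\begin{equation*}
    \dim RM(r,m) + \dim RM(m-r-1,m) = \sum_{i=0}^{r}\binom{m}{i} + \sum_{i=0}^{m-r-1}\binom{m}{i} = \sum_{i=0}^{m}\binom{m}{i} = 2^m = n,
\end{equation*}
where the middle equality uses the symmetry $\binom{m}{i}=\binom{m}{m-i}$ to rewrite the second sum as $\sum_{i=r+1}^{m}\binom{m}{i}$. Since $\dim RM(r,m)^\perp = n - \dim RM(r,m)$, we conclude $\dim RM(m-r-1,m) = \dim RM(r,m)^\perp$, and combined with the inclusion from the previous step this yields $RM(m-r-1,m)=RM(r,m)^\perp$.

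The proof is essentially immediate once the geometric intersection formula and the dimension formula of \cref{def:classical-RM} are in hand. The only mild subtlety is justifying orthogonality on the full (redundant) generating set rather than on a basis, but since orthogonality is preserved under linear combinations this causes no issue. If one preferred to avoid relying on the stated dimension in \cref{def:classical-RM}, the main obstacle would be giving an independent proof that the indicators of standard subcubes $\standard{J}$ with $\abs{J}\geq m-r$ are linearly independent; this can be shown by induction on $m$ using the recursive $(u,u+v)$ structure of RM codes, but that detour is unnecessary here since the dimension is already part of \cref{def:classical-RM}.
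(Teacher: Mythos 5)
Your proposal is correct and follows essentially the same strategy as the paper: verify orthogonality of generators via the parity of subcube intersections (using $\abs{J}+\abs{K}\geq m+1$ to force $\abs{J\cap K}\geq 1$), then conclude equality by a dimension count. The only cosmetic difference is that you check orthogonality on the standard-subcube generating set from \cref{def:classical-RM} (which sidesteps the empty-intersection case, since standard subcubes all contain $0^m$), whereas the paper checks it on arbitrary $(m-r)$- and $(r+1)$-cubes.
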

\begin{proof}
    As is typical when proving the duality relation for RM codes, we begin by showing $RM(r,m)\subseteq RM(m-r-1,m)^\perp$.
    
    Let $A=x+\standard{J}$ be a $(m-r)$-cube and $B=y+\standard{K}$ an $(r+1)$-cube so that $\indicator{A}$ and $\indicator{B}$ are generators of $RM(r,m)$ and $RM(m-r-1,m)$, respectively.
    We wish to prove that $\abs{\indicator{A}\cdot\indicator{B}}\equiv 0\pmod{2}$, which happens precisely when $\abs{A\cap B}\equiv0\pmod{2}$.
    If $A\cap B$ is empty then clearly this is true. On the other hand, if $A\cap B$ is non-empty then their intersection is a subcube $A\cap B=z+\standard{J\cap K}$ for some $z\in\ZZ_2^m$. $\abs{A\cap B}=2^{\abs{J\cap K}}$, which is even as long as $J\cap K\neq\emptyset$. As $\abs{J}+\abs{K}>(m-r)-r>m$, but $J$ and $K$ are both subsets of $S$, which has $m$ elements, clearly $J\cap K\neq\emptyset$.

    The prove equality it suffices to show that the dimensions of $RM(r,m)^\perp$ and $RM(m-r-1,m)$. This is straightforward to show by noting that $\dim RM(r,m)^\perp = 2^m-\dim RM(r,m)$.
\end{proof}

The following is also obvious from the definition of $RM(r,m)$ using standard subcubes. 
\begin{fact}\label{fact:containment-in-RM}
    For integers $q\leq r$, $RM(q,m)\subseteq RM(r,m)$.
\end{fact}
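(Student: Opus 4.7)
The plan is to reduce the claim directly to the generating-set characterization of $RM(r,m)$ given in \cref{def:classical-RM}. Recall that $RM(r,m)$ is spanned by the indicator functions $\indicator{\standard{J}}$ for $J\subseteq S$ with $|J|\geq m-r$, i.e., by the indicators of all standard subcubes of dimension at least $m-r$. So the natural approach is to show that every generator of $RM(q,m)$ is itself a generator of $RM(r,m)$ whenever $q\leq r$.

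First I would observe that if $q\leq r$, then $m-q\geq m-r$. Consequently, for any $J\subseteq S$ with $|J|\geq m-q$, we also have $|J|\geq m-r$. This means $\indicator{\standard{J}}\in \mcI(B_q)$ implies $\indicator{\standard{J}}\in \mcI(B_r)$, so $\mcI(B_q)\subseteq \mcI(B_r)$. Taking linear spans then yields
\[
RM(q,m)=\langle\mcI(B_q)\rangle\subseteq \langle\mcI(B_r)\rangle=RM(r,m),
\]
which is the desired containment.

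There is essentially no obstacle here: the statement is an immediate consequence of the monotonicity of the dimension threshold $|J|\geq m-r$ as a function of $r$ in the paper's geometric definition. Alternatively, one could invoke the standard polynomial-evaluation description of Reed--Muller codes, under which $RM(r,m)$ consists of evaluations of $m$-variate polynomials of degree at most $r$; increasing the degree bound only enlarges the code. Either route gives a one-line argument, so the ``proof'' is really just an unpacking of the definition.
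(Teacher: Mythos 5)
Your argument is correct and is exactly the reasoning the paper has in mind: the paper states this fact with the remark that it is ``obvious from the definition of $RM(r,m)$ using standard subcubes,'' and your proof simply unpacks that observation, namely that $q\leq r$ gives $m-q\geq m-r$, so the generating set $\mcI(B_q)$ is contained in $\mcI(B_r)$ and the span containment follows. Nothing further is needed.
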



In order to understand the logic of transversal operators coming from higher levels of the Clifford Hierarchy, we will need to consider indicator functions whose outputs are integers, rather than bits.

For fixed $m\in\NN$, consider the additive group of integer-valued functions on the Boolean hypercube, $\ZZ[\ZZ_2^m]\coloneqq\br{f\colon \ZZ_2^m\rightarrow \ZZ}$ (which is also closed under scalar multiplication by $\ZZ$). Let $A\subcubeeq \ZZ_2^m$ be a subcube.
With a view toward describing logical operators for quantum RM codes, define the \emph{unsigned indicator function} on $A$, $\barindicator{A}\in\ZZ[\ZZ_2^m]$, as $\barindicator{A}(x)\coloneqq 1$ if $x\in A$ and 0 otherwise. Note that we are using the same notation, $\indicator{A}$, to denote both the $\FF$-valued and $\ZZ$-valued indicator functions on $A$. The difference, though minor, is that $2\cdot\indicator{A}(x)=0$ for all $x\in\ZZ_2^m$ in the $\FF$-valued case, whereas this does not hold in the $\ZZ$-valued case. Nonetheless, the codomain of the indicator function should be clear from context. In particular, we will never multiply an $\FF$-valued function by a scalar.

Further, define the \emph{signed indicator function} on $A$, $\tildeindicator{A}\in\ZZ[\ZZ_2^m]$, as follows:
\begin{equation}
    \tildeindicator{A}(x)\coloneqq\begin{cases}
        (-1)^\abs{x}, & x\in A\\
        0, &\text{otherwise}.
    \end{cases}
\end{equation}
That is, $\tildeindicator{A}$ is the indicator function on $A$ except that odd-weight elements of $A$ have a minus sign.

The following is a generalization of \cref{lem: decomposition of subcube indicators} to the case of $\ZZ$-value indicator functions:
\hypertarget{lem: bar indicator expression return}{}
\begin{restatable}{lemma}{barindicatordecomposition}\label{lem: bar indicator expression}
    Let $A\subcubeeq\ZZ_2^m$ be a subcube of type $J$. The unsigned indicator function can be decomposed as
    \begin{equation}\label{eq: bar indicator equality}
        \barindicator{A} = \sum_{I\subseteq I_A} \barindicator{\standard{I\cup J}} - \sum_{{ i=1}}^\abs{I_A}\sum_{I\subseteq I_A\colon \abs{I}=i} 2^i\cdot\barindicator{e_{I_A\setminus I}+\standard{I\cup J}}.
    \end{equation}
\end{restatable}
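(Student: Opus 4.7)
The plan is to reduce to the $\FF_2$-valued decomposition of \cref{lem: decomposition of subcube indicators} and then identify the integer overcount as an explicit $\ZZ$-linear combination of shifted-subcube indicators. Throughout, for $x \in \ZZ_2^m$, let $T \coloneqq \supp(x)$ and $T_a \coloneqq T \cap I_A$, and recall that $I_A \cap J = \emptyset$.

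First I would evaluate the first sum on the RHS pointwise. Since $I \subseteq I_A$ and $I \cap J = \emptyset$, the condition $x \in \standard{I \cup J}$ is equivalent to $T \subseteq I_A \cup J$ together with $T_a \subseteq I$. Counting such $I$ gives
$$\sum_{I \subseteq I_A} \barindicator{\standard{I \cup J}}(x) = \begin{cases} 2^{\abs{I_A} - \abs{T_a}}, & T \subseteq I_A \cup J, \\ 0, & \text{otherwise}. \end{cases}$$
Reducing modulo 2 recovers $\indicator{A}(x) \in \FF_2$ in accordance with \cref{lem: decomposition of subcube indicators}, so the difference $\sum_{I \subseteq I_A} \barindicator{\standard{I \cup J}} - \barindicator{A}$ is an even-valued function on $\ZZ_2^m$, and the task reduces to exhibiting it as the claimed double sum.

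Next I would evaluate each correction term pointwise. A direct analysis of $\supp(x + e_{I_A \setminus I})$, namely the symmetric difference of $T$ and $I_A \setminus I$, shows that $x \in e_{I_A \setminus I} + \standard{I \cup J}$ iff $T \subseteq I_A \cup J$ and $I_A \setminus T \subseteq I$. Summing $2^i$ over those $I \subseteq I_A$ with $\abs{I} = i \geq 1$ and $I \supseteq I_A \setminus T_a$ then collapses to a partial binomial sum in $\abs{T_a}$, which I would simplify using a standard binomial-theorem identity. Finally I would combine the two evaluations and check agreement with $\barindicator{A}(x)$ in the three cases: (i) $T \not\subseteq I_A \cup J$ (both sides zero, trivial); (ii) $T \subseteq I_A \cup J$ and $T_a = I_A$, so $\barindicator{A}(x) = 1$ and the correction must preserve the first sum's value of $1$; (iii) $T \subseteq I_A \cup J$ and $T_a \subsetneq I_A$, so $\barindicator{A}(x) = 0$ and the correction must cancel the first sum's value $2^{\abs{I_A} - \abs{T_a}}$ on the nose.

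I expect the subtle step to be the combinatorial identity underlying case (iii), where the binomial sum with the $2^i$ weighting must exactly match the overcount from the first sum. If this calculation proves unwieldy, my fallback is induction on $\abs{I_A}$: the base case $\abs{I_A} = 0$ is immediate since then $A = \standard{J}$ and the second sum is empty. The inductive step would peel off a single generator $i \in I_A$ via the partition $\standard{K \cup J} = \standard{(K \setminus \{i\}) \cup J} \sqcup (e_i + \standard{(K \setminus \{i\}) \cup J})$ and regroup terms so as to invoke the statement for $I_A \setminus \{i\}$. Either route should yield the claim; the direct evaluation has the advantage of making the overcount bookkeeping explicit, and thus of transparently exhibiting the coefficients $2^i$ appearing in the correction.
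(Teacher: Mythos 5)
Your setup is correct and genuinely different from the paper's argument (the paper inducts on $\abs{I_A}$ using the doubling identity $\barindicator{\standard{J}}+\barindicator{\standard{J\cup\br{1}}}=\barindicator{e_1+\standard{J}}+2\,\barindicator{\standard{J}}$, whereas you evaluate both sides pointwise): both membership criteria you state are right, namely $x\in\standard{I\cup J}$ iff $T\subseteq I_A\cup J$ and $T_a\subseteq I$, and $x\in e_{I_A\setminus I}+\standard{I\cup J}$ iff $T\subseteq I_A\cup J$ and $I_A\setminus T_a\subseteq I$. The gap is exactly the ``standard binomial-theorem identity'' you defer. Writing $d\coloneqq\abs{I_A\setminus T_a}$, the number of $I\subseteq I_A$ with $\abs{I}=i$ and $I\supseteq I_A\setminus T_a$ is $\binom{\abs{T_a}}{i-d}$, so your correction sum is $\sum_i 2^i\binom{\abs{T_a}}{i-d}=2^d\sum_j 2^j\binom{\abs{T_a}}{j}=2^d\cdot 3^{\abs{T_a}}$ (less $1$ when $d=0$, from the excluded $i=0$ term). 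In your case (ii) this gives $3^{\abs{I_A}}-1$, which would have to equal $0$; in case (iii) it gives $2^d\cdot 3^{\abs{T_a}}$, which would have to equal $2^d$. Both fail whenever $T_a\neq\emptyset$: the factor $(2+1)^{\abs{T_a}}$ cannot be collapsed to $1$ by any unsigned binomial identity.

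What your computation actually uncovers, if pushed to completion, is that the coefficients must alternate: the identity that does hold pointwise is $\sum_{I\subseteq I_A}\barindicator{\standard{I\cup J}}=\sum_{i=0}^{\abs{I_A}}(-1)^{\abs{I_A}-i}\,2^i\sum_{I\subseteq I_A\colon\abs{I}=i}\barindicator{e_{I_A\setminus I}+\standard{I\cup J}}$, whose value at $x$ is $2^d(2-1)^{\abs{T_a}}=2^d$, matching the first sum. The statement as printed fails already at $\abs{I_A}=1$: for $m=1$, $J=\emptyset$, $A=1+\standard{\emptyset}=\br{1}$, the right-hand side of \cref{eq: bar indicator equality} is $\barindicator{\standard{\emptyset}}+\barindicator{\standard{\br{e_1}}}-2\,\barindicator{\standard{\br{e_1}}}$, which evaluates to $-1$ at $x=1$ while $\barindicator{A}(1)=1$. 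Your fallback induction would hit the same wall, and for the same reason the paper's own inductive step does: applying the doubling identity yields $2\,\barindicator{e_{I_A\setminus(I\cup\br{1})}+\standard{I\cup J}}$ rather than the term $2\,\barindicator{e_{I_A\setminus(I\cup\br{1})}+\standard{I\cup\br{1}\cup J}}$ that the regrouping requires, and the bookkeeping only closes once the alternating signs are inserted. So complete the pointwise evaluation honestly; it both locates the error and hands you the corrected coefficients, after which one should check that the extra signs are harmless in the downstream applications (where the decomposition is only used modulo $2^{k+1}$ and up to logically Hermitian operators).
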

\begin{proof}
    \hyperlink{pf: bar indicator expression}{See \cref{app: indicator proofs}.}
\end{proof}
Note that if we take the functions $\indicator{A}$ modulo 2, we recover the $\FF$-valued indicator functions used earlier when discussing classical RM codes.
If we take the right-hand side of \cref{eq: bar indicator equality} modulo 2, the large summation term vanishes as every function in the sum is scaled by $2^i$ where $i\geq 1$. Thus, \cref{lem: bar indicator expression} exactly reproduces \cref{lem: decomposition of subcube indicators} when the functions are taken modulo 2.

An analogous version of \cref{lem: bar indicator expression} is true for the signed indicator functions, as well:

\begin{restatable}{lemma}{tildeindicatordecomposition}\label{lem: tilde indicator expression}
    Let $A\subcubeeq\ZZ_2^m$ be a subcube of type $J$. The signed indicator function can be decomposed as
    \begin{equation}\label{eq: tilde indicator equality}
        \tildeindicator{A} = \sum_{I\subseteq I_A} \tildeindicator{\standard{I\cup J}} - \sum_{{ i=1}}^\abs{I_A}\sum_{I\subseteq I_A\colon \abs{I}=i} 2^i\cdot\tildeindicator{e_{I_A\setminus I}+\standard{I\cup J}}.
    \end{equation}
\end{restatable}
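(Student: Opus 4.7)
The cleanest approach is to reduce the signed identity to the unsigned one by applying a linear sign-twist operator. Define the $\ZZ$-linear map $T\colon \ZZ[\ZZ_2^m]\to\ZZ[\ZZ_2^m]$ by $(Tf)(x)\coloneqq (-1)^{\abs{x}} f(x)$. By construction, for any subset $B\subseteq \ZZ_2^m$ we have $T\barindicator{B}=\tildeindicator{B}$, since $(T\barindicator{B})(x)=(-1)^{\abs{x}}$ when $x\in B$ and $0$ otherwise, which matches the definition of $\tildeindicator{B}$. The plan is then simply to apply $T$ to both sides of the unsigned decomposition of \cref{lem: bar indicator expression}. Linearity of $T$ lets it pass through the (integer) scalars $2^i$ and through the finite sums, and termwise $T\barindicator{\standard{I\cup J}}=\tildeindicator{\standard{I\cup J}}$ and $T\barindicator{e_{I_A\setminus I}+\standard{I\cup J}}=\tildeindicator{e_{I_A\setminus I}+\standard{I\cup J}}$. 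Equation \eqref{eq: tilde indicator equality} then follows immediately.

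If one instead wishes to give a self-contained proof, the natural alternative is a pointwise verification mirroring the argument for \cref{lem: bar indicator expression}. Fix $y\in \ZZ_2^m$ and split into two cases according to whether $y\in A$ or $y\notin A$. Using $A=x_A+\standard J$, the index $I_A=\supp(x_A)$, and the fact that $I_A\cap J=\emptyset$, one rewrites $y$ as $y = z + w$ with $z\in\ZZ_2^m$ supported outside $J$ and $w\in\standard J$. The evaluation $\tildeindicator{e_{I_A\setminus I}+\standard{I\cup J}}(y)$ is $(-1)^{\abs y}$ exactly when $z_{S\setminus(I\cup J)}=(e_{I_A\setminus I})_{S\setminus(I\cup J)}$, which restricts the nonzero contributions on the right-hand side to a very specific collection of $(I,i)$ pairs. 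The combinatorial identity that needs to be verified is the same as the one in the unsigned case (since every term on both sides picks up the common factor $(-1)^{\abs y}$ whenever it is nonzero), so the argument reduces to the unsigned counting already carried out for \cref{lem: bar indicator expression}.

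There is essentially no obstacle: the signed indicator is just the unsigned indicator multiplied pointwise by a fixed character $(-1)^{\abs{\cdot}}$ of $\ZZ_2^m$, and both sides of \eqref{eq: tilde indicator equality} are supported on points $y$ where multiplication by $(-1)^{\abs y}$ is a uniform scalar (the support of any summand is a subcube and, more importantly, the combined support of all summands agrees with the unsigned case). Hence the proof is one line via the operator $T$, and at most one or two lines of bookkeeping if written out pointwise. I would present the operator-based proof as the main argument for brevity, since it makes transparent that \emph{any} identity among $\barindicator{\cdot}$ functions on $\ZZ_2^m$ lifts verbatim to an identity among $\tildeindicator{\cdot}$ functions.
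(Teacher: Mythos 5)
Your proof is correct, and your main argument takes a genuinely different route from the paper's. The paper simply asserts that the proof of \cref{lem: tilde indicator expression} is ``unchanged'' from the inductive proof of \cref{lem: bar indicator expression}, i.e., it implicitly re-runs the induction on $\abs{I_A}$ with signed indicators, relying on the fact that the key splitting identity $\barindicator{\standard{J}}+\barindicator{\standard{J\cup\br{1}}} = \barindicator{e_1+\standard{J}}+2\cdot\barindicator{\standard{J}}$ persists verbatim when signs are attached (since it rests on the disjoint decomposition $\standard{J\cup\{1\}}=\standard{J}\sqcup(e_1+\standard{J})$, and both sides carry the same factor $(-1)^{\abs{x}}$ at each point). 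Your operator $T$ with $(Tf)(x)=(-1)^{\abs{x}}f(x)$ instead treats the signed identity as a \emph{corollary} of the already-proven unsigned one: $T$ is $\ZZ$-linear and satisfies $T\barindicator{B}=\tildeindicator{B}$ for every $B\subseteq\ZZ_2^m$, so applying it termwise to \cref{eq: bar indicator equality} yields \cref{eq: tilde indicator equality} in one line. This is cleaner and more informative --- it makes precise \emph{why} the paper's ``proof is unchanged'' claim is valid, and it shows that every $\ZZ$-linear identity among unsigned subcube indicators lifts automatically to signed ones (which would also dispatch any future variants). The pointwise verification you sketch as a fallback is essentially the paper's route and is unnecessary given the operator argument.
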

\begin{proof}
    The proof is unchanged from the proof of \cref{lem: bar indicator expression} \hyperlink{pf: bar indicator expression}{here}.
\end{proof}

Lastly, the following result relates the unsigned indicator function on a particular standard subcube, $\standard{K}$, to \emph{signed} indicator functions on standard subcubes contained \emph{within} $\standard{K}$.

\begin{restatable}{lemma}{strongtransversalindicator}{\label{lem: strong transversal indicator}}
    For $m\in\NN$,
    \begin{equation}
        \barindicator{\ZZ_2^m} = \sum_{i=0}^m\sum_{J\subseteq S\colon \abs{J}=i} 2^{m-i}(-1)^i \cdot \tildeindicator{\standard{J}}.
    \end{equation}
    More generally, for $m\in\NN$ and $K\subseteq[S]$,
    \begin{equation}\label{eq: standard subcube bar indicator decomposition}
        \barindicator{\standard{K}} = \sum_{i=0}^{\abs{K}}\sum_{J\subseteq K\colon \abs{J}=i} 2^{\abs{K}-i}(-1)^i \cdot \tildeindicator{\standard{J}}.
    \end{equation}
\end{restatable}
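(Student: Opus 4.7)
The first identity in the statement is the special case $K=S$ of the second, so it suffices to prove the general formula. My plan is to verify the identity pointwise. Fix an arbitrary $x\in\ZZ_2^m$ and compare the values of both sides at $x$. By definition, $\tildeindicator{\standard{J}}(x)$ is nonzero precisely when $\supp(x)\subseteq J$, in which case it equals $(-1)^{|x|}$. Hence the right-hand side at $x$ only gets contributions from those $J\subseteq K$ with $\supp(x)\subseteq J$, which forces $\supp(x)\subseteq K$ for any nonzero contribution to exist.

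This immediately handles the case $x\notin\standard{K}$: then $\supp(x)\not\subseteq K$, so no $J\subseteq K$ satisfies $\supp(x)\subseteq J$, and the right-hand side vanishes, matching $\barindicator{\standard{K}}(x)=0$. The substantive case is $x\in\standard{K}$. Setting $L\coloneqq \supp(x)\subseteq K$ and $\ell\coloneqq |L|$, $k\coloneqq|K|$, the surviving terms on the right-hand side are those with $L\subseteq J\subseteq K$. Reindexing $J=L\cup J'$ with $J'\subseteq K\setminus L$, and using $|x|=\ell$, the right-hand side becomes
\begin{equation*}
(-1)^{\ell}\sum_{J'\subseteq K\setminus L} 2^{k-\ell-|J'|}(-1)^{\ell+|J'|}
= \sum_{j=0}^{k-\ell}\binom{k-\ell}{j} 2^{k-\ell-j}(-1)^{j}.
\end{equation*}
The binomial theorem collapses this to $(2-1)^{k-\ell}=1=\barindicator{\standard{K}}(x)$, completing the pointwise check.

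The argument is therefore entirely elementary: the structural content is just (i) identifying which signed indicators contribute at a given $x$, and (ii) applying the binomial identity $\sum_{j} \binom{n}{j}2^{n-j}(-1)^j = 1$. I do not anticipate any real obstacle; the only care needed is bookkeeping the signs, where the factor $(-1)^{|x|}=(-1)^\ell$ from the signed indicator exactly cancels the $(-1)^\ell$ produced by splitting $|J|=\ell+|J'|$, leaving a clean binomial sum independent of $\ell$. As a sanity check, one may also reduce the identity mod $2$: all terms with $i\geq 1$ vanish, and the formula degenerates to $\indicator{\standard{K}}=\indicator{\standard{K}}$ over $\FF$, consistent with \cref{lem: decomposition of subcube indicators}.
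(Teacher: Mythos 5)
Your proof is correct and follows essentially the same route as the paper's: a pointwise evaluation splitting into the cases $x\notin\standard{K}$ and $x\in\standard{K}$, counting the sets $J$ with $\supp(x)\subseteq J\subseteq K$ via binomial coefficients, and collapsing the resulting sum with the binomial theorem to $(2-1)^{|K|-|x|}=1$. The sign bookkeeping in your reindexing $J=L\cup J'$ checks out.
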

\begin{proof}
    \hyperlink{pf: strong transversal indicator}{See \cref{app: indicator proofs}.}
\end{proof}

The utility of these results will become clear in \cref{sec: signed logic} and \cref{sec: unsigned logic}, where will will use them to prove operator decomposition lemmas for transversal diagonal gates in the Clifford Hierarchy. We will restate them later as needed.

\subsection{Quantum Reed--Muller codes}\label{sec: QRM-codes}

\cref{fact:dual-of-RM} and \cref{fact:containment-in-RM} give an obvious way to define a quantum Reed--Muller code of the CSS type. To define it,
we need two classical codes, $C_1$ and $C_2$, such that $C_1^\perp\subseteq C_2$. In the case of Reed--Muller codes, we will choose the codes as follows:
   \begin{align*}
   \begin{array}{c@{\hspace*{.3in}}c}
     Z\emph{ logicals: }C_1=RM(m-q-1,m) & X\emph{ stabilizers: }C_1^\bot=RM(q,m)\\
           \cup      & \cap\\
     Z\emph{ stabilizers: }C_2^\bot=RM(m-r-1,m) & X\emph{ logicals: }C_2=RM(r,m),
    \end{array}
   \end{align*}
where for strict inclusions to hold we take $q<r\le m$. For the definitions below in this section we note 
the dimension of the subcubes that give rise to minimum-weight codewords of the codes:
    \begin{equation}\label{eq: min dist}
  \begin{array}{cccc}
     C_1^\bot&C_2&C_2^\bot&C_1\\
     m-q&m-r&r+1&q+1
  \end{array}.
 \end{equation}

\begin{definition}\label{def: QRM}
    For integers $0\leq q\leq r\leq m$, the quantum Reed--Muller code $QRM_m(q,r)$ of order $(q,r)$ is defined to be the code $\CSS(RM(m-q-1,m),RM(r,m))$. The code encodes $\kappa_m$ qubits into $n=2^m$ qubits, where 
    $\kappa_m=\dim(C_2)-\dim(C_1^\bot)=\sum_{i=q+1}^r\binom mi$.
    \end{definition}
Although algebraically complete, this definition can be intuitively difficult to work with. Instead, we will use the $m$-dimensional hypercube to give a geometric interpretation of $QRM_m(q,r)$. Consider the $n=2^m$ qubit space, where we index the qubits via elements of $\ZZ_2^m$. Given a subcube $A\subcubeeq\ZZ_2^m$, we define an $n$-qubit Pauli $X$-type operator $X_A$ that acts as $X$ on qubits in $A$ and as $\eye$ elsewhere,
\begin{equation*}
    (X_A)_x = \begin{cases}
        X, &\text{if } x\in A\\
        \eye, &\text{otherwise}
    \end{cases},
\end{equation*}
which in the notation of \cref{sec:Pauli group and quantum codes} says that $X_A= X(\indicator{A})$. For any single-qubit unitary, $U\in\unitary(2)$, we define the $n$-qubit operator $U_A$ in the analogous way. Now, the results of \cref{sec:classical-RM-codes} imply that the $X$ stabilizer group, given by $RM(q,m)$, can be generated by the $X_A$ operators acting on subcubes of dimension exactly $m-q$. Likewise, the $Z$ stabilizer group, given by $RM(m-r-1,m)$ can be generated by the $Z_A$ operators acting on subcubes of dimension exactly equal to $m-(m-r-1)=r+1$. Thus, we call the definition of quantum RM codes given in the introduction, which is equivalent to \cref{def: QRM}:

\quantumrmcode*

Using the duality of Reed--Muller codes, we can also define generators of the undetectable $X$ and $Z$ errors using subcubes of $\ZZ_2^m$ via \cref{lem: minimal subcubes generate code}:
\begin{equation}
    \begin{aligned}
        N_X&\coloneqq \br{X_A\Bigmid A \text{ is an $(m-r)$-cube}},\\
        N_Z&\coloneqq \br{Z_A\Bigmid A \text{ is a $(q+1)$-cube}}.
    \end{aligned}
\end{equation}
Using the standard definition of RM codes, combined with the stabilizers the following give independent bases for the logical $X$ and $Z$ operators of $QRM_m(q,r)$:
\begin{equation}\label{eq: LX-LZ}
    \begin{aligned}
        L_X'&\coloneqq \br{X_A\Bigmid A \text{ is a standard $\ell$-cube for $m-r\leq\ell\leq m-q-1$}},\\
        L_Z&\coloneqq \br{Z_A\Bigmid A \text{ is a standard $\ell$-cube for $q+1\leq\ell\leq r$}}.
    \end{aligned}
\end{equation}

We can rewrite these sets to better reflect the symmetry between them:

\begin{equation}
    \begin{aligned}
        L_Z&\coloneqq \big\{Z_{\standard{J}}&\Bigmid J\subseteq S,\; q+1\leq\abs{J}\leq r\big\},\\
        L_X'&\coloneqq \big\{X_{\standard{S\setminus J}}&\Bigmid J\subseteq S,\; q+1\leq\abs{J}\leq r \big\}.
    \end{aligned}
\end{equation}

In order to understanding logic on $QRM_m(q,r)$ we must describe a symplectic basis for the logical Pauli operators. Unfortunately, the sets $L_Z$ and $L_X'$ defined above \emph{are not} symplectic except in the case $q=r-1$. To see this, take $J\subseteq S$ to be any subset of size $q+1$ and $J'\supset J$ to be a superset of $J$ of size at most $r$ (which only exists in the case $q<r-1$). By construction, both $X_{\standard{S\setminus J}}$ and $X_{\standard{S\setminus J'}}$ are in the set $L_X'$ and they both anti-commute with $Z_{\standard{J}}\in L_Z$, so $\{L_Z, L_X'\}$ cannot be a symplectic basis for $QRM_m(q,r)$.

In some sense, this problem arises because all elements in $\{L_Z, L_X\}$ are guaranteed to overlap at least on $0^m\in\ZZ_2^m$. To fix this, we will shift the $X$ logical Pauli operators away from the $0^m$ vertex. For $J\subseteq S$ let $e_J\coloneqq \sum_{i\in J} e_i$ denote the incidence bit string corresponding to $J$, and consider the following sets of logical Pauli operators:
\begin{equation}
    \begin{aligned}
        L_Z&\coloneqq \Big\{Z_{\standard{J}}&\Bigmid J\subseteq S,\; q+1\leq\abs{J}\leq r\Big\},\\
        L_X&\coloneqq \Big\{X_{e_J +\standard{S\setminus J}}&\Bigmid J\subseteq S,\; q+1\leq\abs{J}\leq r\Big\}.
    \end{aligned}
\end{equation}

\begin{lemma}\label{lem: standard basis for QRM}
    $\{L_Z, L_X\}$ is a symplectic basis for $QRM_m(q,r)$. In particular, operators
    $Z_{\standard{J}}\in L_Z$ and $X_{e_K+\standard{S\setminus K}}\in L_X$ anti-commute if and only if $J=K$.
\end{lemma}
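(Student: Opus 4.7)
The plan is to reduce the statement to a direct calculation of the intersection of two subcubes and then to combine this with counting arguments.

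First I would analyze the overlap of $Z_{\standard{J}}$ and $X_{e_K+\standard{S\setminus K}}$, which determines their commutation relation: they commute if and only if $\abs{\standard{J}\cap(e_K+\standard{S\setminus K})}$ is even. A bit string $x$ lies in this intersection iff (i) $x_i=0$ for $i\notin J$, and (ii) $x_i=1$ for $i\in K$ and $x_i$ is arbitrary for $i\in S\setminus K$. Condition (ii) combined with (i) forces $K\subseteq J$, since for every $i\in K\setminus J$ we would need $x_i=1$ and $x_i=0$ simultaneously. If $K\subseteq J$, then $x_i$ is fixed to $1$ on $K$, fixed to $0$ on $S\setminus J$, and free on $J\setminus K$, giving
\begin{equation*}
\bigl|\standard{J}\cap(e_K+\standard{S\setminus K})\bigr| \;=\; 2^{\abs{J}-\abs{K}}.
\end{equation*}

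Next I would read off the commutation relations. If $K\not\subseteq J$ the intersection is empty and the operators commute. If $K\subseteq J$, the overlap size $2^{\abs{J}-\abs{K}}$ is odd iff $\abs{J}=\abs{K}$, which together with $K\subseteq J$ forces $J=K$. Therefore $Z_{\standard{J}}$ and $X_{e_K+\standard{S\setminus K}}$ anti-commute iff $J=K$, establishing the ``in particular'' clause. The remaining commutation properties of a symplectic basis are immediate: any two elements of $L_Z$ commute (both are $Z$-type) and likewise for $L_X$.

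Finally I would verify that $\{L_Z,L_X\}$ really generates a symplectic basis of logical Pauli operators modulo stabilizers. By \cref{eq: LX-LZ} and \cref{fact:dual-of-RM}, the set $\{L_Z,L_X'\}$ already provides independent representatives for the logical $Z$- and $X$-type operators modulo stabilizers, and
\begin{equation*}
\abs{L_Z}=\abs{L_X}=\abs{L_X'}=\sum_{i=q+1}^{r}\binom{m}{i}=\kappa_m,
\end{equation*}
matching the number of encoded qubits (\cref{def: QRM}). Passing from $L_X'$ to $L_X$ only multiplies each generator $X_{\standard{S\setminus J}}$ by $X_{e_J+\standard{S\setminus J}}\cdot X_{\standard{S\setminus J}}$, a product of $X$-operators on disjoint translates, which is a product of $X$-type stabilizers (as it acts on a union of $(m-q)$-cubes that tile an $(m-q)$-cube in $RM(q,m)^\perp$); hence $L_X$ generates the same logical $X$ coset space as $L_X'$. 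Independence of the $2\kappa_m$ operators $\{L_Z,L_X\}$ follows immediately from the symplectic pairing just established: any relation among them would have to vanish on each symplectic partner.

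The only slightly delicate step is confirming that the shift by $e_J$ does not leave the logical $X$ coset, i.e., that $X_{e_J+\standard{S\setminus J}}X_{\standard{S\setminus J}}$ is a stabilizer. This is the main check, but it follows from the classical containment $\indicator{\standard{S\setminus J}}+\indicator{e_J+\standard{S\setminus J}}\in RM(q,m)$, which in turn follows from the fact that the two $(m-\abs{J})$-cubes are cosets inside a single $(m-\abs{J}+1)$-cube of dimension at least $m-q$ (since $\abs{J}\le r\le m-q-1+1$), together with \cref{lem: minimal subcubes generate code}. With this verification in place, the symplectic basis property is complete.
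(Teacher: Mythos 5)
Your computation of the symplectic pairing is correct and is essentially the paper's argument in a cleaner form: the paper splits $J\neq K$ into the cases $J\subset K$ and $J\setminus K\neq\emptyset$, whereas you observe directly that $\standard{J}\cap(e_K+\standard{S\setminus K})$ is empty unless $K\subseteq J$ and otherwise has size $2^{\abs{J}-\abs{K}}$; either way the ``anti-commute iff $J=K$'' clause follows, and this is the entire content of the paper's written proof.

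Your final paragraph, however, contains a genuine error. It is \emph{not} true that $X_{e_J+\standard{S\setminus J}}\,X_{\standard{S\setminus J}}$ is a product of $X$-type stabilizers, and the geometric justification fails: the cubes $\standard{S\setminus J}$ and $e_J+\standard{S\setminus J}$ are complementary cosets inside a single $(m-\abs{J}+1)$-cube only when $\abs{J}=1$ (a standard $(m-\abs{J}+1)$-cube containing $\standard{S\setminus J}$ has the form $\standard{(S\setminus J)\cup\{j\}}$, and $e_J$ lies in it only if $J\subseteq\{j\}$). Concretely, \cref{lem: decomposition of subcube indicators} gives $\indicator{e_J+\standard{S\setminus J}}+\indicator{\standard{S\setminus J}}=\sum_{\emptyset\neq I\subseteq J}\indicator{\standard{I\cup(S\setminus J)}}$, and the terms with $\abs{I}<\abs{J}-q$ correspond to subcubes of dimension $<m-q$, i.e.\ nontrivial logical $X$ operators rather than stabilizers. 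For instance, in $QRM_4(0,2)$ with $J=\{1,2\}$ the difference is $X_{\standard{1,3,4}}X_{\standard{2,3,4}}$ times a stabilizer. So whenever $r-q\geq 2$ the shift by $e_J$ genuinely changes the logical class --- which is precisely why the shift is needed to repair the symplectic property. The conclusion you want survives, but for a different reason: the displayed decomposition exhibits each $X_{e_J+\standard{S\setminus J}}$ as $X_{\standard{S\setminus J}}$ times elements of $L_X'$ indexed by proper subsets of $J$ (times stabilizers), a unitriangular change of generators, so $L_X$ and $L_X'$ generate the same group modulo stabilizers. Alternatively, drop this verification entirely: each element of $L_X$ acts on an $(m-\abs{J})$-cube with $m-\abs{J}\geq m-r$ and hence is an undetectable error by \cref{lem: base cases}, and $\kappa_m$ pairwise non-interacting symplectic pairs of undetectable errors in a code with exactly $\kappa_m$ logical qubits automatically constitute a symplectic basis.
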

\begin{proof}
    ($\Rightarrow$) Rephrasing the claim, we will show that if $J\neq K$, then the operators commute. Consider the set of qubits that are acted on by both operators and that is given by $A\coloneqq \standard{J}\cap(e_K+\standard{S\setminus K})$. We proceed in cases:
    \begin{enumerate}[label=\textbf{\Roman*.}, leftmargin=*]
        \item ($J\subset K$)\footnote{Note that this case can only occur when $q<r-1$.} Suppose that $x\in A\neq\emptyset$. Then there exists $J'\subseteq J$ and $M\subseteq (S\setminus K)$ such that $x=e_{J'}=e_K+e_M$, implying that $e_{J'}+e_K = e_M$. Now since $J\subset K$, we are guaranteed that ${J'}\cap M=\emptyset$, and for the equality to hold it must be that $e_{J'}+e_K=e_M=0$. Thus, we have that $e_{J'}=e_K$. But by assumption, $K$ is strictly larger than $J'$, so this equation cannot be satisfied and no such $x$ can exist. Thus $A=\emptyset$ and the operators commute.
        
        \item ($J\setminus K \neq\emptyset$) Recall that either $A=\emptyset$ or else there is an $x\in\ZZ_2^m$ such that $A = x+\standard{J\cap (S\setminus K)}$. We are guaranteed in this case that $J\cap (S\setminus K)\neq\emptyset$, so $\abs{A}\in\br{0,2^\abs{J\cap (S\setminus K)}}$ is even and the operators commute.
    \end{enumerate}
    
    ($\Leftarrow$) Assuming $J=K$, we have that $\standard{J}\cap(e_J+\standard{S\setminus J})=\br{e_J}$, implying that $Z_{\standard{J}}$ and $X_{\standard{S\setminus K}}$ have overlapping support on a single qubit and therefore anti-commute.
\end{proof}
\cref{lem: standard basis for QRM} allows us to use the subsets, $J\subseteq S$, with $q+1\leq \abs{J}\leq r$ to uniquely index the logical qubits of the $QRM_m(q,r)$ code as mentioned previously in \cref{sec: geometry}:
\QRMdef*


\section{Transversal logic via subcube operators}\label{sec: sufficiency conditions}
As always, suppose $0\leq q < r\leq m$ are non-negative integers, and consider the quantum code $QRM_m(q,r)$.
In this section we prove necessary and sufficient conditions for subcube operators to act on $QRM_m(q,r)$ as either Clifford stabilizers or undetectable Clifford errors. Recall that $QRM_m(q,r)$ has $2^m$ physical qubits indexed by the elements of $\ZZ_2^m$. Given a subcube $A\subcubeeq\ZZ_2^m$ and a non-negative integer $k\in\ZZ_{\geq 0}$, we defined the unsigned and signed $\Z{k}$ operators on $A$, respectively, as
\begin{align}
    \left(\Z{k}_A\right)_x&\coloneqq \begin{cases}
        \Z{k}, &\text{ if } x\in A \\
        \eye, &\text{ otherwise}.
    \end{cases}\\
    \left(\tildeZ{k}_A\right)_x&\coloneqq \begin{cases}
        \Z{k}, &\text{ if $x\in A$ and $\abs{x}$ is even} \\
        \Z{k}^\dagger, &\text{ if $x\in A$ and $\abs{x}$ is odd} \\
        \eye, &\text{ otherwise}.
    \end{cases}
\end{align}

When $k=0$ and $\Z{0}_A = \tildeZ{0}_A=Z_A$ is a $Z$ operator acting on $A$, the following lemma is a direct consequence of the definition of quantum Reed--Muller codes, \cref{def: QRM}, and the equivalence of the constructions of classical Reed--Muller codes given in \cref{sec:classical-RM-codes}.
\begin{lemma}\label{lem: base cases}
    Consider $QRM_m(q,r)$ and let $A\subcubeeq\ZZ_2^m$ be a subcube. The following are true:
    \begin{itemize}
        \item $Z_A\in\stabs{0}$ if and only $\dim A\geq r+1$.
        \item $Z_A\in\logs{0}$ if and only if $\dim A\geq q+1$.
        \item $X_A\in\stabs{0}$ if and only if $\dim A\geq m-q$.
        \item $X_A\in\logs{0}$ if and only if $\dim A\geq m-r$.
    \end{itemize}
\end{lemma}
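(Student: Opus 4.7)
The plan is to reduce each of the four bullets to a classical Reed--Muller membership question via the CSS dictionary recalled in \cref{sec:Pauli group and quantum codes} and \cref{sec: QRM-codes}. Since $Z_A=Z(\indicator{A})$ and $X_A=X(\indicator{A})$, the correspondences $\mcS_X\leftrightarrow C_1^\perp$, $\mcS_Z\leftrightarrow C_2^\perp$, $\mcN_X\leftrightarrow C_2$, $\mcN_Z\leftrightarrow C_1$ translate the four statements into: $\indicator{A}\in RM(m-r-1,m)$, $\indicator{A}\in RM(m-q-1,m)$, $\indicator{A}\in RM(q,m)$, and $\indicator{A}\in RM(r,m)$, respectively. (I am reading $\logs{0}$ in the lemma statement as the undetectable-error group, so that, combined with the stabilizer bullet, the non-trivial logical regime is the intersection $q+1\le\dim A\le r$ and analogously for $X$.)

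For the sufficient direction I would apply \cref{lem: decomposition of subcube indicators} to expand $\indicator{A}=\sum_{I\subseteq I_A}\indicator{\standard{I\cup J}}$ for $A=x+\standard{J}$. Every summand is a standard-subcube indicator of type $I\cup J\supseteq J$, so $\dim\standard{I\cup J}\ge\dim A$. Plugging in $\dim A\ge r+1$ places each summand in the generating set of $RM(m-r-1,m)$ specified by \cref{def:classical-RM}, hence $\indicator{A}\in RM(m-r-1,m)$. The remaining three bullets are handled by the same decomposition, with the threshold $r+1$ replaced by $q+1$, $m-q$, or $m-r$, each of which matches the smallest standard-subcube generator of the corresponding classical RM code.

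The necessary direction is an immediate consequence of the minimum distances recorded in \cref{eq: min dist}, namely $2^{r+1}$, $2^{q+1}$, $2^{m-q}$, $2^{m-r}$ for the four codes above. A non-empty subcube satisfies $|\indicator{A}|=2^{\dim A}$, a power of two; if $\dim A$ lies strictly below the threshold asserted in the bullet, then $|\indicator{A}|$ is strictly smaller than the corresponding minimum distance, forcing the non-zero word $\indicator{A}$ out of the code. Thus in each case $\dim A$ below the stated threshold precludes membership, completing the iff.

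There is essentially no conceptual obstacle: the forward direction is the geometric content of \cref{lem: minimal subcubes generate code} (minimum-weight subcube indicators generate RM codes), and the converse is the classical minimum-distance bound. The only care needed is in bookkeeping---matching each of the four Pauli operator classes with the correct classical RM code and with the correct distance threshold---so I would organize the write-up as a single unified argument parameterized by the choice $(\text{Pauli type},\text{stabilizer or undetectable})$ to avoid four near-duplicate paragraphs.
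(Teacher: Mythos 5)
Your proposal is correct and matches the paper's (unwritten) argument: the paper simply asserts the lemma as a direct consequence of the CSS dictionary and the subcube description of classical RM codes, and your write-up supplies exactly the intended details — sufficiency via \cref{lem: decomposition of subcube indicators} (equivalently \cref{lem: minimal subcubes generate code}) and necessity via the minimum distances in \cref{eq: min dist}. Your reading of $\logs{0}$ as the undetectable-error group is also the one the paper itself adopts when it later invokes this lemma.
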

The aim of the present section is to prove the following generalization of \cref{lem: base cases} to $\Z{k}_A$ and $\tildeZ{k}_A$ operators for arbitrary values of $k\in\ZZ_{\geq 0}$,
\begin{restatable}{theorem}{signedlogicconditions}\label{thm: subcube dimension implies logic}
    Consider $QRM_m(q,r)$, $k\in\ZZ_{\geq 0}$, and $A\subcubeeq\ZZ_2^m$ be a subcube. The following are true:
    \begin{enumerate}
        \item \emph{(\cref{clm: trivial logic signed})} $\tildeZ{k}_A\in\stabs{k}$ if and only if $\dim A\geq (k+1)r+1$.
        \item \emph{(\cref{clm: non-trivial logic signed})} $\tildeZ{k}_A\in\logs{k}$ if and only if $q+kr+1\leq\dim A\leq (k+1)r$.
        \item \emph{(\cref{clm: trivial logic unsigned})} $\Z{k}_A\in\stabs{k}$ if and only if $\dim A\geq (k+1)r+1$.
        \item \emph{(\cref{clm: non-trivial logic unsigned})} $\Z{k}_A\in\logs{k}$ if and only if $q+kr+1\leq\dim A\leq (k+1)r$.
    \end{enumerate}
\end{restatable}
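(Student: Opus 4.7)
The proof is by induction on $k$. The base case $k=0$ is immediate because $\tildeZ{0}_A=\Z{0}_A=Z_A$, and \cref{lem: base cases} gives $Z_A\in\stabs{0}\iff\dim A\geq r+1$ and $Z_A\in\logs{0}\iff q+1\leq\dim A\leq r$, which matches what all four claims assert at $k=0$. For the inductive step I fix $k\geq 1$ and assume the theorem at level $k-1$. Since $\tildeZ{k}_A$ and $\Z{k}_A$ are diagonal they automatically commute with every $Z$-type Pauli, so by \cref{lem:Clifford-stabilizer-equivalences} and \cref{lem:undetectable-error-equivalences}, membership in $\stabs{k}$ (respectively $\errors{k}$) is determined entirely by conjugation against $X$-type stabilizers and logicals.

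Applying \cref{lem:conjugation-identities} qubit-wise on $A\cap B$ produces
\begin{equation*}
    \tildeZ{k}_A\, X_B\, \tildeZ{k}_A^{\dagger}\;=\;\omega_k^{\,o-e}\,X_B\cdot\tildeZ{k-1}_{A\cap B}^{\dagger},
    \qquad
    \Z{k}_A\, X_B\, \Z{k}_A^{\dagger}\;=\;\omega_k^{-|A\cap B|}\,X_B\cdot\Z{k-1}_{A\cap B}^{\dagger},
\end{equation*}
where $e$ and $o$ count the even- and odd-weight vertices of $A\cap B$. Whenever $\dim(A\cap B)\geq 1$ the subcube is weight-balanced (since the offset may be chosen disjoint from the type of the cube, half of its vertices have even weight), so $e=o$ and the signed phase is trivial; whenever $\dim(A\cap B)\geq k+1$ the size $|A\cap B|=2^{\dim(A\cap B)}$ is divisible by $2^{k+1}$, so the unsigned phase is trivial as well. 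The type identity $\dim(A\cap B)\geq\dim A+\dim B-m$, with equality realized by an appropriate choice of $J_B$ (and a coset shift making $A\cap B$ non-empty), links the dimension of $A$ to that of the intersection.

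I now invoke the inductive hypothesis. For $\tildeZ{k}_A\in\errors{k}$ (respectively $\Z{k}_A\in\errors{k}$), every conjugate of an $X$-stabilizer $X_B$ (with $\dim B=m-q$) must lie in $\stabs{k-1}$, which by induction forces every non-empty intersection to satisfy $\dim(A\cap B)\geq kr+1$, equivalently $\dim A\geq q+kr+1$. For $\tildeZ{k}_A\in\stabs{k}$ (respectively $\Z{k}_A\in\stabs{k}$), the same control is required against every $X$-logical $L$ (with $\dim L=m-r$), yielding $\dim A\geq (k+1)r+1$. Sufficiency follows because these thresholds force every relevant non-empty intersection to clear the inductive bar; necessity is established by exhibiting an explicit choice of $J_B$ (or $J_L$) realizing the extremal intersection dimension. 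The same induction closes both the signed and unsigned statements simultaneously, because the Clifford-stabilizer bound $kr+1\geq k+1$ (valid for $r\geq 1$) always subsumes the unsigned-phase bound, leaving identical dimension thresholds for $\tildeZ{k}_A$ and $\Z{k}_A$.

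The main technical hurdle is the phase book-keeping in the unsigned case: I must verify that the spurious global factor $\omega_k^{-|A\cap B|}$ vanishes precisely when the inductive hypothesis on $\Z{k-1}_{A\cap B}$ activates, and that no cross-compensation occurs whereby a non-trivial phase conspires with a non-trivial diagonal logical to masquerade as identity (which is ruled out because the residual diagonal term is, by induction, already a stabilizer). A secondary subtlety is confirming that for any fixed type $J_B$ achieving the extremal $|J_A\cap J_B|=\dim A+\dim B-m$, a coset shift exists making $A\cap B$ non-empty, so that the inductive hypothesis is genuinely triggered in the necessity direction rather than evaded by an empty intersection.
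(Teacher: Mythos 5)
Your proposal is correct and follows essentially the same route as the paper: induction on $k$ from the Pauli base case, the qubit-wise conjugation identities with the phase-free thresholds $\dim(A\cap B)\geq 1$ (signed) and $\geq k+1$ (unsigned), the overlap lemma $\dim(A\cap B)\geq \dim A+\dim B-m$ with an extremal choice of type for necessity, and the observation that $q+kr+1\geq q+k+1$ makes the unsigned phase condition automatic. The only cosmetic difference is that the paper runs the induction only for the stabilizer claims and then derives the $\errors{k}$ claims from them at level $k-1$, whereas you fold everything into one induction; the content is the same.
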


Note that the conclusion of \cref{thm: subcube dimension implies logic} is limited to the claim that the operators 
$\tildeZ{k}_A$ and $\Z{k}_A$ preserve the code space of $QRM_m(q,r)$, while saying nothing about \emph{what} logic they perform when $q+kr+1\leq\dim A\leq (k+1)r$. We will detail their logical circuits below in \cref{sec: signed logic} and \cref{sec: unsigned logic}. In this section we will prove \cref{thm: subcube dimension implies logic}.

We will make frequent use of the following simple statement that details intersections of various subcubes in $\ZZ_2^m$.
\begin{lemma}\label{lem:overlap-guarantee}
    For $\ell\in\br{0,\dots,m}$, let $\mcB_{A,\ell}$ denote the collection of $\geq\ell$-cubes that have a non-trivial overlap with a subcube $A\subcubeeq\ZZ_2^m$: 
    \begin{equation}
        \mathcal{B}_{A,\ell}\coloneqq\br{B\mid \dim B\geq \ell, A\cap B\neq\emptyset}.
    \end{equation}
    For $p\ge 1$, $\dim A\cap B\geq p$ for \emph{every} $B\in\mcB_{A,\ell}$ if and only if $\dim A\geq m-\ell+p$.
\end{lemma}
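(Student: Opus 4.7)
The plan is to reduce the statement to an elementary counting problem on subsets of the generator set $S$, using the fact (established earlier in the paper) that the intersection of two subcubes of types $J$ and $K$ is either empty or a subcube of type $J \cap K$. Write $A = x + \standard{J}$ so that $\dim A = |J|$, and let $B = y + \standard{K}$ be any subcube of dimension $|K| \geq \ell$ with $A \cap B \neq \emptyset$. Then $\dim(A \cap B) = |J \cap K|$, and the inclusion $J \cup K \subseteq S$ gives the basic inequality $|J \cap K| = |J| + |K| - |J \cup K| \geq |J| + |K| - m$.

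For the forward implication, I would just feed the hypotheses into this inequality: if $|J| \geq m - \ell + p$ and $|K| \geq \ell$, then $|J \cap K| \geq (m - \ell + p) + \ell - m = p$, as required.

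For the converse I would argue by contrapositive: assume $\dim A = |J| \leq m - \ell + p - 1$ and exhibit a bad $B$. Since $|J| + \ell - m \leq p - 1$, I can choose a subset $K \subseteq S$ with $|K| = \ell$ and $|J \cap K| = \max(0, |J| + \ell - m) \leq p - 1$ (pack $K$ into $S \setminus J$ first, then into $J$ only as needed). Setting $B \coloneqq x + \standard{K}$ with the same shift $x$ that defines $A$ guarantees $x \in A \cap B$, so $B \in \mcB_{A,\ell}$, while $\dim(A \cap B) = |J \cap K| \leq p - 1 < p$, contradicting the universal statement.

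There is no real obstacle: once one records the two facts that intersections of subcubes are subcubes of the intersected type and that $|J \cup K| \leq m$, both directions are immediate. The only minor care point is the converse, where one must verify that the candidate $B$ actually meets $A$ (handled by choosing the same shift $x$), and that the extremal value $\max(0, |J| + \ell - m)$ of $|J \cap K|$ is indeed attainable over all $K \subseteq S$ of size $\ell$, which is a standard packing argument on subsets of $S$.
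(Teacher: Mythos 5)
Your proposal is correct and follows essentially the same route as the paper: both directions reduce to the fact that a nonempty intersection of subcubes of types $J$ and $K$ has dimension $|J\cap K|$, with the sufficiency direction being the pigeonhole bound $|J\cap K|\geq |J|+|K|-m$ and the necessity direction being an explicit packing construction of a witness $K$ with small intersection (the paper takes $K=(S\setminus J)\cup\{p-1\text{ elements of }J\}$, you take $|K|=\ell$ packed into $S\setminus J$ first; these are interchangeable). No gaps.
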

\begin{proof}
    Without loss of generality assume that $A=\standard{J}$ for $J\subseteq S$. 

    ($\Rightarrow$) First note that $\abs{J}=\dim A\geq p$; otherwise $\dim A\cap B \leq \dim A < p$. Suppose for contradiction that $p\leq\abs{J}<m-\ell+p$. Define $K\subseteq S$ to be the union of $[m]\setminus J$ and any $p-1$ elements of $J$, so that $\abs{K}>m-(m-\ell+p)+p-1=\ell-1$. But then $\standard{K}$ is a $\geq \ell$-cube for which $\dim{A\cap B}=\abs{J\cap K} =p-1< p$.

    ($\Leftarrow$) For arbitrary $B\in\mcB_{A,\ell}$ there exists a $K\subseteq S$, $\abs{K}\geq \ell$, and a $w\in\ZZ_2^m$ such that $B=w+\standard{K}$. $A\cap B\neq\emptyset$ by definition, so there is a $w'\in\ZZ_2^m$ such that $A\cap B = w'+\standard{J\cap K}$ and $\dim A\cap B=\abs{J\cap K}$. Since $J$ and $K$ are both subsets of $[m]$ and $\abs{A}\geq m-\ell+p$, by the pigeonhole principle it must be that $\abs{J\cap K}\geq p$.
\end{proof}

Throughout this section we will consider conjugating $X$ operators by either $\tildeZ{k}_A$ or $\Z{k}_A$ operators. We will focus on cases for which the operator obtained through conjugation does not contain a global phase:
\begin{definition}
    For subcubes $A,B\subcubeeq\ZZ_2^m$ and $k\in\ZZ_{\geq 0}$, the operator $\tildeZ{k}_A X_B\tildeZ{k}_A^\dagger$ is said to be \emph{phase-free} if
    \begin{align}
        \tildeZ{k}_A X_B \tildeZ{k}_A^{\dagger} &= \tildeZ{k-1}_{A\cap B} X_B,\\
        &= X_B\tildeZ{k-1}_{A\cap B}^\dagger.  
    \end{align}
    Analogously, $\Z{k}_A X_B\Z{k}_A^\dagger$ is said to be \emph{phase-free} if
    \begin{align}
        \Z{k}_A X_B \Z{k}_A^{\dagger} &= \Z{k-1}_{A\cap B} X_B,\\
        &= X_B\Z{k-1}_{A\cap B}^\dagger.  
    \end{align}
\end{definition}

We will now prove \cref{clm: trivial logic signed} and \cref{clm: non-trivial logic signed}, which deal with the case of signed subcube operators. Proofs of the results for unsigned operators, \cref{clm: trivial logic unsigned} and \cref{clm: non-trivial logic unsigned}, appear in \cref{sec: unsigned dimension conditions}, as they
are nearly identical to the proofs in the signed operator case.

A direct consequence of the conjugation identities for $\Z{k}$ and $\Z{k}^\dagger$ given in \cref{lem:conjugation-identities} is the following:
\begin{fact}\label{fact: tildeZk-conjugation-identity}
    For subcubes $A,B\subcubeeq\ZZ_2^m$ with non-trivial intersection, $\tildeZ{k}_A X_B \tildeZ{k}_A^{\dagger}$ is phase-free if and only if $\dim A\cap B\geq 1$. 
\end{fact}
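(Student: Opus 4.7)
The plan is to apply the single-qubit conjugation identities from Lemma 3.5 qubit-by-qubit and then show that the only possible obstruction to being phase-free is a local phase that cancels exactly when the overlap subcube is at least one-dimensional.

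First I would decompose the conjugation according to the partition of physical qubits into the four disjoint regions $A\cap B$, $A\setminus B$, $B\setminus A$, and the complement of $A\cup B$. On $B\setminus A$ the operator $\tildeZ{k}_A$ acts as identity and contributes just $X$. On $A\setminus B$ the operator $X_B$ acts as identity, so the $\tildeZ{k}_A$ factors on those qubits cancel with their adjoints. The nontrivial contribution comes from qubits in $A\cap B$, which I would split further by parity of the Hamming weight of the index:
\begin{itemize}
    \item On an even-weight qubit in $A\cap B$, one gets $\Z{k} X \Z{k}^\dagger = \omega_k\, \Z{k-1} X$ by Lemma 3.5.
    \item On an odd-weight qubit in $A\cap B$, one gets $\Z{k}^\dagger X \Z{k} = \omega_k^{-1}\, \Z{k-1}^\dagger X$ by Lemma 3.5.
\end{itemize}
Multiplying over all qubits, and recognizing that the product of $\Z{k-1}$ and $\Z{k-1}^\dagger$ factors on even/odd qubits in $A\cap B$ is exactly $\tildeZ{k-1}_{A\cap B}$, the result is
\begin{equation*}
    \tildeZ{k}_A X_B \tildeZ{k}_A^\dagger \;=\; \omega_k^{\,n_e - n_o}\; \tildeZ{k-1}_{A\cap B} X_B,
\end{equation*}
where $n_e$ and $n_o$ are the numbers of even- and odd-weight elements of $A\cap B$. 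Hence $\tildeZ{k}_A X_B \tildeZ{k}_A^\dagger$ is phase-free exactly when $\omega_k^{\,n_e - n_o}=1$. The second equality in the definition of phase-free then follows from Fact 3.9 after pushing $X_B$ through.

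The remaining combinatorial step is to determine when $n_e=n_o$. If $\dim(A\cap B)\geq 1$, then $A\cap B$ is a coset of some standard subcube $\standard{K}$ with $|K|\geq 1$; picking any $i\in K$ the translation $y\mapsto y+e_i$ is a fixed-point-free involution on $A\cap B$ that flips parity, giving a bijection between even- and odd-weight elements, so $n_e=n_o$ and the phase is trivial. Conversely, if $\dim(A\cap B)=0$, then $A\cap B$ is a single vertex, so $n_e-n_o=\pm 1$, and since $\omega_k=e^{-i\pi/2^k}$ has order $2^{k+1}\geq 2$ we have $\omega_k^{\pm 1}\neq 1$, so the operator is not phase-free. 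There is no real obstacle here; the only subtle point is the parity-flipping involution argument, which is the combinatorial heart of the proof.
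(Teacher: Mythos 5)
Your proof is correct and follows the same route the paper intends: the paper derives this fact as a "direct consequence" of the single-qubit conjugation identities and offers only a proof by illustration (a figure), whose content is exactly your computation that the accumulated phase is $\omega_k^{\,n_e-n_o}$ on $A\cap B$. Your parity-flipping involution $y\mapsto y+e_i$ supplies the combinatorial step the figure leaves implicit, so your write-up is, if anything, more complete than the paper's.
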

See \cref{fig:conjugation_identity} for a proof by illustration.

\begin{figure}[ht]
    \centering
    \includegraphics[width=\textwidth]{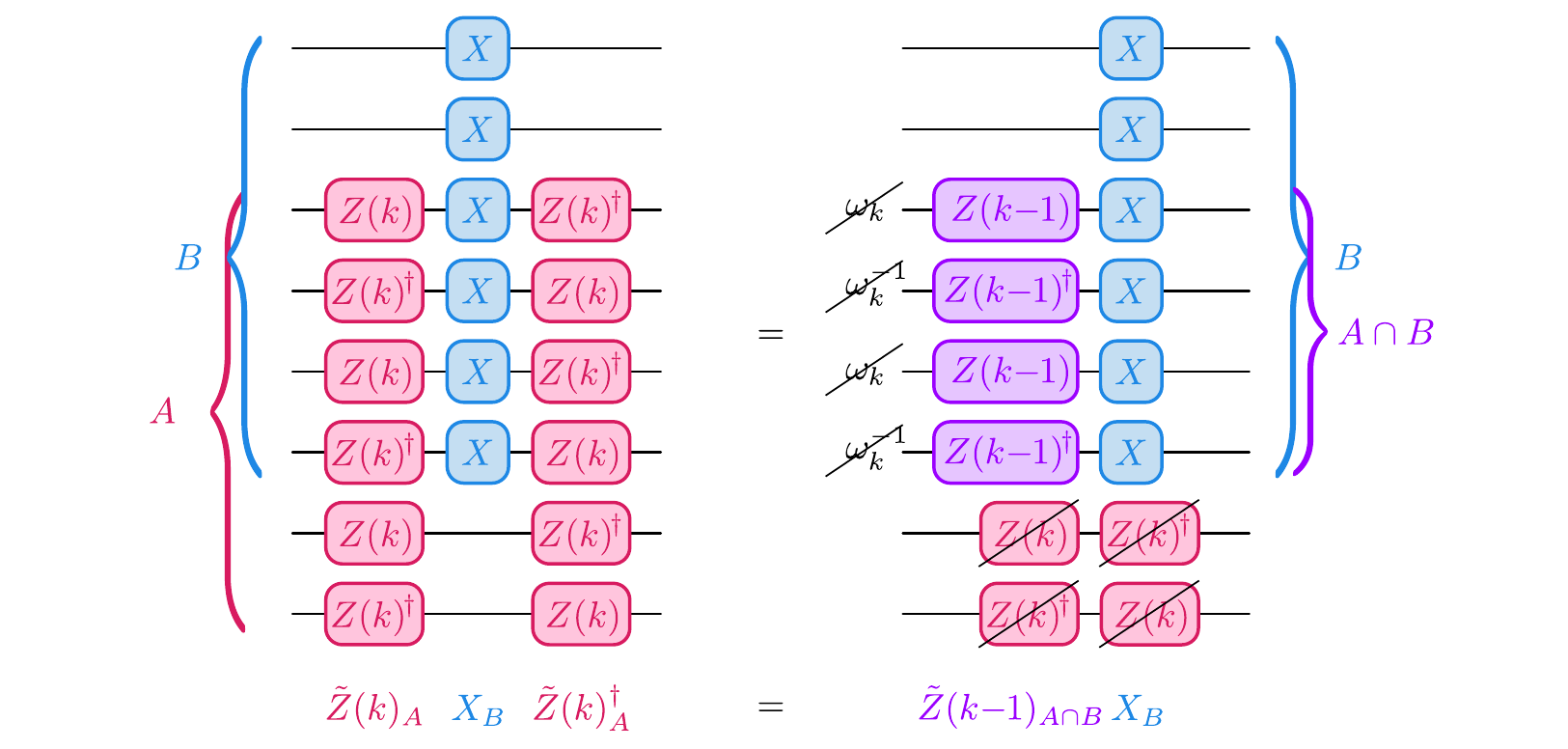}
    \caption{A visual ``proof'' of \cref{fact: tildeZk-conjugation-identity}}
    \label{fig:conjugation_identity}
\end{figure}

\begin{lemma}\label{lem: signed operator phase free condition}
    For an arbitrary subcube $A\subcubeeq\ZZ_2^m$, $\tildeZ{k}_A X_B \tildeZ{k}_A^{\dagger}$ is phase-free for every $X$ stabilizer generator $X_B$ of $QRM_m(q,r)$ if and only if $\dim A\geq q+1$.
\end{lemma}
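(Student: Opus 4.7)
My plan is to reduce the statement directly to \cref{lem:overlap-guarantee} via \cref{fact: tildeZk-conjugation-identity}. The $X$ stabilizer generators of $QRM_m(q,r)$ are precisely the operators $X_B$ with $B$ an $(m-q)$-cube, so the claim is really a statement about the joint intersection behavior of $A$ with all $(m-q)$-cubes in $\ZZ_2^m$.

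First I would dispose of the easy case. If $A\cap B=\emptyset$ then $\tildeZ{k}_A$ and $X_B$ act on disjoint sets of qubits and so commute, which gives
\begin{equation*}
    \tildeZ{k}_A X_B \tildeZ{k}_A^\dagger = X_B = \tildeZ{k-1}_{A\cap B} X_B,
\end{equation*}
trivially phase-free since $\tildeZ{k-1}_\emptyset=\eye$. Therefore the condition in the lemma is equivalent to the statement that $\tildeZ{k}_A X_B\tildeZ{k}_A^\dagger$ is phase-free for every $(m-q)$-cube $B$ with $A\cap B\neq\emptyset$. By \cref{fact: tildeZk-conjugation-identity}, this in turn is equivalent to requiring $\dim(A\cap B)\geq 1$ for every such $B$, i.e., for every $B\in\mcB_{A,m-q}$.

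Next I would invoke \cref{lem:overlap-guarantee} with the parameters $\ell=m-q$ and $p=1$. That lemma asserts that $\dim(A\cap B)\geq 1$ for all $B\in\mcB_{A,m-q}$ holds if and only if $\dim A \geq m-(m-q)+1 = q+1$, which is precisely the desired equivalence. The forward direction ($\dim A\geq q+1$ implies phase-freeness) follows from the sufficiency half of \cref{lem:overlap-guarantee}, and the reverse direction uses the construction in the necessity half: if $\dim A<q+1$ one can exhibit an $(m-q)$-cube $B$ that meets $A$ in a single vertex, and for such $B$ the conjugation $\tildeZ{k}_A X_B\tildeZ{k}_A^\dagger$ acquires a non-trivial phase by \cref{lem:conjugation-identities}, violating the phase-free condition.

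The whole argument is essentially bookkeeping once \cref{fact: tildeZk-conjugation-identity} and \cref{lem:overlap-guarantee} are in hand; there is no real obstacle. The only subtle point to keep straight is that the ``phase-free'' condition must hold uniformly over \emph{all} stabilizer generators $X_B$, which is why \cref{lem:overlap-guarantee} (a statement about every cube in $\mcB_{A,m-q}$) is exactly the right tool and why the bound $\dim A\geq q+1$ cannot be improved without conjugating against cubes of larger dimension.
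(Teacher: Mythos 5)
Your argument is correct and follows the same route as the paper: rephrase the phase-free condition via \cref{fact: tildeZk-conjugation-identity} as the requirement that $\dim(A\cap B)\geq 1$ for every $(m-q)$-cube $B$ meeting $A$, then apply \cref{lem:overlap-guarantee} with $\ell=m-q$ and $p=1$. Your extra remarks on the empty-intersection case and the explicit witness for necessity are just elaborations of the same proof.
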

\begin{proof}
    By definition of $QRM_m(q,r)$, the $X_B$'s that are stabilizer generators are precisely those with $\dim B=m-q$. By \cref{fact: tildeZk-conjugation-identity}, the statement of the lemma can be rephrased as follows: $\tildeZ{k}_A X_B \tildeZ{k}_A^{\dagger}$ is phase-free if and only if $\dim A\cap B\geq 1$ for every subcube $B$ that has non-trivial intersection with $A$ and satisfies $\dim B\geq m-q$. The desired result therefore holds by \cref{lem:overlap-guarantee}.
\end{proof}

\begin{claim}\label{clm: trivial logic signed}
    For $k\in\ZZ_{\geq 0}$ and a subcube $A$, $\tildeZ{k}_A$ is a level-$k$ Clifford stabilizer for $QRM_m(q,r)$ if and only if $\dim A\geq (k+1)r+1$.
\end{claim}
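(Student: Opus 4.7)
The plan is to prove both directions of the equivalence by induction on $k \geq 0$. The base case $k = 0$ is exactly \cref{lem: base cases}, since $\tildeZ{0}_A = Z_A$ and $(0+1)r+1 = r + 1$. For the inductive step, the three main tools I will use are: (i) the doubling identity $(\tildeZ{k}_A)^2 = \tildeZ{k-1}_A$; (ii) the phase-free conjugation formula $\tildeZ{k}_A X_B \tildeZ{k}_A^\dagger = \tildeZ{k-1}_{A\cap B} X_B$, valid whenever $\dim A \cap B \geq 1$ by \cref{fact: tildeZk-conjugation-identity}; and (iii) \cref{lem:overlap-guarantee}, which converts lower bounds on $\dim A$ into uniform lower bounds on $\dim A \cap B$ over all $B$ of a fixed codimension.

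For the $(\Leftarrow)$ direction, assume $\dim A \geq (k+1)r + 1$. Applying \cref{lem:overlap-guarantee} with $\ell = m - r$ (which \emph{a fortiori} also covers $X$ stabilizer generators, whose support has dimension $m - q \geq m - r$), every $(m-r)$-cube $B$ meeting $A$ satisfies $\dim A \cap B \geq kr + 1$; the conjugation is therefore phase-free and $\tildeZ{k-1}_{A \cap B} \in \stabs{k-1}$ by the inductive hypothesis. Hence $\tildeZ{k}_A \in \errors{k}$, and by \cref{lem:undetectable-error-equivalences} the operator commutes with the code-space projector $\Pi_\mcC$. Since $\tildeZ{k}_A \ket{0^n} = \ket{0^n}$ trivially and $\Pi_\mcC \ket{0^n}$ is a nonzero scalar multiple of the logical-zero code state $\ket{\bar 0}$ (the unique code state in the common $+1$ eigenspace of all logical $Z$-type operators), the identity $\tildeZ{k}_A \Pi_\mcC \ket{0^n} = \Pi_\mcC \tildeZ{k}_A \ket{0^n} = \Pi_\mcC \ket{0^n}$ yields $\tildeZ{k}_A \ket{\bar 0} = \ket{\bar 0}$. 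The same ingredients give the code-space commutation $\tildeZ{k}_A X_B \ket\phi = X_B \tildeZ{k}_A \ket\phi$ for every code state $\ket\phi$ and undetectable generator $X_B$; iterating this across products $X_M$ of $(m-r)$-cube generators yields $\tildeZ{k}_A X_M \ket{\bar 0} = X_M \ket{\bar 0}$, and since every logical basis state has this form, $\tildeZ{k}_A$ acts as identity on the whole code space, so $\tildeZ{k}_A \in \stabs{k}$.

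For the $(\Rightarrow)$ direction I use the contrapositive: assume $\dim A \leq (k+1) r$. If $\dim A \leq kr$, the inductive hypothesis gives $\tildeZ{k-1}_A \notin \stabs{k-1}$; since $(\tildeZ{k}_A)^2 = \tildeZ{k-1}_A$, any $\tildeZ{k}_A$ acting as identity on the code would force $\tildeZ{k-1}_A$ to do the same, a contradiction. In the remaining range $kr + 1 \leq \dim A \leq (k+1)r$, write $A = w + \standard{J}$ with $|J| = \dim A$ and construct the $(m-r)$-cube $B = w + \standard{K}$ where $K \supseteq [m] \setminus J$ is completed by $|J| - r$ arbitrary elements of $J$; this yields $\dim A \cap B = |J \cap K| = |J| - r \in [1, kr]$. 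The conjugation is phase-free, and by the inductive hypothesis $\tildeZ{k-1}_{A\cap B} \notin \stabs{k-1}$, so some code state $\ket{\phi'}$ satisfies $\tildeZ{k-1}_{A\cap B}\ket{\phi'} \neq \ket{\phi'}$. Setting $\ket\phi = X_B \ket{\phi'}$ (a code state, since $X_B$ is undetectable), the assumption $\tildeZ{k}_A \in \stabs{k}$ would give $\tildeZ{k-1}_{A\cap B} \ket{\phi'} = \tildeZ{k}_A X_B \tildeZ{k}_A^\dagger \ket\phi = \tildeZ{k}_A X_B \ket\phi = X_B \tildeZ{k}_A \ket\phi = X_B \ket\phi = \ket{\phi'}$, a contradiction.

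The hardest step I anticipate is establishing $\tildeZ{k}_A \ket{\bar 0} = \ket{\bar 0}$ in the $(\Leftarrow)$ direction: the $\Pi_\mcC$-commutation trick neatly sidesteps what would otherwise be a delicate divisibility condition $\sum_{i \in A \cap M} (-1)^{|i|} \equiv 0 \pmod{2^{k+1}}$ for every $M \in RM(q,m) = C_1^\perp$, by leveraging that $\tildeZ{k}_A \in \errors{k}$ has already been established and that $\ket{0^n}$ has nonzero overlap with $\ket{\bar 0}$; the rest of the argument is then essentially bookkeeping through the CSS structure of $QRM_m(q,r)$.
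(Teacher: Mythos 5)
Your proof is correct and follows the paper's overall strategy: induction on $k$, the phase-free conjugation identity, \cref{lem:overlap-guarantee}, and (for necessity) the same explicit construction of an $(m-r)$-cube $B$ whose type contains the complement of the type of $A$. Two local refinements distinguish your write-up and are worth noting. First, in the sufficiency direction the paper simply invokes \cref{lem:Clifford-stabilizer-equivalences} to pass from ``$\tildeZ{k}_A$ logically commutes with every undetectable Pauli'' to ``$\tildeZ{k}_A\in\stabs{k}$''; that fact is stated without proof and, on its face, only pins down the action on the code space up to a global scalar. Your argument closes that gap: $\tildeZ{k}_A\ket{0^n}=\ket{0^n}$ together with $[\tildeZ{k}_A,\Pi_\mcC]=0$ forces $\tildeZ{k}_A\ket{\bar 0}=\ket{\bar 0}$, and propagating the commutation through the generators $X_M$ of $C_2=RM(r,m)$ then gives pointwise identity on all of $\mcC$. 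Second, in the necessity direction the paper's construction of $K^*$ implicitly requires $\dim A\geq r$ (so that $|S\setminus K|\leq m-r$) and yields an intersection of dimension $\dim A-r$, which can be $0$, where the conjugation is no longer phase-free; your case split---handling $\dim A\leq kr$ by the squaring identity $(\tildeZ{k}_A)^2=\tildeZ{k-1}_A$ and reserving the cube construction for $kr+1\leq\dim A\leq(k+1)r$, where the intersection dimension lies in $[(k-1)r+1,kr]$ and is in particular positive---avoids both degeneracies. So the route is the same, but your version is somewhat more self-contained and careful at the two points where the paper leans on unproved or implicitly restricted steps.
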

\begin{proof}
We will prove both parts of the claim by induction on $k$. When $k=0$, $\tildeZ{0}_A=Z_A$ and the statement is true by \cref{lem: base cases}. Let us  suppose that the statement is true for $k\geq 0$ and consider the statement for $k+1$.

    ($\Rightarrow$) Suppose for contradiction that there exists a subcube $A$ such that (1) $\tildeZ{k+1}_A\in\stabs{k+1}$, but (2) $\dim A\leq (k+2)r$. As $A$ is a subcube, there exist $x\in\ZZ_2^m$
    and $K\subseteq S$ ($\abs{K}=\dim A$), such that $A=x+\standard{K}$.

    Now, by assumption of $\tildeZ{k}_A\in\stabs{k+1}$, it must be true that for every $X$ logical subcube operator $X_B$, the operators $\tildeZ{k}_A X_B \tildeZ{k}_A^\dagger$ and  $X_B$ are equivalent. Given our assumption for the dimension of $A$, we will obtain a contradiction by constructing a logical $X_B$ for which $\tildeZ{k}_A X_B \tildeZ{k}_A^\dagger\not\equiv X_B$.

    Let $K^*\subseteq S$ be any subset of $S$ with $\abs{K^*}=m-r$ elements such that $S\setminus K\subseteq K^*$. Define the subcube $B\coloneqq x+\standard{K^*}$, so that $A\cap B = x +\standard{K\cap K^*}\neq \emptyset$. Using the conjugation identities for $\Z{k}$ and $\Z{k}^\dagger$ we have that $\tildeZ{k+1}_A X_B \tildeZ{k+1}_A^\dagger = \alpha \tildeZ{k}_{A\cap B} X_B$, where $\alpha$ is some global phase factor dependent on $k$, $A$, and $B$. This implies that for $\tildeZ{k+1}_A X_B \tildeZ{k+1}_A^\dagger \in\stabs{k}$ to be true it must be that $\tildeZ{k}_{A\cap B}\in\stabs{k}$, as otherwise $\alpha\tildeZ{k}_{A\cap B} X_B\ket\psi$ cannot equal $X_B\ket\psi$ for every code state $\ket\psi\in QRM_m(q,r)$. 
As it turns out, $\tildeZ{k}_{A\cap B}\in\stabs{k}$ contradicts our induction hypothesis that 
$\dim A\cap B\geq (k+1)r+1$. Indeed, we can upper bound this dimension as
    \begin{align}
        \dim A\cap B &= \abs{K\cap K^*}, \\
        &= \abs{K^*\setminus (S\setminus J)},\\
        \since{$S\setminus J\subseteq K$}&= \abs{K^*} - \abs{S\setminus J},\\
        &= (m - r)- (m -\abs{J}),\\
        &= \abs{J}-r,\\
        \since{$\dim A\leq (k+2)r$ by (2)} &\leq  (k+1)r,
    \end{align}
  Thus, $\tildeZ{k}_{A\cap B}\notin\stabs{k}$ implying that $\tildeZ{k+1}_A\notin \stabs{k+1}$.

    ($\Leftarrow$) Assume that $\tildeZ{k}_A$ is a level-$k$ Clifford stabilizer for all $A$ satisfying $\dim A\le (k+1)r+1$. Now
    suppose that $A$ is a subcube with $\dim A\geq (k+2)r+1$. Let $B$ be an arbitrary subcube for which $X_B$ is an undetectable $X$ error, which by \cref{lem: base cases} occurs if and only if $\dim B\geq m-r$. By \cref{lem:Clifford-stabilizer-equivalences}, the desired result, $\tildeZ{k+1}_A\in\stabs{k+1}$, holds if and only if $\tildeZ{k+1}_A X_B \tildeZ{k+1}_A^{\dagger}\equiv X_B$. Thus, we consider the operator $\tildeZ{k+1}_A X_B \tildeZ{k+1}_A^{\dagger}$.

    As $\dim A\geq r+1\geq q+1$, \cref{lem: signed operator phase free condition} implies that the operator is phase-free, and so $\tildeZ{k+1}_A X_B \tildeZ{k+1}_A^{\dagger} = \tildeZ{k}_{A\cap B} X_B$. Now notice that, since $\dim A\geq (k+2)r+1=m-(m-r)+(k+1)r+1$, by \cref{lem:overlap-guarantee} we have that $\dim A\cap B\geq (k+1)r+1$, so $\tildeZ{k}_{A\cap B}$ is a level-$k$ Clifford stabilizer for the code and the desired result holds by the induction hypothesis.
\end{proof}

\begin{claim}\label{clm: non-trivial logic signed}
    For $k\in\ZZ_{\geq 0}$ and a subcube $A$, $\tildeZ{k}_A$ is a level-$k$ undetectable Clifford error for $QRM_m(q,r)$ if and only if $\dim A\geq q+kr+1$.
\end{claim}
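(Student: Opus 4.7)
The plan is induction on $k$, closely mirroring the structure of the proof of \cref{clm: trivial logic signed} and invoking that statement at the previous level as the main ingredient. The base case $k=0$ is the classical fact \cref{lem: base cases}: $\tildeZ{0}_A = Z_A$ is an undetectable $Z$-error iff $\dim A \geq q+1$.

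For the inductive step at level $k+1$, fix a subcube $A$. Since $Z$-stabilizers commute with $\tildeZ{k+1}_A$, membership in $\errors{k+1}$ reduces to checking that $\tildeZ{k+1}_A X_B \tildeZ{k+1}_A^\dagger \in \stabs{k}$ for every $X$-stabilizer generator $X_B$ (i.e., with $\dim B = m-q$). For the $(\Leftarrow)$ direction, assume $\dim A \geq q + (k+1)r + 1$; in particular $\dim A \geq q+1$, so \cref{lem: signed operator phase free condition} gives the phase-free identity $\tildeZ{k+1}_A X_B \tildeZ{k+1}_A^\dagger = \tildeZ{k}_{A\cap B}\, X_B$. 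Applying \cref{lem:overlap-guarantee} with $\ell = m-q$ and $p = (k+1)r + 1$ forces $\dim A \cap B \geq (k+1)r + 1$ whenever $A \cap B \neq \emptyset$, so \cref{clm: trivial logic signed} at level $k$ places $\tildeZ{k}_{A\cap B}$ in $\stabs{k}$, and the conjugate is then in $\stabs{k}$.

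For the contrapositive $(\Rightarrow)$, assume $\dim A \leq q + (k+1)r$ and write $A = x + \standard{J}$. The strategy is to exhibit an $X$-stabilizer $X_B$ whose conjugate lies outside $\stabs{k}$. Assume first that $\abs{J} \geq 1$. A short case analysis on whether $\abs{J} \geq q+1$, $\abs{J} = q$, or $\abs{J} < q$ constructs an $(m-q)$-cube $B = y + \standard{K}$ with $1 \leq \dim A \cap B \leq (k+1)r$; for example, when $\abs{J} \geq q+1$ one takes $K \supseteq [m]\setminus J$ together with $\abs{J} - q$ elements of $J$, using the bound $\abs{J} \leq q + (k+1)r$. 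With such a $B$ the phase-free identity gives $\tildeZ{k+1}_A X_B \tildeZ{k+1}_A^\dagger = \tildeZ{k}_{A\cap B}\, X_B$, and \cref{clm: trivial logic signed} at level $k$ guarantees $\tildeZ{k}_{A\cap B} \notin \stabs{k}$, so the conjugate is not logically equivalent to $X_B$ and therefore not in $\stabs{k}$.

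The main obstacle is the exceptional case $\abs{J} = 0$, where $A = \{y\}$ is a single vertex and $\tildeZ{k+1}_A = \Z{k+1}_y^{\pm 1}$. Here no choice of $(m-q)$-cube $B$ can avoid a single-point intersection, and the conjugation identities of \cref{lem:conjugation-identities} produce a genuine phase: for any $(m-q)$-cube $B$ containing $y$ (which exists since $q < m$), $\tildeZ{k+1}_A X_B \tildeZ{k+1}_A^\dagger = \omega_{k+1}^{\pm 1}\, \Z{k}_y^{\pm 1}\, X_B$. For this operator to act as identity on the codespace, every code state would have to be a simultaneous eigenvector of $\Z{k}_y$ with eigenvalue $e^{\pm i\pi/2^{k+1}}$; but this value is neither of the two eigenvalues $\{1, e^{i\pi/2^k}\}$ of $\Z{k}_y$, yielding the desired contradiction and completing the induction.
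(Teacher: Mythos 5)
Your proof is correct and takes essentially the same route as the paper's: both reduce membership of $\tildeZ{k}_A$ in $\errors{k}$ to membership of $\tildeZ{k-1}_{A\cap B}$ in $\stabs{k-1}$ via the phase-free conjugation identity, and then apply \cref{clm: trivial logic signed} one level down together with \cref{lem:overlap-guarantee}. The only differences are presentational: your induction framing is superfluous (the reduction is to the independently-proven \cref{clm: trivial logic signed}, not to the present claim at a lower level), and your explicit witness $B$ plus the single-vertex eigenvalue argument simply unpack the ``only if'' direction of \cref{lem:overlap-guarantee} and the paper's terse remark that a nontrivial global phase cannot fix the code space --- a point your version actually justifies more carefully.
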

\begin{proof}
    By definition,  $\tildeZ{k}_A\in\errors{k}$ if and only if $\tildeZ{k}_A X_B \tildeZ{k}_A^\dagger \in \stabs{k-1}$ for every $X_B$ with $\dim B= m-q$. Let $X_B$ be an arbitrary stabilizer generator. Using the conjugation identities for $\Z{k}$ and $\Z{k}^\dagger$ we have that $\tildeZ{k}_A X_B \tildeZ{k}_A^\dagger = \alpha \tildeZ{k-1}_{A\cap B} X_B$, where $\alpha$ is some global phase factor dependent on $k$, $A$, and $B$. Since $X_B$ is a stabilizer we have that $\tildeZ{k}_A X_B \tildeZ{k}_A^\dagger\in\stabs{k}$ if and only if $\alpha \tildeZ{k-1}_{A\cap B}\in\stabs{k-1}$ for every subcube $B$ with $\dim B=m-q$.

    ($\Rightarrow$) We assume that $\alpha \tildeZ{k-1}_{A\cap B}\in\stabs{k-1}$ for every subcube $B$ with $\dim B=m-q$, and we seek to show that $\dim A\geq q+kr+1$. If the global phase factor $\alpha_k\neq 1$, then $\alpha \tildeZ{k-1}_{A\cap B}$ cannot fix the code space, so by \cref{lem: signed operator phase free condition} we have that $\dim A \geq q+1$ in order for $\tildeZ{k}_A X_B \tildeZ{k}_A^\dagger$ to be phase-free. Now, we must show that $\tildeZ{k-1}_{A\cap B}\in\stabs{k-1}$ for every $B$ such that $\dim B = m-q$. Using \cref{clm: trivial logic signed}, $\tildeZ{k-1}_{A\cap B}\in\stabs{k-1}$ if and only if $\dim A\cap B\geq kr+1$. By \cref{lem:overlap-guarantee} we have that $\dim A\cap B\geq kr+1$ for every $B$ with $\dim B = m-q$ only if $\dim A\geq m- (m-q)+kr+1 = q+kr+1$, as desired.

    ($\Leftarrow$) We assume that $\dim A\geq q+kr+1$, and we seek to show that $\alpha \tildeZ{k-1}_{A\cap B}\in\stabs{k-1}$ for every subcube $B$ with $\dim B=m-q$. As $k\geq 0$, $\dim A\geq q+1$ and \cref{lem: signed operator phase free condition} implies that $\tildeZ{k}_A X_B \tildeZ{k}_A^\dagger$ is phase-free, and so $\alpha=1$. By \cref{lem:overlap-guarantee}, since $\dim A\geq q+kr+1$ we have that $\dim A\cap B\geq kr+1$ for every $B$ with $\dim B = m-q$. \cref{clm: trivial logic signed} thus implies that $\tildeZ{k-1}_{A\cap B}\in\stabs{k-1}$, as desired.
\end{proof}


\section{Signed subcube operator logic}\label{sec: signed logic}
\cref{thm: subcube dimension implies logic} gives necessary and sufficient conditions for when a $\tildeZ{k}_A$ operator performs non-trivial logic on $QRM_m(q,r)$; the aim of this section is to determine the logical circuit implemented by a signed subcube operator. A simple corollary of \cref{thm: subcube dimension implies logic} gives one hint toward the structure of the logical circuits:
\begin{corollary}\label{cor: tilde operators are logically Hermitian}
    If $\tildeZ{k}_A\in\errors{k}$, then $\tildeZ{k}_A^2\in\stabs{k-1}$ and $\tildeZ{k}_A\equiv\tildeZ{k}_A^\dagger$.
\end{corollary}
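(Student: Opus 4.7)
The proof is a two-step deduction from the Validity Theorem, so I will outline it as a short plan.

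The first step is to establish the algebraic identity $\tildeZ{k}_A^2 = \tildeZ{k-1}_A$. This reduces to checking that squaring the single-qubit factor preserves the signed structure on $A$: on even-Hamming-weight qubits the factor is $\Z{k}$, which squares to $\Z{k-1}$, while on odd-Hamming-weight qubits the factor is $\Z{k}^\dagger$, which squares to $\Z{k-1}^\dagger$. This is exactly the definition of $\tildeZ{k-1}_A$, and qubits outside $A$ are unaffected. So $\tildeZ{k}_A^2 = \tildeZ{k-1}_A$ follows by direct inspection.

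Next, I invoke the Validity Theorem (\cref{thm: subcube dimension implies logic}) twice. The hypothesis $\tildeZ{k}_A \in \errors{k}$, combined with part 2 of the theorem, yields $\dim A \geq q + kr + 1$. Since $q \geq 0$, this gives the weaker inequality $\dim A \geq kr + 1 = ((k-1)+1)r + 1$. Applying part 1 of the theorem with $k$ replaced by $k-1$ then establishes $\tildeZ{k-1}_A \in \stabs{k-1}$. Combined with the identity from the first step, this proves $\tildeZ{k}_A^2 \in \stabs{k-1}$, the first conclusion.

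For the logical equivalence $\tildeZ{k}_A \equiv \tildeZ{k}_A^\dagger$: since $\tildeZ{k}_A^2 \in \stabs{k-1}$ acts as the identity on every code state $\ket\psi$, we have $\tildeZ{k}_A^2 \ket\psi = \ket\psi$, and pre-multiplying by $\tildeZ{k}_A^\dagger$ yields $\tildeZ{k}_A \ket\psi = \tildeZ{k}_A^\dagger \ket\psi$ on all code states. Equivalently, $(\tildeZ{k}_A)^{-1} \tildeZ{k}_A^\dagger = (\tildeZ{k}_A^\dagger)^2 = \tildeZ{k-1}_A^\dagger$, which lies in $\stabs{k-1}$ because inversion preserves the property of acting as the identity on the code. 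Hence $\tildeZ{k}_A$ and $\tildeZ{k}_A^\dagger$ differ by a Clifford stabilizer and are logically equivalent on $QRM_m(q,r)$.

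There is essentially no obstacle in this proof: both conclusions flow directly from the Validity Theorem once the squaring identity is noted. The only bookkeeping point worth stressing is that the implication $\dim A \geq q + kr + 1 \Rightarrow \dim A \geq kr + 1$ uses only $q \geq 0$, so the argument applies uniformly across all allowed code parameters, and the degenerate $k = 0$ case is trivial since $\tildeZ{0}_A = Z_A$ is already Hermitian with $Z_A^2 = \eye$.
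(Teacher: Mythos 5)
Your proposal is correct and follows essentially the same route as the paper: the identity $\tildeZ{k}_A^2=\tildeZ{k-1}_A$, the chain $\dim A\geq q+kr+1\geq ((k-1)+1)r+1$ fed into \cref{thm: subcube dimension implies logic}, and unitarity turning the logical involution into logical Hermiticity. The only difference is that you spell out the final step (pre-multiplying by $\tildeZ{k}_A^\dagger$ on code states) which the paper leaves as a one-line remark.
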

\begin{proof}
    The first implication follows by \cref{thm: subcube dimension implies logic} since $\dim A\geq q+kr+1\geq \big((k-1)+1\big)+1$ and $\tildeZ{k}_A^2 = \tildeZ{k-1}_A$. Since $\tildeZ{k}_A$ is unitary, the logical involution property implies logical Hermiticity.
\end{proof}
If $\tildeZ{k}_A$ is a logical operator on the code space then by \cref{thm: subcube dimension implies logic} it is logically Hermitian. One may expect that, as a diagonal operator in the $k$-th level of the Clifford Hierarchy, such a $\tildeZ{k}_A$ would implement a logical diagonal operator in the $k$-th level, as well. The only diagonal $k$-th level Clifford Hierarchy operators that are Hermitian are circuits of multi-controlled-$Z$ gates where the number of controls is at most $k-1$ for any gate \cite{CGK17}. So, \cref{cor: tilde operators are logically Hermitian} seems to indicate that $\tildeZ{k}_A$ will implement logical multi-controlled-$Z$ circuits. As stated in \cref{sec: overview} this turns out to be the case.

Consider first the case of $k=0$ where $\tildeZ{0}_A=Z_A$ is the $Z$ operator acting on the subcube $A$. It is a simple consequence of the structure of classical Reed--Muller codes that $Z_A$ can be written as a product of operators acting on \emph{standard} subcubes, $Z_{\standard{J}}$, as the standard subcubes correspond precisely to the basis elements of Reed--Muller codes. It turns out that an analogous statement is true for $\tildeZ{k}_A$ operators: Every $\tildeZ{k}_A$ operator can be written as a product of standard subcube operators $\tildeZ{k'}_{\standard{K'}}$, where $k'\leq k$. We will prove this fact in \cref{sec: basis for k-th level}. 

Thus, we can describe the logical circuits for arbitrary subcube operators by describing the logic of standard subcube operators. To give some intuition, in \cref{sec: phasetilde operator logic} we will focus solely on the case of $k=1$, where $\tildeZ{1}_{\standard{K}}=\widetilde{\phasegate}_{\standard{K}}$. These operators will implement logical controlled-$Z$ circuits. Our main result on the logic implemented by standard signed subcube operators is proven in \cref{sec: standard subcube logic}. Such operators will act on $QRM_m(q,r)$ as circuits of multi-controlled-$Z$ circuits.

\subsection{A basis for \texorpdfstring{$k$}{k}-th level subcube logic}\label{sec: basis for k-th level}

To build intuition we will begin with the simple case of $\tildeZ{1}_A=Z_A$ logical operators. Recall the following lemma on classical Reed--Muller codes from \cref{sec:classical-RM-codes}.

\indicatordecomposition*

The multiplicative group of $Z$ operators, $\langle Z_A=Z(\indicator{A})\rangle$, is isomorphic to the additive group of indicator functions, $\langle\indicator{A}\rangle$, and so \cref{thm: subcube dimension implies logic} combined with \cref{lem: decomposition of subcube indicators} yields the following decomposition:
\begin{lemma}\label{lem: Z logical decomposition}
    Let $A\subcubeeq\ZZ_2^m$ be a subcube of type $J\subset S$, and suppose $Z_A\in\logs{0}$ for $QRM_m(q,r)$. Then
    \begin{equation}
        Z_A\equiv \prod_{I\subseteq I_A\colon \abs{I}+\abs{J}\leq r} Z_{\standard{I\cup J}}.
    \end{equation}
\end{lemma}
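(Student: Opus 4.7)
The plan is to convert the additive decomposition of the indicator function $\indicator{A}$ from the preceding lemma into a multiplicative decomposition of $Z_A$, then discard the factors that happen to lie in the stabilizer group.

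First, I would invoke the preceding indicator decomposition: since $A$ has type $J$, we have
\begin{equation*}
    \indicator{A} = \sum_{I\subseteq I_A} \indicator{\standard{I\cup J}} \pmod 2.
\end{equation*}
Because the map $v \mapsto Z(v)$ is a homomorphism from the additive group $\FF^n$ to the multiplicative group of $Z$-type Pauli operators, this translates directly into the operator identity
\begin{equation*}
    Z_A = \prod_{I\subseteq I_A} Z_{\standard{I\cup J}}.
\end{equation*}
Recall from the definition of $I_A$ that $I_A \cap J = \emptyset$, so $|I\cup J| = |I|+|J|$ for every $I\subseteq I_A$ that appears in the sum.

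Next, I would split the product according to the dimension of the standard subcube $\standard{I\cup J}$. By the base-case lemma (\cref{lem: base cases}), $Z_{\standard{I\cup J}}\in\stabs{0}$ whenever $|I|+|J|\geq r+1$, so every such factor acts as the logical identity on the code space and can be deleted up to logical equivalence $\equiv$. This gives
\begin{equation*}
    Z_A \;\equiv\; \prod_{\substack{I\subseteq I_A \\ |I|+|J|\leq r}} Z_{\standard{I\cup J}},
\end{equation*}
which is the stated equality. The hypothesis $Z_A\in\logs{0}$ (equivalently $|J|\geq q+1$) only plays a bookkeeping role: it ensures the overall operator is not itself a stabilizer, while the classification of each factor on the right-hand side as logical or stabilizer is governed entirely by $|I|+|J|$ versus $r$.

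There is no real obstacle here beyond making sure the modular reduction in the indicator identity is respected by the $Z$-operator homomorphism, and that $I$ and $J$ being disjoint makes the dimension of $\standard{I\cup J}$ transparent. Everything else is a direct consequence of \cref{lem: decomposition of subcube indicators} combined with \cref{lem: base cases}.
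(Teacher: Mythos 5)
Your proof is correct and follows essentially the same route as the paper's: apply \cref{lem: decomposition of subcube indicators} via the homomorphism $v\mapsto Z(v)$ to get $Z_A=\prod_{I\subseteq I_A}Z_{\standard{I\cup J}}$, then drop the factors with $\abs{I}+\abs{J}>r$ as Pauli stabilizers (the paper cites \cref{thm: subcube dimension implies logic} where you cite its $k=0$ instance, \cref{lem: base cases} — these agree). Your explicit remarks that $I_A\cap J=\emptyset$ guarantees $\abs{I\cup J}=\abs{I}+\abs{J}$ and that the hypothesis $Z_A\in\logs{0}$ is only bookkeeping are accurate refinements of the paper's terser argument.
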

\begin{proof}
    Let $\ket\psi$ be a codestate. By \cref{lem: decomposition of subcube indicators}, $Z_A=\prod_{I\subseteq I_A}Z_{\standard{I\cup J}}$. By \cref{thm: subcube dimension implies logic}, for each $I\subseteq I_A$ such that $\abs{I\cup J}=\abs{I}+\abs{J}>r$ the operator $Z_{\standard{I\cup J}}$ is a Pauli stabilizer for $QRM_m(q,r)$, and so
    \begin{align}
        Z_A\ket\psi &= \prod_{I\subseteq I_A}Z_{\standard{I\cup J}}\ket\psi, \\
        &=  \prod_{I\subseteq I_A\colon \abs{I}+\abs{J}\leq r} Z_{\standard{I\cup J}}\ket\psi,
    \end{align}
    as desired.
\end{proof}
The intuition behind \cref{lem: Z logical decomposition} is not new: the set of standard subcubes with dimension at least $q$ and less than $r$ is precisely a basis for the space of logical $Z$ operators, so each $Z_A\in\logs{0}$ must have a decomposition in terms of standard subcube operators. The content of \cref{lem: Z logical decomposition} is to give the explicit decomposition of a $Z_A$ operator in terms of the basis logicals.

Recall now the signed indicator function, $\tildeindicator{A}:\ZZ_2^m\rightarrow \ZZ$, that maps even-weight elements of $A$ to $1$ and odd-weight elements of $A$ to $-1$. In the same way that the space of $\FF$-valued indicator functions on subcubes is isomorphic to the space of $Z$ operators on subcubes, the space of signed indicator functions modulo $2^{k+1}$ corresponds to the space of $\tildeZ{k}_A$ operators:

\begin{restatable}{lemma}{Zkisomorphism}\label{lem: Zk isomorphism}
    Given $k\in\NN$, the multiplicative group generated by the $\tildeZ{k}_A$ operators is isomorphic to the additive group generated by the $\tildeindicator{A}$ functions taken modulo ${2^{k+1}}$:
    \begin{equation}
        \Big\langle \tildeZ{k}_A \Big\rangle \cong \Big\langle \tildeindicator{A}\pmod{2^{k+1}}\Big\rangle.
    \end{equation}
\end{restatable}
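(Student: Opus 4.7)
The plan is to construct an explicit exponentiation homomorphism from integer-valued functions on the hypercube into the diagonal unitary group, identify its kernel, and then restrict to the relevant subgroups.

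First, define the map
\[
    \Phi \colon \ZZ[\ZZ_2^m] \longrightarrow \unitary(2)^{\otimes 2^m}, \qquad
    \Phi(f) \coloneqq \bigotimes_{x \in \ZZ_2^m} \Z{k}^{f(x)},
\]
where by convention $\Z{k}^{-1} = \Z{k}^\dagger$. Since $\Z{k}$ is a single-qubit diagonal unitary and each tensor factor is independent, we have $\Phi(f+g) = \Phi(f)\,\Phi(g)$, so $\Phi$ is a group homomorphism from the additive group $(\ZZ[\ZZ_2^m],+)$ into the multiplicative group of diagonal unitaries. Directly from the definitions of $\tildeindicator{A}$ and $\tildeZ{k}_A$, one checks qubit-by-qubit that $\Phi(\tildeindicator{A}) = \tildeZ{k}_A$ for every subcube $A$.

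Next, I would identify the kernel of $\Phi$. As noted earlier in \cref{sec:Z-X-basis-logicals}, the single-qubit gate $\Z{k}$ has exact order $2^{k+1}$. Therefore $\Z{k}^{c} = \eye$ if and only if $c \equiv 0 \pmod{2^{k+1}}$, and since the tensor factors are independent,
\[
    \Phi(f) = \eye \iff f(x) \equiv 0 \pmod{2^{k+1}} \text{ for all } x \in \ZZ_2^m,
\]
i.e.\ $\ker \Phi = 2^{k+1}\,\ZZ[\ZZ_2^m]$.

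Now restrict $\Phi$ to the subgroup $G \coloneqq \langle \tildeindicator{A} \rangle \leq \ZZ[\ZZ_2^m]$. By the previous paragraph its image is precisely the multiplicative group $\langle \tildeZ{k}_A \rangle$, and its kernel on $G$ equals $G \cap 2^{k+1}\,\ZZ[\ZZ_2^m]$. The first isomorphism theorem gives
\[
    \langle \tildeZ{k}_A \rangle \;\cong\; G \big/ \bigl(G \cap 2^{k+1}\,\ZZ[\ZZ_2^m]\bigr),
\]
and the right-hand side is, by definition, the image of $G$ under the quotient map $\ZZ[\ZZ_2^m] \to \ZZ[\ZZ_2^m]/2^{k+1}\,\ZZ[\ZZ_2^m]$, i.e.\ the additive group $\langle \tildeindicator{A} \pmod{2^{k+1}} \rangle$. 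This is the claimed isomorphism.

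There is no serious obstacle here; the only subtle point is to state the kernel calculation carefully (in particular, that $G \cap 2^{k+1}\ZZ[\ZZ_2^m]$, not $2^{k+1}G$, is the right object, although both interpretations give the same quotient). The argument uses nothing beyond the definition of $\tildeZ{k}_A$ and the order of the single-qubit gate $\Z{k}$, so it is essentially a bookkeeping exercise packaged as a first-isomorphism-theorem statement.
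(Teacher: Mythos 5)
Your proof is correct, and it is a more carefully executed version of what the paper does in one line. The paper defines the map in the opposite direction, $\tildeZ{k}_A\mapsto\tildeindicator{A}\pmod{2^{k+1}}$, and asserts that it is a well-defined bijective homomorphism because the operators commute, the generators have matching order $2^{k+1}$, and only the identity maps to zero; this leaves implicit the check that the assignment on generators respects all relations among products of the $\tildeZ{k}_A$. You avoid that issue entirely by defining the exponentiation map $\Phi$ on the whole free abelian group $\ZZ[\ZZ_2^m]$, where well-definedness is automatic, computing $\ker\Phi=2^{k+1}\ZZ[\ZZ_2^m]$ from the order of the single-qubit gate, and restricting to $G=\langle\tildeindicator{A}\rangle$. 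Both arguments rest on the same two facts (commutativity and $\abs{\Z{k}}=2^{k+1}$), so the mathematical content is identical; yours buys rigor at the cost of a little machinery. One small caveat: your parenthetical claim that quotienting $G$ by $2^{k+1}G$ instead of by $G\cap2^{k+1}\ZZ[\ZZ_2^m]$ ``gives the same quotient'' is not true for general subgroups of a free abelian group (e.g.\ $G=2\ZZ\leq\ZZ$ with modulus $2$), so either justify it for this particular $G$ or simply delete the remark --- the proof correctly uses $G\cap2^{k+1}\ZZ[\ZZ_2^m]$ and does not need it.
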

\begin{proof}
    The map sending $\tildeZ{k}_A\mapsto \tildeindicator{A}\pmod{2^{k+1}}$ is a (surjective) homomorphism since the operators all commute and $\abs{\tildeZ{k}_A}= \abs{\tildeindicator{A}\pmod{2^{k+1}}}= 2^{k+1}$, and it is clearly injective as only $\tildeZ{k}_\emptyset=\eye$ maps to $\tildeindicator{\emptyset}=0^{2^m}$, the all zero string.
\end{proof}

We recall the decomposition of $\tildeindicator{A}$ given in \cref{sec:classical-RM-codes}:

\tildeindicatordecomposition*
Using \cref{lem: tilde indicator expression} and the isomorphism in \cref{lem: Zk isomorphism}, we can now prove the following generalization of \cref{lem: Z logical decomposition} to the case of arbitrary values of $k\in\ZZ_{\geq 0}$:
\begin{restatable}{theorem}{Zkdecomposition}\label{thm: Zk logical decomposition}
    Let $k\in\ZZ_{\geq 0}$, and consider $QRM_m(q,r)$. The standard subcube operators 
    \begin{equation}
        \br{\tildeZ{k}_{\standard{K}}\Bigmid K\in\mcQ_k}
    \end{equation}
    form a basis for the space of logical $\tildeZ{k}_A$ operators on $QRM_m(q,r)$. In particular, let $A\coloneqq x+\standard{K}\subcubeeq\ZZ_2^m$ be a subcube and let $x$ have minimal-weight in $A$. Recalling that $I_A\coloneqq \supp(x)$, we have
    \begin{equation}
        \tildeZ{k}_A\equiv \prod_{I\subseteq I_A\colon \abs{I}+\abs{K}\leq (k+1)r} \tildeZ{k}_{\standard{I\cup K}},
    \end{equation}
    up to Clifford stabilizers.
\end{restatable}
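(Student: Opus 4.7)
The plan is to transfer the statement to the additive world of signed indicator functions, apply the decomposition of $\tildeindicator{A}$ already at hand, and then use \cref{thm: subcube dimension implies logic} to discard the surplus terms.

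First I would invoke \cref{lem: Zk isomorphism} to identify the multiplicative group $\langle \tildeZ{k}_B \rangle$ with the additive group of signed indicator functions modulo $2^{k+1}$. Under this identification, $(\tildeZ{k}_B)^{2^j} = \tildeZ{k-j}_B$ corresponds to $2^j \tildeindicator{B} \pmod{2^{k+1}}$, and $-\tildeindicator{B}$ corresponds to $\tildeZ{k}_B^{\dagger}$. Applying \cref{lem: tilde indicator expression} to $A = x + \standard{K}$ and translating back yields the exact operator identity
\[
  \tildeZ{k}_A \;=\; \prod_{I\subseteq I_A} \tildeZ{k}_{\standard{I\cup K}} \;\cdot\; \prod_{i=1}^{\abs{I_A}} \prod_{\substack{I\subseteq I_A \\ \abs{I}=i}} \bigl(\tildeZ{k-i}_{e_{I_A\setminus I}+\standard{I\cup K}}\bigr)^{-1}.
\]

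Next I would show that every factor in the second (double) product is a Clifford stabilizer and can therefore be dropped modulo $\stabs{k-i}$. Writing $B_I \coloneqq e_{I_A\setminus I}+\standard{I\cup K}$ and using $I_A\cap K=\emptyset$, we have $\dim B_I = i+\abs{K}$. Under the hypothesis $\tildeZ{k}_A \in \errors{k}$, \cref{thm: subcube dimension implies logic} forces $\abs{K} \geq q+kr+1$, so for every $i\geq 1$,
\[
  i + \abs{K} \;\geq\; 1 + (q+kr+1) \;\geq\; kr+2 \;\geq\; (k-i+1)r+1,
\]
where the final step reduces to $ir \geq r-1$ (true because $i\geq 1$ and $r\geq 1$). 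Hence $\tildeZ{k-i}_{B_I}$, and so its inverse, lies in $\stabs{k-i}$. In the first product, the factor $\tildeZ{k}_{\standard{I\cup K}}$ lies in $\stabs{k}$ exactly when $\abs{I}+\abs{K}\geq (k+1)r+1$, again by \cref{thm: subcube dimension implies logic}. Discarding those stabilizer factors leaves precisely the product over $\{I\subseteq I_A : \abs{I}+\abs{K}\leq (k+1)r\}$ claimed in the theorem. Every surviving index satisfies $\abs{I\cup K}\geq \abs{K}\geq q+kr+1$ and $\abs{I\cup K}\leq (k+1)r$, so $I\cup K\in\mcQ_k$ and every remaining factor really is a standard subcube operator from the proposed generating set. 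This proves the decomposition, and hence that $\{\tildeZ{k}_{\standard{K}} : K\in\mcQ_k\}$ spans the group of logical $\tildeZ{k}_A$ operators; independence in the logical quotient would be checked separately, for instance by extending the symplectic pairing of \cref{lem: standard basis for QRM} to detect a single $K\in\mcQ_k$ using the multi-controlled-$Z$ action in \cref{lem: controlled-Z circuit action}.

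The main obstacle I expect is keeping the additive arithmetic mod $2^{k+1}$ consistent with the multiplicative identification sending $2^j \tildeindicator{B}$ to $\tildeZ{k-j}_B$, so that the minus signs in \cref{lem: tilde indicator expression} really produce adjoints and each factor lands at the correct Clifford level. Once that bookkeeping is clean, every remaining step reduces to a straightforward dimension count and a single invocation of \cref{thm: subcube dimension implies logic}.
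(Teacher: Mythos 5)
Your proposal is correct and follows essentially the same route as the paper's proof: pass to signed indicator functions via \cref{lem: Zk isomorphism}, apply \cref{lem: tilde indicator expression}, and use \cref{thm: subcube dimension implies logic} twice --- once to show the shifted-subcube factors $\tildeZ{k-i}_{e_{I_A\setminus I}+\standard{I\cup K}}$ are Clifford stabilizers (your inequality $i+\abs{K}\geq kr+2\geq(k-i+1)r+1$ is an equivalent variant of the paper's bound) and once to discard the standard-subcube factors with $\abs{I}+\abs{K}>(k+1)r$. The remark that independence in the logical quotient needs a separate check matches the paper, whose proof likewise establishes only the spanning/decomposition claim.
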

\begin{proof}
    Using \cref{lem: tilde indicator expression}, we have
    \begin{equation}\label{eq: tilde indicator equality modulo}
        \tildeindicator{A}\pmod{2^{k+1}} \equiv \sum_{I\subseteq I_A} \tildeindicator{\standard{I\cup J}} - \sum_{{ i=1}}^{2^k}\sum_{I\subseteq I_A\colon \abs{I}=i} 2^i\cdot\tildeindicator{e_{I_A\setminus I}+\standard{I\cup J}}.
    \end{equation}
    Recalling that $\Z{k}^{2^\ell}=\Z{k-\ell}$, applying the isomorphism from \cref{lem: Zk isomorphism} to \cref{eq: tilde indicator equality modulo} yields
    \begin{align}
        \tildeZ{k}_A &= \left(\prod_{I\subseteq I_A}\tildeZ{k}_{\standard{I\cup J}}\right)\cdot \left(
        \prod_{i=1}^{2^k}\bigg( \prod_{I\subseteq I_A\colon \abs{I}=i} \tildeZ{k-i}_{e_{I_A\setminus I}+\standard{I\cup J}}
        \bigg)
        \right)^\dagger,\\
        &= \left(\prod_{I\subseteq I_A}\tildeZ{k}_{\standard{I\cup J}}\right)\cdot \left(
        \prod_{i=1}^{2^k} U_i^\dagger,
        \right)
    \end{align}
    where we've defined
    \begin{equation}
        U_i\coloneqq \prod_{I\subseteq I_A\colon \abs{I}=i} \tildeZ{k-i}_{e_{I_A\setminus I}+\standard{I\cup J}}.
    \end{equation}
    Each $U_i$ is the product of $\tildeZ{k-i}$ operators acting on subcubes of dimension $\abs{I}+\abs{J}\geq i+q+kr+1$, since by \cref{thm: subcube dimension implies logic} $\abs{J}\geq q+kr+1$. Clearly $i+q+kr+1\geq (k-i + 1)r+1$, so \cref{thm: subcube dimension implies logic} implies that, in fact, each $U_i$ is the product of $(k-i)$-level Clifford stabilizers. As the $U_i$ operators are unitary their adjoints are also logical identity on the code space, so we have
    \begin{equation}
        \tildeZ{k}_A \equiv \prod_{I\subseteq I_A}\tildeZ{k}_{\standard{I\cup J}}.
    \end{equation}
    The desired result holds as by \cref{thm: subcube dimension implies logic}, for each $I$ such that $\abs{I\cup J}=\abs{I} + \abs{J}>(k+1)r$, $\tildeZ{k}_{\standard{I\cup J}}$ is also logical identity on the code space.
\end{proof}

\subsection{Standard subcube logic --- the phase operator case}\label{sec: phasetilde operator logic}
We first look at the case when $k=1$, so that $\tildeZ{1}_{\standard{K}}=\tildephase_{\standard{K}}$. From \cref{thm: subcube dimension implies logic}, $\tildephase_{\standard{K}}$ is a non-trivial logical Clifford operator if and only if $q+r+1\leq \abs{K}<2r+1$. For such a $K\subseteq S$, we will determine the logical circuits performed on $QRM_m(q,r)$ via the physical implementation of $\tildephase_{\standard{K}}$.

Consider the $J$-th logical qubit of $QRM_m(q,r)$ determined by the $\overline{Z}_J$ and $\overline{X}_J$ operators. In \cref{sec: standard subcube logic} we will prove \cref{clm: standard subcube operator conjugation rule}, which describes how the $\tildeZ{k}_{\standard{K}}$ operators conjugate logical $X$ operators. For now, we will simply state the implication of \cref{clm: standard subcube operator conjugation rule} for $\tildephase_{\standard{K}}$:
\begin{enumerate}
    \item If $J\not\subseteq K$ then $\tildephase_{\standard{K}}$ acts trivially on the $J$-th qubit.
    \item If $J\subset K$ then $\tildephase_{\standard{K}}$ commutes with $\overline{Z}_J$, and conjugates $\overline{X}_J$ as
    \begin{equation}\label{eq: phase tilde conjugation rule}
        \tildephase_{\standard{K}} \overline{X}_J \tildephase_{\standard{K}}^\dagger = \overline{X}_J \left(\prod_{I\subseteq J\colon \abs{I}+\abs{K}-\abs{J}\leq r} Z_{\standard{I\cup (K\setminus J)}}\right).
    \end{equation}
\end{enumerate}
Let $\mcF(K)_{\sim J}$ denote the collection of sets appearing in the product on the right-hand side of \cref{eq: phase tilde conjugation rule},
\begin{equation}\label{eq: local sets 1}
    \mcF(K)_{\sim J}\coloneqq\br{ J'\in\mcQ \Bigmid J' = I\cup (K\setminus J),\; I\subseteq J},
\end{equation}
so that
\begin{equation}\label{eq: phase tilde conjugation rule simple}
    \tildephase_{\standard{K}} \overline{X}_J \tildephase_{\standard{K}}^\dagger = \overline{X}_J \left(\prod_{J'\in\mcF(K)_{\sim J}} Z_{\standard{J'}}\right).
\end{equation}
We note a few things about the terms in the product on the right-hand side of \cref{eq: phase tilde conjugation rule simple}:
\begin{enumerate}
    \item We are guaranteed that $q+1\leq\abs{J'}\leq r$ for each $J'\in\mcF(K)_{\sim J}$, so $Z_{\standard{J'}}$ is the logical $Z$ operator $\overline{Z}_{J'}$.
    \item The $J$-th logical $Z$ operator cannot appear in the product. This would require setting $I=J$, but $\abs{J}+\abs{K}-\abs{J}\geq q+r+1> r$ is too large to be included in the product. 
    \item A set $J_2\in\mcF(K)_{\sim J_1}$ if and only if $J_1\in\mcF(K)_{\sim J_2}$: Take $I\subseteq J_1$ such that $J_2=I\cup (K\setminus J_1)$. Then $K\setminus J_2 = J_1\setminus I$, and so using the same $I$ we have that $I\cup (K\setminus J_2)= J_1$, and $\abs{I}+\abs{K}-\abs{J_2}=\abs{J_1}\leq r$, implying that $J_1\in\mcF(K)_{\sim J_2}$.
\end{enumerate}
These conditions are precisely the defining features of a circuit consisting of $CZ$ operators, and, indeed, $\tildephase_{\standard{K}}$ implements a logical $\overline{C^{\mcF(K)}Z}$ where $\mcF(K)$ is a collection of pairs of logical qubits acted on by $\overline{CZ}$ gates.

\begin{definition}\label{def: CZ sets}
    Given $K\subseteq S$ satisfying $q+r+1\leq\abs{K}\leq 2r$, define a collection of pairs of logical qubit indices, $\mcF(K)$, via
    \begin{equation*}
        \mcF(K) \coloneqq \br{\br{J_1,J_2}\subseteq \mcQ \Bigmid J_1\cup J_2=K },
    \end{equation*}
    where we recall that $\mcQ\coloneqq\br{J\mid q+1\leq\abs{J}\leq r}$.
\end{definition}

Recall from \cref{sec:CZ circuits} that given a collection of subsets of some index set $\mcI$, $\mcF\subseteq\powerset{\mcI}$, the set $\mcF_{\sim i}$ for $i\in \mcI$ is defined as
\begin{equation}\label{eq: local sets 2}
    \mcF_{\sim i} \coloneqq \br{I\setminus \br{i}\Bigmid I\in\mcF, i\in I}.
\end{equation}
That is, $\mcF_{\sim i}$ is the collection of all sets in $\mcF$ that contain $i$, but that have $i$ removed.

\begin{lemma}\label{lem: consistent definitions of sets}
    Given $K\subseteq S$ with $q+r+1\leq\abs{K}\leq 2r$ and $J\in\mcQ$, the definitions of $\mcF(K)_{\sim J}$ given by \cref{eq: local sets 1} and \cref{eq: local sets 2} are consistent.
\end{lemma}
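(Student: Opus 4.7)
The plan is to unfold both definitions and exhibit an explicit bijection between their elements, after making the natural identification of singletons with their single elements.

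Unfolding the general definition from equation \eqref{eq: local sets 2}: since $\mcF(K) \subseteq \powerset{\mcQ}$ consists of two-element subsets $\{J_1, J_2\}$ (note $J_1 \ne J_2$, since $|K| \ge q+r+1 > r \ge |J_i|$ rules out $J_1 = J_2$), every element of $\mcF(K)_{\sim J}$ obtained by the general construction is a singleton $\{J'\}$ with $\{J,J'\} \in \mcF(K)$. Identifying singletons with their single element, the general definition reads
\begin{equation*}
    \mcF(K)_{\sim J} = \br{J' \in \mcQ \Bigmid J \cup J' = K}.
\end{equation*}

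First I would check the forward direction: given $J' \in \mcQ$ with $J \cup J' = K$, use $J \subseteq K$ (the hypothesis under which the conjugation rule applies) to write $K \setminus J \subseteq J'$, set $I \coloneqq J' \cap J \subseteq J$, and observe $J' = I \cup (K \setminus J)$, which is of the form appearing in \eqref{eq: local sets 1}. Since $I \cap (K \setminus J) = \emptyset$, we have $|J'| = |I| + |K| - |J|$, and the membership $J' \in \mcQ$ gives $q+1 \le |J'| \le r$; the upper bound is exactly the constraint $|I| + |K| - |J| \le r$ from \eqref{eq: local sets 1}, while the lower bound is automatic since $|J'| \ge |K \setminus J| = |K| - |J| \ge (q+r+1) - r = q+1$.

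Conversely, given $I \subseteq J$ with $J' \coloneqq I \cup (K \setminus J)$ satisfying $|I| + |K| - |J| \le r$, I would verify $J \cup J' = J \cup I \cup (K \setminus J) = J \cup (K \setminus J) = K$ (using $J \subseteq K$), and check $J' \in \mcQ$: the upper bound $|J'| \le r$ follows from the hypothesis on $|I|$, and the lower bound $|J'| \ge q+1$ follows from $|K| - |J| \ge q+1$ as above. This yields the matching element of the general definition, completing the bijection. There is no real obstacle; the proof is entirely bookkeeping, the only subtlety being the identification of singletons with elements of $\mcQ$ and noting that $J' \ne J$ (forced by $|K| > |J|$) so that $\{J, J'\}$ is a genuine two-element subset of $\mcQ$.
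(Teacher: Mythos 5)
Your proof is correct and follows essentially the same route as the paper's: unfold both definitions and verify the two inclusions via $J\cup J' = J\cup(K\setminus J) = K$ in one direction and $I\coloneqq J\cap J'$ in the other. You are somewhat more careful than the paper's two-line argument—explicitly noting the reliance on $J\subseteq K$, checking the cardinality constraints, and observing $J'\neq J$—but the underlying argument is identical.
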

\begin{proof}
    ($\subseteq$) Suppose $J'\in \mcF(K)_{\sim J}$ as defined in \cref{eq: local sets 1}, i.e., $J'\in \mcQ$ and there is an $I\subseteq J$ such that $J'=I\cup (K\setminus J)$. Clearly, we have that $J\cup J' = J\cup (K\setminus J) = K$.

    ($\supseteq$) Suppose $J'\in \mcF(K)_{\sim J}$ as defined using \cref{eq: local sets 2} applied to the definition of $\mcF(K)$, i.e., $J\in\mcQ$ and $J\cup J'=K$. Defining $I\coloneqq J\cap J'$, we see that $J'= I\cup( (J\cup J')\setminus J)= I\cup (K\setminus J)$.
\end{proof}

We are now prepared to give the logical controlled-$Z$ circuit implemented by $\tildephase_{\standard{K}}$:

\begin{proposition}\label{prop: logical CZ circuit}
    If $K\subseteq S$ satisfies $q+r+1\leq \abs{K}\leq 2r$, then $\tildephase_{\standard{K}}$ implements the logical $CZ$ circuit corresponding to the pairs of qubits in $\mcF(K)$: 
    \begin{equation}
        \tildephase_{\standard{K}}\equiv\overline{C^{\mcF(K)}Z}=\prod_{\br{J_1,J_2}\subseteq\mcQ \colon J_1\cup J_2=K} \overline{CZ}_{J_1,J_2}.
    \end{equation}
\end{proposition}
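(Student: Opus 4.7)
The plan is to establish $\tildephase_{\standard{K}} \equiv \overline{C^{\mcF(K)}Z}$ by comparing the conjugation action of the two operators on a symplectic basis of logical Paulis. Since $k=1$ and the hypothesis $q+r+1 \leq |K| \leq 2r$ is exactly the condition $q+kr+1 \leq \dim\standard{K} \leq (k+1)r$ of \cref{clm: non-trivial logic signed}, the operator $\tildephase_{\standard{K}}$ lies in $\logs{1}$ and therefore preserves the code space. By \cref{lem:Clifford-stabilizer-equivalences}, together with the symplectic basis $\br{\overline{Z}_J,\overline{X}_J}_{J\in\mcQ}$ of \cref{lem: standard basis for QRM}, it then suffices to check that $\tildephase_{\standard{K}}$ and $\overline{C^{\mcF(K)}Z}$ conjugate each $\overline{Z}_J$ and each $\overline{X}_J$ to the same operator.

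For the $\overline{Z}_J$ generators the check is immediate: both operators are diagonal in the computational basis, so they commute with every $\overline{Z}_J$.

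For the $\overline{X}_J$ generators I will invoke the conjugation formula \cref{eq: phase tilde conjugation rule simple}, which is the $k=1$ specialization of \cref{clm: standard subcube operator conjugation rule} (to be proved in \cref{sec: standard subcube logic}). It asserts
\[
\tildephase_{\standard{K}}\,\overline{X}_J\,\tildephase_{\standard{K}}^\dagger \;=\; \overline{X}_J \prod_{J'\in\mcF(K)_{\sim J}} \overline{Z}_{J'},
\]
with $\mcF(K)_{\sim J}$ defined geometrically by \cref{eq: local sets 1}. On the logical side, \cref{lem: controlled-Z circuit action} yields
\[
\overline{C^{\mcF(K)}Z}\,\overline{X}_J\,\overline{C^{\mcF(K)}Z} \;=\; \overline{X}_J \prod_{J'\in\mcF(K)_{\sim J}} \overline{Z}_{J'},
\]
this time with $\mcF(K)_{\sim J}$ defined by the combinatorial localization \cref{eq: local sets 2}. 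The two right-hand sides then agree by \cref{lem: consistent definitions of sets}, which reconciles the two descriptions of $\mcF(K)_{\sim J}$, and moreover each $J'$ appearing on either side belongs to $\mcQ$, so $Z_{\standard{J'}}$ is genuinely the logical operator $\overline{Z}_{J'}$.

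The one genuine obstacle is the conjugation formula itself: all the real work---tracking how the even/odd Hamming-weight signs inside $\tildephase_{\standard{K}}$ combine with the shifted support $e_J+\standard{S\setminus J}$ of $\overline{X}_J$, and then recognizing the resulting product of physical $Z$'s, via a \cref{lem: Z logical decomposition}-style expansion, as a product of standard logical $\overline{Z}_{J'}$'s---is packaged into \cref{clm: standard subcube operator conjugation rule}. Modulo that claim, the present proposition is a purely combinatorial translation via the $CZ$-calculus of \cref{sec:CZ circuits}, with the matching of the two $\mcF(K)_{\sim J}$ descriptions doing the remaining bookkeeping.
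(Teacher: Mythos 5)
Your proposal is correct and follows essentially the same route as the paper's own proof: both reduce the claim to checking that $\tildephase_{\standard{K}}$ and $\overline{C^{\mcF(K)}Z}$ conjugate the symplectic basis identically, invoke \cref{eq: phase tilde conjugation rule simple} (deferring the substantive conjugation computation to \cref{clm: standard subcube operator conjugation rule}) on the physical side and \cref{lem: controlled-Z circuit action} on the logical side, and close the loop with \cref{lem: consistent definitions of sets}. You also correctly identify that all the genuine work lives in the deferred conjugation claim, which is exactly how the paper structures it.
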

\begin{proof}
    Clearly both $\tildephase_{\standard{K}}$ and $\overline{C^{\mcF(K)}Z}$ commute with the logical $Z$ operators, so we only must verify that they conjugate the logical $X$ operators in the same way. We have already shown via \cref{eq: phase tilde conjugation rule simple} and \cref{lem: consistent definitions of sets} that
    \begin{equation*}
        \tildephase_{\standard{K}} \overline{X}_J \tildephase_{\standard{K}}^\dagger = \overline{X}_J \left(\prod_{J'\in\mcF(K)_{\sim J}} \overline{Z}_{{J'}}\right),
    \end{equation*}
    which is precisely the conjugation rule for $\overline{C^{\mcF(K)}Z}$ as proven in \cref{lem: controlled-Z circuit action}.
\end{proof}

\subsection{Standard subcube logic}\label{sec: standard subcube logic}
By \cref{thm: Zk logical decomposition}, in order to fully characterize the logic that $\tildeZ{k}_A$ operators perform on $QRM_m(q,r)$, we need only understand the logic performed by standard subcube operators, $\tildeZ{k}_{\standard{K}}$. Recall now the definition of the logical space for the quantum Reed--Muller codes:
\QRMdef*
In particular, the logical $Z$ and $X$ operators are given by $\br{\overline{Z}_J\mid J\in \mcQ}$ and $\br{\overline{X}_J\mid J\in \mcQ}$, respectively. We further recall the collections $\mcQ_k\subseteq\powerset{S}$:
\kgatesets*

\cref{thm: Zk logical decomposition} justifies the ``index set'' terminology, as operators from $\br{\tildeZ{k}_{\standard{K}} \mid K\in\mcQ_k}$ can be used to construct any signed subcube operator $\tildeZ{k}_A\in\logs{k}$. Note that in the case $k=0$ this corresponds precisely to the fact that $\mcQ_0=\mcQ$ indexes the logical $Z$ operators of $QRM_m(q,r)$.

The logic implemented by $\tildeZ{k}_{\standard{K}}$ for $K\in\mcQ_k$ will ultimately be related to sets of logical qubits that form so-called ``minimal covers'' of $K$:
\minimalcovers*

Recall from \cref{sec:CZ circuits} the definition of $\mcF(K)_{\sim J}$:
\begin{equation}\label{eq: local sets inductive proof}
    \mcF(K)_{\sim J} = \br{\mcJ\setminus\br{J}\Bigmid J\in\mcJ\subseteq \mcQ ,\; \mcJ\in\mcF(K)}.
\end{equation}
Each $\mcJ'\in\mcF(K)_{\sim J}$ is called a \emph{partial minimal cover for $K$ relative to $J$}; they are precisely the collections of logical qubit indices for which $\mcJ'\cup\br{J}$ is a minimal cover for $K$.

In order to prove the main result of this section, \cref{thm: logical multi-controlled-Z circuit}, we will need to define a certain subset of generators that depend on a particular $K\in\mcQ_k$ as well as a particular logical qubit index $J\in\mcQ$:

\begin{definition}
    Let $J\in\mcQ$ be an index for a logical qubit, $K\in\mcQ_k$ an index of a $k$-th level logical operator, and $K'\subseteq K$ an arbitrary subset of $K$. The set $K'$ is said to be \emph{$\mcQ$-dense in $K$ relative to $J$} if (1) $K'$ is an index of a $(k-1)$-st level logical operator, $K'\in\mcQ_{k-1}$, and (2) if the union of $K'$ and $J$ is all of $K$, $K=K'\cup J$. The set $J\in\mcQ$ is often a fixed logical qubit index, so we will simply say that $K'$ \emph{is dense in} $K$ if it satisfies the mentioned conditions for the fixed choice of $J$.

    The collection of all subsets $K'\subset K$ that are dense in $K$ will be denoted by $\mcD_J(K)$:
    \begin{equation}
        \mcD_J(K) \coloneqq \br{ K'\subseteq K \Bigmid K'\in\mcQ_{k-1},\; K = K'\cup J}.
    \end{equation}
\end{definition}
Dense subsets of $K$ will typically appear in a different, yet equivalent, form:
\begin{lemma}\label{lem: alternate definition of dense}
    The collection $\mcD_J(K)$ can alternatively be defined as:
    \begin{equation}
        \mcD_J(K)\coloneqq\br{ K'\subset K \Bigmid K'\in\mcQ_{k-1},\; K' = I\cup (K\setminus J) \text{ for some } I\subseteq J}.
    \end{equation}
\end{lemma}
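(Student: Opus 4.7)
The lemma is a purely set-theoretic equivalence, so my plan is to prove the two inclusions of the proposed sets directly, using only elementary set algebra together with the dimension bounds baked into the definitions of $\mcQ_k$ and $\mcQ_{k-1}$.

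For the forward inclusion, suppose $K'$ lies in $\mcD_J(K)$ as originally defined, so $K' \subseteq K$, $K' \in \mcQ_{k-1}$, and $K' \cup J = K$. The natural candidate for $I$ is $I := K' \cap J$, which automatically satisfies $I \subseteq J$. I then verify $K' = I \cup (K \setminus J)$ by two containments: the condition $K' \cup J = K$ forces $K \setminus J \subseteq K'$ (every element of $K$ not in $J$ must come from $K'$), and $I \subseteq K'$ by construction, giving one containment; conversely, any $x \in K' \subseteq K$ is either in $J$ (whence in $I$) or in $K \setminus J$. Finally, the strict inclusion $K' \subsetneq K$ required by the alternate form follows from the dimension inequality $|K'| \leq kr < q + kr + 1 \leq |K|$, which is forced by $K' \in \mcQ_{k-1}$ and $K \in \mcQ_k$ (\cref{def: k gate sets}).

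For the reverse inclusion, suppose $K' \subset K$ with $K' \in \mcQ_{k-1}$ and $K' = I \cup (K \setminus J)$ for some $I \subseteq J$. I need to recover the original conditions $K' \subseteq K$ (immediate from $K' \subset K$) and $K = K' \cup J$. A direct computation gives
\begin{equation*}
K' \cup J \;=\; I \cup (K \setminus J) \cup J \;=\; (K \setminus J) \cup J \;=\; K \cup J,
\end{equation*}
using $I \subseteq J$. This equals $K$ precisely when $J \subseteq K$; note that the original formulation already forces $J \subseteq K$ (otherwise $K' \cup J$ contains elements outside $K$ and cannot equal $K$), and the same containment $I \cup (K \setminus J) \subseteq K$ forces $I \subseteq J \cap K$ in the alternate form, so the two descriptions consistently restrict attention to the same case $J \subseteq K$.

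The only subtle point in this proof is tracking the strict versus non-strict containment of $K'$ in $K$, which is resolved by the dimension-counting argument above; everything else is a line-by-line set manipulation. I do not anticipate any real obstacle.
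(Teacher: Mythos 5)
Your proof is correct and follows essentially the same route as the paper's (which is only two lines long): take $I\coloneqq K'\cap J$ for the forward inclusion and compute $K'\cup J=I\cup(K\setminus J)\cup J=K$ for the reverse. You simply fill in the details the paper omits, including the strict-containment count and the observation that both descriptions implicitly live in the case $J\subseteq K$, which is exactly the setting in which the lemma is later applied.
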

\begin{proof}\hspace{0em}

    ($\subseteq$) Suppose $K'$ satisfies the first definition. For $I\coloneqq K'\cap J\subseteq J$, $K'=I\cup(K\setminus J)$.

    ($\supseteq$) Suppose $K'$ satisfies the second definition. Clearly $K'\cup J = I\cup(K\setminus J)\cup J=K$.
\end{proof}

We note that, as a set of logical qubits, any minimal cover for $K\in\mcQ_k$, $\mcJ$, can be used to define a logical multi-controlled-$Z$ gate acting on $k+1$ logical qubits of $QRM_m(q,r)$:
\begin{equation*}
    \overline{C^\mcJ Z} \coloneqq \overline{C^{(k+1)}Z}_{\mcJ}\otimes \overline{\eye}_{\mcQ\setminus\mcJ}.
\end{equation*}

Similarly, a partial minimal cover for $K$ relative to $J$, $\mcJ'\subseteq\mcQ$, can be used to define a multi-controlled-$Z$ gate acting on $k$ qubits, which is guaranteed to act as identity on the $J$-th qubit:
\begin{equation*}
    \overline{C^{\mcJ'} Z} \coloneqq \overline{C^{(k)}Z}_{\mcJ'}\otimes \overline{\eye}_{\mcQ\setminus\br{\mcJ'\cup\br{J}}} \otimes\overline{\eye}_{\br{J}}.
\end{equation*}

Therefore, we can define logical multi-controlled-$Z$ circuits using the collections of minimal covers and partial minimal covers, $\prod_{\mcJ\in\mcF(K)}\overline{C^{\mcJ}Z}$ and $\prod_{\mcJ'\in\mcF(K)_J}\overline{C^{\mcJ'}Z}$, the latter of which acts as identity on the $J$-th qubit. The remainder of the section will be dedicated to proving our main result on the logic implemented by standard signed subcube operators:

\begin{restatable}{theorem}{logicalcircuitsigned}\label{thm: logical multi-controlled-Z circuit}
    For every $K\in\mcQ_k$, $\tildeZ{k}_{\standard{K}}$ implements the logical multi-controlled-$Z$ circuit corresponding to the collection of minimal covers of $K$:
    \begin{equation}
        \tildeZ{k}_{\standard{K}} \equiv \overline{C^{\mcF(K)}Z}.
    \end{equation}
\end{restatable}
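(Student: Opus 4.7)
The plan is to prove \cref{thm: logical multi-controlled-Z circuit} by induction on $k$, comparing how the two sides conjugate a symplectic basis of logical Paulis of $QRM_m(q,r)$. The base case $k=0$ is immediate: $\tildeZ{0}_{\standard{K}}=Z_{\standard{K}}=\overline{Z}_K$, and since $K\in\mcQ_0=\mcQ$, the only minimal cover of $K$ is $\mcJ=\{K\}$ itself, giving $\mcF(K)=\{\{K\}\}$ and $\overline{C^{\mcF(K)}Z}=\overline{Z}_K$. For the inductive step, both $\tildeZ{k}_{\standard{K}}$ and $\overline{C^{\mcF(K)}Z}$ are diagonal and hence commute with every $\overline{Z}_J$, so the proof reduces to showing they conjugate each logical $\overline{X}_J=X_{e_J+\standard{S\setminus J}}$ in the same way.

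The first main step is to derive the conjugation identity
\[
\tildeZ{k}_{\standard{K}}\,\overline{X}_J\,\tildeZ{k}_{\standard{K}}^\dagger \;=\; \overline{X}_J\cdot \tildeZ{k-1}_{\standard{K}\,\cap\,(e_J+\standard{S\setminus J})}.
\]
Here I would first compute the geometric intersection $\standard{K}\cap(e_J+\standard{S\setminus J})$: it is empty unless $J\subseteq K$, in which case it equals $e_J+\standard{K\setminus J}$, a subcube of dimension $\abs{K}-\abs{J}\geq q+(k-1)r+1\geq 1$. Since the intersection has positive dimension, it contains equal numbers of even- and odd-weight vertices, so the $\omega_k$ phases picked up from $\Z{k}XZ(k)^\dagger=\omega_k\Z{k-1}X$ (on even-weight qubits) cancel against the $\omega_k^{-1}$ phases from $\Z{k}^\dagger X\Z{k}=\omega_k^{-1}\Z{k-1}^\dagger X$ (on odd-weight qubits) via \cref{fact:even-has-no-phase}; moreover the result is again a \emph{signed} operator because the parity structure of $\ZZ_2^m$ is inherited by the intersection. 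If instead $J\not\subseteq K$, the intersection is empty so $\tildeZ{k}_{\standard{K}}$ commutes with $\overline{X}_J$, which matches the other side because $\mcF(K)_{\sim J}=\emptyset$ (any $\mcJ\in\mcF(K)$ with $J\in\mcJ$ would force $J\subseteq\bigcup\mcJ=K$).

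Next, in the case $J\subseteq K$, I would apply the decomposition \cref{thm: Zk logical decomposition} to $\tildeZ{k-1}_{e_J+\standard{K\setminus J}}$. The minimum-weight element of $e_J+\standard{K\setminus J}$ is $e_J$ itself (its support is disjoint from $K\setminus J$), so $I_A=J$, and the decomposition reads
\[
\tildeZ{k-1}_{e_J+\standard{K\setminus J}}\;\equiv\;\prod_{\substack{I\subseteq J\\ \abs{I}+\abs{K\setminus J}\leq kr}} \tildeZ{k-1}_{\standard{I\cup(K\setminus J)}}.
\]
By \cref{lem: alternate definition of dense}, the sets $K'=I\cup(K\setminus J)$ appearing here are exactly the dense subsets $\mcD_J(K)\subseteq \mcQ_{k-1}$, so the inductive hypothesis rewrites each factor as $\overline{C^{\mcF(K')}Z}$. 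The remaining obstacle—and really the combinatorial heart of the proof—is the identity
\[
\bigsqcup_{K'\in\mcD_J(K)} \mcF(K') \;=\; \mcF(K)_{\sim J},
\]
which I would prove as follows. Given $\mcJ'\in\mcF(K)_{\sim J}$, the set $K'\coloneqq \bigcup\mcJ'$ satisfies $K'\cup J=K$, has $\abs{K'}\leq kr$ (since $\abs{\mcJ'}=k$ and each element has size $\leq r$), and $\mcJ'\in\mcF(K')$. Conversely, given $K'\in\mcD_J(K)$ and $\mcJ'\in\mcF(K')$, one has $\abs{K'}\leq kr<q+kr+1\leq\abs{K}$, so $K'\neq K$, hence $J\not\subseteq K'$; in particular $J\notin\mcJ'$, so $\mcJ'\cup\{J\}\in\mcF(K)$ and $\mcJ'\in\mcF(K)_{\sim J}$. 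Disjointness follows because the map $\mcJ'\mapsto\bigcup\mcJ'$ determines $K'$ uniquely. Combining these steps yields $\tildeZ{k}_{\standard{K}}\overline{X}_J\tildeZ{k}_{\standard{K}}^\dagger\equiv \overline{X}_J\cdot\overline{C^{\mcF(K)_{\sim J}}Z}$, which by \cref{lem: controlled-Z circuit action} is exactly the conjugation rule of $\overline{C^{\mcF(K)}Z}$; the theorem then follows from \cref{lem:Clifford-stabilizer-equivalences}. The main obstacle is the combined bookkeeping of steps~1 and~3—carefully tracking phases and the signed structure under conjugation, and then showing that the partition of $\mcF(K)_{\sim J}$ by $K'=\bigcup\mcJ'$ aligns exactly with the terms produced by the decomposition theorem—whereas the other steps are mostly direct calculations.
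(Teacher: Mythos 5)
Your proposal is correct and follows essentially the same route as the paper's proof: induction on $k$ comparing conjugation of the symplectic logical basis, the phase-free conjugation identity reducing to $\tildeZ{k-1}$ on the intersection $e_J+\standard{K\setminus J}$, the decomposition into standard subcube operators indexed by the dense sets $\mcD_J(K)$, and the combinatorial identity $\bigsqcup_{K'\in\mcD_J(K)}\mcF(K')=\mcF(K)_{\sim J}$ (the paper's Steps I--III). The only nitpicks are cosmetic: the exact operator identity carries a dagger on the $\tildeZ{k-1}$ factor when $\overline{X}_J$ is written on the left (repaired, as you implicitly do, by logical Hermiticity), and in the forward direction of the combinatorial identity you should also record the lower bound $\abs{K'}\geq\abs{K}-\abs{J}\geq q+(k-1)r+1$ needed for $K'\in\mcQ_{k-1}$.
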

The proof will follow from two claims. The first claim is given in \textbf{Step \hyperlink{Step I}{I}}, and concerns the composition of multi-controlled-$Z$ circuits defined using the three collections we have defined--- minimal covers, partial minimal covers, and dense sets. The second claim, given in \textbf{Step \hyperlink{Step I}{II}}, details how a given $\tildeZ{k}_{\standard{K}}$ operator conjugates the logical $X$ operators of $QRM_m(q,r)$. We restate and prove \cref{thm: logical multi-controlled-Z circuit} in \textbf{Step \hyperlink{Step III}{III}}, using these two ingredients.

\hypertarget{Step I}{\paragraph{Step I: Composition of multi-controlled-$Z$ circuits.}}
For a fixed logical qubit index, $J\in\mcQ$, and a fixed $k$-th level operator index, $K\in\mcQ_k$, our first goal is to relate the collection of partial minimal covers of $K$ to the 
collection of dense subsets of $K$. The culmination of this effort will be the following result on the composition of logical multi-controlled-$Z$ circuits defined using these collections:

\begin{claim}\label{clm: multi-controlled-Z circuit composition}
    Let $J\in\mcQ$ and $K\in\mcQ_k$ with $J\subseteq K$. Then
    \begin{equation}\label{eq: logical circuit composition}
        \overline{C^{\mcF(K)_{\sim J}}Z} = \prod_{K'\in\mcD_J(K)} \overline{C^{\mcF(K')}Z}.
    \end{equation}
\end{claim}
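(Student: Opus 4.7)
My plan is to establish a natural bijection $\Phi\colon \mcF(K)_{\sim J} \longleftrightarrow \bigsqcup_{K'\in\mcD_J(K)} \mcF(K')$ and then observe that, under this correspondence, each size-$k$ subset $\mcJ'$ of $\mcQ$ on the left produces exactly the $k$-qubit multi-controlled-$Z$ gate $\overline{C^{\mcJ'}Z}$ that appears (exactly once) on the right. Since both products are over commuting, individually involutory gates, the claim reduces to a set-level identity between the two index families.

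The map is $\Phi(\mcJ')\coloneqq \bigl(\bigcup \mcJ',\,\mcJ'\bigr)$, with inverse $(K',\mcJ)\mapsto \mcJ$. The substantive work is checking that $\Phi$ lands in the target. Given $\mcJ'\in\mcF(K)_{\sim J}$, set $K'\coloneqq\bigcup\mcJ'$. By definition of the ``$\sim J$'' notation, $\mcJ'\cup\{J\}\in\mcF(K)$ with $J\notin\mcJ'$, so $|\mcJ'|=k$ and $K'\cup J=K$. To see $K'\in\mcQ_{k-1}$: an upper bound of $|K'|\le kr$ follows because $\mcJ'$ is a union of $k$ sets each of size at most $r$; a lower bound of $|K'|\ge |K|-|J|\ge (q+kr+1)-r = q+(k-1)r+1$ follows from $K\in\mcQ_k$ and $J\in\mcQ$. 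Hence $K'\in\mcD_J(K)$ (via \cref{lem: alternate definition of dense} with $I = J\cap K'$), and $\mcJ'$ is a size-$k$ cover of $K'$, i.e., $\mcJ'\in\mcF(K')$.

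For the inverse direction I take $K'\in\mcD_J(K)$ and $\mcJ\in\mcF(K')$, and I need $\mcJ\in\mcF(K)_{\sim J}$. The delicate point---and what I expect to be the main obstacle---is verifying $J\notin\mcJ$, so that $|\mcJ\cup\{J\}|=k+1$. If instead $J\in\mcJ$ then $J\subseteq\bigcup\mcJ=K'$, which combined with $K'\cup J=K$ gives $K=K'$; but $|K|\ge q+kr+1 > kr \ge |K'|$, a contradiction. Once $J\notin\mcJ$ is established, $\mcJ\cup\{J\}$ is a $(k+1)$-element collection from $\mcQ$ that covers $K$, so by the minimality count built into \cref{def: minimal covers} it is a minimal cover of $K$, and therefore $\mcJ\in\mcF(K)_{\sim J}$. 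The two maps are clearly mutually inverse.

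With the bijection in hand,
\[\overline{C^{\mcF(K)_{\sim J}}Z} \;=\; \prod_{\mcJ'\in\mcF(K)_{\sim J}}\overline{C^{\mcJ'}Z} \;=\; \prod_{K'\in\mcD_J(K)}\;\prod_{\mcJ\in\mcF(K')}\overline{C^{\mcJ}Z} \;=\; \prod_{K'\in\mcD_J(K)}\overline{C^{\mcF(K')}Z},\]
which is the desired identity. Everything except the strict inequality $|K'|<|K|$ excluding $J\in\mcJ$ is bookkeeping: the cover and minimality conditions force the collections of size-$k$ multi-controlled-$Z$ gates on the two sides to coincide.
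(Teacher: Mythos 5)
Your proof is correct and follows essentially the same route as the paper: the bijection $\Phi(\mcJ')=(\bigcup\mcJ',\mcJ')$ packages in one step what the paper splits into \cref{lem: union of minimal covers} (the identity $\bigcup_{K'\in\mcD_J(K)}\mcF(K')=\mcF(K)_{\sim J}$, proved with the same cardinality bounds $q+(k-1)r+1\le|K'|\le kr$) and \cref{lem: minimal covers are disjoint} (which guarantees the union is disjoint, so no gates cancel). Your exclusion of $J\in\mcJ$ by the strict inequality $|K|>kr\ge|K'|$ is the same fact the paper derives from $\mcQ_{k-1}\cap\mcQ_k=\emptyset$.
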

That is, given a dense set $K'\in\mcD_J(K)$, we can consider the collection of sets of logical qubits that it defines via \emph{its own} collection of minimal covers, $\mcF(K')$. \cref{clm: multi-controlled-Z circuit composition} says that the composition of the multi-controlled-$Z$ circuits defined by the minimal covers of {all} dense subsets of $K$ is precisely the multi-controlled-$Z$ circuit defined by the collection of partial minimal covers of $K$.

\cref{clm: multi-controlled-Z circuit composition} follows from two lemmas. First, it is a simple fact that no set of logical qubits can be a minimal cover for two different $K_1,K_2\in\mcQ_k$:

\begin{lemma}\label{lem: minimal covers are disjoint}
    Let $K_1,K_2\subseteq S$ be disjoint subsets of generators, $K_1\neq K_2$. The intersection of the collections of their minimal covers is empty, $\mcF(K_1)\cap \mcF(K_2)=\emptyset$.
\end{lemma}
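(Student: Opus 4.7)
The plan for this lemma is almost immediate from the definition of a minimal cover. Recall that, per \cref{def: minimal covers}, membership $\mcJ\in\mcF(K)$ requires in particular that $\mcJ$ is a cover of $K$, i.e., $\bigcup_{J\in\mcJ} J = K$. So if a single collection $\mcJ$ were to belong to both $\mcF(K_1)$ and $\mcF(K_2)$, the union of its members would have to equal both $K_1$ and $K_2$, forcing $K_1 = K_2$.

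Concretely, I would argue by contradiction. Suppose there exists some $\mcJ \in \mcF(K_1) \cap \mcF(K_2)$. Applying the cover condition of \cref{def: minimal covers} to each of $K_1$ and $K_2$ separately yields
\begin{equation*}
    K_1 \;=\; \bigcup_{J\in\mcJ} J \;=\; K_2,
\end{equation*}
contradicting the hypothesis $K_1 \neq K_2$. Hence $\mcF(K_1)\cap\mcF(K_2) = \emptyset$.

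A minor interpretive remark: the hypothesis is stated as ``disjoint'' with the added condition $K_1\neq K_2$, but the argument above uses only distinctness (``disjoint'' evidently implies $K_1\neq K_2$ whenever at least one is non-empty, and the size constraint $|\mcJ|=k+1$ in \cref{def: minimal covers} combined with $|J|\leq r$ for $J\in\mcQ$ precludes $\mcF(\emptyset)$ being a concern). Either reading yields the same one-line derivation, and no minimality or size information about $\mcJ$ is needed beyond the cover property. There is therefore no substantive obstacle; this lemma is essentially bookkeeping ensuring that, in the composition identity of \cref{clm: multi-controlled-Z circuit composition}, the collections $\mcF(K')$ indexed by distinct dense subsets $K'\in\mcD_J(K)$ contribute disjoint families of multi-controlled-$Z$ gates and so may be combined without overlap.
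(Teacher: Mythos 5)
Your argument is exactly the paper's proof: assume $\mcJ$ lies in both collections, apply the cover condition twice to get $K_1=\bigcup_{J\in\mcJ}J=K_2$, contradiction. Correct, and identical in approach (the paper likewise uses only distinctness of $K_1$ and $K_2$, not disjointness).
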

\begin{proof}
    If $\mcJ\in\mcF(K_1)\cap \mcF(K_2)$ then it covers both $K_1$ and $K_2$, and so $K_1=\bigcup_{J\in\mcJ}J=K_2$.
\end{proof}

Next, we show that the collection of partial minimal covers of $K$ is given by the union of all minimal covers of dense subsets of $K$.
\begin{lemma}\label{lem: union of minimal covers}
    Let $J\in\mcQ$ and $K\in\mcQ_k$ with $J\subseteq K$. Then $\bigcup_{K'\in\mcD_J(K)} \mcF(K') = \mcF(K)_{\sim J}$.
\end{lemma}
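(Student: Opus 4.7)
My plan is to prove the set equality directly by showing both inclusions, with the key observation that a minimal cover $\mcJ$ of $K$ containing $J$ corresponds bijectively to a pair consisting of a dense subset $K'\in\mcD_J(K)$ together with a minimal cover $\mcJ'$ of $K'$ (of size $k$, not containing $J$). The bijection sends $\mcJ\mapsto(\bigcup_{J''\in\mcJ\setminus\{J\}}J'',\ \mcJ\setminus\{J\})$, and its inverse sends $(K',\mcJ')\mapsto \mcJ'\cup\{J\}$.

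For the $(\supseteq)$ direction, I would start with $\mcJ'\in\mcF(K)_{\sim J}$, i.e., $\mcJ=\mcJ'\cup\{J\}\in\mcF(K)$ with $J\notin \mcJ'$ and $|\mcJ'|=k$. I would define $K'\coloneqq\bigcup_{J''\in\mcJ'}J''$; by construction $\mcJ'$ covers $K'$ and has $k$ elements, so $\mcJ'\in\mcF(K')$ provided that $K'\in\mcQ_{k-1}$. The upper bound $|K'|\leq kr$ is immediate from $|\mcJ'|=k$ and $|J''|\leq r$ for each $J''\in\mcJ'$. The lower bound $|K'|\geq q+(k-1)r+1$ follows from $K=K'\cup J$, $|K|\geq q+kr+1$ (since $K\in\mcQ_k$), and $|J|\leq r$. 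Together with $K'\cup J = K$, this gives $K'\in\mcD_J(K)$.

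For the $(\subseteq)$ direction, I would take $K'\in\mcD_J(K)$ and $\mcJ'\in\mcF(K')$ and form $\mcJ\coloneqq\mcJ'\cup\{J\}$. To conclude $\mcJ\in\mcF(K)$, I need $|\mcJ|=k+1$, which requires $J\notin\mcJ'$. Here is the one subtle step: since $\mcJ'$ covers $K'$, $J\in\mcJ'$ would force $J\subseteq K'$, and combined with $K=K'\cup J$ this would give $K=K'$, contradicting $|K|\geq q+kr+1>kr\geq|K'|$ (using $K\in\mcQ_k$ and $K'\in\mcQ_{k-1}$). Hence $J\notin\mcJ'$, $|\mcJ|=k+1$, and $\bigcup_{J''\in\mcJ}J''=K'\cup J=K$, so $\mcJ\in\mcF(K)$ and $\mcJ'=\mcJ\setminus\{J\}\in\mcF(K)_{\sim J}$.

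The bookkeeping on the dimensions of $K'$ will be the main (but mild) obstacle, as it is where the specific quantitative definitions of $\mcQ_k$ and $\mcQ_{k-1}$ get used, and it also guarantees the disjointness needed to make the bijection above well-defined. Everything else is straightforward set manipulation. I would also briefly note that \cref{lem: alternate definition of dense} ensures $\mcD_J(K)$ may be used interchangeably in the two forms when needed in downstream applications (e.g., in the proof of \cref{clm: multi-controlled-Z circuit composition}).
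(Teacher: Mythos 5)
Your proof is correct and follows essentially the same route as the paper's: both inclusions are established by the same correspondence $\mcJ'\leftrightarrow\mcJ'\cup\{J\}$, with identical bookkeeping showing $K'=\bigcup_{J''\in\mcJ'}J''$ lands in $\mcQ_{k-1}$ (the bounds $|K'|\le kr$ and $|K'|\ge q+(k-1)r+1$) and the same argument that $J\notin\mcJ'$ via the contradiction $K'=K$ against the disjointness of $\mcQ_{k-1}$ and $\mcQ_k$. No gaps.
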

\begin{proof}
    ($\subseteq$) Take $\mcJ'\in\mcF(K')$ for some $K'\in\mcD_J(K)$. By definition, $\mcJ'\in\mcF(K)_{\sim J}$ if and only if $\mcJ'\cup\br{J}$ is a minimal cover for $K$. We verify the two conditions for $\mcJ'$ to be a minimal cover: 
    \begin{enumerate}[leftmargin=*]
        \item $\mcJ'\cup\br{J}$ is a cover for $K$: We have assumed that $\mcJ'$ is a cover for $K'$ and that $K'$ dense in $K$ relative to $J$. Thus, by definition, $\left(\bigcup_{J'\in\mcJ'}J'\right)\cup J=K'\cup J = K$.
        \item $\abs{\mcJ'\cup\br{J}} =k+1$: First, note that $J\notin\mcJ$; otherwise $K'=K$, which cannot happen as $K'\in\mcQ_{k-1}$, $K\in\mcQ_k$, and $\mcQ_{k-1}\cap\mcQ_k=\emptyset$. So, $\abs{\mcJ'\cup\br{J}} =\abs{\mcJ'}+1$. Now, since $\mcJ'$ is a minimal cover for $K'\in\mcQ_{k-1}$ we know that $\abs{\mcJ'}=(k-1)+1 =k$, so $\abs{\mcJ'\cup\br{J}} =k+1$.
    \end{enumerate}

    ($\supseteq$) Take $\mcJ'\in \mcF(K)_{\sim J}$ and define $K'\coloneqq \bigcup_{J'\in\mcJ'} J'$. We will show (1) $\mcJ'$ is a minimal cover for $K'$ and (2) $K'$ is an almost-covering set for $K$, which together imply the desired result. We will first show that $K'\in\mcQ_{k-1}$.
    
    Note that as $\mcJ'\in \mcF(K)_{\sim J}$, by assumption we know that $J\cup K'=K$ and that $\abs{\mcJ'}=k$. 
    As any set in $\mcQ$ has at most $r$ elements, we can upper bound
    \begin{align*}
        \abs{K'} &\leq \sum_{J'\in\mcJ'}\abs{J'},\\
        &\leq kr,\\
        &<\left((k-1)+1\right)r+1.
    \end{align*}
    As $K' \supseteq \left(J\cup K' \right)\setminus J=K\setminus J$, so we can lower bound
    \begin{align*}
        \abs{K'} &\geq \abs{K}-\abs{J},\\
        \since{$K\in\mcQ_k$} &\geq q+kr+1-\abs{J},\\
        \since{$J\in\mcQ$} &\geq q+\big(k-1\big)r+1.
    \end{align*}
    The given bounds on $\abs{K'}$ are precisely the conditions for $K'\in\mcQ_{k-1}$.

    Now, $\mcJ'$ is, by definition, a cover of $K'$, and since $\abs{\mcJ'}=k=(k-1)+1$ we have that $\mcJ'$ is a minimal cover for $K'$. Further, since $K'\cup J=K$ and $K'\in\mcQ_{k-1}$, $K'$ satisfies the condition for an almost-covering of $K$. Thus $\mcJ'\in\mcF(K')$ for some $K'\in\mcD_J(K)$.
\end{proof}

We now prove the desired composition result:

\begin{proof}[Proof of \cref{clm: multi-controlled-Z circuit composition}]
    Consider $\prod_{K'\in\mcD_J(K)} \overline{C^{\mcF(K')}Z}$. By \cref{lem: minimal covers are disjoint} we are guaranteed that no $\overline{C^{\mcJ_1} Z}$ for $\mcJ_1\in\mcF(K_1)$ can cancel with a $\overline{C^{\mcJ_2} Z}$ for $\mcJ_2\in\mcF(K_2)$, $K_2\neq K_1$ for $K_1,K_2\in\mcD_J(K)$. Thus
    \begin{align*}
        \prod_{K'\in\mcD_J(K)} \overline{C^{\mcF(K')}Z} &= \prod_{K'\in\mcD_J(K)} \prod_{\mcJ'\in\mcF(K')}\overline{C^{\mcJ'}Z},\\
        \since{\cref{lem: minimal covers are disjoint}}&= \prod_{\mcJ'\in \bigcup_{K'\in\mcD_J(K)} \mcF(K')} \overline{C^{\mcJ'}Z},\\
        \since{\cref{lem: union of minimal covers}} &= \prod_{\mcJ'\in\mcF(K)_{\sim J}}\overline{C^{\mcJ'}Z}.
    \end{align*}
\end{proof}

\hypertarget{Step II}{\paragraph{Step II: Conjugating logical Pauli operators.}}
Our next goal is to understand how $\tildeZ{k}_{\standard{K}}$ conjugates the logical Pauli operators of $QRM_m(q,r)$.

\begin{claim}\label{clm: standard subcube operator conjugation rule}
    Let $J\subseteq K$ for $J\in\mcQ$, $K\in\mcQ_k$, and $k\geq 1$. Consider the operator $\tildeZ{k}_{\standard{K}}$ acting on $QRM_m(q,r)$.
    \begin{enumerate}
        \item $\tildeZ{k}_{\standard{K}}$ commutes with every logical $Z$ operator.
        \item Up to Clifford stabilizers, $\tildeZ{k}_{\standard{K}}$ conjugates a logical $\overline{X}_J$ operator as
        \begin{equation}
            \tildeZ{k}_{\standard{K}} \overline{X}_J \tildeZ{k}_{\standard{K}}^\dagger \equiv \begin{cases}
                \overline{X}_J, &\text{if } J\not\subseteq K,\\
                \overline{X}_J \left(\prod_{K'\in\mcD_J(K)} \tildeZ{k-1}_{\standard{K'}}\right),  &\text{otherwise}.
            \end{cases}
        \end{equation}
    \end{enumerate} 
\end{claim}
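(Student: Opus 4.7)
Part (1) is immediate: $\tildeZ{k}_{\standard{K}}$ is a tensor product of operators diagonal in the computational basis, and each logical $\overline{Z}_J = Z_{\standard{J}}$ is likewise diagonal, so the two commute exactly (not merely up to stabilizers).

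For Part (2), I will invoke the phase-free conjugation identity from Fact~\ref{fact: tildeZk-conjugation-identity}, which (when applicable) gives $\tildeZ{k}_A X_B \tildeZ{k}_A^\dagger = X_B\, \tildeZ{k-1}_{A\cap B}^\dagger$. Setting $A = \standard{K}$ and $B = e_J + \standard{S\setminus J}$ (so that $X_B = \overline{X}_J$), the first subtask is to identify $\standard{K} \cap (e_J + \standard{S\setminus J})$. A bit string $x$ lies in this intersection iff $\supp(x) \subseteq K$ and $x_i = 1$ for every $i \in J$, with entries on $K \setminus J$ unconstrained. Hence the intersection is empty whenever $J \not\subseteq K$, yielding the trivial first case of the claim; and when $J \subseteq K$ the intersection equals $e_J + \standard{K \setminus J}$, a subcube of dimension $|K|-|J| \geq (q+kr+1)-r = q+(k-1)r+1 \geq 1$ (using $k \geq 1$). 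Thus Fact~\ref{fact: tildeZk-conjugation-identity} applies and
\[
\tildeZ{k}_{\standard{K}}\, \overline{X}_J\, \tildeZ{k}_{\standard{K}}^\dagger \;=\; \overline{X}_J\, \tildeZ{k-1}_{e_J + \standard{K \setminus J}}^\dagger.
\]

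The second subtask is to expand the $\tildeZ{k-1}$ factor using Theorem~\ref{thm: Zk logical decomposition}. Because $J$ and $K \setminus J$ are disjoint, $e_J$ is the unique minimum-weight element of the subcube $e_J + \standard{K\setminus J}$, so in the notation of that theorem $I_A = J$ and the cube's type is $K \setminus J$. Applying the theorem at level $k-1$ gives
\[
\tildeZ{k-1}_{e_J + \standard{K \setminus J}} \;\equiv\; \prod_{\substack{I \subseteq J \\ |I|+|K\setminus J| \leq kr}} \tildeZ{k-1}_{\standard{I \cup (K\setminus J)}}
\]
up to Clifford stabilizers. Setting $K' \coloneqq I \cup (K \setminus J)$, I must verify that the range of $K'$ as $I$ varies coincides with $\mcD_J(K)$ as characterized by Lemma~\ref{lem: alternate definition of dense}. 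The upper bound $|K'| \leq kr$ is exactly the summation condition; the lower bound $|K'| \geq q+(k-1)r+1$ required for $K' \in \mcQ_{k-1}$ is automatic from $|K'| \geq |K\setminus J| = |K|-|J| \geq q+(k-1)r+1$. Finally, Corollary~\ref{cor: tilde operators are logically Hermitian} allows me to drop the dagger on each factor (each $\tildeZ{k-1}_{\standard{K'}}$ with $K' \in \mcQ_{k-1}$ is logically Hermitian), producing the stated form $\overline{X}_J \prod_{K' \in \mcD_J(K)} \tildeZ{k-1}_{\standard{K'}}$.

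The only technical work is the index-set bookkeeping in the last step; the geometric content is entirely captured by the intersection calculation. No ingredient beyond Fact~\ref{fact: tildeZk-conjugation-identity}, Theorem~\ref{thm: Zk logical decomposition}, Lemma~\ref{lem: alternate definition of dense}, and Corollary~\ref{cor: tilde operators are logically Hermitian} appears necessary, and the hypothesis $k \geq 1$ enters only to guarantee that $\tildeZ{k-1}$ is defined and that the intersection has positive dimension.
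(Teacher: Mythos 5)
Your proposal is correct and follows essentially the same route as the paper's proof: compute the intersection $\standard{K}\cap(e_J+\standard{S\setminus J})=e_J+\standard{K\setminus J}$, bound its dimension by $q+(k-1)r+1\geq 1$ to get phase-freeness, decompose via \cref{thm: Zk logical decomposition}, drop the dagger via \cref{cor: tilde operators are logically Hermitian}, and match the index sets with \cref{lem: alternate definition of dense}. The only (immaterial) differences are that you apply the Hermiticity corollary factor-by-factor after decomposing rather than to $\tildeZ{k-1}_{e_J+\standard{K\setminus J}}$ before, and you cite \cref{fact: tildeZk-conjugation-identity} directly rather than \cref{lem: signed operator phase free condition}.
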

\begin{proof}
    The first assertion is trivial, so we only prove the second.
    
    As $e_J$ is the minimal Hamming weight element of $e_J+\standard{S\setminus J}$, clearly $\standard{K}\cap (e_J+\standard{S\setminus J})\neq\emptyset$ if and only if $e_J\in\standard{K}$. Thus, $J\not\subseteq K$ is precisely the case that $\tildeZ{k}_{\standard{K}}$ and $\overline{X}_J$ have disjoint support and therefore commute. For the rest of the proof suppose that $J\subseteq K$, so 
    \begin{align}
        \standard{K}\cap (e_J+\standard{S\setminus J})&=e_J+\standard{K\cap (S\setminus J)},\\
        &=e_J+\standard{K\setminus J}.
    \end{align}
    We now apply \cref{thm: subcube dimension implies logic} to both $K$ and $J$ to bound \begin{align}
        \dim(e_J+\standard{K\setminus J})&=\abs{K}-\abs{J},\\
        &\geq q+kr+1 -\abs{J},\\
        &\geq q+kr+1-r,\\
        &= q + (k-1)r+1,
    \end{align}
    We have assumed $k\geq 1$, thus $\dim(e_J+\standard{K\setminus J})\geq 1$ and $\tildeZ{k}_{\standard{K}} \overline{X}_J \tildeZ{k}_{\standard{K}}^\dagger$ is phase-free by \cref{lem: signed operator phase free condition}. By definition, this means that
    \begin{equation}
        \tildeZ{k}_{\standard{K}} \overline{X}_J \tildeZ{k}_{\standard{K}}^\dagger= \left( \tildeZ{k-1}_{e_{J}+\standard{K\setminus J}} \right)\overline{X}_J.
    \end{equation}
    Further, the lower bound $\dim(e_J+\standard{K\setminus J})\geq q+(k-1)r+1$ implies via \cref{thm: subcube dimension implies logic} that $\tildeZ{k-1}_{e_{J}+\standard{K\setminus J}}\in\errors{k-1}$, and so by \cref{cor: tilde operators are logically Hermitian} the operator $\tildeZ{k-1}_{e_{J}+\standard{K\setminus J}}$ is logically equivalent to its Hermitian conjugate. Thus, we have
    \begin{align}
        \tildeZ{k}_{\standard{K}} \overline{X}_J \tildeZ{k}_{\standard{K}}^\dagger&=\overline{X}_J\left( \tildeZ{k-1}_{e_{J}+\standard{K\setminus J}} \right)^\dagger,\\
        \since{\cref{cor: tilde operators are logically Hermitian}}&\equiv \overline{X}_J\left( \tildeZ{k-1}_{e_{J}+\standard{K\setminus J}} \right),\\
        \since{\cref{thm: Zk logical decomposition}}
        &\equiv \overline{X}_J \left(\prod_{I\subseteq J\colon \abs{I}+\abs{K}-\abs{J}\leq kr} \tildeZ{k-1}_{\standard{I\cup (K\setminus J)}}\right). \label{eq: conjugation final step}
    \end{align}
    In the last step we utilized \cref{thm: Zk logical decomposition}, which gives the decomposition of an arbitrary subcube operator as a product of standard subcube operators. The claim now holds by \cref{lem: alternate definition of dense}, which says that the sets appearing in the product in \cref{eq: conjugation final step} correspond precisely to the collection of dense subsets of $K$ relative to $J$, $\mcD_J(K)$.
\end{proof}

\hypertarget{Step III}{\paragraph{Step III: Implemented logic.}}
We are now prepared to describe the logic performed by standard subcube operators using \cref{clm: multi-controlled-Z circuit composition} and \cref{clm: standard subcube operator conjugation rule}.

\logicalcircuitsigned*
\begin{proof}
    Induction on $k$. If $k=0$ then $\mcF(K)=\br{\br{K}}$ and $\tildeZ{0}_{\standard{K}}=\overline{Z}_K = \overline{C^{\br{\br{K}}}Z}$, so the result holds in the base case. Suppose now that the result holds for all $K'\in\mcQ_{k}$, and choose $K\in\mcQ_{k+1}$. We seek to show that $\tildeZ{k}_{\standard{K}}$ and $\overline{C^{\mcF(K)}Z}$ conjugate logical Pauli operators in the same way.
    
    Consider $J\in\mcQ$. By \cref{clm: standard subcube operator conjugation rule}, $\tildeZ{k}_{\standard{K}}$ commutes with $\overline{Z}_J$ and maps
    \begin{equation}
        \tildeZ{k}_{\standard{K}} \overline{X}_J \tildeZ{k}_{\standard{K}}^\dagger \equiv \begin{cases}
            \overline{X}_J, &\text{if } J\not\subseteq K,\\
            \overline{X}_J \left(\prod_{K'\in\mcD_J(K)} \tildeZ{k-1}_{\standard{K'}}\right),  &\text{otherwise}.
        \end{cases}
    \end{equation}

    By \cref{lem: controlled-Z circuit action}, $\overline{C^{\mcF(K)}Z}$ commutes with $\overline{Z}_J$ and maps
    \begin{equation*}
        \left(\overline{C^{\mcF(K)}Z} \right)\overline{X}_J \left(\overline{C^{\mcF(K)}Z}\right) = \overline{X}_J \left(\overline{C^{\mcF(K)_{\sim J}}Z}\right).
    \end{equation*}
    We proceed in cases:
    
    \noindent\textbf{I.} ($J\not\subseteq K$) $\mcF(K)_{\sim J}=\emptyset$ in this case, as any union $J\cup\left(\bigcup_{J'\in\mcJ'} J'\right)$ for $\mcJ'\subseteq\mcQ$ is guaranteed to contain an element outside of $K$. Thus $\tildeZ{k}_{\standard{K}}$ and $\overline{C^{\mcF(K)}Z}$ each commute with both $\overline{Z}_J$ and $\overline{X}_J$.

    \noindent\textbf{II.} ($J\subseteq K$) In this case,
    \begin{align}
        \tildeZ{k}_{\standard{K}} \overline{X}_J \tildeZ{k}_{\standard{K}}^\dagger &\equiv \overline{X}_J \left(\prod_{K'\in\mcD_J(K)} \tildeZ{k-1}_{\standard{K'}}\right). \label{eq: pre-induction standard cube}\\
        \intertext{By definition of $\mcD_J(K)$, each subcube, $\standard{K'}$, appearing in the product on the right-hand side of \cref{eq: pre-induction standard cube} satisfies $K'\in\mcQ_{k-1}$, so we can use the induction hypothesis to compute}
        &\equiv \overline{X}_J \left(\prod_{K'\in\mcD_J(K)} \overline{C^{\mcF(K')}Z}\right),\\
        \intertext{which by \cref{clm: multi-controlled-Z circuit composition} is precisely}
        &\equiv \overline{X}_J \overline{C^{\mcF(K)_{\sim J}}Z}.
    \end{align}

    As $\overline{C^{\mcF(K)}Z}$ and $\tildeZ{k}_{\standard{K}}$ conjugate the logical Pauli operators of $QRM_m(q,r)$ in the same way, by definition they are equivalent logical operators for the code.
\end{proof}

\subsection{Arbitrary subcubes}\label{sec: arbitrary subcube signed logic}

Consider an arbitrary subcube $A\coloneqq x+\standard{K}\subcubeeq\ZZ_2^m$, where $K\subseteq S$ is the type of $A$. \cref{thm: subcube dimension implies logic} tells us that the operator $\tildeZ{k}_A$ will be a logical operator for $QRM_m(q,r)$ if and only if $K\in\mcQ_k$. Given the decomposition of $\tildeZ{k}_A$ into standard subcube operators (\cref{thm: Zk logical decomposition}) as well as the description of logical multi-controlled-$Z$ circuits implemented by these operators (\cref{thm: logical multi-controlled-Z circuit}), it may be natural to wonder if interesting logical circuits can be constructed via $\tildeZ{k}_A$. As as example, it is straightforward to verify from \cref{thm: logical multi-controlled-Z circuit} and the definition of minimal covers that every $\tildeZ{k}_{\standard{K}}$ implements a circuit containing more than one logical gate.\footnote{Other than the case of $QRM_m(0,1)$, which corresponds to the well-known family of hypercube codes .} Can $\tildeZ{k}_A$ implement single multi-controlled-$Z$ gates when $A$ is no longer a standard subcube? Or perhaps more generally, can some product of $\tildeZ{k}_{\standard{K}}$ operators implement a single multi-controlled-$Z$ gate?

Unfortunately, the standard subcube operators are, in some sense, the fundamental logical multi-controlled-$Z$ circuits that can be implemented on $QRM_m(q,r)$. By this, we mean that the logical circuit defined by a product, $\tildeZ{k_1}_{\standard{K_1}}\cdot \tildeZ{k_2}_{\standard{K_2}}$, can never have cancellations of logical gates. More formally:
\begin{theorem}\label{thm: no cancellations in the signed case}
    Let $\br{k_1,\dots,k_\ell}$ be a set of non-negative integers, and suppose $\br{K_i}_{i\in[\ell]}$ is a collection of (distinct) subsets $K_i\subseteq S$, such that $K_i\in\mcQ_{k_i}$ for every $i\in[\ell]$. Then
    \begin{equation}
        \prod_{i\in[\ell]}\tildeZ{k_i}_{\standard{K_i}} \equiv \overline{C^{\mcF}Z},
    \end{equation}
    where 
    \begin{equation}
        \mcF\coloneqq\br{\mcJ\subseteq \mcQ\Bigmid \mcJ \text{ is a minimal cover for some }K_i}
    \end{equation} 
    is simply the union of all collections of minimal covers of the $K_i$ sets.
\end{theorem}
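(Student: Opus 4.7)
The plan is to reduce this to the single-subcube case already handled by Theorem~\ref{thm: logical multi-controlled-Z circuit} and then argue that no two of the resulting circuits share a multi-controlled-$Z$ gate, so that multiplying them produces no cancellations.

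First I would apply Theorem~\ref{thm: logical multi-controlled-Z circuit} to each factor individually to obtain, up to Clifford stabilizers,
\begin{equation*}
    \prod_{i\in[\ell]}\tildeZ{k_i}_{\standard{K_i}} \;\equiv\; \prod_{i\in[\ell]} \overline{C^{\mcF(K_i)}Z} \;=\; \prod_{i\in[\ell]}\prod_{\mcJ\in\mcF(K_i)} \overline{C^\mcJ Z}.
\end{equation*}
Since all multi-controlled-$Z$ gates commute (they are diagonal), the order of this double product is irrelevant, and we may rewrite it as a single product indexed by the multiset obtained from the $\mcF(K_i)$'s. The goal is therefore to show that each $\overline{C^\mcJ Z}$ factor appears exactly once, so that the multiset collapses to the union $\mcF=\bigcup_i\mcF(K_i)$.

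The key observation is Lemma~\ref{lem: minimal covers are disjoint}: if $K_i\neq K_j$ then $\mcF(K_i)\cap\mcF(K_j)=\emptyset$, because any $\mcJ$ in the intersection would cover both $K_i$ and $K_j$ forcing $K_i=\bigcup_{J\in\mcJ}J=K_j$. Since the $K_i$ are assumed distinct, the collections $\mcF(K_i)$ are pairwise disjoint, so their union is a disjoint union and no two factors $\overline{C^\mcJ Z}$ coincide. In particular there are no involutive cancellations of the form $\overline{C^\mcJ Z}\cdot \overline{C^\mcJ Z}=\overline{\eye}$.

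Combining these two steps yields
\begin{equation*}
    \prod_{i\in[\ell]}\tildeZ{k_i}_{\standard{K_i}} \;\equiv\; \prod_{\mcJ\in\bigsqcup_{i}\mcF(K_i)}\overline{C^\mcJ Z} \;=\; \overline{C^{\mcF}Z},
\end{equation*}
which is the claim. There is no substantive obstacle here: the theorem is essentially a formal bookkeeping corollary of Theorem~\ref{thm: logical multi-controlled-Z circuit} together with Lemma~\ref{lem: minimal covers are disjoint}. The only point deserving attention is the hypothesis that the $K_i$ are distinct; without it, identical factors would cancel in pairs and the resulting collection $\mcF$ would have to be defined as a symmetric difference rather than a union. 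The distinctness hypothesis is exactly what rules this out, and it is the crucial input that makes the ``no cancellations'' slogan literally true.
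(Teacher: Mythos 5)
Your proof is correct and follows essentially the same route as the paper's: apply \cref{thm: logical multi-controlled-Z circuit} factor by factor, then use the distinctness of the $K_i$ (via the cover property, i.e.\ \cref{lem: minimal covers are disjoint}) to conclude that no $\overline{C^{\mcJ}Z}$ gate appears twice and hence nothing cancels. No issues.
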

\begin{proof}
    A direct application of \cref{thm: logical multi-controlled-Z circuit} implies that
    \begin{equation}\label{eq: basic product of subcube operators}
        \prod_{i\in[\ell]}\tildeZ{k_i}_{\standard{K_i}} \equiv \prod_{i\in[\ell]}\prod_{\mcJ\in\mcF(K_i)}\overline{C^{\mcJ}Z}.
    \end{equation}
    The result will hold by proving that any $\overline{C^{\mcJ}Z}$ in the right-hand side of \cref{eq: basic product of subcube operators} can only appear once. It is trivial that for a \emph{particular} $i\in[\ell]$, $\overline{C^{\mcJ}Z}$ can only appear once in the product $\prod_{\mcJ\in\mcF(K_i)}\overline{C^{\mcJ}Z}$. The operator $\overline{C^{\mcJ}Z}$ can also only appear for a \emph{single} $i\in[\ell]$, as otherwise the cover property of $\mcJ$ would imply that $K_i=\bigcup_{J\in\mcJ}J=K_j$, contradiction the fact the $K_i$ sets are all distinct.
\end{proof}

So, while the operator $\tildeZ{k}_{x+\standard{K}}$ is non-trivial whenever $K\in\mcQ_k$, \cref{thm: no cancellations in the signed case} implies that the circuit it defines necessarily contains more gates than $\tildeZ{k}_{\standard{K}}$.

\section{Unsigned subcube operator logic}\label{sec: unsigned logic}
We now arrive at the case of $\Z{k}$ subcube operators, i.e., transversal operators which act as $\Z{k}$ on vertices of a subcube and identity elsewhere, with no adjoints.
As in the case of signed subcube operators, 
\cref{thm: subcube dimension implies logic} gives necessary and sufficient conditions for when a $\Z{k}_A$ operator performs non-trivial logic on $QRM_m(q,r)$; the aim of this section is to determine the logical circuit implemented by an \emph{unsigned} subcube operator. As in \cref{sec: signed logic}, we can first use a decomposition lemma from \cref{sec:classical-RM-codes} to reduce to the case of standard subcube operators.

\begin{fact}[Unsigned operator version of \cref{thm: Zk logical decomposition}]\label{fact: transversal Zk subcube decomposition}
    Let $A\coloneqq x+\standard{K}\subcubeeq\ZZ_2^m$ be a subcube with $K\in\mcQ_k$, and let $x$ have minimal-weight in $A$. Recalling that $I_A\coloneqq \supp(x)$, we have
    \begin{equation}
        \Z{k}_A\equiv \prod_{I\subseteq I_A\colon \abs{I}+\abs{K}\leq (k+1)r} \Z{k}_{\standard{I\cup K}},
    \end{equation}
    up to Clifford stabilizers.
\end{fact}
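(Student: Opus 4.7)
The plan is to mirror the proof of \cref{thm: Zk logical decomposition}, replacing every appearance of the signed indicator $\tildeindicator{A}$ with the unsigned indicator $\barindicator{A}$, and using \cref{lem: bar indicator expression} in place of \cref{lem: tilde indicator expression}. The argument splits into three steps.

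\textbf{Step 1: An unsigned analogue of the isomorphism \cref{lem: Zk isomorphism}.} The map $\Z{k}_A \mapsto \barindicator{A} \pmod{2^{k+1}}$ extends to an isomorphism from the multiplicative group $\langle \Z{k}_A\rangle$ onto the additive group $\langle \barindicator{A} \pmod{2^{k+1}}\rangle$. Indeed, the operators $\Z{k}_A$ commute and each satisfies $(\Z{k}_A)^{2^{k+1}} = \eye$, and the only non-trivial subcube operator mapped to the zero function is $\Z{k}_\emptyset = \eye$. The proof is verbatim that of \cref{lem: Zk isomorphism}.

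\textbf{Step 2: Decompose $\barindicator{A}$ and translate back.} With $A = x + \standard{K}$ and $I_A = \supp(x)$, \cref{lem: bar indicator expression} gives
\begin{equation*}
    \barindicator{A} \equiv \sum_{I\subseteq I_A} \barindicator{\standard{I\cup K}} - \sum_{i=1}^{\abs{I_A}}\sum_{I\subseteq I_A\colon \abs{I}=i} 2^i\cdot\barindicator{e_{I_A\setminus I}+\standard{I\cup K}} \pmod{2^{k+1}}.
\end{equation*}
Applying the isomorphism of Step 1 (and using $\Z{k}^{2^i} = \Z{k-i}$) translates this to
\begin{equation*}
    \Z{k}_A \;=\; \left(\prod_{I\subseteq I_A} \Z{k}_{\standard{I\cup K}}\right) \cdot \left(\prod_{i=1}^{\abs{I_A}} \prod_{I\subseteq I_A\colon \abs{I}=i} \Z{k-i}_{e_{I_A\setminus I}+\standard{I\cup K}}\right)^{\!\dagger}.
\end{equation*}

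\textbf{Step 3: Discard the trivial terms using \cref{thm: subcube dimension implies logic}.} Each secondary factor $\Z{k-i}_{e_{I_A\setminus I}+\standard{I\cup K}}$ acts on a subcube of dimension $\abs{I\cup K}\geq i + (q+kr+1)$ since $K\in\mcQ_k$. This bound gives $\abs{I\cup K} \geq (k-i+1)r+1$ (using $q\ge 0$ and $k\ge i$), so by condition~3 of \cref{thm: subcube dimension implies logic} each such factor is a Clifford stabilizer, and its adjoint is as well. Hence the entire second parenthesized product can be dropped modulo $\stabs{k}$. In the remaining product, any factor $\Z{k}_{\standard{I\cup K}}$ with $\abs{I}+\abs{K}>(k+1)r$ is again a Clifford stabilizer by the same theorem, leaving precisely
\begin{equation*}
    \Z{k}_A \;\equiv\; \prod_{I\subseteq I_A\colon \abs{I}+\abs{K}\leq (k+1)r} \Z{k}_{\standard{I\cup K}}.
\end{equation*}

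The only real obstacle is bookkeeping: one must ensure that \cref{lem: bar indicator expression} is stated with the integer scalars reduced modulo $2^{k+1}$ compatibly with the order of $\Z{k}$, and that each secondary subcube has large enough dimension to trigger condition~3 of \cref{thm: subcube dimension implies logic} at the reduced level $k-i$. Both points are immediate once the signed proof is in hand, so no essentially new ingredient is required beyond the unsigned indicator decomposition.
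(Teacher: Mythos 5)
Your proof is correct and follows exactly the route the paper intends: the paper explicitly omits the proof of this Fact, stating that it is identical to that of \cref{thm: Zk logical decomposition} with \cref{lem: tilde indicator expression} replaced by \cref{lem: bar indicator expression}, which is precisely the substitution you carry out (together with the unsigned analogue of \cref{lem: Zk isomorphism}, which the paper also states in \cref{sec: unsigned logic}). The dimension bounds in your Step 3 match the paper's argument in the signed case, so nothing further is needed.
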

We will not prove \cref{fact: transversal Zk subcube decomposition} as its proof is exactly the same as the proof \cref{thm: Zk logical decomposition}, with applications of \cref{lem: tilde indicator expression} replaced by \cref{lem: bar indicator expression}.

In \cref{sec: signed logic} we directly proved the logical circuits implemented by $\tildeZ{k}_{\standard{K}}$ operators by examining how they conjugated the logical Pauli operators of $QRM_m(q,r)$. In the present section, we will instead show that the circuits implemented by unsigned operators can be deduced by those implemented by signed operators. 

Analogously to \cref{lem: Zk isomorphism}, given $k\in\NN$, the multiplicative group generated by the $\Z{k}_A$ operators is isomorphic to the additive group generated by the $\barindicator{A}$ functions modulo ${2^{k+1}}$:
\begin{equation}
    \Big\langle \Z{k}_A \Big\rangle \cong \Big\langle \barindicator{A}\pmod{2^{k+1}}\Big\rangle.
\end{equation}

As unsigned indicator functions correspond to the space of $\Z{k}_A$ operators, and \emph{signed} indicator functions correspond to the space of $\tildeZ{k}_A$ operators, if an unsigned indicator function can be written as the sum of signed indicator functions then we can use the mentioned isomorphism to deduce that a $\Z{k}_{\standard{K}}$ operator can be decomposed into a product of $\tildeZ{k'}_{\standard{K'}}$ operators.
Indeed, we stated in \cref{sec:classical-RM-codes} the following:

\strongtransversalindicator*

Recalling that for $\ell\in\br{0,\dots,k}$, $\Z{k}^{2^\ell}=\Z{k-\ell}$, for a standard subcube $\standard{K}\subcubeeq\ZZ_2^m$ we can restate \cref{lem: strong transversal indicator} as:
\begin{equation}\label{eq: transversal Zk decomposition into tilde top-down}
    \Z{k}_{\standard{K}} = \prod_{i=\abs{K}-k}^{\abs{K}}\prod_{J\subseteq K\colon \abs{J}=i} \tildeZ{k-(\abs{K}-i)}_{\standard{J}}^{(-1)^i}.
\end{equation}

The index in \cref{eq: transversal Zk decomposition into tilde} starts from $i=\abs{K}-k$ instead of $i=0$ as by definition $\Z{k-(\abs{K}-i)}=\eye$ whenever $i<\abs{K}-k$. For simplicity of notation, we will find it easier to re-index the outer product in \cref{eq: transversal Zk decomposition into tilde} so that it does begin from $0$:
\begin{equation}\label{eq: transversal Zk decomposition into tilde}
    \Z{k}_{\standard{K}} = \prod_{j=0}^{k}\left(\prod_{J\subseteq K\colon \abs{J}=\abs{K}-j} \tildeZ{k-j}_{\standard{J}}^{(-1)^{\abs{K}-j}}\right).
\end{equation}

The $j=0$ term in \cref{eq: transversal Zk decomposition into tilde} is $\tildeZ{k}_{\standard{K}}$, up to a possible inversion depending on the parity of $\abs{K}$. Ignoring the possible inversions, each time $j$ increases by one \cref{eq: transversal Zk decomposition into tilde} says that we include all of the signed subcube operators in one level lower on the Clifford Hierarchy that act on standard subcubes of $K$ with one dimension less than the previous.

It turns out that the inversions in \cref{eq: transversal Zk decomposition into tilde} are not necessary. It is straightforward to show that if $K\in\mcQ_k$ is the index of a $k$-th level logical operator for $QRM_m(q,r)$ then every term in \cref{eq: transversal Zk decomposition into tilde} is a logical operator for the code. In particular, let $\tildeZ{k-j}_{\standard{J}}$, $\abs{J}=\abs{K}-j$, be one of the terms in \cref{eq: transversal Zk decomposition into tilde}, ignoring the possible inversion. We can lower bound $\abs{J}$ as
\begin{align}
    \abs{J} &= \abs{K}-j,\\
    \since{$K\in\mcQ_k$}&\geq q+kr+1 - j,\\
    \since{$r\geq 1$} &\geq q+(k-j)r+1,
\end{align}
which by \cref{thm: subcube dimension implies logic} implies that $\tildeZ{k-j}_{\standard{J}}\in\errors{k-j}$. Therefore, \cref{cor: tilde operators are logically Hermitian} implies that each $\tildeZ{k-j}_{\standard{J}}$ is logically equivalent to its Hermitian, and for $K\in\mcQ_k$ we have 
\begin{equation}\label{eq: transversal Zk decomposition into tilde (no phases)}
    \Z{k}_{\standard{K}} = \prod_{j=0}^{k}\left(\prod_{J\subseteq K\colon \abs{J}=\abs{K}-j} \tildeZ{k-j}_{\standard{J}}\right).
\end{equation}

It turns out that in many cases, the operators $\tildeZ{k-j}_{\standard{J}}$ are actually stabilizers of $QRM_m(q,r)$ instead of non-trivial logical operators. Recall that the collections $\mcQ_k$ are defined via:
\begin{equation}
    \mcQ_k\coloneq\br{K\subseteq S\Bigmid q+kr+1\leq \abs{K}\leq (k+1)r}.
\end{equation}
The subsets $\{q+kr+1,\dots,(k+1)r\}\subset \NN$ defining the $\mcQ_k$ are disjoint and if $q=0$, they also form
a partition of $\NN$. In particular, and $\{q+(k-1)r+1,\dots,kr\}$ and $\{q+kr+1,\dots,(k+1)r\}$  are separated by $q$ integers. 

Now, note that each time the index $j$ increases in \cref{eq: transversal Zk decomposition into tilde (no phases)} we decrease the level of the Clifford Hierarchy of operators by one while only decreasing the dimension of the subcubes they act on by one, as well. One simple consequence of this is that if $q\geq 1$, then \emph{only} the $j=0$ term can act non-trivially on $QRM_m(q,r)$. For example, if $J=K\setminus\br{i}$ then $\abs{J}$ is too large for $J$ to be in $\mcQ_{k-1}$, and, in particular, its dimension is large enough to imply $\tildeZ{k-1}_{\standard{J}}\in\stabs{k-1}$ by \cref{thm: subcube dimension implies logic}. In the next few results we enumerate all possibilities for the logic implemented by $\Z{k}_{\standard{K}}\in\errors{k}$ in terms of the logic implemented by the signed operators.

\begin{theorem}[Conditions when $\Z{k}_{\standard{K}}\equiv\tildeZ{k}_{\standard{K}}$]\label{thm: Zk and tildeZk equivalence conditions}
    Consider $QRM_m(q,r)$ and let $K\in\mcQ_k$. The following are true: 
    \begin{enumerate}
        \item If $q\geq 1$ then $\Z{k}_{\standard{K}}\equiv \tildeZ{k}_{\standard{K}}$. \emph{(\cref{subfig:a})}
        \item If $\abs{K} \geq q+kr+2$ then $\Z{k}_{\standard{K}}\equiv \tildeZ{k}_{\standard{K}}$. \emph{(\cref{subfig:b})}
    \end{enumerate}
\end{theorem}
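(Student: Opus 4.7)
The plan is to read off both statements from the decomposition \eqref{eq: transversal Zk decomposition into tilde (no phases)}, which expresses
$$\Z{k}_{\standard{K}} = \prod_{j=0}^{k}\left(\prod_{J\subseteq K\colon \abs{J}=\abs{K}-j} \tildeZ{k-j}_{\standard{J}}\right).$$
The $j=0$ factor is precisely $\tildeZ{k}_{\standard{K}}$, so the whole theorem reduces to showing that, under either of the two hypotheses, every factor with $j\geq 1$ lies in $\stabs{k-j}$ and may be dropped modulo logical equivalence. I note that the preceding discussion (which establishes that every $\tildeZ{k-j}_{\standard{J}}$ appearing in the product is an undetectable error, via \cref{thm: subcube dimension implies logic}, so that \cref{cor: tilde operators are logically Hermitian} absorbs the stray signs in \eqref{eq: transversal Zk decomposition into tilde}) is already in hand; I reuse it verbatim to justify working with the signless form \eqref{eq: transversal Zk decomposition into tilde (no phases)}.

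Next, I appeal to part~1 of \cref{thm: subcube dimension implies logic} for the signed operators: $\tildeZ{k-j}_{\standard{J}}\in\stabs{k-j}$ iff $\dim J \geq (k-j+1)r+1$. Since $|J|=|K|-j$ in the decomposition, this dimensional condition becomes
$$|K| \;\geq\; (k-j+1)r + 1 + j \;=\; kr + r + 1 - (j-1)(r-1).$$
The right-hand side is monotonically nonincreasing in $j$ (recall $r\geq q+1\geq 1$), so the $j=1$ case is the most restrictive and yields the single uniform bound
$$|K| \;\geq\; kr+2.$$

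It remains to verify that each hypothesis of the theorem implies $|K|\geq kr+2$. For condition~2 this is immediate, since $|K|\geq q+kr+2\geq kr+2$. For condition~1, $K\in\mcQ_k$ gives $|K|\geq q+kr+1$, and the assumption $q\geq 1$ upgrades this to $|K|\geq kr+2$. Thus in both settings every $j\geq 1$ factor is a Clifford stabilizer, so these factors act as logical identity and we conclude $\Z{k}_{\standard{K}}\equiv \tildeZ{k}_{\standard{K}}$.

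There is no real obstacle; the argument is a one-line dimension count once \eqref{eq: transversal Zk decomposition into tilde (no phases)} is in place. The only subtlety worth flagging in the write-up is the degenerate case $r=1$ (forcing $q=0$, $|K|=k+1$): condition~1 then has no content and condition~2 is vacuous, so the theorem correctly says nothing new for the hypercube family, for which the logic will instead be described separately in \cref{lem: hypercube codes unsigned logic}.
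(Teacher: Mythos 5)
Your argument is correct and is essentially the paper's own proof: both reduce to showing that every $j\geq 1$ factor in the decomposition \cref{eq: transversal Zk decomposition into tilde (no phases)} satisfies $\abs{J}=\abs{K}-j\geq (k-j+1)r+1$, hence lies in $\stabs{k-j}$ by \cref{thm: subcube dimension implies logic}, and both hypotheses collapse to the single bound $\abs{K}\geq kr+2$. One algebraic slip to fix in the write-up: $(k-j+1)r+1+j = kr+r+1-j(r-1)$, not $kr+r+1-(j-1)(r-1)$; your stated $j=1$ value $kr+2$ and the monotonicity in $j$ follow from the corrected expression, so the conclusion is unaffected.
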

\begin{proof}
    Consider an operator $\tildeZ{k-j}_{\standard{J}}$, $\abs{J}=\abs{K}-j$, in the decomposition of $\Z{k}_{\standard{K}}$ given in \cref{eq: transversal Zk decomposition into tilde (no phases)}. The assertions will hold if for each $j\geq 1$, $\tildeZ{k-j}_{\standard{J}}\in\stabs{k-j}$. By \cref{thm: subcube dimension implies logic}, it is sufficient to show $\abs{J}\geq(k-j+1)r+1$ in both cases.

    As $r > q\geq 0$, the inequality $(j-1)r\geq j-1$ holds for all integers $j\geq 1$. Starting from this inequality, we see that
    \begin{align}
        (j-1)r&\geq j-1,\\
        \Longleftrightarrow 1-j&\geq -jr+r,\\
        \Longleftrightarrow kr+2-j&\geq kr-jr+r+1 ,\\
        \Longleftrightarrow kr+2-j&\geq (k-j+1)r+1.\label{eq: Zk tildeZk equivalence LHS}
    \end{align}
    In both cases, $\abs{J}$ is larger than the term on the left-hand side of \cref{eq: Zk tildeZk equivalence LHS}:
    \begin{enumerate}
        \item If $q\geq 1$ then $\abs{J}\geq q+kr+1-j\geq kr+2-j$.
        \item Since $q\geq 0$, if $\abs{K}\geq q+kr+2$ then, once again, $\abs{J}\geq q+kr+2-j\geq kr+2-j$.
    \end{enumerate}
    Thus, $j\geq 1$ implies that $\abs{J}\geq(k-j+1)r+1$, which by \cref{thm: subcube dimension implies logic} forces $\tildeZ{k-j}_{\standard{J}}\in\stabs{k-j}$. So $\Z{k}_{\standard{K}}\equiv \tildeZ{k}_{\standard{K}}$, as desired.
\end{proof}

\cref{thm: Zk and tildeZk equivalence conditions} and \cref{thm: subcube dimension implies logic} imply that in order for $\Z{k}_A$ and $\tildeZ{k}_A$ to perform different (non-trivial) actions on $QRM_m(q,r)$, it is necessary that $q=0$ and $\abs{K}=kr+1$.
\begin{lemma}[\cref{subfig:c}]\label{lem: q=0 r>1 unsigned logic}
    Consider $QRM_m(0,r)$ and suppose $K\subseteq S$ satisfies $\abs{K}=kr+1$ for $k\in\ZZ_{\geq 0}$. If $r\geq 2$ then
    \begin{equation}
        \Z{k}_{\standard{K}}\equiv \overline{C^{\mcF(K)}Z} \cdot\prod_{i\in K} \overline{C^{\mcF(K\setminus\br{i})}Z}.
    \end{equation}
\end{lemma}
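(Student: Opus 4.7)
The plan is to leverage the decomposition of $\Z{k}_{\standard{K}}$ into signed subcube operators that was already established via \cref{lem: strong transversal indicator}, together with the full classification of signed subcube logic from \cref{thm: subcube dimension implies logic} and \cref{thm: logical multi-controlled-Z circuit}. Concretely, I would start from the identity
\begin{equation*}
    \Z{k}_{\standard{K}} \equiv \prod_{j=0}^{k}\,\prod_{\substack{J\subseteq K\\ \abs{J}=\abs{K}-j}} \tildeZ{k-j}_{\standard{J}},
\end{equation*}
which is the unsigned--signed decomposition \cref{eq: transversal Zk decomposition into tilde (no phases)}. (As noted in the excerpt, dropping the $(-1)^{\abs{K}-j}$ signs is legitimate provided each $\tildeZ{k-j}_{\standard{J}}$ appearing on the right lies in $\errors{k-j}\cup\stabs{k-j}$, which I will verify along the way using \cref{cor: tilde operators are logically Hermitian}.)

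Next I would partition the terms on the right according to the value of $j$, using \cref{thm: subcube dimension implies logic} with $q=0$ to determine for each $j$ whether the summands are (i) non-trivial Clifford logicals or (ii) Clifford stabilizers. For a term $\tildeZ{k-j}_{\standard{J}}$ with $\abs{J}=kr+1-j$, the ``non-trivial logical'' window $(k-j)r+1\le\abs{J}\le (k-j+1)r$ becomes, after substitution, $j(r-1)\ge 0$ (always) and $j(r-1)\le r-1$. Because $r\ge 2$, the second inequality fails precisely when $j\ge 2$, in which case $\abs{J}\ge (k-j+1)r+1$ and the operator is a level-$(k-j)$ Clifford stabilizer. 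Thus only the $j=0$ and $j=1$ layers survive modulo $\stabs{*}$, and the product collapses to
\begin{equation*}
    \Z{k}_{\standard{K}} \equiv \tildeZ{k}_{\standard{K}} \cdot \prod_{i\in K} \tildeZ{k-1}_{\standard{K\setminus\{i\}}}.
\end{equation*}

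To finish, I would check that each surviving factor is addressed by \cref{thm: logical multi-controlled-Z circuit}: the $j=0$ factor has type $K\in\mcQ_k$ by assumption, and for $j=1$ every $K\setminus\{i\}$ has size $kr=q+(k-1)r+1\cdots (k-1+1)r$ so it lies in $\mcQ_{k-1}$. Applying that theorem to each factor yields
\begin{equation*}
    \tildeZ{k}_{\standard{K}} \equiv \overline{C^{\mcF(K)}Z}, \qquad \tildeZ{k-1}_{\standard{K\setminus\{i\}}} \equiv \overline{C^{\mcF(K\setminus\{i\})}Z},
\end{equation*}
and multiplying gives the claimed identity. There is no serious obstacle here: the only mild subtleties are (a) justifying that the $(-1)^{\abs{K}-j}$ signs can be absorbed, which needs the logical Hermiticity of each $\tildeZ{k-j}_{\standard{J}}$ appearing, and (b) carefully verifying the borderline $j=1$ case indeed produces non-trivial logic (and not a stabilizer) — both points reduce to plugging the bound $\abs{J}=kr+1-j$ into \cref{thm: subcube dimension implies logic} and using $r\geq 2$.
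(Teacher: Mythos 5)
Your proposal is correct and follows essentially the same route as the paper: start from the phase-free decomposition \cref{eq: transversal Zk decomposition into tilde (no phases)}, use \cref{thm: subcube dimension implies logic} with $q=0$ and $r\geq 2$ to show that all layers $j\geq 2$ are Clifford stabilizers while $j=0,1$ give $K\in\mcQ_k$ and $K\setminus\{i\}\in\mcQ_{k-1}$, and then apply \cref{thm: logical multi-controlled-Z circuit} to the surviving factors. The algebraic criterion you derive, $(j-1)(r-1)\leq 0$, is just a rearrangement of the paper's inequality $(j-1)r\geq j$ for $j\geq 2$, so there is nothing further to add.
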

\begin{proof}
    When $r\geq 2$ the inequality $(j-1)r\geq j$ holds for all integers $j\geq 2$, and with rearranging, is equivalent to $kr+1-j\geq (k-j+1)r+1$ for all $j\geq 2$.
    Thus, for each $\tildeZ{k-j}_{\standard{J}}$ in the decomposition of $\Z{k}_{\standard{K}}$ in \cref{eq: transversal Zk decomposition into tilde (no phases)} with $j\geq 2$, we have that $\abs{J}=kr+1-j\geq (k-j+1)r+1$ implying by \cref{thm: subcube dimension implies logic} that $\tildeZ{k-j}_{\standard{J}}\in\stabs{k-j}$. Thus, 
    \begin{equation}
        \Z{k}_{\standard{K}}\equiv \tildeZ{k}_{\standard{K}}\cdot\prod_{J\subseteq K\colon \abs{J}=kr} \tildeZ{k-1}_{\standard{J}}.
    \end{equation}
    Now for each such $J$, $\abs{J}\geq(k-1)r+1$, implying that $J\in\mcQ_{k-1}$. The desired result then holds by \cref{thm: logical multi-controlled-Z circuit}.
\end{proof}

We consider now the remaining case when $q=0$ and $r=1$:

\begin{lemma}[\cref{subfig:d}]\label{lem: hypercube codes unsigned logic}
    Consider the hypercube code family, $QRM_m(0,1)$. For each $K\subseteq S$,
    \begin{equation}
        \Z{\abs{K}-1}_{\standard{K}} \equiv \overline{C^{\powerset{K}}Z}.
    \end{equation}
\end{lemma}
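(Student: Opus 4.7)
The plan is to deduce this lemma by applying the general unsigned--signed decomposition formula \cref{eq: transversal Zk decomposition into tilde (no phases)} together with \cref{thm: logical multi-controlled-Z circuit} in the special case $q=0, r=1$. Setting $k=\abs{K}-1$ in \cref{eq: transversal Zk decomposition into tilde (no phases)} and reindexing with $\ell = \abs{K}-j$, I obtain
\[
\Z{\abs{K}-1}_{\standard{K}} \equiv \prod_{\ell=1}^{\abs{K}}\prod_{\substack{J\subseteq K\\ \abs{J}=\ell}} \tildeZ{\ell-1}_{\standard{J}}.
\]
Note that, unlike the $q\ge 1$ regime of \cref{thm: Zk and tildeZk equivalence conditions} where most factors vanish as stabilizers, here every factor will contribute nontrivially.

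Next, I would verify that each factor is indeed a logical operator and compute its effect. For $QRM_m(0,1)$, the logical index set is $\mcQ = \{\{i\} : i \in S\}$ and the $k$-th-level index sets are $\mcQ_{\ell-1} = \{K' \subseteq S : \abs{K'} = \ell\}$. Hence every nonempty $J\subseteq K$ with $\abs{J}=\ell$ lies in $\mcQ_{\ell-1}$, so \cref{thm: subcube dimension implies logic} gives $\tildeZ{\ell-1}_{\standard{J}} \in \errors{\ell-1}$ and \cref{cor: tilde operators are logically Hermitian} then shows these factors are logically Hermitian (which will absorb any $\pm 1$ ambiguity arising from the $(-1)^{\abs{K}-j}$ signs in \cref{eq: transversal Zk decomposition into tilde}). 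Applying \cref{thm: logical multi-controlled-Z circuit}, $\tildeZ{\ell-1}_{\standard{J}} \equiv \overline{C^{\mcF(J)}Z}$; and because $\mcQ$ here consists only of singletons, the unique way to cover an $\ell$-element $J$ with $\ell$ members of $\mcQ$ is $\mcJ_J\coloneqq\{\{i\}:i\in J\}$, so $\mcF(J) = \{\mcJ_J\}$ and each factor reduces to a single multi-controlled-$Z$ gate on the logical qubits labeled by the elements of $J$.

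Putting this together,
\[
\Z{\abs{K}-1}_{\standard{K}} \;\equiv\; \prod_{\substack{J\subseteq K\\ J\neq \emptyset}} \overline{C^{\mcJ_J} Z} \;=\; \overline{C^{\powerset{K}}Z},
\]
where the final equality uses the identification of $\powerset{K}$ with the collection $\{\mcJ_J : J\in\powerset{K}\}\subseteq \powerset{\mcQ}$, and the fact that the empty-set term $\overline{C^{\emptyset}Z} = \overline{\eye}$ contributes trivially. I do not anticipate a genuine obstacle in this argument; the only subtlety is tracking the sign factors in \cref{eq: transversal Zk decomposition into tilde} and confirming that they all disappear under logical equivalence because every factor in the decomposition is a nontrivial undetectable error. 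This precisely mirrors the passage from \cref{eq: transversal Zk decomposition into tilde} to \cref{eq: transversal Zk decomposition into tilde (no phases)} in the main text, applied in the single setting where every term of the decomposition simultaneously lies in some $\errors{\ell-1}$.
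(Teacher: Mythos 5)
Your proposal is correct and follows essentially the same route as the paper: apply the unsigned-to-signed decomposition \cref{eq: transversal Zk decomposition into tilde (no phases)}, observe that for $q=0,r=1$ every nonempty $J\subseteq K$ lies in $\mcQ_{\abs{J}-1}$ so every factor is a nontrivial logical, and then invoke \cref{thm: logical multi-controlled-Z circuit} to convert each factor into its (unique) minimal-cover multi-controlled-$Z$ gate. Your extra care with the $(-1)^{\abs{K}-j}$ signs via \cref{cor: tilde operators are logically Hermitian} and the explicit computation of $\mcF(J)$ only spell out details the paper leaves implicit.
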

\begin{proof}
    By \cref{eq: transversal Zk decomposition into tilde (no phases)},
    \begin{equation}
        \Z{\abs{K}-1}_{\standard{K}} = \prod_{J\subseteq K\setminus\emptyset} \tildeZ{\abs{J}-1}_{\standard{J}}.
    \end{equation}
    As $\abs{J}=(\abs{J}-1)+1$ we have that $J\in\mcQ_{\abs{J}-1}$, so by \cref{thm: subcube dimension implies logic} it must be that $\tildeZ{\abs{J}-1}_{\standard{J}}\in\logs{\abs{J}-1}$. 
    The result holds by \cref{thm: logical multi-controlled-Z circuit} and by definition of the composition of multi-controlled-$Z$ circuits.
\end{proof}

\begin{figure}[b]
    \subfloat[(\cref{thm: Zk and tildeZk equivalence conditions}.1) Decomposition of $\Z{2}_{\standard{K}}$ where $\abs{K}=5$, acting on $QRM_6(1,2)$. 
    Since $q\geq 1$, the dimensions that admit logical operators \emph{do not} partition $\NN$. As a result, every operator $\tildeZ{k-j}_{\standard{J}}$ with $j\geq 1$ necessarily has size \emph{larger} than the bounds for its given logical index set $\mcQ_{k-j}$. That is, other than the $\tildeZ{k}_{\standard{K}}$ operator, every signed operator in the decomposition acts trivially on the code.]{%
        \includegraphics[width=.49\linewidth]{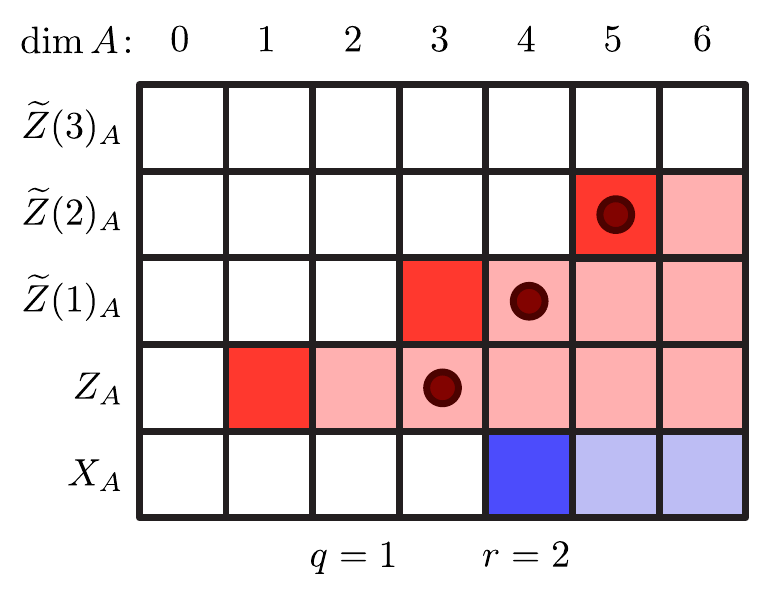}%
        \label{subfig:a}%
    }\hfill
    \subfloat[(\cref{thm: Zk and tildeZk equivalence conditions}.2) Decomposition of $\Z{2}_{\standard{K}}$ where $\abs{K}=6$, acting on $QRM_6(0,2)$. As $\abs{K}=5$ is \emph{not} the lowest size for a set in $\mcQ_2$, reducing dimension/level of the CH by one immediately implies that an operator is trivial. This case can only happen when $r-q\geq 2$, but is independent of the choice of $q$.]{%
        \includegraphics[width=.49\linewidth]{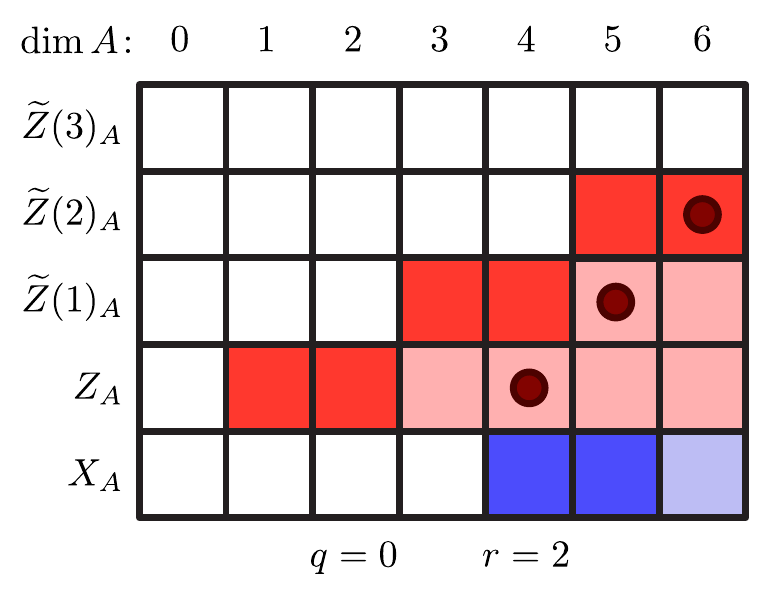}%
        \label{subfig:b}%
    }\\
    \subfloat[(\cref{lem: q=0 r>1 unsigned logic}) Decomposition of $\Z{2}_{\standard{K}}$ where $\abs{K}=5$, acting on $QRM_6(0,2)$. Since $q=0$, every dimension does admit a logical operator. As $\abs{K}=5$ is the lowest size for a set in $\mcQ_2$, reducing dimension/level of the CH by one remains a logical operation. However, as $r\geq 2$ every operator $\tildeZ{k-j}_{\standard{J}}$ with $j\geq 2$ necessarily acts trivially on the code.]{%
        \includegraphics[width=.49\linewidth]{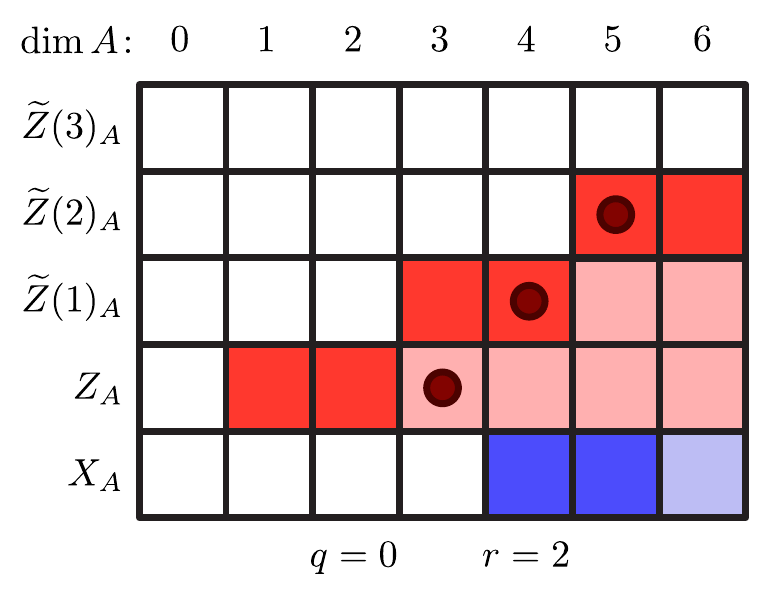}%
        \label{subfig:c}%
    }\hfill
    \subfloat[(\cref{lem: hypercube codes unsigned logic}) Decomposition of $\Z{5}_{\standard{K}}$ where $\abs{K}=4$, acting on the hypercube code $QRM_6(0,1)$. Each time the dimension/level of the CH is reduced by one, the operator in the decomposition remains a logical operator for the code.]{%
        \includegraphics[width=.49\linewidth]{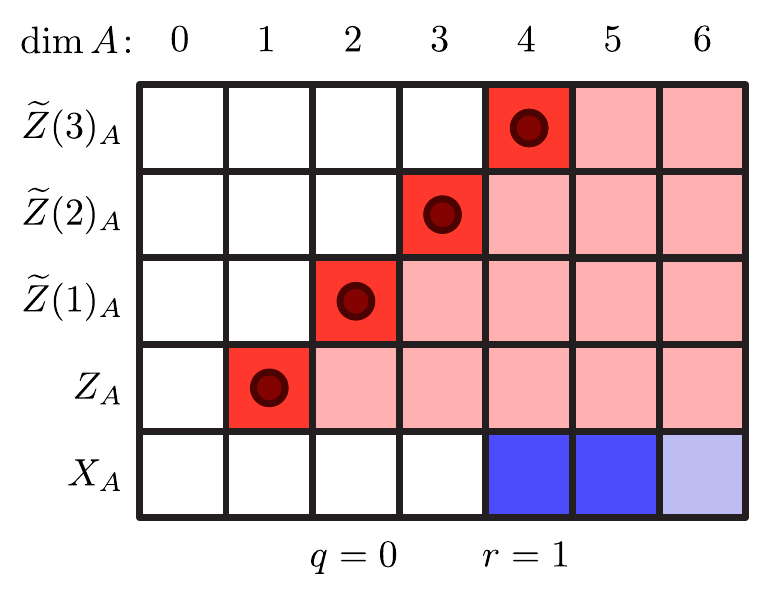}%
        \label{subfig:d}%
    }
    \caption{For $K\in\mcQ_k$, the unsigned subcube operator $\Z{k}_{\standard{K}}$ can be decomposed as a product of signed subcube operators $\tildeZ{k-j}_{\standard{J}}$ ($J\subseteq K$ with $\abs{J}=\abs{K}-j$) via \cref{eq: transversal Zk decomposition into tilde}. In the above figures, a dark box indicates a dimension where the subcube operator of the given level of the Clifford Hierarchy acts as a logical operator, a light box indicates a dimension where the operator acts trivially, and a white box indicates that the operator does not preserve the code space. In each of them, we consider the decomposition of an unsigned operator (specified in the subcaption) into a product of signed operators, represented by red dots.}
    \label{fig: Zk visuals}
\end{figure}

\newpage
\bibliography{ref,doms_refs}

\newcommand{\etalchar}[1]{$^{#1}$}
\begin{thebibliography}{DSRABR{\etalchar{+}}24}

\bibitem[AABA{\etalchar{+}}24]{acharya2024quantum}
Rajeev Acharya, Laleh Aghababaie-Beni, Igor Aleiner, Trond~I Andersen, Markus Ansmann, Frank Arute, Kunal Arya, Abraham Asfaw, Nikita Astrakhantsev, Juan Atalaya, et~al.
\newblock Quantum error correction below the surface code threshold.
\newblock {\em arXiv preprint arXiv:2408.13687}, 2024.

\bibitem[ABO97]{aharonov1997fault}
Dorit Aharonov and Michael Ben-Or.
\newblock Fault-tolerant quantum computation with constant error.
\newblock In {\em Proceedings of the twenty-ninth annual ACM symposium on Theory of computing}, pages 176--188, 1997.

\bibitem[ADP14]{anderson_fault-tolerant_2014}
Jonas~T. Anderson, Guillaume {Duclos-Cianci}, and David Poulin.
\newblock Fault-{{Tolerant Conversion}} between the {{Steane}} and {{Reed-Muller Quantum Codes}}.
\newblock {\em Phys. Rev. Lett.}, 113(8):080501, August 2014.
\newblock \href {https://doi.org/10.1103/PhysRevLett.113.080501} {\path{doi:10.1103/PhysRevLett.113.080501}}.

\bibitem[AK98]{AK98}
E.~F. Assmus, Jr. and J.~D. Key.
\newblock Polynomial codes and finite geometries.
\newblock In {\em Handbook of Coding Theory}, volume~II, pages 1269--1343. North-Holland, Amsterdam, 1998.

\bibitem[Ari09]{arikan2009channel}
Erdal Arikan.
\newblock Channel polarization: A method for constructing capacity-achieving codes for symmetric binary-input memoryless channels.
\newblock {\em IEEE Transactions on information Theory}, 55(7):3051--3073, 2009.

\bibitem[Ari10]{arikan2010survey}
Erdal Arikan.
\newblock A survey of {Reed-Muller} codes from polar coding perspective.
\newblock In {\em 2010 IEEE Information Theory Workshop on Information Theory (ITW 2010, Cairo)}, pages 1--5. IEEE, 2010.

\bibitem[Ass96]{assmus1996berman}
E.~F. Assmus, Jr.
\newblock On {B}erman's characterization of the {Reed-Muller} codes.
\newblock {\em Journal of Statistical Planning and Inference}, 56(1):17--21, 1996.

\bibitem[BEG{\etalchar{+}}24]{bluvstein2024logical}
Dolev Bluvstein, Simon~J Evered, Alexandra~A Geim, Sophie~H Li, Hengyun Zhou, Tom Manovitz, Sepehr Ebadi, Madelyn Cain, Marcin Kalinowski, Dominik Hangleiter, et~al.
\newblock Logical quantum processor based on reconfigurable atom arrays.
\newblock {\em Nature}, 626(7997):58--65, 2024.

\bibitem[BH12]{Bravyj2012magic}
Sergey Bravyi and Jeongwan Haah.
\newblock Magic-state distillation with low overhead.
\newblock {\em Phys. Rev. A}, 86:052329, Nov 2012.
\newblock URL: \url{https://link.aps.org/doi/10.1103/PhysRevA.86.052329}, \href {https://doi.org/10.1103/PhysRevA.86.052329} {\path{doi:10.1103/PhysRevA.86.052329}}.

\bibitem[BK05]{bravyi_universal_2005}
Sergey Bravyi and Alexei Kitaev.
\newblock Universal quantum computation with ideal {{Clifford}} gates and noisy ancillas.
\newblock {\em Phys. Rev. A}, 71(2):022316, February 2005.
\newblock \href {https://doi.org/10.1103/PhysRevA.71.022316} {\path{doi:10.1103/PhysRevA.71.022316}}.

\bibitem[BK13]{bravyi2013classification}
Sergey Bravyi and Robert K{\"o}nig.
\newblock Classification of topologically protected gates for local stabilizer codes.
\newblock {\em Phys. Rev. Lett.}, 110(17):170503, 2013.

\bibitem[BKS21]{beverland2021cost}
Michael~E Beverland, Aleksander Kubica, and Krysta~M Svore.
\newblock Cost of universality: A comparative study of the overhead of state distillation and code switching with color codes.
\newblock {\em PRX Quantum}, 2(2):020341, 2021.

\bibitem[BMD07]{Bombin2007}
H.~Bombin and M.~A. Martin-Delgado.
\newblock Topological computation without braiding.
\newblock {\em Phys. Rev. Lett.}, 98:160502, Apr 2007.
\newblock URL: \url{https://link.aps.org/doi/10.1103/PhysRevLett.98.160502}, \href {https://doi.org/10.1103/PhysRevLett.98.160502} {\path{doi:10.1103/PhysRevLett.98.160502}}.

\bibitem[BMT{\etalchar{+}}22]{beverland2022assessing}
Michael~E Beverland, Prakash Murali, Matthias Troyer, Krysta~M Svore, Torsten Hoefler, Vadym Kliuchnikov, Guang~Hao Low, Mathias Soeken, Aarthi Sundaram, and Alexander Vaschillo.
\newblock Assessing requirements to scale to practical quantum advantage.
\newblock {\em arXiv preprint arXiv:2211.07629}, 2022.

\bibitem[BW10]{bhaintwal2010generalized}
Maheshanand Bhaintwal and Siri~Krishan Wasan.
\newblock Generalized {Reed--Muller} codes over $\mathbb{Z}_q$.
\newblock {\em Designs, Codes and Cryptography}, 54(2):149--166, 2010.

\bibitem[CAB12]{campbell2012magic}
Earl~T Campbell, Hussain Anwar, and Dan~E Browne.
\newblock Magic-state distillation in all prime dimensions using quantum reed-muller codes.
\newblock {\em Physical Review X}, 2(4):041021, 2012.

\bibitem[Cam]{Campbellcolor}
Earl~T. Campbell.
\newblock The smallest interesting color code.
\newblock Blog post https://earltcampbell.com/2016/09/26/the-smallest-interesting-colour-code/, accessed on 10/4/2024.

\bibitem[CGK17]{CGK17}
Shawn~X. Cui, Daniel Gottesman, and Anirudh Krishna.
\newblock Diagonal gates in the {C}lifford hierarchy.
\newblock {\em Physical Review A}, 95(1):012329, January 2017.
\newblock arXiv:1608.06596 [quant-ph].
\newblock \href {https://doi.org/10.1103/PhysRevA.95.012329} {\path{doi:10.1103/PhysRevA.95.012329}}.

\bibitem[CH17a]{campbell2017unified}
Earl~T Campbell and Mark Howard.
\newblock Unified framework for magic state distillation and multiqubit gate synthesis with reduced resource cost.
\newblock {\em Physical Review A}, 95(2):022316, 2017.

\bibitem[CH17b]{campbell2017unifying}
Earl~T Campbell and Mark Howard.
\newblock Unifying gate synthesis and magic state distillation.
\newblock {\em Physical review letters}, 118(6):060501, 2017.

\bibitem[DSRABR{\etalchar{+}}24]{da2024demonstration}
MP~Da~Silva, C~Ryan-Anderson, JM~Bello-Rivas, A~Chernoguzov, JM~Dreiling, C~Foltz, JP~Gaebler, TM~Gatterman, D~Hayes, N~Hewitt, et~al.
\newblock Demonstration of logical qubits and repeated error correction with better-than-physical error rates.
\newblock {\em arXiv preprint arXiv:2404.02280}, 2024.

\bibitem[Got97]{gottesman1997stabilizer}
Daniel Gottesman.
\newblock {\em Stabilizer codes and quantum error correction}.
\newblock California Institute of Technology, 1997.

\bibitem[Got24]{Gottesman2024}
Daniel Gottesman.
\newblock {Surviving as a Quantum Computer in a Classical World}, 2024.
\newblock book draft.

\bibitem[HH18]{haah2018codes}
Jeongwan Haah and Matthew~B Hastings.
\newblock Codes and protocols for distilling {$T$}, controlled-{$S$}, and {T}offoli gates.
\newblock {\em Quantum}, 2:71, 2018.

\bibitem[HKB{\etalchar{+}}24]{hangleiter_fault-tolerant_2024}
Dominik Hangleiter, Marcin Kalinowski, Dolev Bluvstein, Madelyn Cain, Nishad Maskara, Xun Gao, Aleksander Kubica, Mikhail~D. Lukin, and Michael~J. Gullans.
\newblock Fault-tolerant compiling of classically hard {{IQP}} circuits on hypercubes.
\newblock (arXiv:2404.19005), April 2024.
\newblock \href {https://arxiv.org/abs/2404.19005} {\path{arXiv:2404.19005}}.

\bibitem[HLC21]{HLC21}
Jingzhen Hu, Qingzhong Liang, and Robert Calderbank.
\newblock Climbing the diagonal {C}lifford hierarchy.
\newblock {\em arXiv preprint arXiv:2110.11923}, October 2021.
\newblock arXiv:2110.11923 [quant-ph].
\newblock URL: \url{http://arxiv.org/abs/2110.11923}.

\bibitem[HLC22a]{HLC22}
Jingzhen Hu, Qingzhong Liang, and Robert Calderbank.
\newblock Designing the quantum channels induced by diagonal gates.
\newblock {\em Quantum}, 6:802, September 2022.
\newblock arXiv:2109.13481 [quant-ph].
\newblock \href {https://doi.org/10.22331/q-2022-09-08-802} {\path{doi:10.22331/q-2022-09-08-802}}.

\bibitem[HLC22b]{Hu2022designingquantum}
Jingzhen Hu, Qingzhong Liang, and Robert Calderbank.
\newblock Designing the {Q}uantum {C}hannels {I}nduced by {D}iagonal {G}ates.
\newblock {\em {Quantum}}, 6:802, September 2022.
\newblock \href {https://doi.org/10.22331/q-2022-09-08-802} {\path{doi:10.22331/q-2022-09-08-802}}.

\bibitem[HLC22c]{hu2022divisible}
Jingzhen Hu, Qingzhong Liang, and Robert Calderbank.
\newblock Divisible codes for quantum computation, 2022.
\newblock URL: \url{https://arxiv.org/abs/2204.13176}, \href {https://arxiv.org/abs/2204.13176} {\path{arXiv:2204.13176}}.

\bibitem[HSB{\etalchar{+}}18]{Hui2018polar}
Dennis Hui, Sara Sandberg, Yufei Blankenship, Mattias Andersson, and Leefke Grosjean.
\newblock Channel coding in {5G New Radio}: {A} tutorial overview and performance comparison with {4G LTE}.
\newblock {\em IEEE Vehicular Technology Magazine}, 13(4):60--69, 2018.
\newblock \href {https://doi.org/10.1109/MVT.2018.2867640} {\path{doi:10.1109/MVT.2018.2867640}}.

\bibitem[JOKY18]{jochym2018disjointness}
Tomas Jochym-O’Connor, Aleksander Kubica, and Theodore~J Yoder.
\newblock Disjointness of stabilizer codes and limitations on fault-tolerant logical gates.
\newblock {\em Physical Review X}, 8(2):021047, 2018.

\bibitem[KB15]{kubica2015universal}
Aleksander Kubica and Michael~E Beverland.
\newblock Universal transversal gates with color codes: A simplified approach.
\newblock {\em Physical Review A}, 91(3):032330, 2015.

\bibitem[KBK22]{koutsioumpas2022smallestcodetransversalt}
Stergios Koutsioumpas, Darren Banfield, and Alastair Kay.
\newblock The smallest code with transversal t, 2022.
\newblock URL: \url{https://arxiv.org/abs/2210.14066}, \href {https://arxiv.org/abs/2210.14066} {\path{arXiv:2210.14066}}.

\bibitem[KCP16]{Kumar2016qRM}
Santhosh Kumar, Robert Calderbank, and Henry~D. Pfister.
\newblock Reed-muller codes achieve capacity on the quantum erasure channel.
\newblock In {\em 2016 IEEE International Symposium on Information Theory (ISIT)}, pages 1750--1754, 2016.

\bibitem[Kit97]{kitaev1997quantum}
A~Yu Kitaev.
\newblock Quantum computations: algorithms and error correction.
\newblock {\em Russian Mathematical Surveys}, 52(6):1191, 1997.

\bibitem[KKM{\etalchar{+}}17]{kudekar2017}
Shrinivas Kudekar, Santhosh Kumar, Marco Mondelli, Henry~D. Pfister, Eren Şaşoǧlu, and Rüdiger~L. Urbanke.
\newblock {Reed–Muller} codes achieve capacity on erasure channels.
\newblock {\em IEEE Transactions on Information Theory}, 63(7):4298--4316, 2017.
\newblock \href {https://doi.org/10.1109/TIT.2017.2673829} {\path{doi:10.1109/TIT.2017.2673829}}.

\bibitem[KLM{\etalchar{+}}23]{kliuchnikov2023shorter}
Vadym Kliuchnikov, Kristin Lauter, Romy Minko, Adam Paetznick, and Christophe Petit.
\newblock Shorter quantum circuits via single-qubit gate approximation.
\newblock {\em Quantum}, 7:1208, 2023.

\bibitem[KLZ98]{knill1998resilient}
Emanuel Knill, Raymond Laflamme, and Wojciech~H Zurek.
\newblock Resilient quantum computation: error models and thresholds.
\newblock {\em Proceedings of the Royal Society of London. Series A: Mathematical, Physical and Engineering Sciences}, 454(1969):365--384, 1998.

\bibitem[LLZ22]{leverrier2022towards}
Anthony Leverrier, Vivien Londe, and Gilles Z{\'e}mor.
\newblock Towards local testability for quantum coding.
\newblock {\em Quantum}, 6:661, 2022.

\bibitem[LZ22]{leverrier2022quantum}
Anthony Leverrier and Gilles Z{\'e}mor.
\newblock Quantum {T}anner codes.
\newblock In {\em 2022 IEEE 63rd Annual Symposium on Foundations of Computer Science (FOCS)}, pages 872--883. IEEE, 2022.

\bibitem[MHU14]{mondelli2014polar}
Marco Mondelli, S~Hamed Hassani, and R{\"u}diger~L Urbanke.
\newblock From polar to {Reed-Muller} codes: {A} technique to improve the finite-length performance.
\newblock {\em IEEE Transactions on Communications}, 62(9):3084--3091, 2014.

\bibitem[MS77]{MS77}
F.~J. MacWilliams and N.~J.~A. Sloane.
\newblock {\em The Theory of Error-Correcting Codes}.
\newblock North-Holland mathematical library; v. 16. North-Holland Pub. Co., Amsterdam; New York, N.Y., 1977.

\bibitem[NH22]{nezami2022classification}
Sepehr Nezami and Jeongwan Haah.
\newblock Classification of small triorthogonal codes.
\newblock {\em Physical Review A}, 106(1):012437, 2022.

\bibitem[NJBG24]{nadkarni2024entanglement}
Priya~J Nadkarni, Praveen Jayakumar, Arpit Behera, and Shayan~Srinivasa Garani.
\newblock Entanglement-assisted quantum {Reed-Muller} tensor product codes.
\newblock {\em Quantum}, 8:1329, 2024.

\bibitem[PK22]{panteleev2022asymptotically}
Pavel Panteleev and Gleb Kalachev.
\newblock Asymptotically good quantum and locally testable classical ldpc codes.
\newblock In {\em Proceedings of the 54th Annual ACM SIGACT Symposium on Theory of Computing}, pages 375--388, 2022.

\bibitem[PR13]{paetznick_universal_2013}
Adam Paetznick and Ben~W. Reichardt.
\newblock Universal {{Fault-Tolerant Quantum Computation}} with {{Only Transversal Gates}} and {{Error Correction}}.
\newblock {\em Phys. Rev. Lett.}, 111(9):090505, August 2013.
\newblock \href {https://doi.org/10.1103/PhysRevLett.111.090505} {\path{doi:10.1103/PhysRevLett.111.090505}}.

\bibitem[PY15]{pastawski2015fault}
Fernando Pastawski and Beni Yoshida.
\newblock Fault-tolerant logical gates in quantum error-correcting codes.
\newblock {\em Physical Review A}, 91(1):012305, 2015.

\bibitem[RAC{\etalchar{+}}24]{reichardt_demonstration_2024}
Ben~W. Reichardt, David Aasen, Rui Chao, Alex Chernoguzov, Wim {van Dam}, John~P. Gaebler, Dan Gresh, Dominic Lucchetti, Michael Mills, Steven~A. Moses, Brian Neyenhuis, Adam Paetznick, Andres Paz, Peter~E. Siegfried, Marcus~P. {da Silva}, Krysta~M. Svore, Zhenghan Wang, and Matt Zanner.
\newblock Demonstration of quantum computation and error correction with a tesseract code.
\newblock (arXiv:2409.04628), September 2024.
\newblock \href {https://arxiv.org/abs/2409.04628} {\path{arXiv:2409.04628}}.

\bibitem[RBB{\etalchar{+}}24]{ryan-anderson_high-fidelity_2024}
C.~{Ryan-Anderson}, N.~C. Brown, C.~H. Baldwin, J.~M. Dreiling, C.~Foltz, J.~P. Gaebler, T.~M. Gatterman, N.~Hewitt, C.~Holliman, C.~V. Horst, J.~Johansen, D.~Lucchetti, T.~Mengle, M.~Matheny, Y.~Matsuoka, K.~Mayer, M.~Mills, S.~A. Moses, B.~Neyenhuis, J.~Pino, P.~Siegfried, R.~P. Stutz, J.~Walker, and D.~Hayes.
\newblock High-fidelity and {{Fault-tolerant Teleportation}} of a {{Logical Qubit}} using {{Transversal Gates}} and {{Lattice Surgery}} on a {{Trapped-ion Quantum Computer}}.
\newblock April 2024.
\newblock \href {https://arxiv.org/abs/2404.16728} {\path{arXiv:2404.16728}}.

\bibitem[RCNP20]{rengaswamy2020optimality}
Narayanan Rengaswamy, Robert Calderbank, Michael Newman, and Henry~D Pfister.
\newblock On optimality of {CSS} codes for transversal {$T$}.
\newblock {\em IEEE Journal on Selected Areas in Information Theory}, 1(2):499--514, 2020.

\bibitem[{\c{S}}a{\c{s}}12]{sasoglu2012polarization}
Eren {\c{S}}a{\c{s}}o{\u{g}}lu.
\newblock Polarization and polar codes.
\newblock {\em Foundations and Trends{\textregistered} in Communications and Information Theory}, 8(4):259--381, 2012.

\bibitem[SK05]{sarvepalli2005nonbinary}
Pradeep~Kiran Sarvepalli and Andreas Klappenecker.
\newblock Nonbinary quantum reed-muller codes.
\newblock In {\em Proceedings. International Symposium on Information Theory, 2005. ISIT 2005.}, pages 1023--1027. IEEE, 2005.

\bibitem[Ste99]{Steane1999}
A.M. Steane.
\newblock Quantum {Reed-Muller} codes.
\newblock {\em IEEE Transactions on Information Theory}, 45(5):1701--1703, 1999.
\newblock \href {https://doi.org/10.1109/18.771249} {\path{doi:10.1109/18.771249}}.

\bibitem[Ter15]{Terhal2015}
Barbara~M. Terhal.
\newblock Quantum error correction for quantum memories.
\newblock {\em Rev. Mod. Phys.}, 87:307--346, Apr 2015.
\newblock URL: \url{https://link.aps.org/doi/10.1103/RevModPhys.87.307}, \href {https://doi.org/10.1103/RevModPhys.87.307} {\path{doi:10.1103/RevModPhys.87.307}}.

\bibitem[VK22]{vasmer2022morphing}
Michael Vasmer and Aleksander Kubica.
\newblock Morphing quantum codes.
\newblock {\em PRX Quantum}, 3(3):030319, 2022.

\bibitem[WBB22]{Webster2022xpstabiliser}
Mark~A. Webster, Benjamin~J. Brown, and Stephen~D. Bartlett.
\newblock The {XP} {S}tabiliser {F}ormalism: a {G}eneralisation of the {P}auli {S}tabiliser {F}ormalism with {A}rbitrary {P}hases.
\newblock {\em {Quantum}}, 6:815, Sept. 2022.
\newblock \href {https://doi.org/10.22331/q-2022-09-22-815} {\path{doi:10.22331/q-2022-09-22-815}}.

\bibitem[YHH{\etalchar{+}}23]{ye_logical_2023}
Yangsen Ye, Tan He, He-Liang Huang, Zuolin Wei, Yiming Zhang, Youwei Zhao, Dachao Wu, Qingling Zhu, Huijie Guan, Sirui Cao, Fusheng Chen, Tung-Hsun Chung, Hui Deng, Daojin Fan, Ming Gong, Cheng Guo, Shaojun Guo, Lianchen Han, Na~Li, Shaowei Li, Yuan Li, Futian Liang, Jin Lin, Haoran Qian, Hao Rong, Hong Su, Shiyu Wang, Yulin Wu, Yu~Xu, Chong Ying, Jiale Yu, Chen Zha, Kaili Zhang, Yong-Heng Huo, Chao-Yang Lu, Cheng-Zhi Peng, Xiaobo Zhu, and Jian-Wei Pan.
\newblock Logical {{Magic State Preparation}} with {{Fidelity}} beyond the {{Distillation Threshold}} on a {{Superconducting Quantum Processor}}.
\newblock {\em Phys. Rev. Lett.}, 131(21):210603, November 2023.
\newblock \href {https://doi.org/10.1103/PhysRevLett.131.210603} {\path{doi:10.1103/PhysRevLett.131.210603}}.

\bibitem[YTC16]{yoder2016universal}
Theodore~J Yoder, Ryuji Takagi, and Isaac~L Chuang.
\newblock Universal fault-tolerant gates on concatenated stabilizer codes.
\newblock {\em Physical Review X}, 6(3):031039, 2016.

\bibitem[ZCC11]{zeng2011tranversality}
Bei Zeng, Andrew Cross, and Isaac~L. Chuang.
\newblock Transversality versus universality for additive quantum codes.
\newblock {\em IEEE Transactions on Information Theory}, 57(9):6272--6284, 2011.

\end{thebibliography}
\newpage
\appendix

\section{Proofs of indicator function identities}\label{app: indicator proofs}
\subsection*{Identities over \texorpdfstring{$\FF$}{F2}}

\hypertarget{pf: indicator decomposition}{}
\indicatordecomposition*

For $w\in\ZZ_2^m$, consider the set $\mcI_w\coloneqq\br{I\subseteq I_A\mid w\in\standard{I\cup J}}$. We prove two lemmas, which together will imply \cref{lem: decomposition of subcube indicators}.

\begin{lemma}\label{lem: condition for in A}
    $w\in A$ if and only if $\abs{\mcI_w}=1$. 
\end{lemma}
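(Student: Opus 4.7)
The plan is to translate both sides of the biconditional into statements purely about the support of $w$, and then compare. First, I will unpack membership in $A$: since $I_A$ is the support of the unique minimal-weight element of $A$ (as guaranteed by \cref{lemma: all subcubes}), we have $A = e_{I_A}+\standard{J}$ with $I_A\cap J=\emptyset$. Therefore $w\in A$ if and only if $w_i=1$ for every $i\in I_A$, $w_i=0$ for every $i\notin I_A\cup J$, and $w_i$ is unconstrained for $i\in J$. I will repackage this succinctly as the single condition $\supp(w)\setminus J = I_A$.

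Second, I will unpack membership in $\standard{I\cup J}$: by definition this standard subcube consists of bit strings supported on $I\cup J$, so $w\in\standard{I\cup J}$ is equivalent to $\supp(w)\setminus J\subseteq I$. Consequently,
\begin{equation*}
    \mcI_w = \br{I\subseteq I_A \Bigmid \supp(w)\setminus J\subseteq I}.
\end{equation*}

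Next I will count $\mcI_w$ by cases. If $\supp(w)\setminus J\not\subseteq I_A$, no $I\subseteq I_A$ can contain $\supp(w)\setminus J$, giving $\abs{\mcI_w}=0$. Otherwise $\mcI_w$ is the interval $[\supp(w)\setminus J,\,I_A]$ in the Boolean lattice on $I_A$, of cardinality $2^{\abs{I_A}-\abs{\supp(w)\setminus J}}$. This equals $1$ if and only if $\supp(w)\setminus J=I_A$, which by the first step is exactly the condition $w\in A$. This closes both directions of the biconditional simultaneously.

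The argument is essentially bookkeeping on supports, and I do not anticipate any real obstacle. The only subtle point is the preliminary reformulation of $A$-membership: one must remember that outside $J$ the coordinates of $w$ are rigidly dictated by $e_{I_A}$ (ones on $I_A$, zeros elsewhere), while on $J$ they are free, so that the ``free'' and ``forced'' coordinates interact cleanly with the containment $\supp(w)\setminus J\subseteq I\subseteq I_A$ that defines $\mcI_w$.
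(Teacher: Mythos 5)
Your proof is correct. The key observation---that $w\in\standard{I\cup J}$ is equivalent to $\supp(w)\setminus J\subseteq I$, so that $\mcI_w$ is either empty or the Boolean interval $[\supp(w)\setminus J,\,I_A]$ of cardinality $2^{\abs{I_A}-\abs{\supp(w)\setminus J}}$---is sound, and the reformulation $w\in A\iff \supp(w)\setminus J=I_A$ is exactly right given that $I_A\cap J=\emptyset$ and coordinates outside $I_A\cup J$ are forced to zero.

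The paper proves this lemma differently: for the forward direction it manipulates the additive decomposition $e_{I_A}+e_K=e_{I'}+e_{J'}$ to force $I'=I_A$, and for the converse it invokes the upward-closure of $\mcI_w$ (if $I\in\mcI_w$ then every superset of $I$ inside $I_A$ is too) to conclude that a singleton $\mcI_w$ must equal $\br{I_A}$. Your interval description is essentially the argument the paper reserves for the companion result, \cref{lem: size of Iw is a power of 2}, which states that $\abs{\mcI_w}$ is always $0$ or a power of $2$. So your single support computation buys both lemmas at once and is arguably the cleaner organization; the paper's version buys nothing extra, it just separates the two facts. Either way the downstream proof of \cref{lem: decomposition of subcube indicators} goes through unchanged.
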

\begin{proof} For a subset $L\subset[m]$ we write $e_L:=\{e_i,i\in L\}$.

    ($\Rightarrow$) We will show that $\mcI_w=\{I_A\}.$ Let $w=e_{I_A}+e_K$ for $K\subseteq J$, and let $I\in\mcI_w$, which by definition implies that there exist $I'\subseteq I$, $J'\subseteq J$, such that $w=e_{I'}+e_{J'}$. Since $e_{I_A}+e_K = e_{I'}+e_{J'}$ and since $I_A\cap J=\emptyset$, we have that $e_{I_A}+e_{I'}=e_K+e_{J'}=0$. Thus $I' = I_A$ and $J'=K$. Therefore $\mcI_w$ is formed of a single element, $I_A$, and so $\abs{\mcI_w}=1$.
    
    ($\Leftarrow$) We first note that if $w\in\standard{K}$ for some $K\subseteq S$ then $w\in \standard{K'}$ for every $K\subseteq K'\subseteq S$, as well. Thus, assuming $\abs{\mcI_w}=1$ we have that $\mcI_w=\br{I_A}$. As $w\in\standard{I_A\cup J}$ but $w\notin\standard{I\cup J}$ for any $I\subsetneq I_A$, we have $w=e_{I_A}+e_{K}$ for some $K\subseteq J$. Since $I_A\cap J=\emptyset$, this is precisely the condition for $w\in A$.
\end{proof}

\begin{lemma}\label{lem: size of Iw is a power of 2}
    For all $w\in\ZZ_2^m$, either $\abs{\mcI_w}=\emptyset$ or $\abs{\mcI_w}=2^\ell$ for some $\ell\in\ZZ_{\geq0}$.
\end{lemma}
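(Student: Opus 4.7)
The plan is to reduce the claim to a simple counting statement about a principal upper set in the Boolean lattice of subsets of $I_A$. Fix $w\in\ZZ_2^m$ and split its support via $L\coloneqq \supp(w)\cap I_A$ and $M\coloneqq \supp(w)\setminus I_A$, so that $\supp(w)=L\sqcup M$ with $L\subseteq I_A$ and $M\cap I_A=\emptyset$. First, I would rewrite the defining condition $w\in\standard{I\cup J}$ as $\supp(w)\subseteq I\cup J$ and exploit the fact that $I_A\cap J=\emptyset$ (established in the discussion preceding \cref{lemma: all subcubes}) to decouple the conditions on $L$ and $M$. Since $L\subseteq I_A$, no element of $L$ can be in $J$, so we need $L\subseteq I$. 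Since $M\cap I_A=\emptyset$ and $I\subseteq I_A$, no element of $M$ can be in $I$, so we need $M\subseteq J$.

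Next I would observe that whether $\mcI_w$ is empty depends solely on $M$: if $M\not\subseteq J$ then no admissible $I$ exists and $\mcI_w=\emptyset$, which is exactly the first case of the lemma. Otherwise, the condition on $I$ reduces purely to $L\subseteq I\subseteq I_A$, giving
\begin{equation*}
    \mcI_w = \br{I\subseteq I_A \Bigmid L\subseteq I}.
\end{equation*}
Parametrizing such $I$ as $I=L\cup I'$ for arbitrary $I'\subseteq I_A\setminus L$ gives $\abs{\mcI_w}=2^{\abs{I_A}-\abs{L}}$, which is a power of $2$ with exponent $\ell=\abs{I_A}-\abs{L}\in\ZZ_{\geq 0}$, proving the second case.

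There is essentially no obstacle here; the only place that could go wrong is the separation step, which requires using the disjointness $I_A\cap J=\emptyset$ that is built into the construction of $I_A$ via \cref{lemma: all subcubes}. As a sanity check, the boundary case $\ell=0$ (i.e.\ $L=I_A$ and $M\subseteq J$) recovers exactly the ``$w\in A$'' characterization from \cref{lem: condition for in A}, and combining both lemmas immediately yields the decomposition \eqref{eq: subcube decomposition} of \cref{lem: decomposition of subcube indicators} by reducing $|\mcI_w|\pmod 2$.
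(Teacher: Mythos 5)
Your proof is correct and follows essentially the same route as the paper's: both arguments show that $\mcI_w$, when nonempty, is a principal up-set $\br{I \mid I_{\min}\subseteq I\subseteq I_A}$ in the subset lattice of $I_A$, hence of cardinality $2^{\abs{I_A}-\abs{I_{\min}}}$. The only difference is cosmetic — you identify the minimal element explicitly as $L=\supp(w)\cap I_A$ via the disjointness $I_A\cap J=\emptyset$, whereas the paper obtains it from the uniqueness of the minimum-cardinality subcube containing $w$ — and your explicit version is, if anything, slightly cleaner.
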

\begin{proof} If $w\not\in \standard{I_A\cup J},$ then clearly $\mcI_w=0$. Otherwise, there is an $I\subset I_A$ such that $w\in\standard{I\cup J}$. Further, there is a unique subcube $S(w)\coloneqq I^\ast\cup J^\ast$
of minimum cardinality such that $w\in\standard{S(w)}$ (if there are two or more, then $w$ is also contained
in their intersection, itself a subcube), and $I^\ast\subset I$. Uniqueness together with $I_A\cap J=I^*\cap J=\emptyset$ implies that $I^*\in \mcI_w$ is the unique element in $\mcI_w$ with minimum cardinality, as well. As noted above, for any $I\in\mcI_w$, $I^*\subseteq I$, and so $\mcI_w = \br{I\subseteq S\mid I^*\subseteq I\subseteq I_A}$, implying that $\abs{\mcI_w}=2^{\abs{I_A}-\abs{I^*}}$.
\end{proof}

\begin{proof}[Proof of \cref{lem: decomposition of subcube indicators}]
 For $w\in\ZZ_2^m$, $\sum_{I\subseteq I_A} \indicator{\standard{I\cup J}}(w) = \abs{\mcI_w} \pmod{2}$.  For all $w\in A$, by definition of the indicator function and by \cref{lem: condition for in A} the two functions in \cref{eq: subcube decomposition} are equal to 1. For all $w\notin A$, \cref{lem: condition for in A} and \cref{lem: size of Iw is a power of 2} together imply that $\abs{\mcI_w}\equiv 0\pmod{2}$, so the two functions are again equal.
\end{proof}

\hypertarget{pf: indicator decomposition alternate}{}
We give one more proof of \cref{lem: decomposition of subcube indicators}, which will later be
generalized to prove other decomposition results in the next section.
\begin{proof}[Alternate proof of \cref{lem: decomposition of subcube indicators}]
    We will prove identity \cref{eq: subcube decomposition} by induction on $\abs{I_A}$. If $\abs{I_A}=0$ then $A=\standard{J}$ and the statement is clearly true. Now suppose it holds for all $A'$ with $\abs{I_{A'}}=k\geq 0$. Take an $A=e_{I_A}+\standard{J}$ with $\abs{I_{A}}=k+1$, and assume without loss of generality that $1\in I_{A}$ (which also implies that $1\notin J$). Now for $B\coloneqq e_{I_A\setminus\br{1}}+\standard{J}$ and $C\coloneqq e_{I_A\setminus\br{1}}+\standard{\br{1}\cup J}$, by the induction hypothesis we have
    \begin{align}
        \sum_{I\subseteq I_A} \indicator{\standard{I\cup J}} &= \sum_{I\subseteq I_A\setminus\br{1}} \indicator{\standard{I\cup J}} + \sum_{I\subseteq I_A\setminus\br{1}} \indicator{\standard{ I\cup\br{1}\cup J}},\\
        &= \indicator{B} + \indicator{C},\\
        &= \indicator{C\setminus B}.
    \end{align}
The vectors $x$ in the cubes $A,B$, and $C$ have $1$'s in the positions of $I_A\backslash\{1\}$ and
all the possible entries in the positions of $J$. Additionally, $C$ allows both 1 and 0 in position 1, 
$$
C=\{x=(\underbrace{\ast11\dots1}_{I_A}00\dots0\underbrace{\ast\ast\dots\ast}_{J}0\dots0)\},
$$
while the vectors $x\in A$ have $x_1=1$ and the vectors in $B$ have $x_1=0$.
Writing this formally, we obtain
    \begin{align*}
       A &= \br{e_{I_A}+e_{J'}\mid J'\subseteq J}, \\
       &= \br{e_{I_A\setminus\br{1}}+e_{\br{1}}+e_{J'}\mid J'\subseteq J}, \\
       &= \br{e_{I_A \setminus \br{1}}+e_{J'\cup\br{1}}\mid J'\subseteq J},\\
       &= C\setminus B,
    \end{align*}
and so $\indicator{C\setminus B}=\indicator{A}$, as desired.
\end{proof}


\subsection*{Identities over \texorpdfstring{$\ZZ$}{Z}}
We recall the statement of \cref{lem: bar indicator expression} which gives the decomposition of a $\ZZ$-valued indicator function for an arbitrary subcube $A\subcubeeq\ZZ_2^m$ into indicator functions on standard subcubes. First, recall that given $A\coloneqq x+\standard{J}$ of type $J$, there is a unique minimal-weight element $x\in A$ whose support is disjoint from $J$. Denoting the support of this element by $I_A\coloneq\supp(x)$, we have the following:
\hypertarget{pf: bar indicator expression}{}
\barindicatordecomposition*

We note that
$\barindicator{A}=\barindicator{e_{I_A\setminus\emptyset}+\standard{\emptyset\cup J}}$, so adding the second sum to the left-hand side of \cref{eq: bar indicator equality} we see that \cref{lem: bar indicator expression}
can be equivalently stated as follows.

\begin{lemma}[Equivalent version of \cref{lem: bar indicator expression}]\label{lem: bar indicator expression with proof}
Let $A\subcubeeq\ZZ_2^m$ be a subcube of type $J$. Then
    \begin{equation}\label{eq: bar indicator equality proven}
        \sum_{I\subseteq I_A} \barindicator{\standard{I\cup J}} = \sum_{i=0}^\abs{I_A}\sum_{I\subseteq I_A\colon \abs{I}=i} 2^i\cdot \barindicator{e_{I_A\setminus I}+\standard{I\cup J}}.
    \end{equation}
\end{lemma}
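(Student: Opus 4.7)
The plan is to establish the equivalent version by induction on $\abs{I_A}$, following the strategy of the alternate proof of \cref{lem: decomposition of subcube indicators}. The base case $\abs{I_A}=0$ reduces both sides to $\barindicator{\standard{J}}$ (only $I=\emptyset$ appears, with coefficient $2^0=1$ and $e_\emptyset+\standard{J}=\standard{J}$), so the identity is trivial.

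For the inductive step, assume the identity holds for all subcubes $A'$ with $\abs{I_{A'}}<\abs{I_A}$. Given $A$ with $\abs{I_A}=k+1$, pick (without loss of generality) a distinguished index $1\in I_A$, so that $1\notin J$. Split the left-hand side according to whether $1\in I$:
\begin{equation*}
\sum_{I\subseteq I_A}\barindicator{\standard{I\cup J}}
= \sum_{I\subseteq I_A\setminus\{1\}}\barindicator{\standard{I\cup J}}
+ \sum_{I\subseteq I_A\setminus\{1\}}\barindicator{\standard{I\cup\{1\}\cup J}}.
\end{equation*}
The first sum is exactly the LHS of the lemma applied to the smaller subcube $B=e_{I_A\setminus\{1\}}+\standard{J}$ (which has $I_B=I_A\setminus\{1\}$ and $J_B=J$), and the second sum is the LHS of the lemma applied to $C=e_{I_A\setminus\{1\}}+\standard{\{1\}\cup J}$ (of type $J_C=\{1\}\cup J$, with $I_C=I_A\setminus\{1\}$). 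Apply the inductive hypothesis to each, producing two sums of shifted-subcube indicators with coefficients $2^{|I|}$ over $I\subseteq I_A\setminus\{1\}$.

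Next, perform the analogous split of the RHS for $A$ according to whether $1\in I$. The shift vector $e_{I_A\setminus I}$ behaves differently in the two cases: when $1\in I$ one has $I_A\setminus I\subseteq I_A\setminus\{1\}$ and $I\cup J=(I\setminus\{1\})\cup(\{1\}\cup J)$, so the corresponding term coincides with a term in the $C$-expansion under the substitution $I'=I\setminus\{1\}$ (and the exponent of $2$ increases by one, matching $2^{|I|}=2\cdot 2^{|I'|}$); when $1\notin I$ one has $1\in I_A\setminus I$, so the subcube splits as $e_{\{1\}}+e_{(I_A\setminus\{1\})\setminus I}+\standard{I\cup J}$, identifying it (up to the disjointness $1\notin J$) with the corresponding shifted subcube produced by the hypothesis for $B$. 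Combining the two identifications recovers precisely the two halves of the split LHS from the two inductive hypotheses, closing the induction.

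The main obstacle will be the coefficient bookkeeping: one must verify that the factor-of-two gained when $1\in I$ is picked up by the $C$-induction (since $C$ has one more free direction than $B$ does), and that the two shifted-subcube identifications on the RHS exhaust each term without double counting. This is the $\ZZ$-valued analogue of the symmetric-difference collapse $\indicator{B}+\indicator{C}=\indicator{C\setminus B}$ used in the alternate proof of \cref{lem: decomposition of subcube indicators}; in characteristic zero one instead uses $\barindicator{B}+\barindicator{C}=\barindicator{A}+2\barindicator{B}$ (valid because $B\subset C$ with $C\setminus B=A$), and the ``$2\barindicator{B}$'' term is exactly what feeds the higher-power-of-two corrections through the induction.
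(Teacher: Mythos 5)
Your strategy coincides with the paper's (induction on $\abs{I_A}$, splitting off a distinguished coordinate $1\in I_A$, applying the hypothesis to $B=e_{I_A\setminus\br{1}}+\standard{J}$ and $C=e_{I_A\setminus\br{1}}+\standard{\br{1}\cup J}$, and recombining), but the step you defer as ``coefficient bookkeeping'' is precisely where the argument breaks, and it cannot be repaired. After the inductive hypothesis, for each $I\subseteq I_A\setminus\br{1}$ you hold, writing $w_I\coloneqq e_{(I_A\setminus\br{1})\setminus I}$, the pair
\begin{equation*}
2^{\abs{I}}\left(\barindicator{w_I+\standard{I\cup J}}+\barindicator{w_I+\standard{I\cup\br{1}\cup J}}\right),
\end{equation*}
whereas the target right-hand side for $A$ requires
\begin{equation*}
2^{\abs{I}}\,\barindicator{e_1+w_I+\standard{I\cup J}}+2^{\abs{I}+1}\,\barindicator{w_I+\standard{I\cup\br{1}\cup J}}.
\end{equation*}
Your own identity $\barindicator{B'}+\barindicator{C'}=\barindicator{C'\setminus B'}+2\,\barindicator{B'}$ (which is correct) places the doubled term on the \emph{small} cube $B'=w_I+\standard{I\cup J}$, while the target needs it on the \emph{large} cube $C'=w_I+\standard{I\cup\br{1}\cup J}$. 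The extra factor of two is not ``picked up by the $C$-induction'': that induction delivers coefficient $2^{\abs{I}}$ on the type-$(I\cup\br{1}\cup J)$ cube, not $2^{\abs{I}+1}$. The per-term discrepancy is $2\,\barindicator{w_I+\standard{I\cup\br{1}\cup J}}-2\,\barindicator{w_I+\standard{I\cup J}}=2\,\barindicator{e_1+w_I+\standard{I\cup J}}\neq 0$ over $\ZZ$, so the induction does not close.

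In fact the identity as stated is false. Take $m=1$, $J=\emptyset$, $A=\br{1}$, so $I_A=\br{1}$: the left side is $\barindicator{\br{0}}+\barindicator{\br{0,1}}$, with values $(2,1)$ at $(0,1)$, while the right side is $\barindicator{\br{1}}+2\,\barindicator{\br{0,1}}$, with values $(2,3)$. A pointwise count (both sides vanish unless $\supp(w)\subseteq I_A\cup J$; for such $w$ with $T\coloneqq\supp(w)\cap I_A$ the left side equals $2^{\abs{I_A}-\abs{T}}$ and $w$ lies in $e_{I_A\setminus I}+\standard{I\cup J}$ exactly when $I\supseteq I_A\setminus T$) followed by M\"obius inversion shows the correct decomposition carries alternating signs,
\begin{equation*}
\sum_{I\subseteq I_A}\barindicator{\standard{I\cup J}}=\sum_{I\subseteq I_A}(-1)^{\abs{I_A}-\abs{I}}\,2^{\abs{I}}\,\barindicator{e_{I_A\setminus I}+\standard{I\cup J}},
\end{equation*}
which is what your identity actually yields when $\abs{I_A}=1$. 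For what it is worth, the paper's own proof makes the same slip at its ``$(*)$'' step, silently replacing $2\,\barindicator{B}$ by $2\,\barindicator{C}$; so the gap here is not merely in your write-up but in the statement you are trying to prove, and any correct induction must target the signed version above.
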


This lemma is proved by induction using the following straightforward observation:
\begin{align*}
    \barindicator{\standard{J}}+\barindicator{\standard{J\cup\br{1}}} &= \barindicator{e_1+\standard{J}}+2\cdot\barindicator{\standard{J}}. \tag{$*$}
\end{align*}
In other words, if we add the indicator functions of two standard subcubes, one of which is one dimension larger and contains the other, we double count the terms in the smaller subcube ($2\cdot\barindicator{\standard{J}}$), but only count the remaining terms, precisely the elements of the coset $e_1+\standard{J}$, once ($\indicator{e_1+\standard{J}}$).
Using this relation repeatedly accounts for the accumulating powers of 2 in \cref{eq: bar indicator equality proven}.


\begin{proof}[Proof of \cref{lem: bar indicator expression with proof}]
 Induction on $\abs{I_A}$. If $\abs{I_A}=0$ then $A=\standard{J}$ and the statement is clearly true. Now suppose it holds for all $A'$ with $\abs{I_{A'}}=k\geq 0$. Take an $A=e_{I_A}+\standard{J}$ with $\abs{I_{A}}=k+1$, and assume without loss of generality that $1\in I_{A}\setminus J$.
    
    We compute:

    \begin{align}
        \sum_{I\subseteq I_A} \barindicator{\standard{I\cup J}} &= \sum_{I\subseteq I_A\setminus\br{1}} \barindicator{\standard{I\cup J}} + \sum_{I\subseteq I_A\setminus\br{1}} \barindicator{\standard{ I\cup\br{1}\cup J}}, \\
        \since{I.H.}&= \sum_{i=0}^\abs{I_A\setminus\br{1}}\sum_{I\subseteq I_A\setminus\br{1}\colon \abs{I}=i} 2^i\cdot \barindicator{e_{I_A\setminus (I\cup\br{1})}+\standard{I\cup J}} \\
        &\hspace{2em}+ \sum_{i=0}^\abs{I_A\setminus\br{1}}\sum_{I\subseteq I_A\setminus\br{1}\colon \abs{I}=i} 2^i\cdot \barindicator{e_{I_A\setminus (I\cup\br{1})}+\standard{I\cup\br{1}\cup J}}\nonumber\\
        \since{Combining sums} &= \sum_{i=0}^\abs{I_A\setminus\br{1}}\sum_{I\subseteq I_A\setminus\br{1}\colon \abs{I}=i} 2^i\cdot\left(\barindicator{e_{I_A\setminus (I\cup\br{1})}+\standard{I\cup J}} + 
        \barindicator{e_{I_A\setminus (I\cup\br{1})}+\standard{I\cup\br{1}\cup J}}
        \right), \\
        \since{$*$}&= \sum_{i=0}^\abs{I_A\setminus\br{1}}\sum_{I\subseteq I_A\setminus\br{1}\colon \abs{I}=i}\left( 2^i\cdot \barindicator{e_{I_A\setminus I}+\standard{I\cup J}} + 2^{i+1}\cdot \barindicator{e_{I_A\setminus (I\cup\br{1})}+\standard{I\cup\br{1}\cup J}}\right),\label{eq: main step decomp lemma}\\
        &=\sum_{i=0}^\abs{I_A}\sum_{I\subseteq I_A\colon \abs{I}=i} 2^i\cdot \barindicator{e_{I_A\setminus I}+\standard{I\cup J}},
    \end{align}
    as desired.
\end{proof}

\hypertarget{pf: strong transversal indicator}{}
\strongtransversalindicator*

\begin{proof}[Proof of \cref{lem: strong transversal indicator}]
    The former statement is a simple corollary of the latter. We will prove the latter statement by showing the two sides evaluate to the same number for every $x\in\ZZ_2^m$. Let $x\in\ZZ_2^m$. We proceed in cases.
    \begin{enumerate}[label=\textbf{\Roman*.},leftmargin=*]
        \item ($x\notin\standard{K})$. Clearly $x\notin\standard{J}$ for any $J\subseteq K$, so the left and right hand sides of \cref{eq: standard subcube bar indicator decomposition} both evaluate to 0.
        \item ($x\in\standard{K}$) In this case $\supp(x)\subseteq K$. 
        For $i\in\br{0,\dots,\abs{K}}$ define the collection $\mcK_i\subseteq\powerset{K}$ as
        \begin{equation}
            \mcK_i\coloneqq \br{J\subseteq K\Bigmid J\supseteq\supp(x),\; \abs{J}=i},
        \end{equation}
        so that 
        \begin{equation}\label{eq: sub-collections appendix proof}
            \sum_{i=0}^{\abs{K}}\sum_{J\subseteq K\colon \abs{J}=i} 2^{\abs{K}-i}(-1)^i \cdot \tildeindicator{\standard{J}}(x) = \sum_{i=0}^{\abs{K}} 2^{\abs{K}-i} (-1)^{i+\abs{x}} \abs{\mcK_i}.
        \end{equation}
        Our goal is to show that the expression on the right hand side evaluates to 1.

        Clearly if $i<\abs{x}$ then $\abs{\mcK_i}=0$. Consider now $i\geq \abs{x}$. 
        We seek to determine the number of subsets $J$ for which $\supp(x)\subseteq J\subseteq K$, which have precisely $i\geq \abs{x}$ elements. Since $J$ must contain all of $\supp(x)$, we must pick $i-\abs{x}$ additional elements from the set $K\setminus\supp(x)$ to complete a set $J\in\mcK_i$. As this set has size exactly $\abs{K\setminus\supp(x)}=\abs{K}-\abs{x}$, we have that
        \begin{equation}
            \abs{\mcK_i} = \binom{\abs{K}-\abs{x}}{i-\abs{x}}.
        \end{equation}
        Continuing from \cref{eq: sub-collections appendix proof}, we therefore have
        \begin{align}
            \sum_{i=0}^{\abs{K}}2^{\abs{K}-i} (-1)^{i+\abs{x}} \abs{\mcK_i} &= \sum_{i=\abs{x}}^{\abs{K}} 2^{\abs{K}-i}(-1)^{i+\abs{x}} \binom{\abs{K}-\abs{x}}{i-\abs{x}} ,\\
            \since{$(-1)^n=(-1)^{-n}$}&= \sum_{i=\abs{x}}^{\abs{K}}  2^{\abs{K}-i}(-1)^{i-\abs{x}} \binom{\abs{K}-\abs{x}}{i-\abs{x}}.\\
            \intertext{Re-indexing the summation and using the binomial theorem we have}
            &=\sum_{j=0}^{\abs{K}-\abs{x}}  2^{(\abs{K}-\abs{x})-j} (-1)^j \binom{\abs{K}-\abs{x}}{j},\\
            &= (2-1)^{\abs{K}-\abs{x}},\\
            &= 1,
        \end{align}
        as desired.
    \end{enumerate}
\end{proof}

\section{Classifying quantum Reed--Muller code logic}
\subsection{Unsigned \texorpdfstring{$\Z{k}$}{Zk} subcube operators}\label{sec: unsigned dimension conditions}
We now turn to the case of unsigned subcube operators. As mentioned before, proofs in this case will be nearly unchanged from the corresponding versions of the previous section, except that the condition for an operator $\Z{k}_A X_B \Z{k}_A^{\dagger}$ to be phase-free is slightly different. The condition does not change the validity of the previous proofs, but we will repeat them here with the minor changes made.

Recalling that $\omega_k = e^{-i \frac \pi {2^k}}$, a direct consequence of the conjugation identities for $\Z{k}$ and $\Z{k}^\dagger$ given in \cref{lem:conjugation-identities} is the following:
\begin{fact}\label{fact: unsigned conjugation identity}
    Let $A,B\subcubeeq\ZZ_2^m$ be subcubes.
    \begin{align}
        \Z{k}_A X_B \Z{k}_A^{\dagger} &= \omega_k^{\abs{A\cap B}}\Z{k-1}_{A\cap B} X_B.  
    \end{align}
\end{fact}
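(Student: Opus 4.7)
The plan is to reduce the identity to the single-qubit conjugation identity of \cref{lem:conjugation-identities} by exploiting the fact that $\Z{k}_A$ and $X_B$ are transversal, i.e., tensor products of single-qubit gates. Conjugation distributes over tensor products: if we write $\Z{k}_A = \bigotimes_{x\in\ZZ_2^m} U_x$ and $X_B = \bigotimes_{x\in\ZZ_2^m} V_x$, then $\Z{k}_A X_B \Z{k}_A^\dagger = \bigotimes_{x\in\ZZ_2^m} U_x V_x U_x^\dagger$, so I only need to analyze each qubit independently.

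First, I would classify each qubit $x\in\ZZ_2^m$ into four cases based on which of $A, B$ contain it. For $x\in A\cap B$, we have $U_x V_x U_x^\dagger = \Z{k} X \Z{k}^\dagger = \omega_k \Z{k-1} X$ by \cref{lem:conjugation-identities}. For $x\in B\setminus A$, we get $\eye\cdot X \cdot \eye = X$. For $x\in A\setminus B$, we get $\Z{k}\cdot\eye\cdot \Z{k}^\dagger = \eye$. For $x\notin A\cup B$, we get $\eye$.

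Second, I would collect these contributions. The scalar phases $\omega_k$ coming from each qubit in $A\cap B$ factor out of the tensor product and multiply to $\omega_k^{\abs{A\cap B}}$. The $\Z{k-1}$ factors on qubits in $A\cap B$ assemble (together with identities elsewhere) into $\Z{k-1}_{A\cap B}$, and the $X$ factors on the qubits of $B$ assemble into $X_B$. Since $\Z{k-1}_{A\cap B}$ and $X_B$ are diagonal and off-diagonal on disjoint-or-common qubit supports, their ordering does not affect the result after collecting into the stated product form. Combining yields $\omega_k^{\abs{A\cap B}}\Z{k-1}_{A\cap B} X_B$, as claimed.

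There is no real obstacle here—the argument is purely a qubit-wise bookkeeping exercise. The only care needed is to track the cases correctly and observe that the $\Z{k-1}$ and $X$ operators on the same qubit in $A\cap B$ can be collected together as a tensor factor without any further phase contributions beyond the $\omega_k$ per overlapping qubit.
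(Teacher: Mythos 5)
Your proof is correct and matches the paper's approach: the paper states this fact as a direct consequence of the single-qubit identity $\Z{k}X\Z{k}^\dagger=\omega_k\Z{k-1}X$ from \cref{lem:conjugation-identities}, applied qubit-wise exactly as you do, with the phases accumulating to $\omega_k^{\abs{A\cap B}}$. The only nitpick is your remark that the ordering of $\Z{k-1}_{A\cap B}$ and $X_B$ "does not affect the result" --- these operators do not commute on the qubits of $A\cap B$, but your per-qubit computation already produces $\Z{k-1}X$ in the correct order, so the assembled product $\Z{k-1}_{A\cap B}X_B$ is the right one without needing any commutation argument.
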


As the order of $\omega_k$ is $2^{k+1}$, and the number of elements in a subcube is two to the power of its dimension, the following is immediate:

\begin{corollary}\label{cor: Zk-conjugation-identity}
    For subcubes $A,B\subcubeeq\ZZ_2^m$, $\Z{k}_A X_B \Z{k}_A^{\dagger}$ is phase-free if and only if $\dim A\cap B\geq k+1$.
\end{corollary}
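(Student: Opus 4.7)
The plan is to derive the corollary directly from the preceding Fact, which already expresses $\Z{k}_A X_B \Z{k}_A^\dagger$ as a phase times $\Z{k-1}_{A\cap B}X_B$. By definition, this product is phase-free precisely when that scalar prefactor $\omega_k^{|A\cap B|}$ equals $1$, so the task reduces to determining for which intersection sizes the scalar vanishes.

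First I would compute the order of the root of unity $\omega_k = e^{-i\pi/2^k}$, which is $2^{k+1}$. Thus $\omega_k^{|A\cap B|}=1$ if and only if $2^{k+1}$ divides $|A\cap B|$. Next I would handle the (essentially trivial) case $A\cap B=\emptyset$: here $\Z{k-1}_{A\cap B}=\eye$, the overall scalar equals $\omega_k^0=1$, and the identity $\Z{k}_A X_B \Z{k}_A^\dagger = X_B$ is phase-free by inspection; we may regard this case as consistent with the statement either by convention on $\dim\emptyset$ or by noting that $X_A$ and $X_B$ already commute on disjoint support so the assertion is vacuous in the disjointness regime that the theorem's ``if and only if'' quantifies over.

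For the non-empty case, I would invoke the fact established in \cref{sec: geometry} that the intersection of two subcubes, when non-empty, is itself a subcube. In particular, if $A$ has type $J$ and $B$ has type $K$ and $A\cap B\ne\emptyset$, then $A\cap B$ is a subcube of type $J\cap K$ and hence $|A\cap B|=2^{\dim A\cap B}$. Substituting this into the divisibility condition $2^{k+1}\mid |A\cap B|$ yields $2^{k+1}\mid 2^{\dim A\cap B}$, which holds if and only if $\dim A\cap B\ge k+1$.

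This is essentially all there is to the proof: the content is the identification $|A\cap B|=2^{\dim A\cap B}$ for subcubes, and the order calculation for $\omega_k$. There is no real obstacle; the only subtlety is bookkeeping the edge case $A\cap B=\emptyset$, for which a brief sentence should suffice to reconcile the statement with the convention that disjoint subcubes trivially yield commuting transversal operators.
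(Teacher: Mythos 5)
Your proof is correct and follows exactly the paper's (one-line) argument: apply \cref{fact: unsigned conjugation identity}, note that $\omega_k$ has order $2^{k+1}$ and that a non-empty intersection of subcubes is a subcube of size $2^{\dim A\cap B}$, and read off the divisibility condition. Your extra remark on the empty-intersection edge case is a reasonable clarification the paper leaves implicit (its signed analogue, \cref{fact: tildeZk-conjugation-identity}, explicitly assumes non-trivial intersection).
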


\begin{lemma}\label{lem: unsigned operator phase free condition}
    For an arbitrary subcube $A\subcubeeq\ZZ_2^m$, $\Z{k}_A X_B \Z{k}_A^{\dagger}$ is phase-free for every $X$ stabilizer generator, $X_B$, of $QRM_m(q,r)$ if and only if $\dim A\geq q+k+1$.
\end{lemma}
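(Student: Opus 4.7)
The plan is to mirror the structure of the proof of \cref{lem: signed operator phase free condition}, simply swapping the input from \cref{fact: tildeZk-conjugation-identity} for the stronger threshold supplied by \cref{cor: Zk-conjugation-identity}. By definition of $QRM_m(q,r)$, the $X$ stabilizer generators are exactly the operators $X_B$ where $B\subcubeeq\ZZ_2^m$ is an $(m-q)$-cube, so it suffices to characterize when $\Z{k}_A X_B\Z{k}_A^\dagger$ is phase-free uniformly over all such $B$.

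First I would dispose of the trivial case: if $A\cap B=\emptyset$ then $\abs{A\cap B}=0$, so by \cref{fact: unsigned conjugation identity} the conjugation acts as $X_B$ with no phase and with $\Z{k-1}_\emptyset=\eye$, hence it is automatically phase-free. It therefore only matters what happens for $B$ that actually overlap $A$, i.e.\ $B\in\mcB_{A,m-q}$ in the notation of \cref{lem:overlap-guarantee}. For such $B$, \cref{cor: Zk-conjugation-identity} says phase-freeness is equivalent to $\dim A\cap B\ge k+1$.

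The final step is then a direct invocation of \cref{lem:overlap-guarantee} with $\ell=m-q$ and $p=k+1$: the statement ``$\dim A\cap B\ge k+1$ for every $B\in\mcB_{A,m-q}$'' is equivalent to
\begin{equation*}
    \dim A\;\ge\;m-(m-q)+(k+1)\;=\;q+k+1.
\end{equation*}
Combining the two directions gives the biconditional. For the ``only if'' direction one should additionally observe that when $\dim A<q+k+1$ one can explicitly produce an $(m-q)$-cube $B$ with $1\le\dim A\cap B\le k$: take the type of $A$ to be $J$, extend $[m]\setminus J$ by any $k$ elements of $J$ (possible since otherwise $\dim A\ge q+k+1$), and let $K$ be the resulting set of size $m-q+k$; then any $B$ of type $K$ meeting $A$ has $\dim A\cap B=|J\cap K|\le k$, yielding a non-trivial $\omega_k$-phase by \cref{fact: unsigned conjugation identity}. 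The main (minor) subtlety is to ensure this $B$ witnesses a genuinely non-trivial phase even if $\Z{k-1}_{A\cap B}$ were to be a logical identity; but since the phase $\omega_k^{\abs{A\cap B}}$ is a scalar with $\abs{A\cap B}=2^{\dim A\cap B}<2^{k+1}$, it is a non-trivial root of unity that multiplies every codestate, obstructing phase-freeness regardless of the action of $\Z{k-1}_{A\cap B}$.
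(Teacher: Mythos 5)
Your argument is correct and follows the paper's proof exactly: reduce via \cref{cor: Zk-conjugation-identity} to the condition $\dim A\cap B\geq k+1$ and then invoke \cref{lem:overlap-guarantee} with $\ell=m-q$ and $p=k+1$. The additional explicit witness you give for the ``only if'' direction is redundant (that direction is already contained in the biconditional of \cref{lem:overlap-guarantee}) and contains a small slip --- your set $K$ has size $m-\abs{J}+k$, not $m-q+k$, and may have size strictly greater than $m-q$, so one would still need to pass to a subset of size exactly $m-q$ to obtain a genuine stabilizer generator --- but this does not affect the validity of the main argument.
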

\begin{proof}
    By definition of $QRM_m(q,r)$, the $X_B$'s that are stabilizer generators are precisely those with $\dim B=m-q$. By \cref{cor: Zk-conjugation-identity}, the statement is equivalent to: $\Z{k}_A X_B \Z{k}_A^{\dagger}$ is phase-free if and only if $\dim A\cap B\geq k+1$ for every subcube $B$ that has non-trivial intersection with $A$ and satisfies $\dim B\geq m-q$. The desired result therefore holds by \cref{lem:overlap-guarantee}.
\end{proof}

\begin{claim}\label{clm: trivial logic unsigned}
    For $k\in\ZZ_{\geq 0}$ and a subcube, $A$, $\tildeZ{k}_A$ is a level-$k$ Clifford stabilizer for $QRM_m(q,r)$ if and only if $\dim A\geq (k+1)r+1$.
\end{claim}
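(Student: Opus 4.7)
My plan is to mirror the inductive argument of the signed version (\cref{clm: trivial logic signed}) on $k$, with the base case $k=0$ supplied by \cref{lem: base cases}. The only substantive modifications from the signed skeleton come from two facts specific to the unsigned case: the conjugation identity in \cref{fact: unsigned conjugation identity} now produces an extra global phase $\omega_{k+1}^{\abs{A\cap B}}$, and the phase-free threshold from \cref{cor: Zk-conjugation-identity} is the strictly stronger $\dim(A\cap B) \geq k+1$ rather than $\geq 1$. Both are readily absorbed into the signed skeleton.

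For the easy direction ($\Leftarrow$), I would suppose $\dim A \geq (k+2)r+1$ and then, by \cref{lem:Clifford-stabilizer-equivalences}, reduce to verifying $\Z{k+1}_A X_B \Z{k+1}_A^\dagger \equiv X_B$ for every undetectable $X$-type Pauli $X_B$ (the $Z$-type case being automatic since $\Z{k+1}_A$ is diagonal). Such $X_B$ satisfies $\dim B \geq m-r$, and applying \cref{lem:overlap-guarantee} with $p=k+2$ yields $\dim(A\cap B) \geq k+2$ (this uses the identity $(k+2)r+1 - (r+k+2) = (k+1)(r-1) \geq 0$), so $\Z{k+1}_A X_B \Z{k+1}_A^\dagger = \Z{k}_{A\cap B} X_B$ is phase-free. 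A second application of \cref{lem:overlap-guarantee} with $p=(k+1)r+1$ shows $\dim(A\cap B) \geq (k+1)r+1$, whence $\Z{k}_{A\cap B} \in \stabs{k}$ by the inductive hypothesis.

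For the reverse direction ($\Rightarrow$), I would assume $\Z{k+1}_A \in \stabs{k+1}$ and argue by contradiction that $\dim A > (k+2)r$. First I would note a convenient preliminary: since $(\Z{k+1}_A)^2 = \Z{k}_A$ and squaring a level-$(k+1)$ stabilizer yields an element of $\stabs{k}$ (this operator remains in $\Cl{k}$ and still acts as a scalar on the code), the inductive hypothesis already gives $\dim A \geq (k+1)r+1 \geq r$. Writing $A = x + \standard{K}$ with $\abs{K} \geq r$, I can then pick $K^* \supseteq S\setminus K$ of size $\abs{K^*} = m-r$ and form $B = x + \standard{K^*}$, producing an undetectable $X$-type Pauli with $\dim(A\cap B) = \abs{K}-r \leq (k+1)r$ under the contradiction hypothesis. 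Then \cref{fact: unsigned conjugation identity} gives $\Z{k+1}_A X_B \Z{k+1}_A^\dagger = \omega_{k+1}^{\abs{A\cap B}} \Z{k}_{A\cap B} X_B$, and the stabilizer assumption forces this to be $\equiv X_B$; since logical equivalence is insensitive to global phases, this reduces to $\Z{k}_{A\cap B} \in \stabs{k}$. The inductive hypothesis then demands $\dim(A\cap B) \geq (k+1)r+1$, contradicting the upper bound.

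The principal conceptual hurdle will be confirming that the phase $\omega_{k+1}^{\abs{A\cap B}}$ plays no role in determining stabilizer membership; this follows directly from the phase-invariance of the equivalence in \cref{def: logically equivalent}. A secondary subtlety --- the need for $\abs{K} \geq r$ in order to form $K^*$ --- is handled cleanly by the preliminary squaring step, which is a novel ingredient compared to the signed proof and exploits the clean Clifford-level drop $\Z{k+1}^2 = \Z{k}$ available for diagonal operators.
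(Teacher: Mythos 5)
Your proof is correct and follows essentially the same inductive strategy as the paper's: the same witness $B=x+\standard{K^*}$ in the forward direction and the same two applications of \cref{lem:overlap-guarantee} (for phase-freeness and for the dimension bound) in the backward direction. Your preliminary squaring step $(\Z{k+1}_A)^2=\Z{k}_A$ is a small but genuine refinement: the paper's choice of $K^*$ with $S\setminus K\subseteq K^*$ and $\abs{K^*}=m-r$ implicitly requires $\dim A\geq r$, which the paper never justifies, whereas your argument supplies it.
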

\begin{proof}
    Induction on $k$. When $k=0$, $\tildeZ{0}_A=Z_A$ and the statement is true by \cref{lem: base cases}. We suppose now that the statement is true for $k\geq 0$, and we consider $k+1$.

    ($\Rightarrow$) Suppose for contradiction that there exists a subcube $A$ such that (1) $\Z{k+1}_A\in\stabs{k+1}$, but (2) $\dim A\leq (k+2)r$. As $A$ is a subcube, there exist $x\in\ZZ_2^m$
    and $K\subseteq S$ ($\abs{K}=\dim A$), such that $A=x+\standard{K}$.

    Now, by assumption of $\Z{k}_A\in\stabs{k+1}$, it must be true that for every $X$ logical subcube operator, $X_B$, the operators $\Z{k}_A X_B \Z{k}_A^\dagger\equiv X_B$ are equivalent. Given our assumption on the upper bound on the dimension of $A$ we will give a contradiction by constructing a logical $X_B$ for which $\Z{k}_A X_B \Z{k}_A^\dagger\not\equiv X_B$.

    Let $K^*\subseteq S$ be any subset of $S$ with $\abs{K^*}=m-r$ elements such that $S\setminus K\subseteq K^*$. Define the subcube $B\coloneqq x+\standard{K^*}$, so that $A\cap B = x +\standard{K\cap K^*}\neq \emptyset$. By \cref{fact: unsigned conjugation identity} we have that $\Z{k+1}_A X_B \Z{k+1}_A^\dagger = \omega_k^{\abs{A\cap B}}\Z{k}_{A\cap B} X_B$. This implies that for $\Z{k+1}_A X_B \Z{k+1}_A^\dagger \in\stabs{k}$ to be true it must be that $\Z{k}_{A\cap B}\in\stabs{k}$, as otherwise $\omega_k^{\abs{A\cap B}}\Z{k}_{A\cap B} X_B\ket\psi$ cannot equal $X_B\ket\psi$ for every code state $\ket\psi\in QRM_m(q,r)$. Therefore, we have deduced that $\Z{k}_{A\cap B}\in\stabs{k}$. This, however, contradicts our induction hypothesis: we can upper bound
    \begin{align}
        \dim A\cap B &= \abs{K\cap K^*}, \\
        &= \abs{K^*\setminus (S\setminus J)},\\
        \since{$S\setminus J\subseteq K$}&= \abs{K^*} - \abs{S\setminus J},\\
        &= (m - r)- (m -\abs{J}),\\
        &= \abs{J}-r,\\
        \since{$\dim A\leq (k+2)r$ by (2)} &\leq  (k+1)r,
    \end{align}
    but by the induction hypothesis it must be that $\dim A\cap B\geq (k+1)r+1$. Thus, $\Z{k}_{A\cap B}\notin\stabs{k}$ implying that $\Z{k+1}_A\notin \stabs{k+1}$.

    ($\Leftarrow$) Suppose $A$ is a subcube with $\dim A\geq (k+2)r+1$. Let $B$ be an arbitrary subcube for which $X_B$ is an undetectable $X$ error, which by \cref{lem: base cases} occurs if and only if $\dim B\geq m-r$. By \cref{lem:Clifford-stabilizer-equivalences}, the desired result, $\Z{k+1}_A\in\stabs{k+1}$, holds if and only if $\Z{k+1}_A X_B \Z{k+1}_A^{\dagger}\equiv X_B$. Thus, we consider the operator $\Z{k+1}_A X_B \Z{k+1}_A^{\dagger}$.

    As $\dim A\geq k+r+1\geq k+q+1$, \cref{lem: unsigned operator phase free condition} tells us that the operator is phase-free, and so $\Z{k+1}_A X_B \Z{k+1}_A^{\dagger} = \Z{k}_{A\cap B} X_B$. If we can show that $\Z{k}_{A\cap B}$ is a level-$k$ Clifford stabilizer for the code then the desired result holds. Indeed, since $\dim A\geq (k+2)r+1=m-(m-r)+(k+1)r+1$, by \cref{lem:overlap-guarantee} we have that $\dim A\cap B\geq (k+1)r+1$ and the result holds by the induction hypothesis.
\end{proof}

\begin{claim}\label{clm: non-trivial logic unsigned}
    For $k\in\ZZ_{\geq 0}$ and a subcube, $A$, $\Z{k}_A$ is a level-$k$ undetectable Clifford error for $QRM_m(q,r)$ if and only if $\dim A\geq q+kr+1$.
\end{claim}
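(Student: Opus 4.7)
The approach is to mirror the proof of Claim~5.3 (the signed analog), replacing every appeal to a ``signed'' auxiliary statement by its ``unsigned'' counterpart from Appendix~B.1. Concretely, Fact~5.1 is replaced by Fact~B.1 (the unsigned conjugation identity $\Z{k}_A X_B \Z{k}_A^\dagger = \omega_k^{|A\cap B|}\Z{k-1}_{A\cap B} X_B$), Lemma~5.2 is replaced by Lemma~B.5 (the phase-free condition $\dim A\cap B\ge k+1$, equivalently $\dim A\ge q+k+1$, when $B$ ranges over $X$ stabilizer generators), and Claim~5.3 (trivial signed logic) is replaced by Claim~B.3 (trivial unsigned logic).

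The execution proceeds as follows. First, I would reduce the definition $\Z{k}_A\in\errors{k}$ to the requirement that $\Z{k}_A X_B\Z{k}_A^\dagger\in\stabs{k-1}$ for every $X$ stabilizer generator $X_B$ (i.e., every $B$ with $\dim B = m-q$); note that the $Z$-type stabilizers commute with $\Z{k}_A$ trivially. Applying Fact~B.1 rewrites this requirement as $\omega_k^{|A\cap B|}\,\Z{k-1}_{A\cap B}\in\stabs{k-1}$ for every such $B$.

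For the ($\Leftarrow$) direction, assume $\dim A\ge q+kr+1$. Since $r\ge 1$ we have $kr+1\ge k+1$, hence $\dim A\ge q+k+1$, so Lemma~B.5 guarantees the conjugation is phase-free ($\omega_k^{|A\cap B|}=1$ for every stabilizer generator $B$). By Lemma~4.3 applied with $\ell=m-q$ and $p=kr+1$, $\dim A\ge q+kr+1$ forces $\dim A\cap B\ge kr+1$ for every $B$ with $\dim B\ge m-q$; Claim~B.3 then yields $\Z{k-1}_{A\cap B}\in\stabs{k-1}$, so $\Z{k}_A\in\errors{k}$. For the ($\Rightarrow$) direction, assume $\Z{k}_A\in\errors{k}$. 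One must first rule out a nontrivial global phase: if $\omega_k^{|A\cap B|}\neq 1$ for some stabilizer generator $B$, then $\omega_k^{|A\cap B|}\Z{k-1}_{A\cap B} X_B$ cannot fix the code space (the overall scalar would shift code states off themselves), so Lemma~B.5 gives $\dim A\ge q+k+1$ and the conjugation is phase-free. Then $\Z{k-1}_{A\cap B}\in\stabs{k-1}$ for every $B$ with $\dim B=m-q$; by Claim~B.3 this forces $\dim A\cap B\ge kr+1$ for every such $B$, and by Lemma~4.3 this forces $\dim A\ge m-(m-q)+kr+1 = q+kr+1$, as desired.

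The main point to be careful about, and the only substantive difference from the signed case, is the phase-free threshold: in the signed case any $1$-dimensional overlap kills the phase, while in the unsigned case one needs $\dim A\cap B\ge k+1$ because the accumulated phase is $\omega_k^{|A\cap B|} = e^{-i\pi|A\cap B|/2^k}$, which equals $1$ exactly when $2^{k+1}\mid |A\cap B|$, i.e., when $|A\cap B|$ is itself a power of $2$ that is at least $2^{k+1}$. This stronger threshold is already baked into Lemma~B.5, and it is harmlessly dominated by the logical threshold $\dim A\cap B\ge kr+1$ needed from Claim~B.3 (using $r\ge 1$, $kr+1\ge k+1$), so no additional case analysis is required.
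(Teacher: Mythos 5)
Your proposal is correct and follows essentially the same route as the paper's own proof: reduce membership in $\errors{k}$ to the condition $\omega_k^{|A\cap B|}\Z{k-1}_{A\cap B}\in\stabs{k-1}$ over all $X$ stabilizer generators via the unsigned conjugation identity, handle the global phase via the unsigned phase-free lemma (noting $q+kr+1\ge q+k+1$ since $r\ge 1$), and then convert between $\dim A\cap B\ge kr+1$ and $\dim A\ge q+kr+1$ using the overlap lemma together with the unsigned trivial-logic claim. Your closing observation that the only substantive change from the signed case is the stronger phase-free threshold $\dim A\cap B\ge k+1$, which is dominated by $kr+1$, matches the paper exactly.
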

\begin{proof}
    By definition,  $\Z{k}_A\in\errors{k}$ if and only if $\Z{k}_A X_B \Z{k}_A^\dagger \in \stabs{k-1}$ for every $X_B$ with $\dim B= m-q$. Let $X_B$ be an arbitrary stabilizer generator. By \cref{fact: unsigned conjugation identity} we have that $\Z{k+1}_A X_B \Z{k+1}_A^\dagger = \omega_k^{\abs{A\cap B}}\Z{k}_{A\cap B} X_B$. Since $X_B$ is a stabilizer we have that $\Z{k}_A X_B \Z{k}_A^\dagger\in\stabs{k}$ if and only if $\omega_k^{\abs{A\cap B}}\Z{k}_{A\cap B}\in\stabs{k-1}$ for every subcube $B$ with $\dim B=m-q$.

    ($\Rightarrow$) We assume that $\omega_k^{\abs{A\cap B}}\Z{k}_{A\cap B}\in\stabs{k-1}$ for every subcube $B$ with $\dim B=m-q$, and we seek to show that $\dim A\geq q+kr+1$. If the global phase factor $\omega_k^{\abs{A\cap B}}\neq 1$, then $\omega_k^{\abs{A\cap B}}\Z{k-1}_{A\cap B}$ cannot fix the code space, so by \cref{lem: unsigned operator phase free condition} we have that $\dim A \geq k+q+1$ in order for $\Z{k}_A X_B \Z{k}_A^\dagger$ to be phase-free. Now, we must show that $\Z{k-1}_{A\cap B}\in\stabs{k-1}$ for every $B$ such that $\dim B = m-q$. Using \cref{clm: trivial logic unsigned}, $\Z{k-1}_{A\cap B}\in\stabs{k-1}$ if and only if $\dim A\cap B\geq kr+1$. By \cref{lem:overlap-guarantee} we have that $\dim A\cap B\geq kr+1$ for every $B$ with $\dim B = m-q$ only if $\dim A\geq m- (m-q)+kr+1 = q+kr+1$, as desired.

    ($\Leftarrow$) We assume that $\dim A\geq q+kr+1$, and we seek to show that $\omega_k^{\abs{A\cap B}}\Z{k}_{A\cap B}\in\stabs{k-1}$ for every subcube $B$ with $\dim B=m-q$. As $r\geq 1$, $\dim A\geq q+k+1$ and \cref{lem: unsigned operator phase free condition} implies that $\Z{k}_A X_B \Z{k}_A^\dagger$ is phase-free. By \cref{lem:overlap-guarantee}, since $\dim A\geq q+kr+1$ we have that $\dim A\cap B\geq kr+1$ for every $B$ with $\dim B = m-q$. \cref{clm: trivial logic unsigned} thus implies that $\Z{k-1}_{A\cap B}\in\stabs{k-1}$, as desired.
\end{proof}

\subsection{\texorpdfstring{$\X{k}$}{Xk} subcube operators}\label{app: X basis operators}

It is natural to wonder if $QRM_m(q,r)$ can support transversal Clifford Hierarchy logic in the $Z$ and $X$ bases \emph{simultaneously}. Unfortunately, the bounds given in \cref{thm: subcube dimension implies logic} and \cref{clm: subcube dimension implies logic (X basis)} are largely incompatible with one another. Since the dimension of any subcube is at most $m$, \cref{thm: subcube dimension implies logic} tells us that $\tildeZ{k}_A\in\mcE^*$ only when $q+kr+1\leq m$. For the same reason, \cref{clm: subcube dimension implies logic (X basis)} tells us that $\tildeX{k}_A\in\mcE^*$ only when $m-r+k(m-q-1)\leq m$. Adding these two bounds and rearranging terms, we see
\begin{align}
    q+kr+1+m-r+k(m-q-1)&\leq 2m,\\
    q+kr+1-r+km-kq-k - m&\leq 0,\\
    (k-1)r-(k-1)q+(k-1)m+-(k-1)&\leq 0,\\
    (k-1)(m-1 + r-q)&\leq 0.
\end{align}
This inequality is always satisfied when $k=0$, which is indicative of the fact that logical Pauli operators always exist for CSS codes. For $k\geq 2$, this inequality can never be satisfied. Therefore, a quantum Reed--Muller code can never simultaneously admit a transversal logical $\Z{k}_A\in\mcN^*$ and $\X{k}_B\in\mcN^*$ when $k\geq 2$. 

Consider now the remaining case where $k=1$ and the inequality is satisfied. In this case, the bound from \cref{thm: subcube dimension implies logic} implies that $m\geq q+r+1$, whereas the bound from \cref{clm: subcube dimension implies logic (X basis)} implies that $m\leq q+r+1$. Therefore, for $m=q+r+1$ the codes $QRM_{q+r+1}(q,r)$ \emph{do} simultaneously support global transversal $\Z{1}_{\ZZ_2^{q+r+1}}\in\mcN^*$ and $\X{1}_{\ZZ_2^{q+r+1}}\in\mcN^*$. However, by both \cref{thm: subcube dimension implies logic} and \cref{clm: subcube dimension implies logic (X basis)}  they cannot support any logical $\Z{k}_A$ or $\X{k}_A$ for any value of $k\geq 2$.

\subsection{Diagonal and transversal gates in the Clifford Hierarchy}\label{app: dtc}
Recall the space of integer-valued functions on the Boolean hypercube, $\ZZ[\ZZ_2^m]\coloneqq\br{f\colon\ZZ_2^m\rightarrow \ZZ}$, along with the unsigned and signed indicator functions of $A\subcubeeq\ZZ_2^m$: $\barindicator{A}(x)\coloneqq 1$ if $x\in A$ and 0 otherwise, and $\tildeindicator{A}(x)\coloneqq (-1)^\abs{x}$ if $x\in A$ and 0 otherwise. These indicator functions are defined so that
\begin{align}
    \Big\langle \Z{k}_A \Big\rangle &\cong \Big\langle \barindicator{A}\pmod{2^{k+1}}\Big\rangle, \\
    \Big\langle \tildeZ{k}_A \Big\rangle &\cong \Big\langle \tildeindicator{A}\pmod{2^{k+1}}\Big\rangle.
\end{align}
\cref{thm: subcube dimension implies logic} implies the following:
\begin{align}
    \left\langle e^{i\theta}\tildeZ{k}_{\standard{K}}\Bigmid k\in\ZZ_{\geq 0},\;K\subseteq S,\;\abs{K}\geq (k+1)r+1,\; \theta\in[0,2\pi)\right\rangle &\subseteq \bigcup_{k\in\ZZ_{\geq 0}}\stabs{k}, \\
    \intertext{and}
    \left\langle e^{i\theta}\tildeZ{k}_{\standard{K}}\Bigmid k\in\ZZ_{\geq 0},\;K\subseteq S,\;\abs{K}\geq q+kr+1,\; \theta\in[0,2\pi)\right\rangle &\subseteq \bigcup_{k\in\ZZ_{\geq 0}}\errors{k}.
\end{align}
A natural open question of our work is whether or not the \emph{reverse} inclusion holds, as well. For simplicity we will consider only the undetectable error set, and without global phases. That is, consider the subgroup generated by the $\tildeZ{k}_{\standard{K}}$ operators for $K\in\mcQ_k$, the index set of $k$-th level logicals:
\begin{equation}\label{eq: undetectable Clifford error group}
    \left\langle \tildeZ{k}_{\standard{K}} \Bigmid k\in\ZZ_{\geq 0},\; K\in\mcQ_k\right\rangle.
\end{equation}
\begin{question}\label{question: matrix version}
    Does the group generated by standard subcube operators given in \cref{eq: undetectable Clifford error group} fully characterize the group of undetectable Clifford errors for the code $QRM_m(q,r)$ that are diagonal and transversal? That is, up to global phases, does
    \begin{equation}
        \left\langle \tildeZ{k}_{\standard{K}} \Bigmid k\in\ZZ_{\geq 0},\; K\in\mcQ_k\right\rangle = \bigcup_{k\in\ZZ_{\geq 0}} \left(\DTC{k}\cap\:\errors{k}\right)\;?
    \end{equation}
\end{question}

Our goal in this section is to give a coding-theoretic interpretation of the group in \cref{eq: undetectable Clifford error group}. In particular, we begin by defining so-called linear codes over $\ZZ_{2^{k+1}}$.

For $k\in\NN$, let $R_k\coloneqq\ZZ_{2^{k+1}}$ denote the ring of integers modulo $2^{k+1}$. Consider now the space of $R_k$-valued functions on the Boolean hypercube, $R_k[\ZZ_2^m]\coloneqq\br{f\colon \ZZ_2^m\rightarrow R_k}$. $R_k[\ZZ_2^m]$ is a free module over $R_k$. A submodule, $C\subseteq R_k[\ZZ_2^m]$, is called an \emph{$R_k$-linear code of length $2^m$}, or simply a linear code. For two linear codes, $A,B\subseteq R_k[\ZZ_2^m]$, their sum $A+B\subseteq R_k[\ZZ_2^m]$ is a linear code defined by taking sums of elements in $A$ and $B$.

As $R_k$ is not a principal ideal domain (for $k>0$), the space $R_k[\ZZ_2^m]$ fails to have many properties that a vector space over a field does. For example, not every linear code will have a basis, and even if a code has a basis the number of elements in two given bases may be different. In other words, the dimension of an $R_k$-linear code may not be well-defined.


Given a function $f\in R_k[\ZZ_2^m]$ it is straightforward to construct a diagonal and transversal operator in the $k$-th level of the Clifford Hierarchy. In particular, define $Z(f)\in\Cl{k}$ to be
\begin{equation}
    Z(f) = \bigotimes_{x\in\ZZ_2^m} \Z{k}^{f(x)}.
\end{equation}
That is, $Z(f)$ acts as $\Z{k}^{f(x)}$ on the physical qubit indexed by $x\in\ZZ_2^m$. Similarly, supposing that $U=\bigotimes_{x\in\ZZ_2^m} \Z{k}^{P_x}\in\Cl{k}$ is a diagonal and transversal operation in the $k$-th level of the Clifford Hierarchy where each $P_x\in\ZZ$, we can define a function $f_U\in R_k[\ZZ_2^m]$ via
\begin{equation}
    f_U(x)\coloneqq P_x \pmod{2^{k+1}}.
\end{equation} 
The space of \emph{diagonal, transversal, and $k$-th level Clifford operators}, $\DTC{k}$, is defined as
\begin{equation}
    \DTC{k} \coloneqq \br{\bigotimes_{x\in\ZZ_2^m} \Z{k}^{f(x)} \biggmid f\in\ZZ[\ZZ_2^m]}
\end{equation}
As the diagonal and transversal operators in the $k$-th level of the Clifford Hierarchy admit a natural action via $R_k$ by taking powers, we see that
\begin{equation}
    R_k[\ZZ_2^m] \cong \DTC{k}
\end{equation}
as modules over $R_k$. 

We now proceed to generalize construction of classical RM codes:

\begin{definition}[$RM_k(r,m)$]\label{def:generalized-RM}
    For $r\in\br{-1,0,\dots, m}$, the Generalized Reed--Muller (GRM) 
    code
    of \emph{order} $r$ over $R_k$, denoted $RM_k(r,m)$, is defined as the linear code generated by the signed indicator functions of standard subcubes of dimension at least $m-r$, taken modulo $2^{k+1}$:
    \begin{equation*}
        RM_k(r,m)\coloneqq \left\langle \tildeindicator{\standard{J}} \pmod{2^{k+1}} \Bigmid J\subseteq S,\; \abs{J}\geq m-r \right\rangle.
    \end{equation*}
\end{definition}
We note that for all $k\in\ZZ_{\geq 0}$, 
\begin{equation}
    \left\langle \tildeZ{k}_{\standard{K}}\Bigmid K\subseteq S,\;\abs{K}\geq (k+1)r+1 \right\rangle\cong RM_k\big(m-((k+1)r+1),m\big).
\end{equation}
Thus, sufficiency in \cref{thm: subcube dimension implies logic} can be rephrased as:
\begin{theorem}
Consider $QRM_m(q,r)$.
    \begin{enumerate}
        \item If $f\in RM_k\big(m-((k+1)r+1),m\big)$, then $Z(f)\in\stabs{k}$.
        \item If $f\in RM_k\big(m-(q+kr+1),m\big)$, then $Z(f)\in\errors{k}$.
    \end{enumerate}
\end{theorem}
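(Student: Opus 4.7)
The plan is to translate both statements into the language of signed subcube operators and then directly invoke the Validity Theorem (\cref{thm: subcube dimension implies logic}) for $\tildeZ{k}$ operators. First I would unpack the isomorphism $R_k[\ZZ_2^m] \cong \DTC{k}$ sending $f \mapsto Z(f)$. Under this correspondence, addition of $R_k$-valued functions becomes multiplication of operators, and the generator $\tildeindicator{\standard{J}} \pmod{2^{k+1}}$ maps to $\tildeZ{k}_{\standard{J}}$ (cf.\ \cref{lem: Zk isomorphism}, generalized to arbitrary integer scalings). Consequently, any $f \in RM_k(r',m)$ can be expressed as $f = \sum_{J} c_J \, \tildeindicator{\standard{J}} \pmod{2^{k+1}}$, where the sum is over subsets $J \subseteq S$ with $\abs{J} \geq m - r'$ and $c_J \in R_k$, and the corresponding operator factors as $Z(f) = \prod_{J} \tildeZ{k}_{\standard{J}}^{\,c_J}$.

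For part 1, setting $r' = m - ((k+1)r + 1)$ forces every generator $\tildeindicator{\standard{J}}$ appearing in $f$ to satisfy $\abs{J} \geq (k+1)r + 1$, so \cref{thm: subcube dimension implies logic} (item 1) gives $\tildeZ{k}_{\standard{J}} \in \stabs{k}$. Since a Clifford stabilizer acts as the identity on the code space, so does every integer power and every product of such operators; thus $Z(f) \ket\psi = \ket\psi$ for every code state $\ket\psi$. Combined with the automatic containment $Z(f) \in \DTC{k} \subseteq \Cl{k}$, this yields $Z(f) \in \stabs{k}$.

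For part 2, the choice $r' = m - (q + kr + 1)$ similarly forces $\abs{J} \geq q + kr + 1$ for every generator appearing in $f$, and \cref{thm: subcube dimension implies logic} (item 2) gives $\tildeZ{k}_{\standard{J}} \in \errors{k}$. By \cref{lem:undetectable-error-equivalences}, membership in $\errors{k}$ is equivalent to lying in $\Cl{k}$ and preserving the code space. The preservation property is closed under both powers and products, and $Z(f) \in \Cl{k}$ is automatic, so $Z(f) \in \errors{k}$.

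The only mild subtlety worth flagging is that $\errors{k}$ is not globally closed under multiplication (since $\Cl{k}$ itself is not a group for $k \geq 2$); however within $\DTC{k}$ it is, because $\DTC{k}$ is a group under multiplication via its identification with the additive group $R_k[\ZZ_2^m]$. So no real obstacle arises: the entire content of the theorem is a repackaging of the signed Validity Theorem, with the generalized Reed--Muller codes $RM_k(r',m)$ serving only to bundle the admissible subcube dimensions into a convenient generating set over $R_k$.
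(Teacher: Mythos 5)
Your proposal is correct and follows exactly the route the paper intends: the paper presents this theorem as a direct rephrasing of the sufficiency direction of \cref{thm: subcube dimension implies logic}, using the isomorphism between $\langle\tildeindicator{\standard{J}}\pmod{2^{k+1}}\rangle$ and $\langle\tildeZ{k}_{\standard{J}}\rangle$ to reduce $Z(f)$ to a product of signed standard-subcube operators whose dimensions satisfy the relevant bounds. Your added remark about closure under multiplication inside $\DTC{k}$ (versus the non-group $\Cl{k}$) is a correct and worthwhile clarification of a point the paper leaves implicit.
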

Ideally the converse would hold as well, that any operator $U\in\DTC{k}$ that is a Clifford error for $QRM_m(q,r)$ \emph{must} arise in this way. That is, it would be convenient if the code $RM_k\big(m-(q+kr+1),m\big)$ completely characterized $\errors{k}\cap\DTC{k}$. This, however, cannot be the case. In particular, every function in one level below, $k-1$, gives rise to an operator in $\errors{k}$ via \emph{squaring}: if $f\in RM_{k-1}\big(m-(q+(k-1)r+1),m\big)$ then $Z(f)^2\in\errors{k}$. However, $RM_{k-1}\big(m-(q+(k-1)r+1),m\big)$ is \emph{not} a subcode of $RM_k\big(m-(q+kr+1),m\big)$. In fact, a priori it is not even an $R_k$-module! At every increased level of the Clifford Hierarchy we must include all \emph{squares} from the level below. We will do so through the use of the sum of codes, $A+B$.

To begin with, we must turn $RM_{k-1}\big(m-(q+(k-1)r+1),m\big)$ into an $R_k$-module. 
For $i\in\br{-1,\dots,k}$, define the linear code, $2^{k-i}\cdot RM_k(r,m)\subseteq R_{k}[\ZZ_2^m]$, via
\begin{equation}
    2^{k-i}\cdot RM_k(r,m) \coloneqq \left\langle 2^{k-i}\cdot\tildeindicator{\standard{J}} \pmod{2^{k+1}} \Bigmid J\subseteq[m],\; \abs{J}\geq m-r \right\rangle.
\end{equation}
Note that if $i=-1$ then $2^{k-i}\cdot RM_k(r,m)=\br{0}$, and for $i\geq 0$ we have $2^{k-i}\cdot RM_k(r,m)\cong RM_{i}(r,m)$ as Abelian groups.

For any fixed $k\in\ZZ_{\geq 0}$ we now have the following isomorphisms of Abelian groups:
\begin{align}
    \bigoplus_{i=0}^{k} 2^{k-i}\cdot RM_{k}\big(m-((i+1)r+1),m\big) &\cong \left\langle \tildeZ{i}_{\standard{K}}\Bigmid i\in\br{0,\dots,k},\;K\subseteq S,\;\abs{K}\geq (i+1)r+1\right\rangle, \\
    \intertext{and}
    \bigoplus_{i=0}^{k} 2^{k-i}\cdot RM_{k}\big(m-(q+ir+1),m\big) &\cong \left\langle \tildeZ{i}_{\standard{K}}\Bigmid i\in\br{0,\dots,k},\;K\subseteq S,\;\abs{K}\geq q+ir+1\right\rangle.
\end{align}

We can therefore rephrase \cref{question: matrix version} in a coding-theoretic language:
\begin{question}\label{question: code version}
    Suppose that $U\in\DTC{k}$ for $k\in\ZZ_{\geq 0}$ and consider $QRM_m(q,r)$. If $U\in\errors{k}$ then does it hold that 
    \begin{equation}
        f_U\in\bigoplus_{i=0}^{k} 2^{k-i}\cdot RM_{k}\big(m-(q+ir+1),m\big)?
    \end{equation}
    In particular, does the following isomorphism hold:
    \begin{equation}
         \DTC{k}\cap\:\errors{k}\cong \bigoplus_{i=0}^{k} 2^{k-i}\cdot RM_{k}\big(m-(q+ir+1),m\big)  \;?
    \end{equation}
\end{question}
Note that the base case of this statement is already true, i.e., when $k=0$, \cref{question: code version} asks if
\begin{equation}
    \mcN_Z\cong RM(m-q-1,m),
\end{equation}
where $\mcN_Z$ is the group of Pauli $Z$ errors for the code and $RM(m-q-1,m)$ is a classical binary RM code. This statement is true, and it can be shown that treated as vectors of $\FF^{2^m}$ a $Z$ logical must be dual to every $X$ stabilizer. In other words, one shows that $\mcN_Z\cong RM(m-q-1,m)$ by, in fact, proving that $\mcN_Z\cong RM(q,m)^\perp$, and using the duality relation of RM codes the desired characterization holds. 

A potential proof of \cref{question: code version} would likely follow the same line of thought: given a $Z(f)\in\errors{k}$ can one prove that $f\in R_k[\ZZ_2^m]$ is \emph{dual} to every function $g$ where 
\begin{equation}\label{eq: RHS general RM}
    g\in\left(\bigoplus_{i=0}^{k} 2^{k-i}\cdot RM_{k}\big(m-(q+ir+1),m\big)\right)^\perp?
\end{equation}
We have not defined a symmetric bilinear form on $R_k[\ZZ_2^m]$ here, though a natural one can be defined and dual spaces behave as one may expect, e.g., $(C^\perp)^\perp = C$, even in the case of $R_k$-modules. A natural first step towards answering \cref{question: code version} in the affirmative would be characterizing the space in \cref{eq: RHS general RM} in terms of generalized RM codes, as opposed to the \emph{dual} of generalized RM codes.

\section{The hyperoctahedral view}\label{app: hyperoctahedral}
We have described our results in terms of the $m$-dimensional hypercube and its complex of subcubes, but there is another equivalent description of quantum RM codes in terms of the \emph{hyperoctahedral complex} in $m$ dimensions, which is dual to the hypercube construction. We will now detail this alternative construction, and also show its connection to the ball codes of \cite{vasmer2022morphing}.

In three dimensions there is a well-known duality between the cube and octahedron, where the vertices of the cube correspond to the triangular faces of the octahedron and the vertices of the octahedron correspond to the squares of the cube. This duality is easily extended to the higher dimensional \emph{hyperoctahedron}. Consider $m$-dimensional real Euclidean space, $\RR^m$, and define the coordinate, $p_i$, which has a 1 in the $i$-th position and 0's elsewhere. In short, the hyperoctahedron in $m$-dimensional space can be defined as the convex hull of the $2m$ points $\br{\pm p_i}_{i\in[m]}$. This definition, although simple, does not capture the full \emph{simplicial} structure of the hyperoctahedron, in the same way that the Boolean hypercube $\ZZ_2^m$ does not immediately capture the subcube structure of the hypercube.

Geometrically, an \emph{$\ell$-simplex} $\sigma$ is a collection of $\ell+1$ (affinely) independent set of points in Euclidean space, i.e., the convex hull of $\sigma$ does not lie in an $\ell$-dimensional flat. The dimension of a simplex is defined as $\dim\sigma=\abs{\sigma}-1$. If one simplex $\rho$ is contained in another, $\rho\subseteq \sigma$, the $\rho$ is said to be \emph{incident to $\sigma$} and this incidence is denoted by $\rho\preceq\sigma$.
A \emph{geometric simplicial complex} is a collection of simplices $\mcX=\br{\sigma}$ which is (1) downward closed (if $\sigma\in\mcX$ and $\rho\preceq\sigma$ then $\rho\in\mcX$), and (2) closed under intersections. We note that the downward closure property implies that $\emptyset\in\mcX$ for any simplicial complex.

The $2m$ points $V\coloneqq\br{\pm p_i}_{i\in[m]}$ can be used to define the \emph{$m$-dimensional hyperoctahedral (simplicial) complex}, $H_m$: A subset of points $\sigma\subseteq V$ is a simplex in $H_m$ if $\sigma$ contains at most one of $p_i$ and $-p_i$ for each $i\in m$ (but possibly neither). In particular, for each $\sigma\in H_m$ there is a unique string $x\in\br{0,1,*}^m$ where $x_i=0$ if $p_i\in \sigma$, $x_i=1$ if $-p_i\in \sigma$, and $x_i=*$ if neither $\pm p_i\in\sigma$. Note that the dimension of $\sigma$ is the number of non-null ($*$) positions in the corresponding $x$, and so the dimension of any simplex in $H_m$ is at most $m-1$. In fact, the set $\br{0,1,*}^m$ is in 1-to-1 correspondence with the simplices in $H_m$. The \emph{vertices} of $H_m$ are the strings $x\in\br{0,1,*}^m$ with a single non-null entry (i.e., corresponds to a point in $\br{\pm p_i}$) and the \emph{facets} of $H_m$ are the bit strings $x\in\br{0,1}^m$. It is straightforward to show that $H_m$ satisfies the conditions of a simplicial complex. In fact, it is a \emph{homogenenous} simplicial complex, meaning that every simplex is incident to some facet in $H_m$. Further, $H_m$ yields a \emph{simplicial structure}
on the $(m-1)$-dimensional sphere $S^{m-1}\subseteq \RR^m$: each point $\pm p_i$ has unit length, and for a simplex in $H_m$ one can take the convex hull of $\sigma$ on the sphere.

We now connect the $m$-dimensional hyperoctahedral complex to the hypercube complex. We recall the definition of the hypercube complex. Consider $\ZZ_2^m$ with its standard generating set $S\coloneqq\br{e_i}$. The $m$-dimensional hypercube is the complex of \emph{standard cosets of $S$ in $\ZZ_2^m$},
\begin{equation}
    \br{z+\standard{J}\Bigmid J\subseteq S,\; z\in\ZZ_2^m}.
\end{equation}
In particular, a coset $z+\standard{J}$ is a $\abs{J}$-dimensional subcube of $\ZZ_2^m$. The duality between the hyperoctahedron $H_m=\br{\br{0,1,*}^m}$ and the hypercube $\ZZ_2^m$ is given by
\begin{align}
    x\in\br{0,1,*}^m &\mapsto z+\standard{J} \text{ where } \left(\begin{array}{cc}
        z_i = x_i \text{ if } x_i\neq * \\
        z_i = 0 \text{ otherwise }
    \end{array}\right)
    \text{ and } J\coloneqq\br{e_i\Bigmid x_i=*},\\
    z+\standard{J}&\mapsto x\in\br{0,1,*}^m \text{ where } \left(\begin{array}{cc}
        x_i = z_i \text{ if } e_i\notin J \\
        x_i = * \text{ if }  e_i\in J
    \end{array}\right).
\end{align}
Under this correspondence, vertices of $H_m$ are equivalent to $(m-1)$-cubes in $\ZZ_2^m$ and facets of $H_m$ are equivalent to vertices in (elements of) $\ZZ_2^m$. 
In general, $\ell$-simplices in $H_m$ are equivalent to $(m-\ell-1)$-cubes in $\ZZ_2^m$. For instance, the empty simplex in $H_m$ corresponds to the \emph{entire} hypercube. See \cref{fig: dual view} for the correspondence between the 4 dimensional hypercube and hyperoctahedron.

We can define quantum RM codes using the structure of the hyperoctahedral complex. As there are $2^m$ facets in $H_m$, we will associate physical qubits with these $(m-1)$-dimensional simplices in $H_m$. Given a simplex $\sigma\in H_m$, its \emph{neighborhood} is defined at the set of facets that it is incident to, $\N(\sigma)\coloneqq\br{\rho\in H_M\mid \sigma\preceq \rho, \; \dim\rho=m-1}$. In the same way that we defined subcube operators, we can likewise define a \emph{simplex operator} $U_\sigma$ as the operator which acts as the single-qubit gate $U$ on the qubits in $\N(\sigma)$ and as identity otherwise.

\begin{definition}[Alternative definition of $QRM_m(q,r)$, \emph{cf.} \cref{def: quantum RM code hypercube}]
     Let $0\leq q\leq r\leq m$ be non-negative integers. The \emph{quantum Reed--Muller code} of order $(q,r)$ and length $2^m$, denoted by $QRM_m(q,r)$, is defined as the common $+1$ eigenspace of a Pauli stabilizer group $\mcS\coloneqq\langle S_X,S_Z\rangle$, with stabilizer generators given by
    \begin{align}
        S_X&\coloneqq \br{X_\sigma\Bigmid \sigma \text{ is a $(q-1)$-simplex}},  
        \\
        S_Z&\coloneqq \br{Z_\sigma\Bigmid \sigma \text{ is an $(m-r-2)$-simplex}},
    \end{align}
\end{definition}
This definition is equivalent to the ball code definition of the hypercube code family given in \cite{vasmer2022morphing}, where the central vertex of the ball code is given by the empty simplex.

\begin{figure}[b!]
    \centering
    \includegraphics{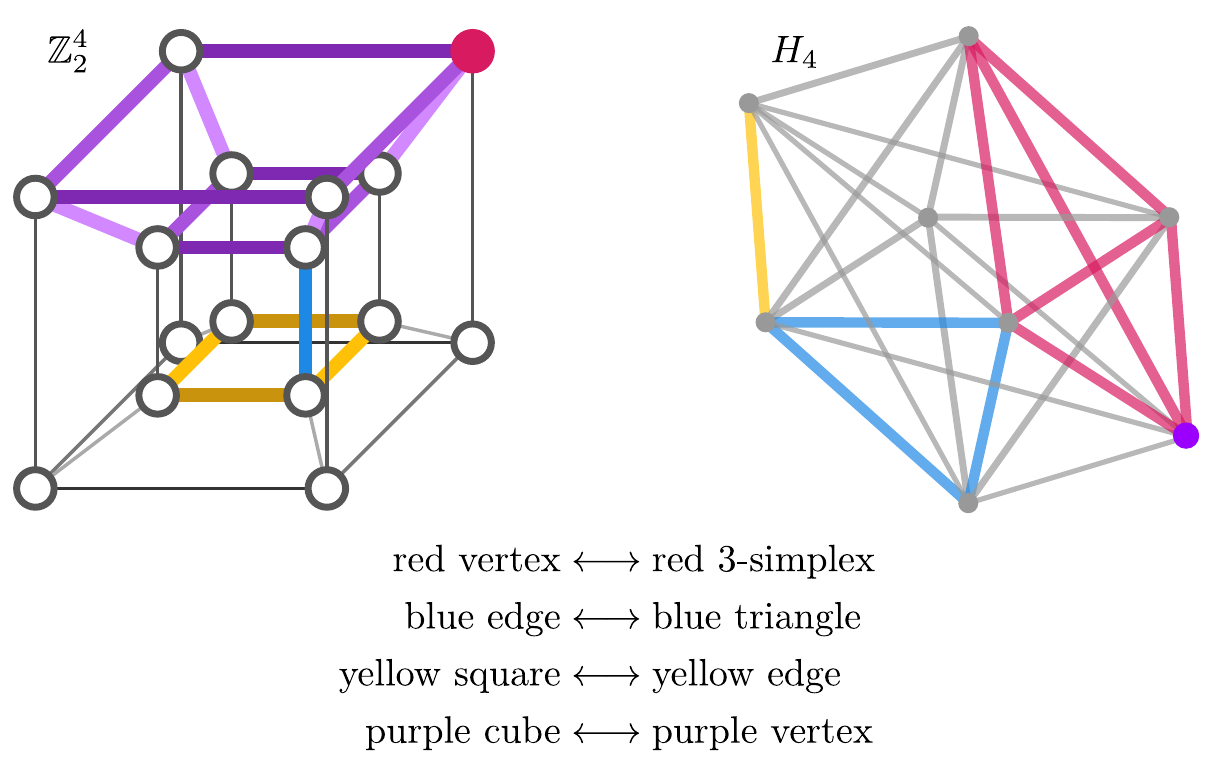}
    \caption{Equivalence between subcubes of $\ZZ_2^4$ and simplices in $H_4$.}
    \label{fig: dual view}
\end{figure}

\end{document}